\newtheorem{theorem}{Theorem}
\newtheorem{conjecture}{Conjecture}
\newtheorem{proposition}{Proposition}
\newtheorem{corollary}{Corollary}
\newtheorem{lemma}{Lemma}
\theoremstyle{remark}
\newtheorem{remark}{Remark}
\newtheorem{fact}{Fact}
\newcommand{\Revision}[1]{{#1}}
\newcommand{\m}{\mathcal}
\newcommand{\RR}{\mathbb{R}}
\newcommand{\Var}{\mathrm{Var}}
\newcommand{\diag}{\mathop{\mathrm{diag}}}
\newcommand{\rank}{\mathop{\mathrm{rank}}}
\newcommand{\tr}{\mathop{\mathrm{tr}}}
\newcommand{\Expt}{\mathbb{E}}
\newcommand{\eps}{\varepsilon}
\newcommand{\T}{\top}
\newcommand{\BigOh}{\m{O}}
\newcommand{\bI}{\bm{I}}
\newcommand{\bY}{\bm{Y}}
\newcommand{\bX}{\bm{X}}
\newcommand{\bYtilde}{\tilde{\bY}}
\newcommand{\bZ}{\bm{Z}}
\newcommand{\bU}{\bm{U}}
\newcommand{\bV}{\bm{V}}
\newcommand{\bW}{\bm{W}}
\newcommand{\bM}{\bm{M}}
\newcommand{\bR}{\bm{R}}
\newcommand{\bP}{\bm{P}}
\newcommand{\bA}{\bm{A}}
\newcommand{\bB}{\bm{B}}
\newcommand{\bD}{\bm{D}}
\newcommand{\bT}{\bm{T}}
\newcommand{\be}{\bm{e}}
\newcommand{\bVbar}{\bar{\bm{V}}}
\newcommand{\bLambda}{\bm{\Lambda}}
\newcommand{\bSigma}{\bm{\Sigma}}
\newcommand{\bOmega}{\bm{\Omega}}
\newcommand{\bPhi}{\bm{\Phi}}
\newcommand{\bQc}{\bm{\mathcal{Q}}}
\newcommand{\bPc}{\bm{\mathcal{P}}}
\newcommand{\bBc}{\bm{\mathcal{B}}}
\newcommand{\bMc}{\bm{\mathcal{M}}}
\newcommand{\bKc}{\bm{\mathcal{K}}}
\newcommand{\bTc}{\bm{\mathcal{T}}}
\newcommand{\0}{\bm{0}}
\newcommand{\hbX}{\hat{\bX}}
\newcommand{\MatL}{\begin{bmatrix}}
\newcommand{\MatR}{\end{bmatrix}}
\newcommand{\bmu}{\bm{\mu}}
\newcommand{\MPStiel}{\mathsf{m}}
\newcommand{\MPDensity}{\mathsf{MP}}
\renewcommand{\Re}{\mathrm{Re}}
\newcommand{\SpikeFunc}{\m{Y}_{\gamma,\beta}}
\newcommand{\PosEigFunc}{\mathsf{L}_{\gamma,\beta}}
\newcommand{\BBP}{\sigma^{*}_{\gamma,\beta}}
\newcommand{\hbu}{\hat{\bm{u}}}
\newcommand{\hbv}{\hat{\bm{v}}}
\newcommand{\hbc}{\hat{\bm{c}}}
\newcommand{\hbd}{\hat{\bm{d}}}
\newcommand{\hbf}{\hat{\bm{f}}}
\newcommand{\bu}{\bm{u}}
\newcommand{\bv}{\bm{v}}
\newcommand{\bw}{\bm{w}}
\newcommand{\bd}{\bm{d}}
\newcommand{\hsigma}{\hat{\sigma}}
\newcommand{\hbY}{\hat{\bY}}
\newcommand{\hbZ}{\hat{\bZ}}
\newcommand{\bHc}{\bm{\m{H}}}
\newcommand{\hbM}{\widehat{\bM}}
\newcommand{\hbT}{\widehat{\bT}}
\newcommand{\bO}{\bm{O}}
\newcommand{\bQ}{\bm{Q}}
\newcommand{\NoiseEdge}{\mathsf{z}}
\newcommand{\NoiseEdgeUpper}{\NoiseEdge^+_{\gamma,\beta}}
\newcommand{\UAngle}{\m{U}_{\gamma,\beta}}
\newcommand{\VAngle}{\m{V}_{\gamma,\beta}}
\newcommand{\Shrinker}{\Upsilon_{\gamma,\beta}}
\newcommand{\barR}{\overline{\bR}}
\newcommand{\downD}{\underline{\bD}}
\newcommand{\barD}{\overline{\bD}}
\newcommand{\bnu}{\bm{\nu}}
\newcommand{\Hbb}{\bHc}
\newcommand{\Indic}[1]{ \ensuremath{\mathds{1}\left\{#1\right\}} }
\newcommand{\range}{\mathrm{range}}
\newcommand{\iu}{\mathrm{i}\mkern1mu}
\title{
On the Noise Sensitivity of the Randomized SVD
}
\author[1]{Elad Romanov \thanks{\texttt{eromanov@stanford.edu}}}
\affil[1]{Department of Statistics, Stanford University}
\date{}
\begin{document}

    \maketitle

	\begin{abstract}

The randomized singular value decomposition (R-SVD) is a popular sketching-based algorithm for efficiently 
computing the partial SVD of a large matrix. When the matrix is low-rank,
the R-SVD produces its partial SVD exactly; but when the rank is large, it only yields an approximation.

Motivated by applications in data science and principal component analysis (PCA), we analyze the R-SVD under 
 a low-rank signal plus noise measurement model;
 specifically, when its input is a spiked random matrix.
The singular values produced by the R-SVD are shown to exhibit a BBP-like phase transition: when the SNR exceeds a certain detectability threshold, 
that depends on the dimension reduction factor,
the largest singular value is an outlier;
below the threshold, no outlier emerges from the bulk of singular values. We further compute
 asymptotic formulas for
 the overlap between the ground truth signal singular vectors and the approximations produced by the R-SVD. 

 Dimensionality reduction has the adverse affect of amplifying the noise in a highly nonlinear manner.
Our results demonstrate the statistical advantage---in both signal detection and estimation---of the R-SVD over more naive sketched PCA variants; the advantage is especially dramatic when the sketching dimension is small. Our analysis is asymptotically exact, and substantially more fine-grained than 
 existing operator-norm error bounds for the R-SVD, which largely fail to give meaningful error estimates in the moderate SNR regime.
It applies for a broad family of sketching matrices previously considered in the literature, including Gaussian i.i.d. sketches, random projections,  and the sub-sampled Hadamard transform, among others. 

Lastly, we derive an optimal singular value shrinker for singular values and vectors obtained through the R-SVD, which may be useful for applications in matrix denoising.

	\end{abstract}

    \section{Introduction}

	The singular value decomposition (SVD) is a fundamental tool in numerical linear algebra that is widely used in a variety of applications across engineering, machine learning and statistics \cite{jolliffe2002principal,golub2013matrix,hastie2009elements,Anderson1959AnIT}.
	Standard algorithms for computing the full SVD of an $n$-by-$m$ matrix perform roughly 
	$nm \min(n,m)$ floating point operations \cite{golub2013matrix}.
	 Owing to this cubic dimensional dependence, exactly computing the SVD of even moderately-sized matrices can be a prohibitively expensive. 
    
    When the matrix of interest is low-rank---either exactly or approximately---it is often sufficient in practice to compute 
    % only 
    a {partial} (or truncated) SVD. 
    Matrices of this kind are especially commonplace in data science applications \cite{udell2019big}.
    Motivated in part by the challenges of today's ``big-data'' age, there has been a considerable effort to devise fast and numerically stable algorithms for the partial SVD. 
    This paper is concerned with one particular, popular, state-of-the-art fast SVD method: the  
 randomized SVD (R-SVD) algorithm of \cite{liberty2007randomized,woolfe2008fast,rokhlin2010randomized,halko2011algorithm,halko2011finding}.
As testament for its practical relevance, the popular machine learning library \texttt{scikit-learn} uses this algorithm (with constant number of power iterations, by default $5$) for its implementation of the truncated SVD \cite{pedregosa2011scikit}. Furthermore, large-scale efforts to create ``standarized'' code packages for randomized numerical linear algebra, including the R-SVD, are currently ongoing \cite{murray2023randomized}.

In its most basic form, the R-SVD proceeds along the following lines. Given an $n$-by-$m$ data matrix $\bY \in \RR^{n\times m}$, the R-SVD first reduces its dimension, multiplying it from the right $\tilde{\bY}=\bY\bOmega^\T \in \RR^{n\times d}$, where $\bOmega\in \RR^{d\times m}$ is a random sketching matrix ($d$ being the sketching dimension); 
then, from the sketched matrix $\tilde{\bY}$, computes an approximated projection matrix onto the span of its large left singular vectors (a so-called ``Range Finder'' \cite{halko2011finding,murray2023randomized}), for example by means of the QR decomposition: $\tilde{\bY}=\bQ\bR$ \footnote{Recall: $\bQ$ has $\rank(\tilde{\bY})\le d$ many columns, which constitute an orthonormal basis for the column space.}; and finally, projects the original matrix onto this subspace, $\hbY=\bQ\bQ^\T\bY$---obtaining a reduced matrix whose full SVD approximates the partial SVD of $\bY$.\footnote{Importantly, to compute the SVD of $\hat{\bY}$ one can first compute the SVD of $\bQ^\T\bY$, which has smaller dimension than $\bY$ when $d<n$, and then multiply the left singular vectors by $\bQ$.}
When  $\bY$ is low-rank, specifically its rank is smaller than the sketching dimension $d$, the R-SVD yields its exact partial SVD (with high probability).
Prior theoretical works on the R-SVD have focused on showing that even when this is not the case---that is, $\bY$ is high-rank but nonetheless exhibits ``fast'' spectral decay---the reduced matrix $\hbY$ is still a good approximation for $\bY$:
essentially on par, with respect to operator norm, with its true truncated SVD.
% with respect to operator norm; by ``good'' here we mean on par with the exact partial SVD of $\bY$. 
% (Additional details shall be given momentarily.)

The ``approximation-theoretic'' perspective mentioned above, wherein the goal of the R-SVD is to produce a {\it low-rank approximation} to the matrix (or operator) $\bY$, is a natural one in numerical analysis.
However, in the context of statistics and data analysis---for 
example principal component analysis (PCA) and other spectral methods---it might not be entirely aligned with what one might truly be interested in.
A perhaps more relevant question is:
{\it to what degree does the R-SVD preserve the large principal components of $\bY$?} 
In certain regimes, 
this question turns out to be substantially more delicate than just low-rank approximation. Certainly, when the leading singular values of $\bY$ are much larger than the sub-leading ones (that is: $\bY$ exhibits ``very fast'' spectral decay), one could obtain meaningful error bounds for the principal components by means of ``general-purpose'' singular vector perturbation inequalities (e.g. Davis-Kahan \cite{yu2015useful}). 
However, in the challenging regime where the singular values of $\bY$ are all of the same scale, these perturbation bounds become largely uninformative. 
Accordingly, this setting calls for a more fine-grained analysis of the R-SVD.

% While completely general fine-grained analysis of the R-SVD appears to be challenging. 
This paper takes a first step towards this goal, focusing on matrices $\bY$ in the form of a low-rank signal plus noise $\bY = \bX + \bZ$, where the entries of $\bZ$ are i.i.d. centered Gaussians. Specificially, we work under a variant of Johnstone's spiked model \cite{johnstone2001distribution}, an asymptotic framework wherein the rank of the signal $\bX$ is constant while the dimensions $n,m\to\infty$. In this model, the signal-to-noise ratio (SNR) is normalized such that the singular values of $\bY$---whether corresponding to signal or to pure noise---are all of the same scale. As mentioned, this is a regime where existing operator norm bounds (e.g. \cite{halko2011algorithm}) coupled with standard singular vector perturbation inequalities yield rather uninformative error bounds. Our approach is entirely different, and based on asymptotically {exact} (in the large-dimensional limit) computations, using tools from random matrix theory. 
Our results quantify in a very precise sense the loss of SNR caused by dimensionality reduction. Focusing on small undersampling ratio $\beta\equiv d/m\ll 1$, our results unveil, for example, the following behavior:
\begin{itemize}
    \item Signal principal components (PCs) whose singular values satisfy $\sigma_i \lesssim \beta^{-1/8}$ are non-detectable: the corresponding singular values of $\hbY$ (the reduced matrix) are indistinguishable from noise. Moreover, the corresponding singular vectors produced by the R-SVD are entirely de-correlated from their signal counterparts.
    \item Stronger signal PCs $\sigma_i\gtrsim \beta^{-1/8}$  produce outliers in the spectrum of $\bY$, and are consistently detectable. The corresponding singular vectors are aligned with the signal PCs; in fact, the angle between the PCs concentrates around a deterministic quantity, that we compute exactly.  
\end{itemize}
As will be made clear later on,
this kind of fine-grained information is entirely indiscernible from the operator norm approximation bounds previously given in the literature.

\subsection{The merits of a signal plus noise analysis}

	This paper analyzes the R-SVD under a signal plus noise framework.  
	We justify why we believe a study of this kind is worthwhile---and indeed, natural---from several angles.
	
	{\it The SVD is a fundamental tool in data analysis.} Perhaps the most well-known example here is 
%	This study is motivated, in part, by the ubiquity of the SVD as a data analysis tool, and by
	 principal component analysis (PCA) 
%	 in particular
 \cite{jolliffe2002principal,hastie2009elements,Anderson1959AnIT}.
	Suppose one has a data set consisting of \Revision{$n$ points $\bm{y}_1,\ldots,\bm{y}_n$ in $m$ dimensions.} PCA is, at its core, a technique for discovering a latent low-dimensional linear structure in the data (the end goal could vary: dimension reduction, exploratory analysis, interpretability, or something else). One forms the data matrix $\bY\in \RR^{n\times m}$ whose \Revision{rows are $\bm{y}_1,\ldots,\bm{y}_n$.} (Assume, for simplicity, that the data set is centered.) The largest \Revision{right} singular vectors of $\bY$, equivalently eigenvectors of the sample covariance matrix \Revision{$\bY^\T\bY/n$}, are the directions along which the variation among the data points is largest. Accordingly, if the spectrum of $\bY$ contains few singular values that are distinctively larger than the others, then their corresponding directions are considered ``important''. 
	A probabilistic framework which is natural in the context of PCA is that of a low-rank factor model {\cite{tipping1999probabilistic}}. The data points are modeled as a sum $\bm{y}_i = \bm{x}_i + \bm{z}_i$, where $\bm{x}_i$ is a low-dimensional latent ``signal''---the part of $\bm{y}_i$ which is considered ``informative'', and which lies in some unknown latent low-dimensional subspace, shared across all $\bm{x}_i$'s--- and $\bm{z}_i$ is isotropic noise. The resulting data matrix $\bY$ has the form of a signal plus noise matrix $\bY=\bX+\bZ$, with $\bX$ being low-rank.
	There has been a great deal of literature studying the spectral behavior of the data matrix $\bY$, and in particular how close are the observed PCs (the leading left singular values of $\bY$) to the signal PCs, which span the latent low-dimensional subspace, see e.g. \cite{johnstone2006high,bai2010spectral,vershynin2018high,wainwright2019high}. In particular, it is well-known that in high dimensions, namely when $n,m$ are comparable and large, the PCs of $\bY$ are inconsistent estimates of their population counterparts. 
	
	 Contemporary data sets are often very large, and so from a computational standpoint, it would be helpful to replace the full SVD operation by the fast R-SVD when performing PCA on the data matrix. In the recent scientific literature, we have already seen several such papers where actual, real-world, \emph{massive} data is analyzed in this fashion, see for example \cite{hie2019efficient,linderman2019fast,laughney2020regenerative} (among many others). We foresee that this trend will grow, as the size of typical data sets encountered in applications increases (for example, in genomics and single-cell data). An important point is that the R-SVD only gives an {approximation} to the true truncated SVD, and so one wonders: from a {statistical} point of view, how much do we lose by using it over the exact, but computationally expensive, full SVD? The present paper aims to give a precise answer to this question under the {spiked model} \cite{johnstone2001distribution}, which is a popular and mathematically rich framework for thinking about PCA and related problems. 
	
	{\it The SVD as a tool for denoising.}
	A signal plus model noise model is natural in the context of low-rank matrix recovery, a problem that has been extensively studied in signal processing and machine learning, cf. \cite{wright2009robust,candes2010matrix,candes2012exact,donoho2014minimax,davenport2016overview,bun2017cleaning}. That is, we would like to estimate an unknown low-rank signal matrix $\bX$ from noisy measurements $\bY=\bX+\bZ$. One popular and simple approach to this problem is \emph{singular value shrinkage} {\cite{perry2009cross,shabalin2013reconstruction,nadakuditi2014optshrink,gavish2017optimal}}. This denoising method is based on the SVD: one takes the SVD of $\bY$, and systematically deflates its singular values to account for the effects of the noise---in particular, all but the leading empirical PCs of $\bY$ should typically be cut off. How does the optimal denoising rule change if instead of the full SVD, one performs singular value shrinkage on the R-SVD of $\bY$? How much additional error would the process of dimensionality reduction introduce into the denoising problem? In this paper we derive, in particular, the optimal shrinkage rule to use in this setting. 
	
	{\it Noise sensitivity and ``smoothed analysis''.}
	When the matrix $\bX$ is low-rank (specifically, its rank is smaller than the sketching dimension), the R-SVD yields an exact partial SVD of $\bX$. In real-world settings, however, data matrices are rarely exactly low-rank, for example due to measurement noise. How far do the singular values and vectors returned by the R-SVD (applied on the noisy measurement $\bY=\bX+\bZ$) deviate from the ground truth ($\bX$) when measurement noise enters the picture?
%	Another viewpoint through which a signal plus noise perspective is natural to consider is that of noise tolerance or `smoothed analysis'' {\cite{Spielman2001SmoothedAO}}. Specifically, the randomized SVD recovers perfectly the true truncated SVD when $\bX$ is low rank; how does its performance deteriorate when one adds to the picture measurement noise? 
	Our results give a {precise} answer to this question---one that is considerably more fine-grained than existing error bounds \Revision{for} the R-SVD \cite{halko2011finding}---in an idealized model where $\bX$ is very low-rank. Dimensionality reduction has the effect of {amplifying} the noise in a highly nonlinear fashion. We show, in particular, that at any fixed signal-to-noise ratio (SNR), one can only reduce the dimension up to a certain point---which we calculate exactly---before the signal in the reduced matrix becomes \emph{completely} swamped by the noise. That is, below that breakdown point, the singular values and vectors obtained through the R-SVD become {completely decorrelated} from the ground truth.	
	% Deferring the details for later, we remark that our findings largely corroborate the widely-held belief that the R-SVD is robust to measurement noise, even at rather extreme undersampling ratios. 

%	Lastly, several past works have considered 
%	Lastly, we mention in passing another related line of works, which studied the effects of sketching and dimensionality reduction in linear regression, cf. \TODO{\cite{?}}
	
	\subsection{The randomized SVD algorithm}
	\label{sec:Intro:RSVD}

	As mentioned before, this paper analyzes the popular randomized SVD (R-SVD) algorithm developed in \cite{liberty2007randomized,woolfe2008fast,rokhlin2010randomized,halko2011algorithm,halko2011finding}.

	Let $\bY$ be some given matrix, and $k$ be the reference rank. The R-SVD algorithm aims to find a low-rank approximation to $\bY$ (of rank $d$, larger than $k$) which is nearly on par with with its best rank-$k$ approximation. Recall that by the classical Eckart-Young-Mirsky theorem (cf. \cite{horn2012matrix}), the $k$-truncated SVD of $\bY$
	\begin{align}\label{eq:TSVD-form}
		\bY_k = \sum_{i=1}^k \sigma_i\bu_i\bv_i^\T\,, \qquad\textrm{where}\qquad \bY \overset{\textrm{SVD}}{=}\sum_{i=1}^{n\wedge m}\sigma_i \bu_i\bv_i^\T\,,
	\end{align}
	is the best rank-$k$ approximation of $\bY$ with respect to any orthogonally invariant, and in particular the operator, norm:
	\begin{align}\label{eq:TSVD-error}
		\min_{\rank(\hbY)\le k}\|\bY-\hbY\| = \|\bY-\bY_k\| = \sigma_{k+1} \,.
	\end{align}
	Let $d\ge k+1$ be the {sketching dimension}; \cite{halko2011finding} suggests, for example, $d=2k$.
	The randomized SVD algorithm constructs, in time $O(dnm)$, a rank $d$ matrix $\hbY$ such that the error $\|\bY-\hbY\|$ is comparable to \eqref{eq:TSVD-error}. The details of the algorithm, as described in \cite[Page 227]{halko2011finding} (``Prototype for Randomized SVD''), are briefly summarized below:
	
	\begin{itemize}
			\item {\bf Input:} $\bY$ the data matrix; $d \,(> k)$ the sketching dimension; $q$ number of power iterations.
			\item {\bf Step I:} ``Randomized Range Finder'':
			\begin{enumerate}
					\item Let $\bOmega\in \RR^{d\times m}$ be an i.i.d. Gaussian sketch matrix.
					\item Form the $n\times d$ matrix $\tilde{\bY}=(\bY\bY^\T)^q\bY\bOmega^\T$. 
					\item Construct $\bQ\in\RR^{m\times d}$ whose columns are an orthonormal basis of $\mathrm{range}(\tilde{\bY})$. To find such $\bQ$, one can use the QR decomposition: $\tilde{\bY}=\bQ\bm{R}$.
				\end{enumerate}  
		\item {\bf Step II:} SVD on a reduced matrix:
		\begin{enumerate}
				\item Form $\bB=\bQ^\T \bY \in \RR^{d\times m}$.
				\item Compute the SVD: $\bB=\tilde{\bU}\hat{\bSigma}\hat{\bV}^\T$.
				\item Compute $\hat{\bU}=\bQ \tilde{\bU}$.  
			\end{enumerate}
		Return: $\hbY=\hat{\bU}\hat{\bSigma}\hat{\bV}^\T$, an approximated partial SVD of $\bY$.
		\end{itemize}
%	The success of the randomized SVD owes to the following remarkable fact (which can be quantified \cite{halko2011finding}): when one reduces the dimension of $\bX$ from the right, using a random sketching matrix, the \emph{strong} left singular vectors of $\bX$ are approximately preserved.

     The present paper considers the R-SVD algorithm in its most basic form, with $q=0$ power iterations. The authors of \cite{halko2011finding} provide the following guarantee on the expected error, in operator norm, of the R-SVD:
	\begin{align}\label{eq:Halko-Bound}
		\Expt \| \bY - \hbY\| \le \left( 1 + \frac{4\sqrt{d}}{d-k-1} \sqrt{n\wedge m}\right)\sigma_{k+1}\,.
	\end{align}
	(\cite[Theorem 1.1]{halko2011finding}.)\footnote{Note that $k$ does not appear explicitly in the description of the algorithm. That is, \eqref{eq:Halko-Bound} holds for \emph{every} $k<d$. Also note that the expression in the parentheses on the right-hand side increases as $k$ increases, while $\sigma_k$ decreases; in particular, the bound \eqref{eq:Halko-Bound} is typically non-monotonic in $k$.  } 
    We remark that an improved bound, which is actually tight in a { worst-case} sense (but is also considerably more cumbersome to state), was proven in \cite{witten2015randomized}. While that bound improves on the constants in \eqref{eq:Halko-Bound}, its qualitative dependence on $d,k,m,n$ is essentially the same.
 
    While power iterations are known to \emph{dramatically} improve the performance of the R-SVD algorithm (at the expense of additional computational overhead), their treatment is beyond the analysis presented in this paper. (Even further improved variants of the basic R-SVD exist, for example 
    the block Krylov method of \cite{musco2015randomized}.) 
    From a theoretical perspective, we primarily aim to develop a finer-grained understanding of the importance of the ``Range Finder'' step, over more naive sketched SVD variants.

    \Revision{
    Consider, for example, the following simpler procedure---sometimes refered to as sketched PCA (e.g. \cite{yang2021reduce})---where instead of the two-step procedure described above, one  simply takes the right singular vectors of the randomly-projected data matrix $\tilde{\bY}={\bOmega'}\bY$ (one can similarly incorporate power iterations via ${\bOmega'}\bY(\bY^\T\bY)^q$); the respective singular values and left singular vectors can be approximated via $ {\bY}\hat{\bV}$. 
    How worse does this procedure perform over the more complicated R-SVD? 
    Sketched PCA (with $q=0$) was recently analyzed by \cite{yang2021reduce} under a setting similar to the  present paper (the spiked model). Comparing our results to theirs, we can quantify in a precise sense the statistical advantage of the R-SVD in the context of signal detection and estimation; see Section~\ref{sec:SketchedPCA}.
    }

 % \TODO{TKTK}
 %  Specifically, we only consider its most basic form. Over the years, several improvements over the basic algorithm
	% were introduced;
	% examples include the addition of power iterations \cite{halko2011algorithm}, and the block Krylov method of \cite{musco2015randomized}. While these are known to improve performance---dramatically even---their analysis is beyond the scope of the present paper.
	% We emphasize that the R-SVD algorithm (and by corollary, the analysis undertaken in this paper) is of certain practical interest. For example, the popular machine learning library, \texttt{scikit-learn}, uses this algorithm (with a constant number of power iterations---by default, $5$) for its implementation of the truncated SVD \cite{pedregosa2011scikit}.
	
\subsection{The spiked model}

Motivated by applications in data analysis and PCA, 
we aim to develop a {precise} quantitative picture of the randomized SVD, applied to a signal plus noise matrix $\bY=\bX+\bZ$, where $\bX$ is low-rank. We will work under a variant of the spiked mode, {\cite{johnstone2001distribution}}. Our results are \emph{asymptotic} and pertain to a regime where: 1) The matrix dimensions $m,n$ are both large and comparable, formally, $m,n\to\infty$ at a fixed aspect ratio $\gamma\equiv m/n$; 2) The signal rank, $r\equiv \rank(\bX)$ is constant; 3) The signal-to-noise ratio (SNR) of the problem is moderate, in that the singular values of $\bX$ and the noise $\bZ$ are of the same scale.
In this paper, we consider exclusively an i.i.d. noise matrix $\bZ$, specifically with Gaussian entries. 

The spiked model has found many applications within statistics, signal processing and machine learning (see, for example, \cite{couillet2011random,couillet2022random}), and its theoretical properties are at this point very well understood \cite{baik2005phase,baik2006eigenvalues,paul2007asymptotics,benaych2012singular,bloemendal2016principal}. This model is particularly appealing in the context of PCA owing to its very simple spectral theory, described by the following phenomena: 1) The singular values of $\bY$ are divided into a bulk, and up to $r$ outliers that exceed the bulk; 2) The shape of the bulk is determined by the spectrum of $\bZ$, and corresponds to a Marchenko-Pastur law; 3) The outliers are in a direct correspondence with the signal spikes; for $1\le i \le r$, the $i$-th largest singular value of $\bY$ will be an outlier if and only if the $i$-th population spike exceeds some {detection threshold}; 4) The angles between the population and observed principal components concentrate around deterministic quantities, which can be consistently estimated from the observed spectrum of $\bY$. Importantly, the empirical PCs are \emph{inconsistent} estimates (as $m,n\to\infty$) of the signal spikes. In Section~{\ref{sec:SpikeModelBackground}} we provide the precise details and relevant formulas.
\newline

\subsection{Main contributions and paper structure}

The main mathematical contribution of this paper is the development of an asymptotic theory for the R-SVD, when applied to signal-plus-noise matrices taken from the spiked model. Our results  parallel the existing theory available for the full SVD. The core phenomena is similar: the spectrum of the reduced data matrix---which we use as replacement for the truncated SVD of $\bY$---has a bulk-and-outliers structure, and the angles between the signal and empirical PCs tend to a deterministic limit. Using tools from random matrix theory, we derive asymptotically exact formulas for the positions of the outliers and the corresponding PC angles.

We emphasize that in the statistical regime we are interested in, the data matrix $\bY$ is \emph{not} low-rank, and is in fact well conditioned: $\sigma_1(\bY)/\sigma_{n\wedge m}(\bY)\sim const$; only the underlying signal $\bX$ is low-rank. In this regime, an error bound such as \eqref{eq:Halko-Bound}, while certainly \emph{true}, is of limited usefulness. Namely, at best we could deduce from it (e.g. using singular value and vector perturbation bounds) error bounds---an error interval---that are on the order $O(1)$, the same scale of the very quantities we are after. Thus, to get meaningful results in this regime, error bounds in operator norm, as \eqref{eq:Halko-Bound}, are a priori too crude.

Another important distinction between 
the asymptotic regime considered in this paper, compared to most previous works on the R-SVD such as \cite{halko2011finding}, is that the sketching dimension $d$ is scaled \emph{linearly} with the dimension: $d=\beta m$ for constant $\beta\in (0,1)$ as $n,m\to \infty$. Note that to ensure under our setup a finite bound in \eqref{eq:Halko-Bound}, recalling that $\sigma_{k+1}\sim const$, one must indeed take a linearly scaling $d\sim m\wedge n$. This setup is in line with \cite{yang2021reduce}, which studied sketched PCA in a similar asymptotic regime. 
Lastly, we remark that if one introduces $q\sim \log (n/d)$ power iterations into the basic algorithm, then it is possible to prove error bounds on the spectral norm which are $O(1)$ under any scaling of $d$; see for example \cite{halko2011finding,zhang2022perturbation}. Analyzing a setup of this kind is beyond the scope of our current techniques.

\paragraph*{} 
The paper is structured as follows. In Section~\ref{sec:Setup} we describe in detail the mathematical model within which we work. Section~\ref{sec:SpikeModelBackground} surveys known results on the (full SVD) of the spiked model, which serve as a benchmark for our new results on the R-SVD. 

In Section~\ref{sec:SetupMain} we state our main mathematical results, describing the asymptotic behavior of the singular values and vectors produced by the R-SVD. 

\Revision{
Section~\ref{sec:Discussion} is devoted to discussion, focusing on interpreting our results in the regime of small sketching dimension, $d/m \equiv \beta\ll 1$. We find that the SNR threshold for the emergence of an outlier singular value scales like $\sigma\asymp \beta^{-1/8}$; however, we also find that observed principal components corresponding to singular values of magnitude $\beta^{-1/8}\lesssim \sigma \lesssim \beta^{-1/2}$  tend to be very weakly correlated with the signal. In particular, observed PCs corresponding to outlying singular values which are very far away from the bulk may in fact be weakly correlated with the ground truth. This finding reveals a pitfall for PCA-based exploratory data analysis using the R-SVD, as we explain in Section~\ref{sec:StoryPCA}.

In Section~\ref{sec:SketchedPCA} we compare the R-SVD to sketched PCA, which is a more naive sketching-based method for dimensionality-reduced PCA. Recently \cite{yang2021reduce}, sketched PCA was analyzed under the spiked model. We show that the R-SVD attains better performance than sketched PCA, both for signal detection and estimation---the gap being particularly pronounced at very low sketching dimensions. For detection, while the R-SVD can detect signals of SNR $\sigma\gtrsim \beta^{-1/8}$, sketched PCA can only detect signals of SNR $\sigma\gtrsim \beta^{-1/4}$. For estimation, the R-SVD can reliably estimate the true data principal directions at SNR $\sigma\gtrsim \beta^{-1/4}$, whereas sketched PCA requires $\sigma\gtrsim \beta^{-1/2}$.
}

In Section~\ref{sec:Shrinker} we develop an optimal singular value shrinkage denoiser for the R-SVD. 

In Section~\ref{sec:Experiments} we present numerical experiments that demonstrate the applicability of our asymptotic theory in finite-$n$ settings.  
Finally, Section~\ref{sec:Proofs}
is devoted to the proofs of our main results, with some technical details deferred to the appendix. 

\subsection{Related works: sketching and randomized linear algebra}

While this paper deals exclusively with the R-SVD, we mention in passing that randomized sketching-based methods have been applied in recent years very fruitfully for other problems as well. 

The idea, at its core, is this: given a large matrix, one performs some form of dimensionality reduction to obtain a smaller matrix, on which the costly operation (in this paper: the SVD) is computationally feasible. To reduce the dimension, one typically multiplies by a random {sketching matrix}---for example a Gaussian i.i.d. matrix, or a Haar random projection---though we remark that substantial effort has gone towards constructing ``structured'' sketching matrices that support fast matrix-vector products, cf. \cite{achlioptas2003database,Ailon2009TheFJ,Ailon2008FastDR,rauhut2010compressive,krahmer2011new,kane2014sparser,jain2022fast}. The main mathematical insight, dating back to the pioneering work of Johnson and Lindenstrauss \cite{Johnson1984ExtensionsOL}, is that a random projection preserves, with overwhelming probability, the geometry of sufficiently low-dimensional structures (subspaces, small point clouds). Randomized linear algebra has since become a flourishing field, with a rich and vast literature. For an entry point, we refer to the following survey papers \cite{halko2011finding,mahoney2011randomized,woodruff2014sketching,kannan2017randomized,drineas2018lectures,martinsson2020randomized}.
    
    Besides the SVD,
    randomized sketching and subsampling methods have been extensively employed across a myriad of domains, with the goal of speeding up, or reducing the storage costs of, 
    computations involving large-dimensional matrices. A very partial list, with an eye towards applications in statistics and data science, includes: least squares regression \cite{sarlos2006improved,rokhlin2008fast,drineas2011faster,raskutti2016statistical,dobriban2019asymptotics}, ridge regression \cite{lu2013faster,chen2015fast,gonen2016solving,liu2019ridge}, principal component regression \cite{mor2019sketching}, two sample hypothesis testing \cite{lopes2011more,srivastava2016raptt}, clustering \cite{mixon2021sketching}, optimization \cite{pilanci2015randomized,tropp2017practical}, and many more.

    To our knowledge, only few existing works have studied algorithms from randomized linear algebra through the lens of a signal plus noise model, and under the spiked model in particular. Closest to the present paper is \cite{yang2021reduce}, which studied the asymptotic behavior of {sketched PCA} under the spiked model. 
    % In sketched PCA, one performs dimension reduction on the data matrix $\bY$, multiplying it to the right by a sketch matrix ${\hbY}=\bY\bOmega^\T$, and then taking the SVD. This is different from the R-SVD, which includes an additional subspace projection step; this makes the analysis involved considerably more contrived.
	
 Another relevant paper is \cite{zhang2022perturbation}, which considered, under a signal plus noise framework, a variant of sketched PCA which also include power iterations. Somewhat more loosely related is \cite{ma2022robust}, which studied low-rank matrix recovery from noisy sketches, giving error bounds for the double sketch method of \cite{fazel2008compressed} in the presence of noise. Both of these papers operate (namely, yield informative bounds) in a different asymptotic regime than the one considered here: that of diverging SNR.

	\section{Problem Setup and Assumptions}
 \label{sec:Setup}
	
	In this section, we describe in detail the mathematical model to be analyzed in this paper. It also serves to define some notation that will be used throughout.

	    We consider a setup where one observes an $n$-by-$m$ data matrix $\bY$ of the form $\bY=\bX+\bZ$, where $\bX$ is a low-rank, unknown, ``signal'' matrix---whose singular values and vectors are of interest---and $\bZ$ is ``noise''. 
	We work under a variant of the so-called {spiked model}, introduced by \cite{johnstone2001distribution}, wherein the problem dimensions $n,m\to \infty$ while the signal rank $r$ is held fixed. Consider an SVD of the signal matrix,
	\begin{equation}\label{eq:X-def}
		\begin{split}
			\bX &= \sum_{i=1}^r \sigma_i \bu_i \bv_i^\T  = \bU\bLambda \bV^\T
		\end{split}
	\end{equation}
	where
	\begin{equation}
		\bU=\MatL \bu_1 &\ldots &\bu_r \MatR \in \RR^{n\times r},\quad \bV= \MatL \bv_1 &\ldots &\bv_r \MatR \in \RR^{m\times r},\quad \bLambda=\diag(\sigma_1,\ldots,\sigma_r) .
	\end{equation}
	The matrices $\bU$ and $\bV$ collect, respectively, the left- and right- singular vectors of $\bX$, and so have orthonormal columns. 
	The corresponding singular values, enumerated in decreasing order, are held fixed as $n,m\to\infty$. For simplicity, they are assumed to be distinct:\footnote{This simplifying assumption is common throughout much of the literature on estimation in the spiked model, see for example {\cite{shabalin2013reconstruction,nadakuditi2014optshrink,gavish2017optimal,leeb2021optimal,donoho2020screenot}}.} $\sigma_1>\ldots>\sigma_r>0$. We make no a priori generative assumptions on the spike directions $\bU,\bV$, except that they have orthonormal columns.
	
	The noise matrix $\bZ$ is assumed to have independent and identically distributed (i.i.d.) Gaussian entries: $Z_{i,j}\overset{i.i.d.}{\sim}\m{N}(0,1/\sqrt{nm})$.\footnote{Note that this normalization is not-so standard in works dealing with PCA, such as \cite{yang2021reduce}, where the columns of $\bY$ are interpreted as i.i.d. samples from some high-dimensional distribution. Our normalization (following \cite{bloemendal2016principal}) makes  more sense in the context of matrix denoising, where the dimensions $n,m$ should have an equal role.}	
	We consider the so-called ``high-dimensional'' regime, where the dimensions diverge $n,m\to\infty$ at a constant aspect ratio:
	\begin{equation}
		\frac{m}{n} \to \gamma \in (0,\infty).
	\end{equation}
	
	Importantly, the noise variance is normalized so that the signal ($\bX$) and noise ($\bZ$) singular values are of the same scale, both being constant as $n,m\to\infty$. In particular, this is an SNR regime where consistent estimation (as $n,m\to\infty$) of the signal is generically not possible \cite{cai2016estimating,wainwright2019high}. 
	
	\subsection{Known results on the spiked model}
	\label{sec:SpikeModelBackground}
	
	Much is known about the singular value decomposition (SVD) of the data matrix $\bY$, and the relation between its principal components (PCs) to those of the signal matrix $\bX$. The singular values are arranged in the form of a bulk, whose limiting shape is a Marchenko-Pastur law, plus at most $r$ outliers exceeding the bulk edge. The outliers, and their corresponding singular vectors, are in one-to-one  correspondence with the signal spikes $\sigma_i$ and PCs $\bu_i,\bv_i$. The precise quantitative details {\cite{baik2005phase,baik2006eigenvalues,paul2007asymptotics,benaych2012singular,bloemendal2016principal}} are summarized below:
	
	\begin{fact}[The bulk singular values of $\bY$]
		\label{fact:1}
		The empirical distribution (counting measure) of the bulk singular values squared\footnote{In other words, the non-zero bulk \emph{eigenvalues} of $\bY\bY^\T$.}, $\sigma_{r+1}^2(\bY),\ldots,\sigma_{n\wedge m}^2(\bY)$ converges  weakly almost surely (as $n,m\to\infty$) to a Marchenko-Pastur law with shape parameter $\phi=\gamma\wedge\gamma^{-1}$ and scale parameter $\eta^2=(\gamma\vee\gamma^{-1})^{1/2}$:
		\begin{equation}
			\frac{1}{n\wedge m - r} \sum_{i=r+1}^{n\wedge m} \delta_{\sigma_{i}^2(\bY)} \overset{weakly}{\longrightarrow} \MPDensity_{\gamma\wedge \gamma^{-1},(\gamma\vee\gamma^{-1})^{1/2}} .
		\end{equation}
		The Marchenko-Pastur law with shape $0<\phi\le 1$ and scale $\eta^2>0$ has density
		\begin{equation}\label{eq:MP-law-general}
			\frac{d \MPDensity_{\phi,\eta}}{d\lambda}(\lambda) = \frac{1}{2\pi \eta^2\phi } \frac{\sqrt{(\lambda_{\phi,\eta^2}^{+}-\lambda)(\lambda-\lambda_{\phi,\eta^2}^{-})}}{\lambda},\quad \textrm{supported on}\quad \lambda\in \left[ \lambda_{\phi,\eta^2}^{-},\lambda_{\phi,\eta^2}^{+} \right],
		\end{equation}
		where 
		\begin{equation}
			\lambda_{\phi,\eta^2}^{\pm} = \eta^2(1 \pm \sqrt{\phi})^2.
		\end{equation} 
		
		In our case, the density can be written explicitly in terms of $\gamma$,
		\begin{equation}
			\frac{1}{2\pi \sqrt{\gamma\wedge \gamma^{-1}}} \frac{\sqrt{\Delta_{\gamma}(\lambda)}}{\lambda},\quad \textrm{supported on}\quad \lambda\in \left[ \NoiseEdge_{\gamma}^-,\NoiseEdge_\gamma^+ \right]
		\end{equation}
		where
		\begin{equation}
			\Delta_{\gamma}(\lambda)=(\NoiseEdge_{\gamma}^{+}-\lambda)(\lambda-\NoiseEdge_{\gamma}^{-}),\qquad\textrm{and}\quad \NoiseEdge_{\gamma}^{\pm} 
			% = \lambda_{\gamma\wedge\gamma^{-1},(\gamma\vee\gamma^{-1})^{1/2}} 
			= \gamma^{1/2}+\gamma^{-1/2}\pm 2 .
		\end{equation}
	\end{fact}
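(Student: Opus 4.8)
This is a classical fact, so rather than prove it from scratch I would assemble it from standard random matrix theory. The plan is to reduce to the Marchenko-Pastur theorem for the pure-noise matrix and then argue that neither the rank-$r$ signal nor the removal of the top $r$ eigenvalues disturbs the limiting bulk. Assume without loss of generality that $n\le m$, so that $\phi=\gamma^{-1}=n/m$, $\eta^2=\gamma^{1/2}$, and the relevant squared singular values are the eigenvalues of the $n\times n$ matrix $\bY\bY^\T$; the case $n>m$ is symmetric under transposing $\bY$.

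First I would handle the noise. Writing $\bZ=(nm)^{-1/4}\bm{G}$ with $\bm{G}\in\RR^{n\times m}$ having i.i.d.\ $\m{N}(0,1)$ entries, one has $\bZ\bZ^\T=\sqrt{m/n}\cdot\tfrac1m\bm{G}\bm{G}^\T$. The Marchenko-Pastur theorem (e.g.\ \cite{bai2010spectral}) gives that the empirical spectral distribution of $\tfrac1m\bm{G}\bm{G}^\T$ converges weakly, almost surely, to the Marchenko-Pastur law of ratio $n/m\to\phi$; rescaling by $\sqrt{m/n}\to\gamma^{1/2}$ turns this into $\MPDensity_{\phi,\eta}$ with $\eta^2=\gamma^{1/2}$, whose support edges are $\gamma^{1/2}(1\pm\gamma^{-1/2})^2=\gamma^{1/2}+\gamma^{-1/2}\pm2=\NoiseEdge_\gamma^\pm$ and whose density matches the displayed formula. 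If one prefers not to rely on the precise moment hypotheses of the cited references for the almost-sure (rather than in-probability) statement, it can be recovered here by noting that for each fixed $\lambda$ with $\Im\lambda\ne0$ the Stieltjes transform $\tfrac1n\tr(\bZ\bZ^\T-\lambda\bI)^{-1}$ is a Lipschitz function of $\bm{G}$, hence concentrates by Gaussian concentration of measure, followed by Borel-Cantelli along a countable dense set of $\lambda$'s.

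Next I would remove the signal. Since $\rank(\bX)=r$, the Hermitian difference $\bY\bY^\T-\bZ\bZ^\T=\bX\bY^\T+\bZ\bX^\T$ has rank at most $2r$, a constant in $n,m$. The rank inequality for empirical eigenvalue CDFs of Hermitian matrices (cf.\ \cite{bai2010spectral}) then bounds the Kolmogorov distance between the ESDs of $\bY\bY^\T$ and $\bZ\bZ^\T$ by $2r/n\to0$, so the former inherits the same almost-sure weak limit $\MPDensity_{\phi,\eta}$. Finally, deleting the $r$ largest atoms and replacing the normalization $1/n$ by $1/(n-r)$ perturbs the empirical measure in total variation by $O(r/n)\to0$, which does not change the weak limit; this is exactly the asserted convergence of $\tfrac{1}{n\wedge m-r}\sum_{i=r+1}^{n\wedge m}\delta_{\sigma_i^2(\bY)}$.

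I do not anticipate a genuine obstacle, as each ingredient is textbook; the only points requiring care are pure bookkeeping---tracking the $n\le m$ versus $n>m$ dichotomy, and chasing the normalization constants so that the scale parameter emerges as $(\gamma\vee\gamma^{-1})^{1/2}$ and the shape as $\gamma\wedge\gamma^{-1}$, rather than some power of $\gamma$ with the wrong exponent. A fully self-contained alternative, bypassing the Marchenko-Pastur theorem entirely, would instead show directly that the Stieltjes transform of the ESD of $\bY\bY^\T$ converges almost surely to the solution of the Marchenko-Pastur self-consistent equation; but given the literature cited for this Fact, such detail is unnecessary.
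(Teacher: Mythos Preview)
Your proposal is correct and is exactly the standard argument. Note that the paper does not actually prove Fact~\ref{fact:1}: it is stated in Section~\ref{sec:SpikeModelBackground} as a known background result about the spiked model, with citations to \cite{baik2005phase,baik2006eigenvalues,paul2007asymptotics,benaych2012singular,bloemendal2016principal}, and no proof is given. Your sketch---Marchenko--Pastur for the pure noise $\bZ\bZ^\T$ after tracking the $(nm)^{-1/4}$ normalization, then the rank inequality to absorb the rank-$r$ signal, then an $O(r/n)$ total-variation perturbation from deleting the top $r$ atoms---is precisely how one would reconstruct it from \cite{bai2010spectral}, and the normalization bookkeeping you flag is handled correctly.
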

	~
	
	\begin{fact}
		[The leading PCs of $\bY$]
		\label{fact:2}
		Define the \emph{spike detection threshold}\footnote{Also often referred to in the literature as the Baik-Ben Arous-P\'{e}ch\'{e} (BBP) phase transition {\cite{baik2005phase}}.} $\sigma^*=1$. 
		\begin{itemize}
			\item {\bf The outlying singular values:} The matrix $\bY$ has at most $r$ singular values exceeding the Marchenko-Pastur bulk. Almost surely, for every constant $i\ge r+1$,
			\begin{equation}
				\sigma_{i}^2(\bY) \longrightarrow \NoiseEdge_\gamma^{+} \equiv \gamma^{-1/2}+\gamma^{1/2} + 2.
			\end{equation}
			A signal spike $\sigma_i$ creates an outlier if and only if it exceeds the detection threshold. If so, its asymptotic location is given by the \emph{spike-forward map}:
   for $1\le i \le r$,
			\begin{equation}\label{eq:SpikeFwdClassical}
				\sigma_{i}^2(\bY) \longrightarrow \begin{cases}
					\m{Y}_{\gamma}^2(\sigma_i)\quad&\textrm{if }\;\sigma_i\ge \sigma^* \\
					\NoiseEdge_{\gamma}^+\quad&\textrm{if }\;\sigma_i\le \sigma^*
				\end{cases}\,,\quad\textrm{}\quad 
				\m{Y}_{\gamma}(\sigma) \equiv \sqrt{ (\gamma^{\frac14}\sigma + \gamma^{-\frac14}\sigma^{-1})(\gamma^{-\frac14}\sigma + \gamma^{\frac14}\sigma^{-1}) } \,.
			\end{equation}
			
			\item {\bf Principal component angles:} Denote by $\hbu_i,\hbv_i$, respectively, the left- and right- singular vectors of $\bY$. There is a 1-1 correspondence between the leading $r$ principal components of $\bX$ and $\bY$:
			\begin{equation}
				\langle \bu_i, \hbu_j\rangle,\; \langle \bv_i, \hbv_j\rangle \longrightarrow 0 \quad\textrm{whenever}\quad 1\le i \ne j \le r \,.
			\end{equation}
			That is, non-corresponding PCs are asymptotically orthogonal.
			Moreover, the angle (equivalently the correlation/overlap) between corresponding PCs converges to a deterministic value. For undetectable spikes, the PCs are asymptotically orthogonal, while above the threshold they become increasingly aligned (as the SNR increases):
			\begin{equation}\label{eq:fact:vectors-u}
				\left|\langle \bu_i, \hbu_j\rangle\right|\longrightarrow \begin{cases}
					\m{U}_{\gamma}(\sigma_i) \quad&\textrm{if }\;\sigma_i\ge \sigma^* \\
					0 \quad&\textrm{if }\;\sigma_i\le \sigma^*
				\end{cases}\,, \qquad \m{U}_{\gamma}(\sigma) \equiv \sqrt{\frac{\sigma^4-1}{\sigma^4+\gamma^{-\frac12}\sigma^2}}\,,
			\end{equation}
			and 
			\begin{equation}\label{eq:fact:vectors-v}
				\left|\langle \bv_i, \hbv_j\rangle\right|\longrightarrow \begin{cases}
					\m{V}_{\gamma}(\sigma_i) \quad&\textrm{if }\;\sigma_i\ge \sigma^* \\
					0 \quad&\textrm{if }\;\sigma_i\le \sigma^*
				\end{cases}\,, \qquad \m{V}_{\gamma}(\sigma) \equiv \sqrt{\frac{\sigma^4-1}{\sigma^4+\gamma^{\frac12}\sigma^2}}\,.
			\end{equation}
			Moreover,
			\begin{equation}\label{eq:fact:dyad}
				\langle \bu_i, \hbu_j\rangle \langle \bv_i, \hbv_j\rangle\longrightarrow \begin{cases}
					{\m{U}_{\gamma}(\sigma)\m{V}_{\gamma}(\sigma_i)} \quad&\textrm{if }\;\sigma_i\ge \sigma^* \\
					0 \quad&\textrm{if }\;\sigma_i\le \sigma^*
				\end{cases}\,.
			\end{equation}
			
		\end{itemize}
		
	\end{fact}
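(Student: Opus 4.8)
\emph{Proof approach (the statement is classical; here is the route I would take).}
Fact~\ref{fact:2} goes back to \cite{baik2005phase,benaych2012singular,bloemendal2016principal}; I would prove it via resolvents, specialized to Gaussian noise. The plan is to first pass to the Hermitization $\bHc(\bM)$ --- the $(n+m)\times(n+m)$ symmetric matrix with vanishing diagonal blocks and off-diagonal blocks $\bM$ and $\bM^\T$ --- whose nonzero eigenvalues are $\pm\sigma_i(\bM)$ and whose eigenvectors are built from the pairs $(\hbu_i,\hbv_i)$ that encode the singular vectors. Writing $\bHc(\bY)=\bHc(\bZ)+\bHc(\bX)$, the perturbation $\bHc(\bX)=\bP\bD\bP^\T$ has rank at most $2r$ (here $\bP$ has orthonormal columns spanning $\range\bHc(\bX)$ and $\bD=\diag(\sigma_1,\ldots,\sigma_r,-\sigma_1,\ldots,-\sigma_r)$), so by the matrix determinant lemma any eigenvalue $\theta$ of $\bHc(\bY)$ outside the spectrum of $\bHc(\bZ)$ solves the $2r\times 2r$ ``master equation''
$$\det\!\Big( \bI + \bD\,\bP^\T\big(\bHc(\bZ)-\theta\bI\big)^{-1}\bP \Big)=0 .$$

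Next I would replace $\bP^\T(\bHc(\bZ)-\theta\bI)^{-1}\bP$ by its deterministic limit. The law of $\bHc(\bZ)$ is invariant under conjugation by block-orthogonal matrices $\diag(\bO_1,\bO_2)$, so this $2r\times 2r$ matrix concentrates on a deterministic limit whose blocks are given by the Stieltjes transform $\MPStiel$ of the Marchenko--Pastur law of Fact~\ref{fact:1}, its companion transform, and multiples of $\theta$ (the mixed $\bu$--$\bv$ blocks), with all off-diagonal ($i\ne j$) entries vanishing. Substituting makes the master equation decouple into $r$ scalar equations; the one attached to $\sigma_i$, solved for $\theta>0$, yields $\theta=\m{Y}_\gamma(\sigma_i)$, i.e.\ \eqref{eq:SpikeFwdClassical}. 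The threshold $\sigma^*=1$ appears as the value at which this scalar root first exits the bulk, namely $\m{Y}_\gamma(\sigma_i)^2\downarrow \NoiseEdge_\gamma^+$ as $\sigma_i\downarrow 1$; for $\sigma_i\le\sigma^*$ there is no root outside the bulk, and $\sigma_i^2(\bY)\to\NoiseEdge_\gamma^+$ then follows from edge rigidity of $\bZ\bZ^\T$ (the top bulk eigenvalue sticking to $\NoiseEdge_\gamma^+$) combined with Weyl interlacing under the rank-$2r$ perturbation.

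For the principal-component overlaps I would use a Cauchy-integral representation of the spectral projector onto the outlier eigenspace of $\bY\bY^\T$. With $\lambda_i^\star=\m{Y}_\gamma^2(\sigma_i)$ the limiting outlier eigenvalue and $\Gamma_i$ a small positively-oriented loop enclosing it alone,
$$|\langle\bu_i,\hbu_i\rangle|^2 \;=\; -\frac{1}{2\pi\iu}\oint_{\Gamma_i}\bu_i^\T\big(\bY\bY^\T-\lambda\bI\big)^{-1}\bu_i\,d\lambda ,$$
and expanding the perturbed resolvent by Woodbury in terms of the same deterministic equivalents, the integrand has inside $\Gamma_i$ only a simple pole at $\lambda_i^\star$; its residue, simplified using the definition of $\m{Y}_\gamma$, is exactly $\m{U}_\gamma(\sigma_i)^2$, giving \eqref{eq:fact:vectors-u}. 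The identical computation with $\bv_i$ and $\bY^\T\bY$ gives \eqref{eq:fact:vectors-v}, and the mixed form $\bu_i^\T(\bY\bY^\T-\lambda\bI)^{-1}\bY\bv_i$ gives the dyad \eqref{eq:fact:dyad}. Replacing $\bu_i$ by $\bu_j$ with $j\ne i$ kills the residue (the relevant block of the deterministic equivalent vanishes), giving asymptotic orthogonality of non-corresponding PCs; and for $\sigma_i\le\sigma^*$ there is no outlier inside $\Gamma_i$ to pick up, so the overlap tends to $0$.

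The hard part is not the algebra but step two: showing that the resolvent bilinear forms $\bu_i^\T(\bZ\bZ^\T-\lambda\bI)^{-1}\bu_j$, together with their $\bv$- and mixed analogues, concentrate on their deterministic limits \emph{uniformly} for $\lambda$ in a fixed complex neighborhood of the spectral edge $\NoiseEdge_\gamma^+$ --- an anisotropic local law for a Wishart-type matrix. For Gaussian $\bZ$ this is accessible via orthogonal invariance, reducing such forms to spectral averages against Haar-random unit vectors and then combining concentration of the empirical spectral distribution with edge rigidity; the genuinely delicate input is precisely the square-root edge behavior and rigidity of the largest bulk eigenvalue of $\bZ\bZ^\T$, which pin down $\sigma^*=1$ exactly and rule out spurious collisions between the prospective outlier and the bulk. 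Alternatively one could invoke the theorem of \cite{benaych2012singular} wholesale, whose hypotheses --- a compactly supported limiting noise spectrum and isotropic delocalization of the noise singular vectors --- hold here.
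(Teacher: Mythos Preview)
The paper does not prove Fact~\ref{fact:2}; it is stated as known background in Section~\ref{sec:SpikeModelBackground} with citations to \cite{baik2005phase,baik2006eigenvalues,paul2007asymptotics,benaych2012singular,bloemendal2016principal}, and no proof is given anywhere in the paper. Your sketch is a faithful outline of the standard resolvent/master-equation approach of \cite{benaych2012singular} (which the paper itself later adapts, in Section~\ref{sec:Proofs}, to analyze the reduced matrix $\hbY$), so there is nothing to compare against beyond noting that your route is consistent with the cited literature.
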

	
	The present paper aims to develop an asymptotic picture, similar to the one described in Facts~\ref{fact:1}-\ref{fact:2}, for the approximated singular values and vectors obtained through the R-SVD algorithm.

	\subsection{The randomized SVD (R-SVD) algorithm}
	
	We repeat the description of the R-SVD algorithm from  \cite{halko2011finding}, introducing some notation along the way. It proceeds as follows: 
	\begin{enumerate}[label=(\Roman*)]
		\item Let $1\le d \le m$ be the {\it sketching dimension}. Let $\bOmega \in \RR^{d\times m}$ be the sketching matrix, which is statistically independent of $\bX,\bZ$.
		Popular choices of $\bOmega$ include a Gaussian i.i.d. matrix, or a projection onto a uniformly (Haar) random $d$-dimensional subspace of $\RR^m$. While our analysis extends beyond these two particular choices, we \emph{will} require that $\bOmega$ acts essentially ``random-like'' on the signal right singular vectors; see details below.        
		One forms the {\it sketched data matrix}:
		\begin{equation}
			\bYtilde = \bY\bOmega^\T \;\in\; \RR^{n\times d}\,.
		\end{equation}
		
		\item Next, one finds an orthonormal basis for the range (column space) of $\bYtilde$. Note that if $d\ge n$, then the range of $\bYtilde$ is identical to that of $\bY$. This case is uninteresting,  and so we will always assume that $d< n$. Thus, the range of $\bYtilde$ is, w.p. $1$ (since $\bZ$ has a density), a $d$-dimensional proper subspace of $\RR^n$. Let $\bPc:\RR^n\to \RR^n$ be the projection operator onto this subspace. Note that $\bPc$ may be easily computed from a QR decomposition of $\bY$: If $\bYtilde = \bm{Q}\bm{R}$ then $\bPc=\bm{Q}\bm{Q}^\T$. 
		
		\item Finally, one takes the SVD of the reduced matrix 
		\begin{equation}\label{eq:hbY-def}
			\hbY = \bPc \bY,
		\end{equation}
		denoted
		\begin{align}
			\hbY \overset{SVD}{=} \sum_{i=1}^d \hsigma_i \hbu_i \hbv_i^\T \,.
		\end{align}
		Central to the R-SVD algorithm is the intuition that the {leading} ($i\ll d$) PCs of $\hbY$ are good proxies for the leading PCs of the un-reduced matrix $\bY$. 
%		Recall, however, that our ultimate interest in these empirical PCs is to be used as stand-ins for the PCs of the signal $\bX$, which are unknown.   
		Note that to compute the SVD of $\hbY$, one only needs to compute the SVD of a $d$-by-$m$ matrix (instead of $n$-by-$m$):
		First, (i) we compute the SVD of $\bm{Q}^\T \tilde{\bY} \in \RR^{d\times m}$; and then (ii) multiply the resulting left singular vectors by $\bm{Q} \in \RR^{n\times d}$.  
		
	\end{enumerate}
	
	As mentioned before, the sketching dimension $d$ is taken proportional to the signal dimensions; specifically, for a constant undersampling ratio $\beta\in (0,1]$, we assume\footnote{In particular, note the requirement $\beta\gamma<1$ (strict inequality). When $\beta\gamma\ge 1$ no dimension reduction is actually performed, so $\hbY=\bY$ exactly, and the formulas from Section~\ref{sec:SpikeModelBackground} apply. Some of our  results are written in terms of certain compound algebraic expressions, which we could not reduce into a concise closed form and which exhibit singularities at $\gamma\beta=1$ (however these singularities ultimately do cancel out). For this reason, the formulas given in Section~\ref{sec:SetupMain} are not directly applicable when $\gamma\beta= 1$.}
	\begin{equation}
		\frac{d}{m}\to \beta \in (0,1] \qquad\textrm{and moreover that}\quad  \frac{d}{n}\to \gamma\beta \in (0,1) \qquad\textrm{as }\quad n,m,d\to\infty .
	\end{equation}

	\paragraph*{The sketching matrix.} We always assume that $\bOmega$ is full-rank: $\rank(\bOmega)=d$. Also note that we may assume without loss of generality that $\bOmega$ is a projection matrix (that is, has orthonormal rows). To see this, take the SVD, $\bOmega=\bU_{\bOmega}\bSigma_{\bOmega}\bV_{\bOmega}$ and write $\tilde{\bY}=\bY\bOmega^\T=\bY \bV_{\bOmega}^\T \bSigma_{\bOmega}\bU_{\bOmega}^\T$. Since $\rank(\bSigma_{\bOmega}\bU_{\bOmega}^\T)=d$, we have $\range(\bY\bOmega^\T)=\range(\bY\bV_{\bOmega}^\T)$. Since 
	$\hbY$ only depends on $\bOmega$ through $\range(\bY\bOmega^\T)$, see \eqref{eq:hbY-def}, it would not change at all if we replaced $\bOmega$ by its matrix of right singular vectors $\bV_{\bOmega}^\T\in \RR^{d\times m}$.
	
	We make a strong incoherence assumption between the signal right singular vectors of $\bX$ and the matrix $\bV_{\bOmega}$: w.p. $1$,
	\begin{equation}\label{eq:assum:incoherence}
		\langle \bV_{\bOmega}^\T \bv_i, \bV_{\bOmega}^\T \bv_j \rangle \longrightarrow \beta \, \Indic{i=j}\quad\textrm{for}\quad 1\le i,j\le r,
	\end{equation}
where recall that $\beta\equiv d/m$. The assumption \eqref{eq:assum:incoherence} dictates that $\bV_{\bOmega}^\T$ behaves on the signal right singular vectors $\{\bv_1,\ldots,\bv_r\}$ essentially 
like a random projection would: 1) it preserves (asymptotically) the orthogonality between different singular vectors; 2) the proportion of energy retained in a component after the projection is equal to the down-sampling ratio: $\|\bV_{\bOmega}^\T \bv_i\|^2\approx d/m$.

Clearly, if $\bOmega\in \RR^{d\times m}$ is either a random projection or an i.i.d. Gaussian matrix, then \eqref{eq:assum:incoherence} holds. (In fact, note that if $\bOmega$ is Gaussian then $\bV_{\bOmega}^\T$ is a random projection.)
Other examples where \eqref{eq:assum:incoherence} is satisfied are when $\bOmega$ is a randomized sub-sampled Hadamard or discrete Fourier transform matrix, or when $\bOmega$ is a uniformly random coordinate sub-sampling operator and the population spikes $\bv_1,\ldots,\bv_r$ are all sufficiently de-localized. For details, see for example \cite{yang2021reduce}.

    \section{Main Results}
    \label{sec:SetupMain}

    This paper studies the spectrum of the reduced data matrix $\hbY$ in \eqref{eq:hbY-def}.
    Our main results describe a phenomenology for its singular values and vectors, that parallels that of Facts~\ref{fact:1}-\ref{fact:2} for the singular values and vectors of $\bY$. Like the un-reduced data matrix, the singular values of $\hbY$ are arranged in a bulk-and-outliers structure, which can be related in an explicit sense to the singular values and vectors of the signal $\bX$.

    Our first theorem pertains to the bulk singular values, namely parallels Fact~\ref{fact:1}. 

    \begin{theorem}
        [The bulk singular values of $\hbY$] 
        \label{thm:1:Bulk}
        The empirical distribution of the bulk singular values squared $\sigma^2_{r+1}(\hbY),\ldots,\sigma^2_{d}(\hbY)$ converges weakly almost surely, as $n,m,d\to\infty$, to a Marchenko-Pastur law with shape and scale parameters
        \begin{equation}\label{eq:thm:QZ-LSD}
            \phi = \frac{\gamma\beta}{1+\gamma-\gamma\beta},\qquad \eta^2 = \gamma^{-1/2}+(1-\beta)\gamma^{1/2}.
        \end{equation} 
        Written explicitly, the limiting density is 
        \begin{equation}\label{eq:thm:1:density}
            \frac{1}{2\pi\beta\sqrt{\gamma}} \frac{\sqrt{\Delta_{\gamma,\beta}(\lambda)}}{\lambda},\qquad \textrm{supported on}\quad \lambda \in \left[ \NoiseEdge_{\gamma,\beta}^{-},\NoiseEdge_{\gamma,\beta}^{+} \right],
        \end{equation}
        where
        \begin{equation}\label{eq:edges-def}
            \NoiseEdge_{\gamma,\beta}^{\pm} = \gamma^{-1/2}+\gamma^{1/2} \pm 2\sqrt{\beta(1+\gamma-\gamma\beta)}\,,
        \end{equation}
        and
        \begin{equation}
            \begin{split}
                \Delta_{\gamma,\beta}(\lambda) 
                &= \left(\NoiseEdge_{\gamma,\beta}^{+}-\lambda\right)\left( \lambda-\NoiseEdge_{\gamma,\beta}^{-} \right) \\
                &= 
                % -\frac1\gamma \left( \delta(y)^2 - 4\beta\gamma(1+\gamma-\gamma\beta \right)\quad\textrm{where}\quad \delta(\lambda) = \sqrt{\gamma}\lambda-1-\gamma\,.
                -\frac{1}{\gamma}\left( (\sqrt{\gamma}\lambda-1-\gamma)^2-4\beta\gamma(1+\gamma-\gamma\beta) \right).
            \end{split}
        \end{equation}
    \end{theorem}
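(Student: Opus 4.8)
The plan is to discard the signal and then identify the limiting spectral distribution of the reduced noise matrix through a free-probability computation, the punchline being that the relevant free additive convolution is again a Marchenko--Pastur law. \textbf{Reduction to the noise matrix.} Let $\bPc$ denote the projection onto $\range(\bY\bOmega^\T)$ and $\bPc_0$ the projection onto $\range(\bZ\bOmega^\T)$. Because $\bX\bOmega^\T$ has rank at most $r$, these two $d$-dimensional subspaces both lie in a common subspace of dimension at most $d+r$, so $\bPc-\bPc_0$ has rank at most $2r$; hence $\hbY-\bPc_0\bZ=\bPc\bX+(\bPc-\bPc_0)\bZ$ has rank at most $3r$. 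By the rank inequality for singular values (interlacing), the empirical distributions of $\sigma_i^2(\hbY)$ and of the squared singular values of $\hbZ:=\bPc_0\bZ$ have the same weak limit, so it is enough to analyze $\hbZ$. Since $\bZ$ has i.i.d.\ Gaussian entries (hence is invariant under right multiplication by orthogonal matrices) and $\bOmega$ may be taken with orthonormal rows, we may further assume $\bOmega=[\,\bI_d\ \ \0\,]$; writing $\bZ=[\,\bZ_1\ \ \bZ_2\,]$ with $\bZ_1\in\RR^{n\times d}$ and $\bZ_2\in\RR^{n\times(m-d)}$ independent Gaussian blocks, we have $\bPc_0=\bZ_1(\bZ_1^\T\bZ_1)^{-1}\bZ_1^\T$ and, using $\bPc_0\bZ_1=\bZ_1$,
\[
\hbZ\hbZ^\T \;=\; \bZ_1\bZ_1^\T+\bPc_0\bZ_2\bZ_2^\T\bPc_0 .
\]

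\textbf{A $d\times d$ model and asymptotic freeness.} Let $\bQ\in\RR^{n\times d}$ have orthonormal columns spanning $\range(\bZ_1)$. The nonzero eigenvalues of $\hbZ\hbZ^\T$ coincide with those of $\bQ^\T\hbZ\hbZ^\T\bQ=\bR\bR^\T+\bm{G}\bm{G}^\T$, where $\bR:=\bQ^\T\bZ_1$ (so $\bR^\T\bR=\bZ_1^\T\bZ_1$) and $\bm{G}:=\bQ^\T\bZ_2$. Condition on $\bZ_1$, hence on $\bQ$. Then $\bR\bR^\T$ is a fixed positive semidefinite matrix with the same spectrum as the Wishart-type matrix $\bZ_1^\T\bZ_1$, whose empirical distribution converges almost surely, by the Marchenko--Pastur theorem, to the MP law of shape $\gamma\beta$ and mean $\gamma^{-1/2}$, with uniformly bounded support. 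Conditionally on $\bZ_1$, the matrix $\bm{G}$ has i.i.d.\ Gaussian entries, so $\bm{G}\bm{G}^\T$ is orthogonally invariant, is independent of $\bR\bR^\T$, and has empirical distribution converging a.s.\ to the MP law of shape $\beta/(1-\beta)$ and mean $(1-\beta)\gamma^{1/2}$. By asymptotic freeness of a deterministic (norm-bounded) sequence and an independent orthogonally-invariant sequence, the empirical distribution of $\bR\bR^\T+\bm{G}\bm{G}^\T$ converges a.s.\ to the free additive convolution of those two limit laws---and, after discarding the finitely many largest values, so does the distribution of the bulk squared singular values of $\hbY$.

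\textbf{The free convolution.} The two limit laws have $R$-transforms
\[
R_1(z)=\frac{\gamma^{-1/2}}{1-\gamma^{1/2}\beta z},\qquad R_2(z)=\frac{(1-\beta)\gamma^{1/2}}{1-\gamma^{1/2}\beta z},
\]
which happen to share the same denominator; therefore $R_1(z)+R_2(z)=\eta^2\big/(1-\gamma^{1/2}\beta z)$ with $\eta^2=\gamma^{-1/2}+(1-\beta)\gamma^{1/2}$. This is exactly the $R$-transform of the Marchenko--Pastur law with shape $\phi=\gamma^{1/2}\beta/\eta^2=\gamma\beta/(1+\gamma-\gamma\beta)$ and scale parameter $\eta^2$, which is \eqref{eq:thm:QZ-LSD}; the density \eqref{eq:thm:1:density}, the bulk edges \eqref{eq:edges-def}, and the stated form of $\Delta_{\gamma,\beta}$ then follow from $\lambda^\pm_{\phi,\eta^2}=\eta^2(1\pm\sqrt{\phi})^2$ by elementary algebra.

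\textbf{Main obstacle.} The reduction step and the $R$-transform bookkeeping are routine; the heart of the argument is the $d\times d$ reduction together with the freeness step. The key structural point is that splitting $\bZ=[\,\bZ_1\ \ \bZ_2\,]$ decomposes $\hbZ$ into a part perfectly aligned with the sketch's range (yielding $\bZ_1\bZ_1^\T$, a Wishart matrix) and a part ``free'' of it (yielding the orthogonally-invariant $\bm{G}\bm{G}^\T$); one must then apply asymptotic freeness in the conditional-on-$\bZ_1$ model, checking a.s.\ convergence of the two empirical spectral distributions, a.s.\ operator-norm boundedness of $\bR\bR^\T$, and the resulting nesting of almost-sure events (via Fubini). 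An alternative route that sidesteps black-box freeness is to write down and solve a self-consistent fixed-point equation for the Stieltjes transform of $\bQ^\T\bZ\bZ^\T\bQ$ using resolvent identities and the same block decomposition.
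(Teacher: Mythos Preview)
Your proof is correct, but the route diverges from the paper's at the identification of the limiting law, and the paper's argument there is slicker. Both proofs make the same moves up to the $d\times d$ reduction: express $\hbY$ as a bounded-rank perturbation of $\bQc\bZ$ (your rank bound on $\bPc-\bPc_0$ is a cleaner way to do this than the paper's explicit Sherman--Morrison expansion, though the paper needs that expansion later for the outlier analysis), take $\bOmega=[\bI_d\ \0]$, and pass to the $d\times d$ matrix $\diag(\bmu)+\bX\bX^\T$ with $\bX=\bQ^\T\bZ_2$ i.i.d.\ Gaussian independent of $\bZ_1$.

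At this point you invoke asymptotic freeness and compute a free additive convolution of two MP laws via their $R$-transforms. The paper instead observes that, because $\bX$ has an orthogonally-invariant distribution, the eigenvalues of $\diag(\bmu)+\bX\bX^\T$ are equidistributed with those of $\bZ_1^\T\bZ_1+\bX\bX^\T=\bm{T}\bm{T}^\T$, where $\bm{T}=[\bZ_1^\T\ \ \bX]\in\RR^{d\times(n+m-d)}$ is a \emph{single} i.i.d.\ Gaussian matrix. The limit is therefore directly a Marchenko--Pastur law with shape $d/(n+m-d)\to\gamma\beta/(1+\gamma-\gamma\beta)$ and scale $(n+m-d)/\sqrt{nm}\to(1+\gamma-\gamma\beta)/\sqrt{\gamma}$---no freeness or $R$-transforms needed. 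This buys two things: it is fully elementary (no black-box freeness theorem, no conditional/Fubini bookkeeping), and it immediately yields convergence of the largest singular value to the edge from standard Wishart results, which the paper needs next and which your freeness argument does not directly supply.
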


\paragraph*{}
    Next we describe the behavior of the leading singular values and vectors of $\hbY$. To this end, define the following functions $\kappa_{\gamma,\beta}^{(i)}: (\sqrt{\NoiseEdge_{\gamma,\beta}^{+}},\infty) \to \RR$:
    \begin{equation}\label{eq:kappa-1}
       \kappa_{\gamma,\beta}^{(1)}(y) = \frac{\sqrt{\gamma}y^2 + 1-\gamma -\sqrt{-\gamma\Delta_{\gamma,\beta}(y^2)}}{2y} ,
   \end{equation}
   \begin{equation}
       \label{eq:kappa-2}
       \kappa_{\gamma,\beta}^{(2)}(y) = (1-\beta\gamma)\frac1y ,
   \end{equation}
  \begin{equation}\label{eq:kappa-3}
      \kappa_{\gamma,\beta}^{(3)}(y) = \frac{\sqrt{\gamma}(1+\gamma)y^2 - (1+\gamma-2\beta\gamma)^2 - (1+\gamma-2\beta\gamma)\sqrt{-\gamma \Delta_{\gamma,\beta}(y^2)}}{2\gamma(1-\beta\gamma)y^3} ,
  \end{equation}
  \begin{equation}\label{eq:kappa-4}
      \kappa_{\gamma,\beta}^{(4)}(y) 
      = \frac{ \sqrt{\gamma}y^2 - (1+\gamma-2\beta\gamma) - \sqrt{-\gamma\Delta_{\gamma,\beta}(y^2)} }
      { 2\gamma y^2 } ,
  \end{equation}
  \begin{equation}\label{eq:kappa-5}
      {\kappa}^{(5)}_{\gamma,\beta}(y) = \frac
      { -(1+\beta)\sqrt{\gamma}y^2 + (1-\beta) \left((1+\gamma-2\beta\gamma) + \sqrt{-\gamma\Delta_{\gamma,\beta}(y^2)} \right)}
      {2(1-\beta\gamma)y^2},
  \end{equation}
  \begin{equation}\label{eq:kappa-6}
      \kappa^{(6)}_{\gamma,\beta}(y) = \frac{\sqrt{\gamma}y^2 - (1-\gamma) - \sqrt{-\gamma \Delta_{\gamma,\beta}(y^2)}}{2\gamma y} ,
  \end{equation}
  \begin{equation}
      \label{eq:kappa-7}
      \kappa_{\gamma,\beta}^{(7)}(y) = 
      \frac{1-\beta}{y}
      % (1-\beta)\frac
      %  {\sqrt{\gamma}y^2 + \left(1+\gamma -2\beta\gamma\right) - \sqrt{-\gamma\Delta_{\gamma,\beta}(y^2)}}
      %  {2(1+\gamma-\beta\gamma)y},
  \end{equation}
  \begin{equation}
      \label{eq:kappa-8}
      \kappa^{(8)}_{\gamma,\beta}(y) = (1-\beta)\frac
      {\sqrt{\gamma}y^2 + 1 -\gamma -\sqrt{-\gamma\Delta_{\gamma,\beta}(y^2)}}
      {2(1-\beta\gamma)y },
  \end{equation}
  \begin{equation}
      \label{eq:kappa-9}
      \kappa_{\gamma,\beta}^{(9)}(y) = \sqrt{\gamma}(1-\beta)(1-\beta\gamma) \frac
       {\sqrt{\gamma}y^2 + \left(1+\gamma -2\beta\gamma\right) - \sqrt{-\gamma\Delta_{\gamma,\beta}(y^2)}}
       {2(1+\gamma-\beta\gamma)y} .
  \end{equation}
  Furthermore, define 
  the  matrix-valued function $\bKc_{\gamma,\beta}:(\sqrt{\NoiseEdge_{\gamma,\beta}^+},\infty)\to \mathrm{Sym}^6(\RR)$:
  \begin{equation}\label{eq:K-func}
      \bKc_{\gamma,\beta}(y) = \MatL 
          \kappa_{\gamma,\beta}^{(1)}(y) &\kappa_{\gamma,\beta}^{(2)}(y) &0 &0 &0 &0\\
      \kappa_{\gamma,\beta}^{(2)}(y) &\kappa_{\gamma,\beta}^{(2)}(y) &0 &0 &0 &-(1-\gamma\beta)  \\
      0 &0 &\kappa_{\gamma,\beta}^{(3)}(y) & \kappa_{\gamma,\beta}^{(4)}(y) &\kappa_{\gamma,\beta}^{(5)}(y)  &0 \\
      0 &0 &\kappa_{\gamma,\beta}^{(4)}(y) &\kappa_{\gamma,\beta}^{(6)}(y)&\kappa_{\gamma,\beta}^{(7)}(y) &0\\
      0 &0 &\kappa_{\gamma,\beta}^{(5)}(y)(y) &\kappa_{\gamma,\beta}^{(7)}(y) (y)&\kappa_{\gamma,\beta}^{(8)}(y) &0\\ 
      0 &-(1-\gamma\beta) &0 &0 &0 &\kappa_{\gamma,\beta}^{(9)}(y)    
      \MatR
  \end{equation}
  Denote also the following symmetric $6$-by-$6$ matrix:
  \begin{equation}
      \label{eq:H}
      \Hbb = 
      \MatL \0 & \bI^{-}_3 \\
      \bI^{-}_3 &\0
      \MatR \qquad\textrm{where} \quad \bI^{-}_3 = \MatL 1 &0 &0 \\ 0 &-1 &0 \\ 0 &0 &1 \MatR .
    %   \MatL
    %   0 &0 &0 &1 &0 &0 \\
          % 0 &0 &0 &0 &-1 &0 \\
          % 0 &0 &0 &0 &0 &1 \\
          % 1 &0 &0 &0 &0 &0\\
          % 0 &-1 &0 &0 &0 &0\\
          % 0 &0 &1 &0 &0 &0
    %   \MatR.
  \end{equation}

  The solution to the generalized eigenvalue problem
  \begin{equation}
    \label{eq:eigenvalue-problem}
    \det\left( \bKc_{\gamma,\beta}(y) - s\bHc\right) = 0,\qquad s\ge 0\,,
  \end{equation}
  plays an important role in our results to follow.

  \begin{proposition}\label{prop:PosEigFunc-properties}
    For any $y> \sqrt{\NoiseEdge^+_{\gamma,\beta}}$, Eq. \eqref{eq:eigenvalue-problem} has a unique positive root $s> 0$. 

    Denote this root by $\PosEigFunc(y)$. Furthermore, the following holds.

    \begin{enumerate}
        \item $\PosEigFunc(y)$ is a generalized eigenvalue of multiplicity $1$: $\rank(\bKc_{\gamma,\beta}(y)-\PosEigFunc(y)\bHc)=5$.
        \item The function $y\mapsto \PosEigFunc(y)$ is strictly decreasing, with $\PosEigFunc(\infty) = 0$, and its value at $y=\NoiseEdgeUpper$ is
        \Revision{
        \begin{equation}\label{eq:BBP-squared}
            \frac{1}{\PosEigFunc^2 ( \sqrt{\NoiseEdge_{\gamma,\beta}^+} )}
            = 
            \sqrt{v^2 + \sqrt{\frac{\rho}{\gamma}}\beta^{-1/2}} -v
        \end{equation}
        where
        \begin{equation}\label{eq:v-rho}
        v=\frac12(\gamma^{1/2}+\gamma^{-1/2}-\beta\gamma^{1/2}-\sqrt{\beta\rho}),\qquad 
        \rho = 1+\gamma-\beta\gamma\,.
        \end{equation}
        }
%         \begin{equation}
%             \label{eq:BBP-squared}
%             \begin{split}
% %        \left(1/\PosEigFunc ( \sqrt{\NoiseEdge_{\gamma,\beta}^+} )\right)^2
% \frac{1}{\PosEigFunc^2 ( \sqrt{\NoiseEdge_{\gamma,\beta}^+} )}
%         &=
%         \frac{\sqrt{\beta  \gamma  (1+\gamma -\beta  \gamma )}-(1+\gamma -\beta  \gamma )
%         }{2 \sqrt{\gamma }} \\
%         &+ 
%         \frac{
%         \sqrt{\left(\frac{4}{\beta }-2 (1+\gamma -\beta  \gamma )\right) \sqrt{\beta  \gamma  (1+\gamma -\beta  \gamma )}+(1+\gamma -\beta  \gamma ) (1+\gamma )}}
%         {2 \sqrt{\gamma }}\,.
%         \end{split}
%         \end{equation}
        \item The functional inverse $\PosEigFunc^{-1}(\cdot)$ has the following explicit formula:
        \begin{equation}
            \label{eq:PosEigFunc-inv}
            \PosEigFunc^{-1}(s) = 
            \sqrt{\frac{\left(s^2+\left(1+s^4\right) \sqrt{\gamma }+s^2 \gamma \right) \left(\beta  \gamma +s^4 (1+\gamma -\beta  \gamma )+2 s^2 \beta  \sqrt{\gamma } (1+\gamma -\beta  \gamma )\right)}{s^2 \left(s^2+\beta  \sqrt{\gamma }\right) \sqrt{\gamma }
   \left(\sqrt{\gamma }+s^2 (1+\gamma -\beta  \gamma )\right)}}\,.
        \end{equation}
    \end{enumerate}
  \end{proposition}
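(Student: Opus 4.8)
The plan is to use the block structure of $\bKc_{\gamma,\beta}(y)$ together with the special form of $\bHc$ to collapse the $6\times 6$ generalized eigenvalue problem \eqref{eq:eigenvalue-problem} to an ordinary $3\times 3$ eigenvalue problem. Inspecting \eqref{eq:K-func}, the index sets $\{1,2,6\}$ and $\{3,4,5\}$ do not interact, so after reordering the standard basis of $\RR^6$ as $(e_1,e_2,e_6,e_4,e_5,e_3)$ the matrix $\bKc_{\gamma,\beta}(y)$ becomes block-diagonal $\diag(\bA(y),\bB(y))$, with $\bA(y)$ the principal submatrix on $\{1,2,6\}$ and $\bB(y)$ the one on $\{4,5,3\}$. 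In that same basis $\bHc$ becomes $\left[\begin{smallmatrix}\0&\bI^{-}_3\\ \bI^{-}_3&\0\end{smallmatrix}\right]$, because the involution encoded by $\bHc$ pairs $1\leftrightarrow 4$, $2\leftrightarrow 5$, $6\leftrightarrow 3$ with signs $+,-,+$ --- precisely a bijection between the two diagonal blocks of $\bKc_{\gamma,\beta}$. Conjugating the bottom block by the involution $\bI^{-}_3$ and then taking a Schur complement (an identity of polynomials in $s$, hence valid without invertibility hypotheses) gives
\begin{equation*}
  \det\!\big(\bKc_{\gamma,\beta}(y)-s\bHc\big)=\det\!\big(\bM(y)-s^2\bI_3\big),\qquad \bM(y):=\bA(y)\,\bI^{-}_3\,\bB(y)\,\bI^{-}_3 .
\end{equation*}
Thus the positive roots $s$ of \eqref{eq:eigenvalue-problem} correspond, via $s\mapsto s^2$, exactly to the positive eigenvalues of the explicit $3\times 3$ matrix $\bM(y)$, and such a root is simple iff the eigenvalue is.

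It then suffices to prove: for every $y>\sqrt{\NoiseEdge^{+}_{\gamma,\beta}}$ and every admissible pair $(\gamma,\beta)$ (with $\beta\gamma<1$), $\bM(y)$ has exactly one positive eigenvalue, which is simple. Writing the characteristic polynomial $\det(\bM(y)-\lambda\bI_3)=-\lambda^3+(\tr\bM)\lambda^2-e_2(\bM)\lambda+\det\bM$, where $e_2(\bM)$ is the sum of the three $2\times2$ principal minors, Descartes' rule of signs shows that $\det\bM(y)>0$ together with $e_2(\bM(y))<0$ forces exactly one positive root, of multiplicity one, irrespective of the sign of $\tr\bM(y)$ and of whether the other two eigenvalues are real or a complex-conjugate pair. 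To check these two inequalities I would rationalize the only nested radical in the $\kappa^{(i)}$, namely $q(y):=\sqrt{-\gamma\Delta_{\gamma,\beta}(y^2)}$. Since $-\gamma\Delta_{\gamma,\beta}(y^2)=(\sqrt{\gamma}y^2-1-\gamma)^2-4\beta\gamma\rho$ with $\rho=1+\gamma-\beta\gamma$ as in \eqref{eq:v-rho}, the hyperbola $q^2=(\sqrt{\gamma}y^2-1-\gamma)^2-4\beta\gamma\rho$ is uniformized by a single parameter $t:=\sqrt{\gamma}y^2-1-\gamma-q$ through
\begin{equation*}
  \sqrt{\gamma}\,y^2=1+\gamma+\tfrac{t}{2}+\tfrac{2\beta\gamma\rho}{t},\qquad q=\tfrac{2\beta\gamma\rho}{t}-\tfrac{t}{2},
\end{equation*}
and the branch $y>\sqrt{\NoiseEdge^{+}_{\gamma,\beta}}$, i.e.\ $q>0$, corresponds to $t\in\big(0,\,2\sqrt{\beta\gamma\rho}\,\big)$, on which $y^2$ is strictly decreasing in $t$ onto $(\NoiseEdge^{+}_{\gamma,\beta},\infty)$. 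Under this substitution all $\kappa^{(i)}$ --- and hence $\det\bM$ and $e_2(\bM)$ --- become explicit rational functions of $t$ (and $\gamma,\beta$), and the two inequalities reduce to verifying the sign of a polynomial in $t$ on $\big(0,\,2\sqrt{\beta\gamma\rho}\,\big)$. In fact I expect the characteristic polynomial of $\bM(y)$ to split off a rational linear factor $\lambda-\mu_{+}(y)$ with $\mu_{+}(y)>0$, the quadratic cofactor having no positive root, which would make both the uniqueness statement and the inverse formula below transparent.

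Granting the above, the three enumerated claims follow by bookkeeping. Property~1 is immediate from simplicity of the positive eigenvalue of $\bM(y)$. For Property~3, write $\lambda=s^2$ and regard $\det(\bM(y)-\lambda\bI_3)=0$ as a polynomial relation in $(\lambda,t)$; since $y^2$ is already rational in $t$, eliminating $t$ between this relation and $\sqrt{\gamma}\,y^2=1+\gamma+t/2+2\beta\gamma\rho/t$ expresses $y^2$ as a rational function of $\lambda=s^2$ along the positive branch, which after simplification is \eqref{eq:PosEigFunc-inv}; one checks en route that this map carries the relevant $s$-interval onto $(\sqrt{\NoiseEdge^{+}_{\gamma,\beta}},\infty)$. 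For Property~2: the limit $\PosEigFunc(\infty)=0$ follows because $q(y)=\sqrt{\gamma}y^2-1-\gamma+O(y^{-2})$, whence every $\kappa^{(i)}(y)=O(y^{-1})$, so $\bM(y)\to\0$ and all of its eigenvalues tend to $0$; strict monotonicity of $\PosEigFunc$ is equivalent to strict monotonicity of the explicit inverse \eqref{eq:PosEigFunc-inv}, checked directly (or deduced from $y^2$ strictly decreasing in $t$ together with $s^2=\mu_{+}$ strictly increasing in $t$ on $\big(0,\,2\sqrt{\beta\gamma\rho}\,\big)$, both visible from the rational expressions); and the boundary value \eqref{eq:BBP-squared} is obtained by specializing to $q=0$, i.e.\ $y^2=\NoiseEdge^{+}_{\gamma,\beta}$, $t=2\sqrt{\beta\gamma\rho}$, where the $\kappa^{(i)}$ collapse to rational functions of $\gamma,\beta$, computing the unique positive eigenvalue of $\bM\big(\sqrt{\NoiseEdge^{+}_{\gamma,\beta}}\big)$, and simplifying using $v$ as in \eqref{eq:v-rho}.

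The main obstacle is the symbolic part of the second step: establishing $\det\bM(y)>0$ and $e_2(\bM(y))<0$ uniformly over $t\in\big(0,\,2\sqrt{\beta\gamma\rho}\,\big)$ and over the two-parameter family $(\gamma,\beta)$, and carrying out the elimination that yields the closed form \eqref{eq:PosEigFunc-inv} and the value \eqref{eq:BBP-squared}. These steps are not conceptually subtle but are computation-heavy; the rational uniformization by $t$ is precisely what makes them tractable --- and amenable to computer-algebra verification --- since it removes the radical and turns everything into polynomial sign and resultant computations.
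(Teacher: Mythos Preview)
Your reduction of the $6\times 6$ generalized eigenvalue problem to the ordinary $3\times 3$ eigenvalue problem for $\bM(y)=\bA(y)\,\bI^{-}_3\,\bB(y)\,\bI^{-}_3$ is correct and is a genuinely cleaner structural explanation than what the paper does. The paper does not carry out this block reduction; it simply reports (appealing to a computer algebra system) the closed-form factorization
\[
\det\big(\bKc_{\gamma,\beta}(y)-s\bHc\big)=-\frac{s^2+\beta\sqrt{\gamma}}{2\sqrt{\gamma}\rho}\,P\big(s^2;D(y^2)\big),\qquad P(q;A)=2\sqrt{\gamma}\rho\,q^2+\rho(2\beta\gamma-A)q-\sqrt{\gamma}A,
\]
with $\rho=1+\gamma-\beta\gamma$ and $A=D(y^2)=(\sqrt{\gamma}y^2-1-\gamma)-\sqrt{-\gamma\Delta_{\gamma,\beta}(y^2)}>0$. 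In your language this says precisely that the characteristic polynomial of $\bM(y)$ factors as $-(\lambda+\beta\sqrt{\gamma})\big(\lambda^2+\tfrac{2\beta\gamma-A}{2\sqrt{\gamma}}\lambda-\tfrac{A}{2\rho}\big)$: one eigenvalue is the constant $-\beta\sqrt{\gamma}<0$, and the quadratic cofactor has product of roots $-A/(2\rho)<0$, hence exactly one positive root $q_+$. Uniqueness, simplicity, the inverse formula (solve the quadratic for $A$, then for $y^2$), and monotonicity (the paper checks $q_+$ is increasing in $A$ while $A=D(y^2)$ is decreasing in $y$) all fall out of this single formula with essentially no further work.

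There is, however, a concrete error in your plan. The Descartes-rule route via $\det\bM(y)>0$ and $e_2(\bM(y))<0$ does not go through: while $\det\bM(y)=\beta\sqrt{\gamma}A/(2\rho)>0$ holds, one computes from the eigenvalues $-\beta\sqrt{\gamma},q_+,q_-$ that
\[
e_2(\bM(y))\;=\;-\frac{A}{2\rho}+\frac{\beta(2\beta\gamma-A)}{2}\;\longrightarrow\;\beta^2\gamma\;>\;0\qquad\text{as }y\to\infty\ (A\to 0),
\]
so $e_2(\bM)$ is positive for large $y$. Relatedly, your asymptotic claim ``every $\kappa^{(i)}(y)=O(y^{-1})$, so $\bM(y)\to\0$'' is false: $\kappa^{(5)}(y)\to-\beta\sqrt{\gamma}/(1-\beta\gamma)\ne 0$, and in fact $\bM(y)\to\diag(0,-\beta\sqrt{\gamma},-\beta\sqrt{\gamma})$, consistent with the factorization above. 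Your hedged expectation that a rational linear factor $\lambda-\mu_+(y)$ with $\mu_+(y)>0$ splits off is also backwards: the linear factor is $\lambda+\beta\sqrt{\gamma}$ (a \emph{negative} constant eigenvalue), and it is the quadratic cofactor that carries the unique positive root. So your reduction is the right frame, but the specific sign argument must be replaced by actually computing the cubic---which, once done, reproduces exactly the paper's proof.
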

%   The proof of Proposition~\ref{prop:PosEigFunc-properties} appears in Appendix, Section~\ref{sec:proof-prop:PosEigFunc-properties}. 

\paragraph*{}
\Revision{
Define the \emph{spike detection threshold} $\BBP =1/\PosEigFunc ( \sqrt{\NoiseEdge_{\gamma,\beta}^+} )$; explicitly,
\begin{equation}\label{eq:BBP}
    \BBP = \sqrt{ \sqrt{v^2 + \sqrt{\frac{\rho}{\gamma}}\beta^{-1/2}} -v }
\end{equation}
with $v,\rho$ as in \eqref{eq:v-rho}. (Note that for $\beta=1$, $\rho=1,v=\frac12(\gamma^{-1/2}-1)$; so $\sigma_{\gamma,\beta=1}^*=1$ coincides with the detection threshold from Section~\ref{sec:SpikeModelBackground}.)
}

Define also
the \emph{spike-forward map}, $\SpikeFunc(\sigma) = \PosEigFunc^{-1}(1/\sigma)$, given explicitly by the formula
\begin{align}
    \label{eq:SpikeFunc}
    \SpikeFunc(\sigma) 
    &= \sqrt{\frac{\left(\sqrt{\gamma }+\sigma ^2\right) \left(1+\sqrt{\gamma } \sigma ^2\right) \left(1+\gamma -\beta  \gamma +2 \beta  \sqrt{\gamma } (1+\gamma -\beta  \gamma ) \sigma ^2+\beta  \gamma  \sigma ^4\right)}{\sqrt{\gamma } \sigma ^2 \left(1+\gamma
    -\beta  \gamma +\sqrt{\gamma } \sigma ^2\right) \left(1+\beta  \sqrt{\gamma } \sigma ^2\right)}} \nonumber \\
	&= \m{Y}_{\gamma}(\sigma) \sqrt{\frac{\left(1+\gamma -\beta  \gamma +2 \beta  \sqrt{\gamma } (1+\gamma -\beta  \gamma ) \sigma ^2+\beta  \gamma  \sigma ^4\right)}{\left(1+\gamma
			-\beta  \gamma +\sqrt{\gamma } \sigma ^2\right) \left(1+\beta  \sqrt{\gamma } \sigma ^2\right)}}\,,
\end{align}
where $\m{Y}_{\gamma}(\sigma)$ is defined in \eqref{eq:SpikeFwdClassical}.
Note that $\SpikeFunc(\cdot)$ is  a bijective, increasing map between $(\BBP,\infty)\mapsto (\sqrt{\NoiseEdge_{\gamma,\beta}^+},\infty)$. 

\begin{figure}
	\centering
	\includegraphics[width=0.45\textwidth]{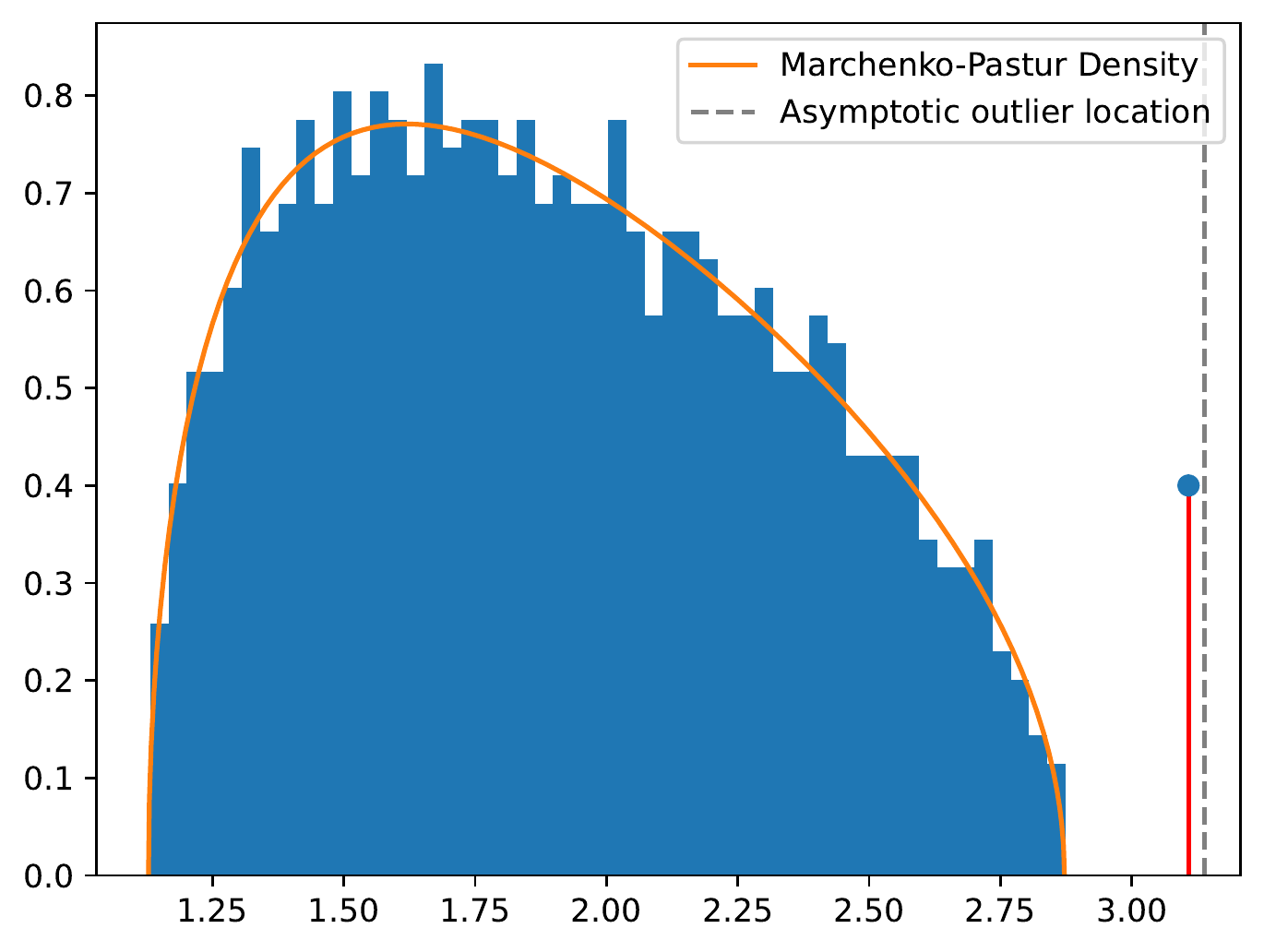}
    \includegraphics[width=0.45\textwidth]{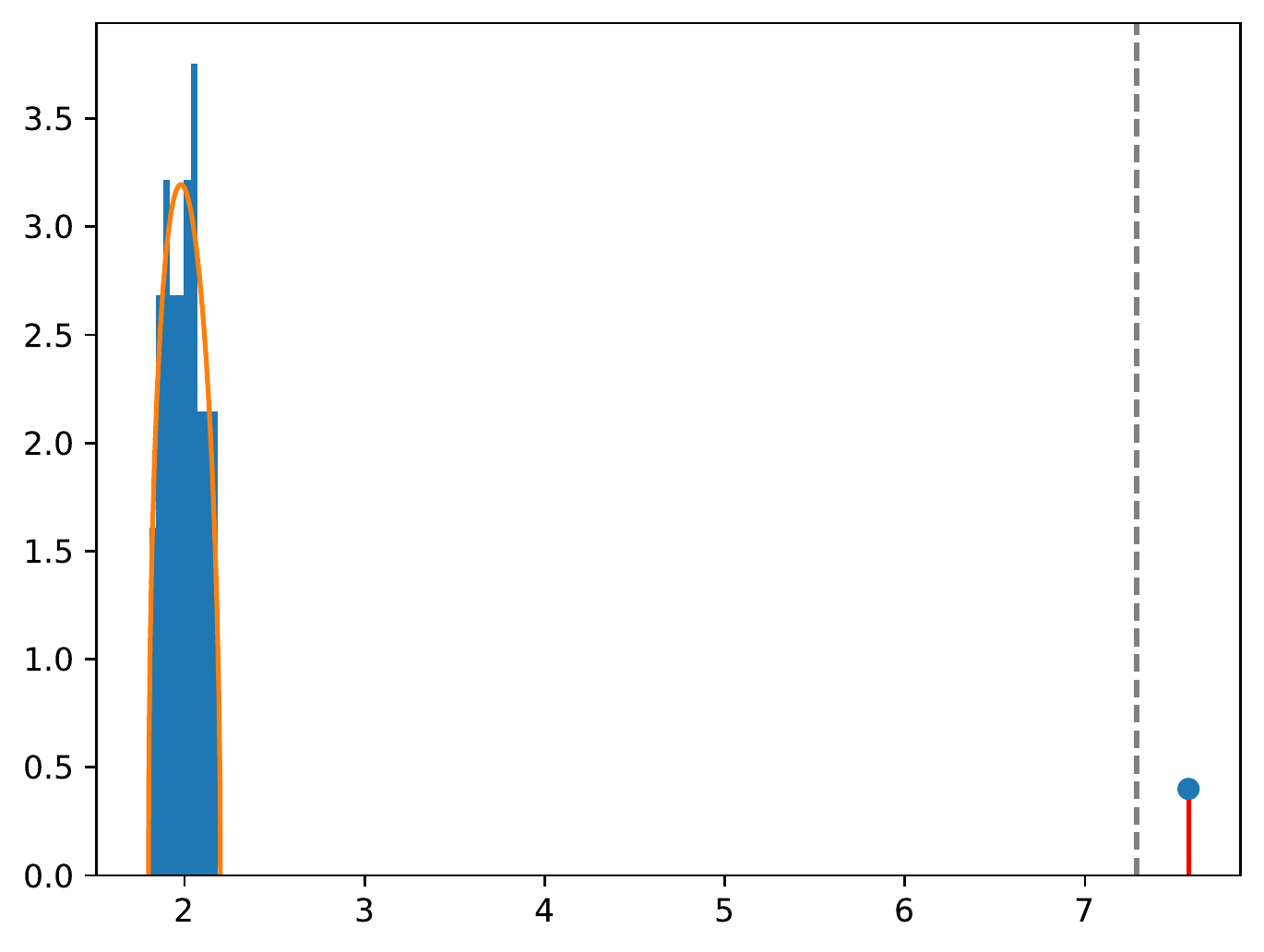}
	\caption{
    \Revision{    
    An illustration of Theorems~\ref{thm:1:Bulk} and \ref{thm:2:Outliers}.
	Plotted is a histogram of the eigenvalues of $\hbY\hbY^\T$ in a single spiked setup $\bY=\sigma \bu\bv^\T+ \bZ$ ($r=1$).
    The sub-leading eigenvalues, $\sigma_2^2(\hbY),\ldots,\sigma_d^2(\hbY)$ are arranged in a bulk, whose limiting shape (orange curve) is a Marcheko-Pastur law per Theorem~\ref{thm:1:Bulk}. The largest eigenvalue $\sigma_1^2(\hbY)$ is an outlier, where asymptotically, per Theorem~\ref{thm:2:Outliers}, $\sigma_1^2(\hbY) \to \SpikeFunc^2(\sigma)$ (grey vertical line).
    In both instances $n=m=10^4$ (hence $\gamma=1$). Left: $d=10^3$, corresponding to $\beta=1/10$, and $\sigma=\BBP+0.4\approx 1.6$. Right: $d=50$, corresponding to $\beta=1/200$, and $\sigma=\beta^{-1/3}\approx 5.85$.     
    % with $n=m=10^4$ ($\gamma=1$), $m=10^3$ ($\beta=1/10$) and $\sigma=\BBP+0.4\approx 1.6$. 
    }
	}
\label{fig:MainHistogram}
\end{figure}

The following result describes the behavior of the $r$ largest singular values of $\hbY$. It parallels the first bullet of Fact~\ref{fact:2}:
    \begin{theorem}[The outlying singular values of $\hbY$]
    \label{thm:2:Outliers}
        The matrix $\hbY$ has at most $r$ singular values exceeding the upper bulk edge from Theorem~\ref{thm:1:Bulk}. To wit, for every constant $i\ge r+1$, almost surely,
        \[
            \sigma_{i}^2(\hbY) \longrightarrow \NoiseEdge_{\gamma,\beta}^+ .
        \]
        As for the largest $r$ singular values, they satisfy:
        \begin{equation}\label{eq:thm:2:Outliers}
            \sigma_i^2(\hbY)\longrightarrow
            \begin{cases}
            \SpikeFunc^2(\sigma_i)\quad&\textrm{if }\;\sigma_i > \BBP, \\
                \NoiseEdge_{\gamma,\beta}^+ \quad&\textrm{if }\;\sigma_i\le \BBP
            \end{cases} .
        \end{equation}
        % where $\SpikeFunc(\cdot)$ is defined in \eqref{eq:SpikeFunc}.
    \end{theorem}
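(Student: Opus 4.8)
\emph{Proof sketch.}
The plan is to realize $\hbY$ as a rank-$O(r)$ perturbation of a ``signal-free'' reduced matrix and then carry out a BBP-type analysis parallel to the one behind Fact~\ref{fact:2}. Since the algorithm depends on $\bOmega$ only through $\range(\bY\bOmega^\T)$, assume $\bOmega$ has orthonormal rows and split $\bZ = \bZ_1\bOmega + \bZ_2\bOmega_\perp$ along the row space of $\bOmega$ and its complement; then $\bZ_1 = \bZ\bOmega^\T$ and $\bZ_2 = \bZ\bOmega_\perp^\T$ are independent matrices with i.i.d.\ Gaussian entries. Now $\bYtilde = \bY\bOmega^\T = \bZ_1 + \bX\bOmega^\T$ with $\bX\bOmega^\T$ of rank $\le r$, so the projection $\bPc$ onto $\range(\bYtilde)$ differs from the projection $\bPc_0$ onto $\range(\bZ_1)$ by a matrix of rank $O(r)$; a short computation then gives $\hbY = \bPc\bY = \hbZ + \bD$, where $\hbZ := \bPc_0\bZ$ is the matrix the algorithm would produce from pure noise — whose squared singular values obey the Marchenko--Pastur law of Theorem~\ref{thm:1:Bulk} — and $\bD$ has rank $O(r)$, with column and row spaces spanned by an explicit family of ``probe'' vectors built from $\bU,\bV,\bLambda$ and the noise: on the $\RR^n$ side the $\bu_i$, their $\bPc_0$-images, and vectors such as $\bPc_0\bZ\bv_i$; on the $\RR^m$ side the $\bv_i$, their images under the projection $\bOmega^\T\bOmega$ onto the row space of $\bOmega$, and vectors such as $\bZ^\T\bu_i$. (The incoherence assumption~\eqref{eq:assum:incoherence} is precisely what makes $\bOmega\bv_i$ behave like a random projection of $\bv_i$, which is needed for the deterministic limits below.) These probes amount to six directions per spike, which is the source of the $6\times6$ matrices $\bKc_{\gamma,\beta}(y)$ and $\bHc$.

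Passing to the Hermitian dilation of $\hbY$ (whose nonzero eigenvalues are $\pm\sigma_i(\hbY)$), the identity $\hbY\hbY^\T = \hbZ\hbZ^\T + \hbZ\bD^\T + \bD\hbZ^\T + \bD\bD^\T$ shows that an outlier $\lambda = y^2 > \NoiseEdge_{\gamma,\beta}^+$ of $\hbY\hbY^\T$ which is not an eigenvalue of $\hbZ\hbZ^\T$ is characterized by a secular (determinant) equation, in which the probe vectors are sandwiched against the resolvents $(\hbZ\hbZ^\T - y^2)^{-1}$, $(\hbZ^\T\hbZ - y^2)^{-1}$ and the cross term $(\hbZ\hbZ^\T - y^2)^{-1}\hbZ$, with the coupling among the probes carrying the signal factor $\bLambda = \diag(\sigma_i)$. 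The key analytic step is an \emph{anisotropic deterministic equivalent}: for $y$ in compact subsets of $(\sqrt{\NoiseEdge_{\gamma,\beta}^+},\infty)$, each such bilinear form converges almost surely, uniformly in $y$, to a deterministic limit. After diagonalizing the spike index — which decouples asymptotically since $\sigma_1>\dots>\sigma_r$ and non-matching probes become orthogonal, exactly as $\langle\bu_i,\hbu_j\rangle\to0$ for $i\ne j$ in Fact~\ref{fact:2} — these limits assemble into the entries of $\bKc_{\gamma,\beta}(y)$, and the secular equation collapses, spike by spike, to $\det\!\big(\bKc_{\gamma,\beta}(y)-\tfrac1{\sigma_i}\bHc\big)=0$. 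Proposition~\ref{prop:PosEigFunc-properties} now closes this part: that equation has a root $y>\sqrt{\NoiseEdge_{\gamma,\beta}^+}$ if and only if $\tfrac1{\sigma_i}<\PosEigFunc(\sqrt{\NoiseEdge_{\gamma,\beta}^+})=1/\BBP$, i.e.\ if and only if $\sigma_i>\BBP$, and the root then equals $\PosEigFunc^{-1}(1/\sigma_i)=\SpikeFunc(\sigma_i)$, which is the first branch of \eqref{eq:thm:2:Outliers}.

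The rest follows by a standard argument. A rigidity bound for $\hbZ$ — that $\sigma_1^2(\hbZ),\dots,\sigma_d^2(\hbZ)$ all lie in $[\NoiseEdge_{\gamma,\beta}^- - o(1),\,\NoiseEdge_{\gamma,\beta}^+ + o(1)]$ — shows that above any $\NoiseEdge_{\gamma,\beta}^+ + \eps$ the only eigenvalues of $\hbY\hbY^\T$ are zeros of the secular determinant, which (by the deterministic equivalent and continuity of roots) number exactly $\#\{i:\SpikeFunc^2(\sigma_i)>\NoiseEdge_{\gamma,\beta}^+ +\eps\}$ and converge to the corresponding $\SpikeFunc^2(\sigma_i)$; letting $\eps\downarrow0$ gives the first branch for every super-threshold spike. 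Combined with Theorem~\ref{thm:1:Bulk} (the bulk of $\hbY$ fills $[\NoiseEdge_{\gamma,\beta}^-,\NoiseEdge_{\gamma,\beta}^+]$ up to the edge) and a Weyl interlacing bound, this forces $\sigma_i^2(\hbY)\to\NoiseEdge_{\gamma,\beta}^+$ for every sub-threshold spike and every $i\ge r+1$. The two branches of \eqref{eq:thm:2:Outliers} agree at $\sigma_i=\BBP$ by the continuity and monotonicity of $\SpikeFunc$ noted after \eqref{eq:SpikeFunc}.

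The main obstacle is establishing the anisotropic deterministic equivalent for the resolvents of $\hbZ = \bPc_0\bZ$. Two features make it delicate: $\bPc_0$ is the projection onto $\range(\bZ\bOmega^\T)$, so $\hbZ$ is a genuinely nonlinear functional of the single Gaussian matrix $\bZ$ — a ``random compression of a random matrix'' — rather than a sum of independent contributions; and one needs control of the resolvents' action on fixed vectors, not just of their normalized traces (which already suffice for Theorem~\ref{thm:1:Bulk}). I expect this to be handled either by proving a local law for $\hbZ\hbZ^\T$ directly, via a linearization that couples $\bZ$, $\bOmega$ and an auxiliary spectral variable in a single self-consistent (matrix Dyson) system, or by re-expressing $\hbZ\hbZ^\T$ in terms of compressions of the Wishart matrix $\bZ^\T\bZ$ — using $\bPc_0=\bZ\bOmega^\T(\bOmega\bZ^\T\bZ\bOmega^\T)^{-1}\bOmega\bZ^\T$ — and transporting a known anisotropic law through Schur-complement identities. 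Granting the deterministic equivalents, computing the entries of $\bKc_{\gamma,\beta}$ and checking that $\det(\bKc_{\gamma,\beta}(y)-s\bHc)=0$ reproduces the explicit spike-forward map $\SpikeFunc$ of \eqref{eq:SpikeFunc} is lengthy but routine, and is exactly where Proposition~\ref{prop:PosEigFunc-properties} does the bookkeeping.
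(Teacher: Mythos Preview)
Your overall architecture matches the paper's proof almost exactly: reduce $\hbY$ to a low-rank perturbation of the signal-free matrix $\bQc\bZ$ (your $\bPc_0\bZ$), write down the $6r\times 6r$ secular determinant \`a la Benaych-Georges--Nadakuditi, pass to a deterministic limit that factors over spikes as $\det(\bKc_{\gamma,\beta}(y)-\sigma_i^{-1}\bHc)=0$, invoke Proposition~\ref{prop:PosEigFunc-properties}, and finish with Weyl interlacing. Even the six-probe-per-spike count is right (the paper's explicit probes are $\bA=[\bU,\bQc^\perp\bU,\bVbar_1]$ and $\bB=[\bV,\bW_1,\bW_2]$, slightly different from your guesses but of the same nature).

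Where your sketch diverges from the paper is in the ``main obstacle.'' You frame it as needing an anisotropic local law for $\hbZ\hbZ^\T$ and propose heavy machinery (matrix Dyson systems, Schur complements on $\bZ^\T\bZ$). The paper instead does something much more hands-on and Gaussian-specific: it rotates into the eigenbasis of $\bZ_1\bZ_1^\T$, where $\bQc\bZ(\bQc\bZ)^\T$ becomes $\diag(\bmu)+\bX\bX^\T$ with $\bX$ an \emph{independent} Gaussian matrix (this is the change of basis \eqref{eq:QZ-basis-change}). In that representation the needed quadratic forms reduce, via orthogonal invariance, to mixed traces like $d^{-1}\tr(\bR(y^2)\diag(1/\bmu))$, which are computed not from a local law but from an entrywise resolvent approximation $\bR(z)_{i,i}\simeq (\mu_i - z + \text{deterministic})^{-1}$ (Lemma~\ref{lem:Rho-i}, and an analogous Lemma~\ref{lem:D-i} for the $\bB$-side block). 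This is more elementary than what you propose and is exactly why the paper requires Gaussian noise---your local-law route would presumably yield universality, at the cost of considerably more work. Note also that a standard anisotropic law for \emph{fixed} test vectors would not suffice directly, since the probes $\bVbar_1,\bW_1,\bW_2$ depend on the noise; the paper's trace-reduction approach sidesteps this issue.
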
  

	\paragraph*{}
	In Figure~\ref{fig:MainHistogram} we provide a visual illustration of Theorems~\ref{thm:1:Bulk} and \ref{thm:2:Outliers}: it plots the histogram of the eigenvalues of $\hbY\hbY^\T$ with one spike above the detection threshold. All but one of the eigenvalues are arranged in a Marcheko-Pastur bulk; the largest eigenvalue is an outlier, whose asymptotic location is described by Theorem~\ref{thm:2:Outliers}.
	Figure~\ref{fig:MainSpikeFwd} shows the detection threshold $\BBP$ and the spike-forward map $\SpikeFunc(\cdot)$ for selected parameter combinations.
	\begin{figure}
		\centering
		\begin{subfigure}[t]{0.45\textwidth}
			\centering
			\includegraphics[width=\textwidth]{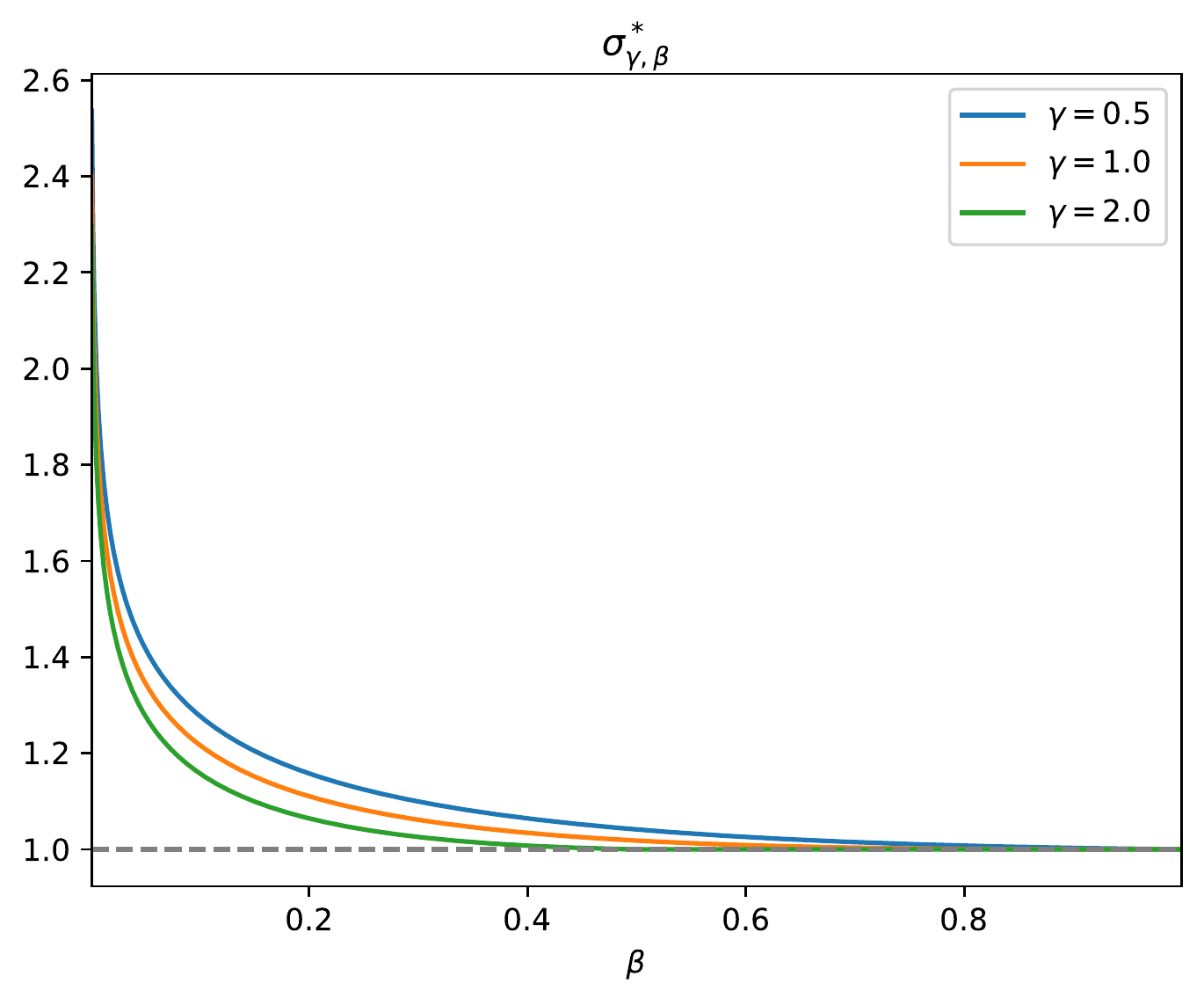}
			\caption{The detection threshold $\BBP$ as a function of $\beta$. Dashed line: $\sigma^*=1$, the threshold with no dimension reduction; attained once $\beta=\min\{1,1/\gamma\}$.}
		\end{subfigure}
	\hfill
		\begin{subfigure}[t]{0.45\textwidth}
			\centering
			\includegraphics[width=\textwidth]{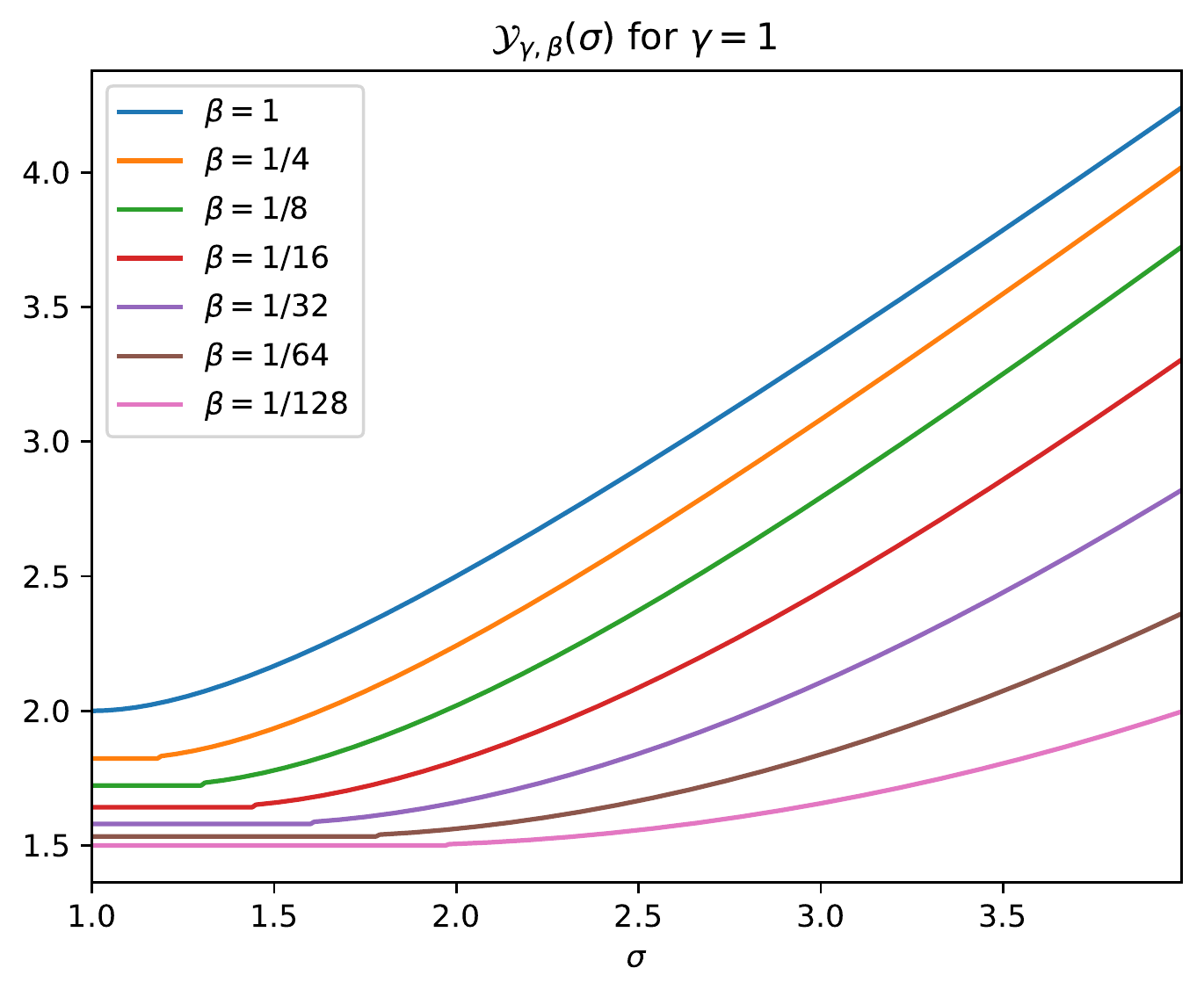}
			\caption{
                \Revision{The spike-forward map $\SpikeFunc(\cdot)$ for $\gamma=1$ and selected values of $\beta$.} 
%				Dashed line: the curve $\m{Y}(\sigma)=\sigma$; in the limit $\sigma\to\infty$, the spike-forward map aligns with this curve.
}
		\end{subfigure}
	\caption{}
	\label{fig:MainSpikeFwd}
	\end{figure}

    \paragraph*{}
    Next, analogously to \eqref{eq:fact:vectors-u}-\eqref{eq:fact:vectors-v}, we calculate limiting expressions for the asymptotic overlaps between the signal and observed PCs. To this end, define the following auxiliary functions:
    \begin{align}
        \tau_{\gamma,\beta}^{(i)}(y)&=-\frac12 y \left( \frac1y \cdot \kappa_{\gamma,\beta}^{(i)}(y)\right)', \qquad\textrm{for}\quad 1\le i \le 3, \\
        \tau_{\gamma,\beta}^{(i)}(y)&=-\frac12 \left( \kappa_{\gamma,\beta}^{(i)}(y)\right)', \qquad\textrm{for}\quad 4\le i \le 5, \\
        \tau_{\gamma,\beta}^{(i)}(y)&=-\frac{1}{2y} \left( y \cdot\kappa_{\gamma,\beta}^{(i)}(y)\right)', \qquad\textrm{for}\quad 6\le i \le 9,
    \end{align}
    where $(\cdot)'$ is the derivative with respect to $y$.
    Define $\bTc_{\gamma,\beta}:(\sqrt{\NoiseEdge_{\gamma,\beta}^+},\infty)\to \mathrm{Sym}^6(\RR)$:
\begin{equation}\label{eq:T-func}
    \bTc_{\gamma,\beta}(y) = \MatL 
        \tau_{\gamma,\beta}^{(1)}(y) &\tau_{\gamma,\beta}^{(2)}(y) &0 &0 &0 &0\\
    \tau_{\gamma,\beta}^{(2)}(y) &\tau_{\gamma,\beta}^{(2)}(y) &0 &0 &0 &0  \\
    0 &0 &\tau_{\gamma,\beta}^{(3)}(y) & \tau_{\gamma,\beta}^{(4)}(y) &\tau_{\gamma,\beta}^{(5)}(y)  &0 \\
    0 &0 &\tau_{\gamma,\beta}^{(4)}(y) &\tau_{\gamma,\beta}^{(6)}(y)&\tau_{\gamma,\beta}^{(7)}(y) &0\\
    0 &0 &\tau_{\gamma,\beta}^{(5)}(y)(y) &\tau_{\gamma,\beta}^{(7)}(y) (y)&\tau_{\gamma,\beta}^{(8)}(y) &0\\ 
    0 &0 &0 &0 &0 &\tau_{\gamma,\beta}^{(9)}(y)    
    \MatR
\end{equation}

    \begin{theorem}[Principal component angles for detectable spikes.]
    \label{thm:3:Angles}
        Let $1\le i \le r$ be a detectable spike, namely, such that $\sigma_i>\BBP$. Denote $y_i=\SpikeFunc(\sigma_i)$. Then, 
        \begin{itemize}
            \item (PC decoupling.) For all $1\le j \le r$, $j\ne i$, a.s., 
            \begin{equation}
                \langle \bu_i, \hbu_j\rangle \longrightarrow 0,\qquad  \langle \bv_i, \hbv_j\rangle \longrightarrow 0.
            \end{equation}
            \item (Limiting angles.) A.s.,
            \begin{equation}
                \left|\langle \bu_i, \hbu_j\rangle \right|\longrightarrow \m{U}_{\gamma,\beta}(\sigma_i),\qquad  \left|\langle \bv_i, \hbv_j\rangle \right|\longrightarrow \m{V}_{\gamma,\beta}(\sigma_i),
            \end{equation}
        	and
        	\begin{equation}
        		\langle \bu_i, \hbu_j\rangle\langle \bv_i, \hbv_j\rangle\longrightarrow \m{U}_{\gamma,\beta}(\sigma_i)\m{V}_{\gamma,\beta}(\sigma_i) \,,
        	\end{equation}
            where $\m{U}_{\gamma,\beta}(\cdot),\m{V}_{\gamma,\beta}(\cdot)$ are obtained by as follows.
            
            Take any $\bm{d}^{(i)} \in \ker\left( \bKc_{\gamma,\beta}(y_i) - (1/\sigma_i)\Hbb \right)\subset \RR^6$ such that
            \begin{equation}\label{eq:thm:3:1}
                \langle \bm{d}^{(i)}, \bTc_{\gamma,\beta}(y_i)\bm{d}^{(i)}\rangle = 1.
            \end{equation}
            (By Proposition~\ref{prop:PosEigFunc-properties}, there are exactly two such vectors, which are antipodal points.) 
            Then
            \begin{align}
                \VAngle(\sigma_i) &= \left| (\bd^{(i)})_1 \right| / \sigma_i \\
                \UAngle(\sigma_i) &= \left| (\bd^{(i)})_4 \right| / \sigma_i .
            \end{align}  
        %     Define 
        %     \begin{equation*}
        %         \bm{c}^{(i)} = \MatL 
        %         \underline{\bSigma}(\sigma_i)^{-1} &\0 \\
        %         \0 & \left(\underline{\bSigma}(\sigma_i)^{-1}\right)^\T 
        %         \MatR \bm{d}^{(i)}\qquad\textrm{where}\qquad
        %         \underline{\bSigma}(\sigma_i)^{-1} = \MatL
		% 1/\sigma_i & 0 & 0 \\
		% 0 &-1/\sigma_i	& \frac{\beta\sqrt{\gamma}}{1-\gamma\beta} \\
		% 0 &1-\gamma\beta 	&1/\sigma_i
		% \MatR .
        %     \end{equation*}  
        \end{itemize}
        % Then
        % \begin{equation}
        %     \m{U}_{\gamma,\beta}(\sigma_i)=(\bc^{(i)})_{1},\qquad \m{V}_{\gamma,\beta}(\sigma_i)=(\bm{c}^{(i)})_4 .
        % \end{equation}
    \end{theorem}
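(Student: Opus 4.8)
The plan is to follow the standard two-part template for outlier singular vectors in spiked models \cite{benaych2012singular,bloemendal2016principal}, adapted to the extra randomness coming from the sketch $\bOmega$ and the Range-Finder projection $\bPc$: (i) an exact finite-$n$ reduction of the singular-vector equations to a small linear problem, and (ii) a random-matrix computation identifying the deterministic limit of the quantities appearing in it. Writing $\bW=\bOmega^\T$ (orthonormal columns, without loss of generality) and using the explicit form $\bPc=\bY\bW(\bW^\T\bY^\T\bY\bW)^{-1}\bW^\T\bY^\T$, the relations $\hbY\hbv_i=\hsigma_i\hbu_i$ and $\hbY^\T\hbu_i=\hsigma_i\hbv_i$ — equivalently $\bY^\T\bPc\bY\,\hbv_i=\hsigma_i^2\hbv_i$ — combined with $\bY=\bU\bLambda\bV^\T+\bZ$ let one solve for $\hbu_i$ and $\hbv_i$ in terms of the resolvents $(\hsigma_i^2\bI-\bZ\bZ^\T)^{-1}$, $(\hsigma_i^2\bI-\bZ^\T\bZ)^{-1}$ and $(\hsigma_i^2\bI-\bW^\T\bY^\T\bY\bW)^{-1}$ applied to the signal directions. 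Taking inner products against the $\bu_j$'s and $\bv_j$'s closes the system; since distinct signal directions are orthogonal and the relevant resolvent bilinear forms are asymptotically isotropic, it block-diagonalizes over the spikes, and each detectable spike $i$ contributes a $6\times6$ block — the six basis vectors being $\bv_i,\bu_i$ together with four further directions obtained by applying $\bZ,\bZ^\T,\bW\bW^\T$ and the resolvents to them. Writing $\bd^{(i,n)}\in\RR^6$ for the vector of the six relevant inner products (its first and fourth coordinates encoding $\langle\bv_i,\hbv_i\rangle$ and $\langle\bu_i,\hbu_i\rangle$ up to the overall scaling), the block takes the form $\big(\bKc^{(n)}(\hsigma_i)-(1/\sigma_i)\bHc\big)\bd^{(i,n)}=o(1)$ for a random $6\times6$ matrix $\bKc^{(n)}$, the term $\bHc$ being precisely where the spike value $\sigma_i$ enters; the vanishing of the off-diagonal ($i\ne j$) blocks yields the PC-decoupling statement.

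The random-matrix heart of the argument is then to show that the entries of $\bKc^{(n)}(\hsigma_i)$ concentrate, as $n,m,d\to\infty$, around the deterministic quantities $\kappa_{\gamma,\beta}^{(\cdot)}(y_i)$, $y_i=\SpikeFunc(\sigma_i)$. This requires \emph{deterministic equivalents} for the bilinear forms $\bv^\T(\hsigma^2\bI-\bZ^\T\bZ)^{-1}\bv$, $\bu^\T(\hsigma^2\bI-\bZ\bZ^\T)^{-1}\bu$ (Marchenko--Pastur) and — the genuinely new ingredient — for the forms built from $(\hsigma^2\bI-\bW^\T\bY^\T\bY\bW)^{-1}$, i.e.\ from the resolvent of a sub-sampled/projected spiked sample-covariance matrix, which must be computed without treating $\bW$ (equivalently $\bPc,\bQ$) as independent of $\bZ$; the incoherence hypothesis \eqref{eq:assum:incoherence} is what makes the signal directions interact with $\bW$ in a ``random-like'' way. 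These are exactly the deterministic equivalents one already needs to establish Theorems~\ref{thm:1:Bulk} and \ref{thm:2:Outliers}, so it should be a matter of assembling them. Once this is in place, Theorem~\ref{thm:2:Outliers} supplies $\hsigma_i^2\to\SpikeFunc^2(\sigma_i)$, and Proposition~\ref{prop:PosEigFunc-properties} identifies $1/\sigma_i=\PosEigFunc(y_i)$ as the unique positive generalized eigenvalue of the pencil $(\bKc_{\gamma,\beta}(y_i),\bHc)$, of multiplicity one; hence $\bd^{(i,n)}$ converges — up to its unavoidable overall sign — to a null vector $\bd^{(i)}$ of $\bKc_{\gamma,\beta}(y_i)-(1/\sigma_i)\bHc$, and the overlaps are read off as $|\langle\bv_i,\hbv_i\rangle|\to|(\bd^{(i)})_1|/\sigma_i$ and $|\langle\bu_i,\hbu_i\rangle|\to|(\bd^{(i)})_4|/\sigma_i$. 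The sign-coherence needed for the dyad limit $\langle\bu_i,\hbu_i\rangle\langle\bv_i,\hbv_i\rangle$ comes for free from $\hbY\hbv_i=\hsigma_i\hbu_i$, which pins the relative sign of the first and fourth coordinates.

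It remains to pin down the normalization of $\bd^{(i)}$, i.e.\ to turn $\|\hbu_i\|=\|\hbv_i\|=1$ into the condition $\langle\bd^{(i)},\bTc_{\gamma,\beta}(y_i)\bd^{(i)}\rangle=1$ of \eqref{eq:thm:3:1}. Expanding $\|\hbv_i\|^2$ and $\|\hbu_i\|^2$ in the same basis as above produces quadratic forms in $\bd^{(i,n)}$ whose coefficients are $y$-derivatives of the very bilinear forms making up $\bKc^{(n)}$, reweighted by powers of $y$ that keep track of the $y$- and $1/y$-prefactors carried by the resolvent-valued basis vectors — which is exactly why $\bTc_{\gamma,\beta}$ is built from $\tau_{\gamma,\beta}^{(i)}\sim-\tfrac12(\kappa_{\gamma,\beta}^{(i)})'$ (with the $y$-reweightings of its three cases). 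That the correct linear combination of the two norm constraints collapses to the single relation \eqref{eq:thm:3:1} can be seen from the standard pencil-perturbation identity: differentiating $\bKc_{\gamma,\beta}(y)\bd(y)=\PosEigFunc(y)\bHc\bd(y)$ in $y$ and pairing with $\bd(y)$ — legitimate since $\bKc_{\gamma,\beta}$ and $\bHc$ are symmetric — gives $\langle\bd,\bKc_{\gamma,\beta}'\bd\rangle=\PosEigFunc'(y)\langle\bd,\bHc\bd\rangle$, tying the normalization directly to the derivative of the spike-forward map.

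The dominant difficulty, I expect, is the random-matrix computation of the second paragraph — in particular the deterministic equivalent for the resolvent of $\bW^\T\bY^\T\bY\bW$, a free-multiplicative-convolution-type object formed from a spiked Marchenko--Pastur matrix and an independent projection, handled while tracking the $\bZ$-dependence of $\bPc$. Carrying the nine functions $\kappa_{\gamma,\beta}^{(\cdot)}$ and their companions $\tau_{\gamma,\beta}^{(\cdot)}$ through all the bilinear-form identities (and checking that the apparent singularities at $\gamma\beta=1$ flagged in Section~\ref{sec:Setup} ultimately cancel) is where essentially all the labor lies; the remainder is the familiar ``outlier eigenvector $=$ null vector of a small pencil, normalized via the derivative of the forward map'' scheme.
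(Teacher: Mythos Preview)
Your high-level template --- reduce to a small linear system \`a la \cite{benaych2012singular}, identify the pencil $(\bKc_{\gamma,\beta},\bHc)$, read off overlaps from a normalized null vector --- is correct and is what the paper does. But the specific decomposition you propose differs from the paper's in a way that matters.

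You plan to work directly with the formula $\bPc=\bY\bW(\bW^\T\bY^\T\bY\bW)^{-1}\bW^\T\bY^\T$ and the three resolvents $(\hsigma^2\bI-\bZ\bZ^\T)^{-1}$, $(\hsigma^2\bI-\bZ^\T\bZ)^{-1}$, $(\hsigma^2\bI-\bW^\T\bY^\T\bY\bW)^{-1}$. The paper does \emph{not} do this. Its key preparatory step (Lemmas~\ref{lem:Projection}--\ref{lem:PY:2}) is to replace the data-dependent projection $\bPc$ by the \emph{signal-free} projection $\bQc$ onto $\range(\bZ\bOmega^\T)$, plus an explicit rank-$O(r)$ correction obtained by Sherman--Morrison on $(\tilde{\bY}^\T\tilde{\bY})^{-1}$. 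This yields $\hbY\simeq\bA\bSigma\bB^\T+\bQc\bZ$ with $\bA=[\bU\;\;\bQc^\perp\bU\;\;\bVbar_1]$ and $\bB=[\bV\;\;\bW_1\;\;\bW_2]$ given explicitly. All subsequent resolvent calculations involve only the single pure-noise object $(\hsigma^2\bI-\bQc\bZ\bZ^\T\bQc)^{-1}$; the signal never sits inside a resolvent. This is precisely where the $6$-dimensional structure comes from --- the six directions are the columns of $\bA$ and $\bB$ produced by the Sherman--Morrison expansion --- rather than from a generic ``apply $\bZ,\bZ^\T,\bW\bW^\T$'' recipe. Your route would have to Sherman--Morrison the signal out of $(\hsigma^2\bI-\bW^\T\bY^\T\bY\bW)^{-1}$ anyway, which amounts to doing the paper's Lemma~\ref{lem:Projection} inside the resolvent computation instead of upfront; it is not obvious from your description that exactly six directions emerge, nor which ones.

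Two smaller points. For the normalization, the paper invokes \cite[Lemma~5.1]{benaych2012singular} directly to obtain $\langle\hbf_i,\hbT(\hsigma_i)\hbf_i\rangle=1$, then computes $\hbT\to\bTc_{\gamma,\beta}$ (Lemma~\ref{lem:T-lim}); your pencil-perturbation identity is a correct heuristic for \emph{why} $\bTc$ is built from derivatives of the $\kappa$'s, but is not the route taken. For the sign of $\langle\bu_i,\hbu_i\rangle\langle\bv_i,\hbv_i\rangle$, the kernel being one-dimensional fixes the \emph{relative} sign of $(\bd^{(i)})_1$ and $(\bd^{(i)})_4$, but does not by itself show the product is nonnegative; the paper closes this via $\hsigma_i=\hbu_i^\T\bY\hbv_i\simeq\sigma_i\langle\bu_i,\hbu_i\rangle\langle\bv_i,\hbv_i\rangle+\hbu_i^\T\bZ\hbv_i$ together with $\hsigma_i-\sigma_1(\bZ)>0$ asymptotically.
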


    While in principal computing closed-form formulas for $\UAngle(\cdot),\VAngle(\cdot)$ should be possible (the vectors $\bd^{(i)}$ are only $6$-dimensional), we have not been able to simplify the resulting expressions into a reasonably concise form, even with the aide of a computer algebra system.\footnote{In any case, we believe that such closed-form formulas would offer, from a practical point of view, only a small advantage over the current statement of Theorem~\ref{thm:2:Outliers}.} In Figure~\ref{fig:MainAngles} we show a plot of the product $\UAngle(\sigma)\VAngle(\sigma)$ for selected parameter combinations.
    
    \begin{remark}
    	To compute $\UAngle(\sigma),\VAngle(\sigma)$ in a numerically stable manner, one could follow these steps:
   		\begin{enumerate}
   			\item Compute exactly $y=\SpikeFunc(\sigma)$ using \eqref{eq:SpikeFunc}.
   			\item Find (up to machine precision) a non-zero solution $\bm{d}\in \RR^6$ satisfying $\left( \bKc_{\gamma,\beta}(y) - \sigma\bHc \right) \bm{d} = \0$. In practice, take $\bm{d}$ to be the eigenvector of $\bKc_{\gamma,\beta}(y) - \sigma\bHc \in \mathrm{Sym}^6(\RR)$ whose absolute value is smallest.
   		\item Finally, normalize $\bm{d}$ so that its $\bTc_{\gamma,\beta}(y)$-weighted norm is $1$:
   		\begin{align*}
   			\bar{\bm{d}}(y) \equiv \frac{1}{\sqrt{\langle \bm{d}, \bTc_{\gamma,\beta}(y_i)\bm{d}\rangle}}\bm{d}\,,
   		\end{align*}
   		and take $\VAngle(\sigma)=|(\bar{\bm{d}})|_1/\sigma$ and $\UAngle(\sigma)=|(\bar{\bm{d}})_4|/\sigma$.
   	\end{enumerate}
    \end{remark}
    
	\begin{figure}
	\centering
	\begin{subfigure}[b]{0.45\textwidth}
		\centering
		\includegraphics[width=\textwidth]{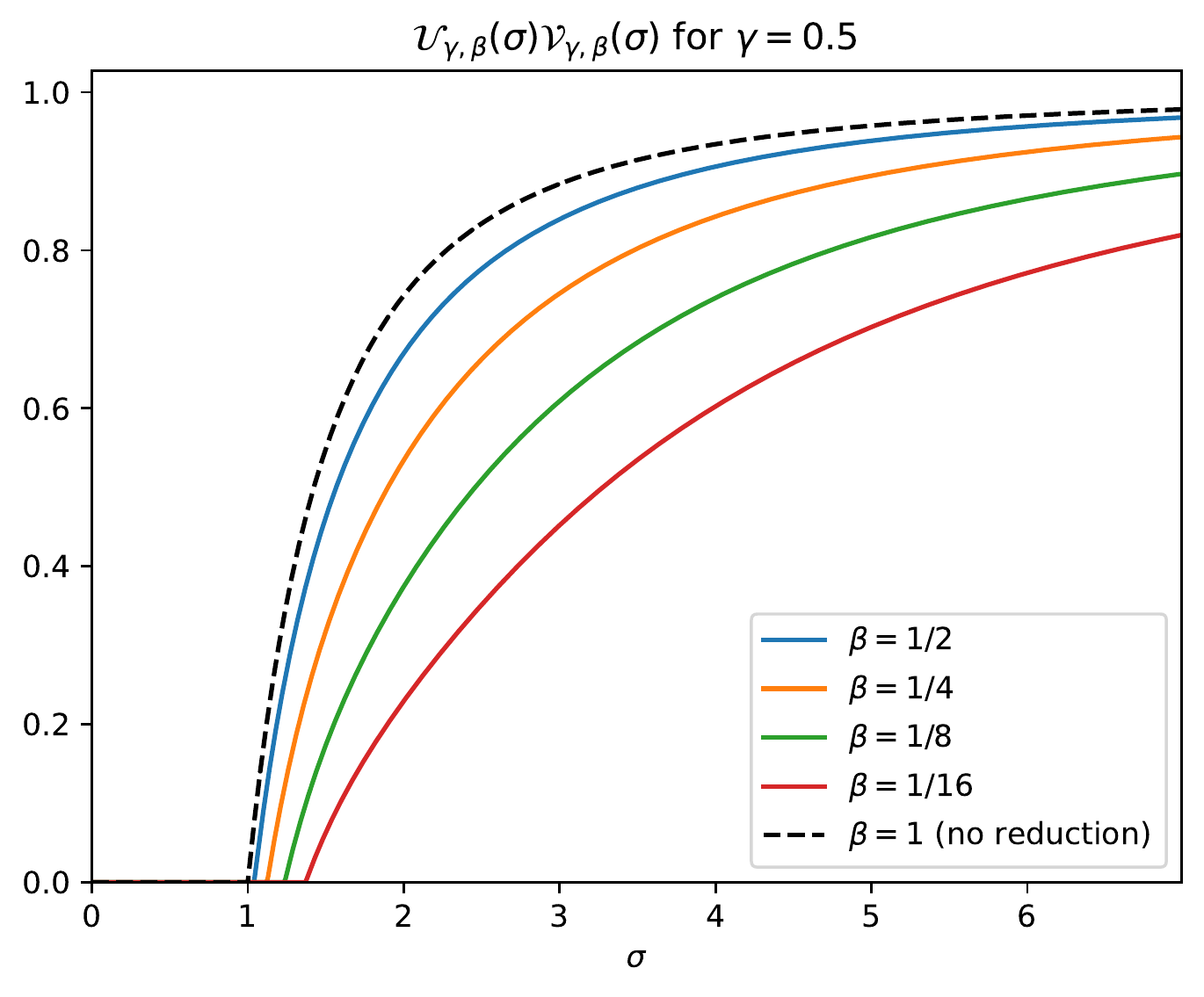}
	\end{subfigure}
	\hfill
	\begin{subfigure}[b]{0.45\textwidth}
		\centering
		\includegraphics[width=\textwidth]{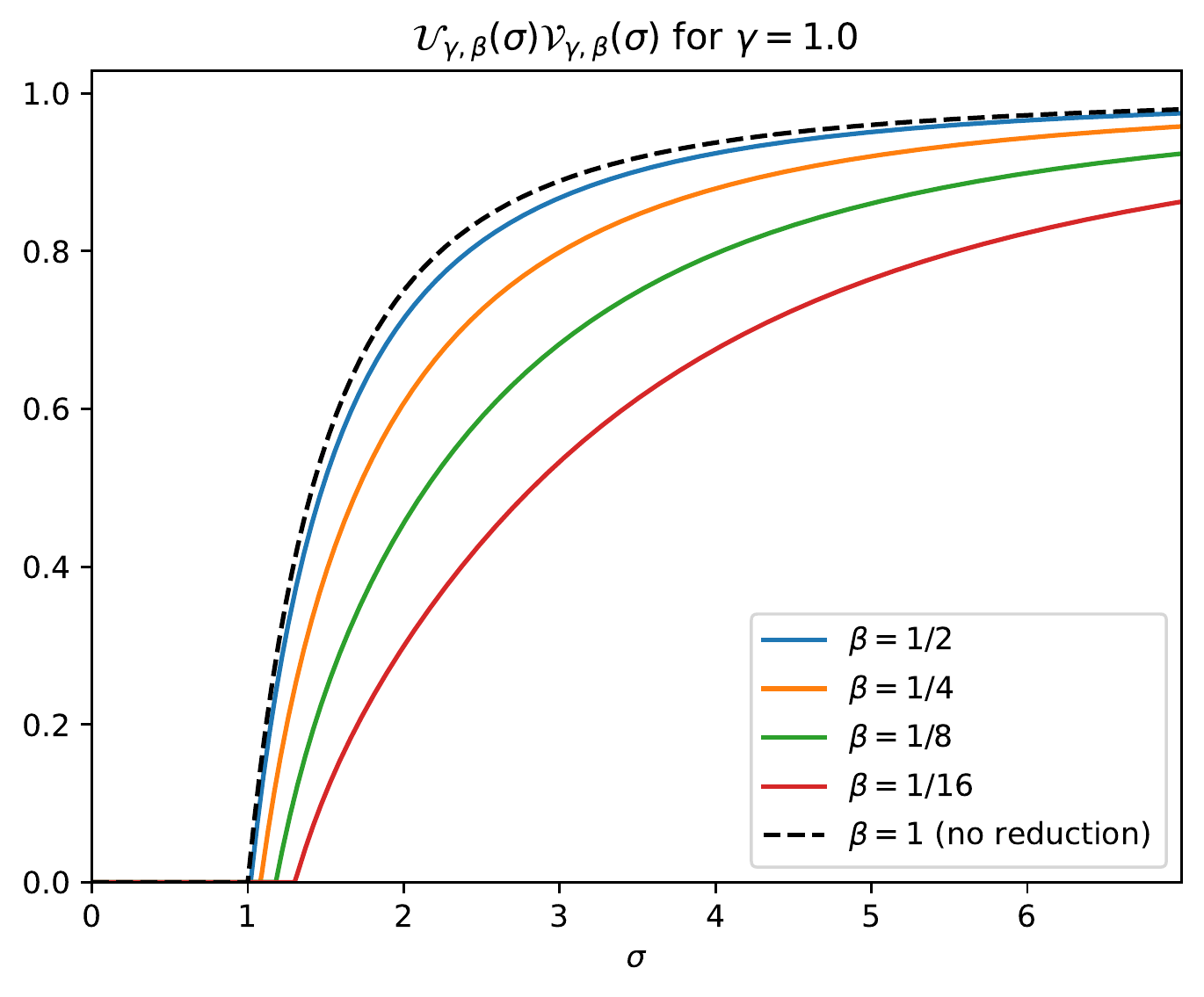}
	\end{subfigure}
	\caption{The singular vector angles $\UAngle(\sigma)\VAngle(\sigma)$ as a function of $\sigma$, for selected values of $\gamma,\beta$. When $\sigma<\BBP$, the plotted values are set to $0$. Dashed line: the limiting angles in the spiked model without dimension reduction, given in \eqref{eq:fact:vectors-u}-\eqref{eq:fact:vectors-v}.}
	\label{fig:MainAngles}
\end{figure}
    
    Theorem~\ref{thm:3:Angles} computes the limiting singular vector overlaps for super-critical (detectable) spikes; our computations do not apply for sub-critical spikes. 
    We show that as the spike intensity approaches the detectability threshold---in other words, when the spike is ``barely detectable''---the correlation between the corresponding population and empirical principal components vanishes:
    \begin{proposition}

    \label{prop:vanishing-corrs}
        We have 
        \begin{equation}
            \lim_{\sigma \downarrow \BBP} \m{U}_{\gamma,\beta}(\sigma) = 0,\qquad \lim_{\sigma \downarrow \BBP} \m{V}_{\gamma,\beta}(\sigma) = 0 . 
        \end{equation}
    \end{proposition}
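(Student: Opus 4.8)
The plan is to follow the normalized generalized-kernel vector $\bd^{(i)}$ from Theorem~\ref{thm:3:Angles} as the spike $\sigma_i$ decreases to the threshold and show it collapses to the zero vector, which immediately gives $\VAngle(\sigma_i)=|(\bd^{(i)})_1|/\sigma_i\to 0$ and $\UAngle(\sigma_i)=|(\bd^{(i)})_4|/\sigma_i\to 0$ (note $\sigma_i\to\BBP>0$). First I rephrase the limit: writing $y=\SpikeFunc(\sigma)$, $s=1/\sigma=\PosEigFunc(y)$, $y^{\ast}=\sqrt{\NoiseEdgeUpper}$ and $s^{\ast}=1/\BBP=\PosEigFunc(y^{\ast})$, the bijectivity of $\SpikeFunc$ makes $\sigma\downarrow\BBP$ the same as $y\downarrow y^{\ast}$ (and $s\uparrow s^{\ast}$); put $t=y^{2}-\NoiseEdgeUpper\downarrow 0$. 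The engine of the whole argument is that $\NoiseEdgeUpper$ is a \emph{simple} zero of $\lambda\mapsto\Delta_{\gamma,\beta}(\lambda)$, so $\sqrt{-\gamma\Delta_{\gamma,\beta}(y^{2})}=c\sqrt{t}\,(1+O(t))$ for an explicit $c=c(\gamma,\beta)>0$.

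Consequently each $\kappa^{(j)}_{\gamma,\beta}$ extends continuously to $y^{\ast}$, and for the seven indices $j\in\{1,3,4,5,6,8,9\}$ whose formula carries this square root one has $\kappa^{(j)}_{\gamma,\beta}(y)=\kappa^{(j)}_{\gamma,\beta}(y^{\ast})+b_{j}\sqrt{t}+O(t)$ with $b_{j}\neq 0$, so $(\kappa^{(j)}_{\gamma,\beta})'$ diverges like $t^{-1/2}$ and hence $\tau^{(j)}_{\gamma,\beta}(y)=q_{j}\,t^{-1/2}+O(1)$ with $q_{j}\neq 0$; meanwhile $\kappa^{(2)}_{\gamma,\beta},\kappa^{(7)}_{\gamma,\beta}$ carry no square root and $\tau^{(2)}_{\gamma,\beta},\tau^{(7)}_{\gamma,\beta}=O(1)$. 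In matrix form,
\[
\bTc_{\gamma,\beta}(y)\;=\;t^{-1/2}\,\widehat{\bTc}\;+\;O(1),
\]
where $\widehat{\bTc}$ collects only the diverging entries; a short computation with the $q_{j}$ (all of the form $-\tfrac12 y^{\ast}c\,p_j(y^{\ast})$ for the rational prefactors $p_j$ of the square root) shows that $\widehat{\bTc}$ is block diagonal over the coordinate groups $\{1\},\{2\},\{3,4,5\},\{6\}$, is positive semidefinite, and has a \emph{singular} $\{3,4,5\}$-block; hence $\rank\widehat{\bTc}=4$ and $\ker\widehat{\bTc}=\Span\{\be_{2},\bm{n}\}$ for a vector $\bm{n}$ supported on coordinates $3,4,5$.

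For $y>y^{\ast}$ the space $\ker(\bKc_{\gamma,\beta}(y)-s\Hbb)$ is one-dimensional (Proposition~\ref{prop:PosEigFunc-properties}); choosing a continuous unit generator $\bw(y)\to\bw_{0}\in\ker(\bKc_{\gamma,\beta}(y^{\ast})-s^{\ast}\Hbb)$ gives $\bd^{(i)}=\bw(y)/\sqrt{\langle\bw(y),\bTc_{\gamma,\beta}(y)\bw(y)\rangle}$ (up to sign). Since $\widehat{\bTc}\succeq 0$, $\langle\bw(y),\bTc_{\gamma,\beta}(y)\bw(y)\rangle=t^{-1/2}\langle\bw(y),\widehat{\bTc}\,\bw(y)\rangle+O(1)$, and provided $\langle\bw_{0},\widehat{\bTc}\,\bw_{0}\rangle>0$ this blows up, forcing $\bd^{(i)}\to\bm{0}$ and finishing the proof. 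To establish that positivity I would differentiate the identity $(\bKc_{\gamma,\beta}(y)-\PosEigFunc(y)\Hbb)\bw(y)=\bm{0}$ and pair with $\bw(y)$: the $\bw'(y)$ term drops (envelope), giving $\PosEigFunc'(y)\langle\bw(y),\Hbb\bw(y)\rangle=\langle\bw(y),\bKc_{\gamma,\beta}'(y)\bw(y)\rangle$, whose singular part is exactly $-2t^{-1/2}\langle\bw(y),\widehat{\bTc}\,\bw(y)\rangle$ (the bounded $1/y$ corrections distinguishing $\tau^{(j)}_{\gamma,\beta}$ from $-\tfrac12(\kappa^{(j)}_{\gamma,\beta})'$ do not contribute). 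Now $\langle\bw,\Hbb\bw\rangle$ has a fixed sign on $(y^{\ast},\infty)$ --- positive, since $\langle\bw,\bTc_{\gamma,\beta}\bw\rangle>0$ there, $\PosEigFunc'<0$, and near $y^{\ast}$ the displayed singular term dominates --- and $\PosEigFunc'(y)\to-\infty$ as $y\downarrow y^{\ast}$, which is the standard fact that the outlier map has a horizontal tangent at the BBP threshold; here it is precisely $\frac{d}{d(\sigma^{2})}\SpikeFunc^{2}(\sigma)\big|_{\sigma=\BBP}=0$, readable from \eqref{eq:SpikeFunc} (mirroring $\frac{d}{d(\sigma^{2})}\m{Y}_{\gamma}^{2}(\sigma)\big|_{\sigma=1}=0$ in \eqref{eq:SpikeFwdClassical}). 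Combining the three facts yields $\langle\bw_{0},\widehat{\bTc}\,\bw_{0}\rangle>0$.

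The genuinely delicate point --- the one place real care is needed --- is the non-degeneracy in that last step: one must rule out that the $t^{-1/2}$ part of the normalization cancels, i.e., that $\bw_{0}\in\ker\widehat{\bTc}$, equivalently $\langle\bw_{0},\Hbb\bw_{0}\rangle=0$. The envelope route pins this down except for that boundary coincidence, which would force $\bw_{0}$ to lie simultaneously in $\ker\widehat{\bTc}$, in $\ker(\bKc_{\gamma,\beta}(y^{\ast})-s^{\ast}\Hbb)$, and on the $\Hbb$-null cone --- an over-determined system of polynomial constraints in $(\gamma,\beta)$ that I would rule out directly from the explicit formulas (the degenerate case $\beta=1$, where $\hbY=\bY$ and the statement reduces to Fact~\ref{fact:2}, is treated separately). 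Most of the write-up is then routine bookkeeping: pinning down $c$ and the $q_{j}$, verifying $\widehat{\bTc}\succeq 0$ with singular $\{3,4,5\}$-block, and checking the horizontal-tangent identity for $\SpikeFunc$ via \eqref{eq:SpikeFunc}.
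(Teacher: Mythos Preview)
Your overall strategy—expand $\bTc_{\gamma,\beta}(y)$ to leading order $t^{-1/2}\widehat{\bTc}$ near the edge and argue that the normalization $\langle\bd,\bTc\bd\rangle=1$ forces $\bd\to\0$—is precisely the paper's. Where your route diverges is the handling of the non-degeneracy, and there it contains both an unnecessary detour and an unresolved gap.

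The envelope argument buys less than you claim. Differentiating $(\bKc-\PosEigFunc\Hbb)\bw=\0$ and pairing with $\bw$ gives $\PosEigFunc'\langle\bw,\Hbb\bw\rangle=\langle\bw,\bKc'\bw\rangle$; matching the $t^{-1/2}$ orders (using $\PosEigFunc'\sim -C_1 t^{-1/2}$ and $\bKc'\sim -2t^{-1/2}\widehat{\bTc}$) only yields $\langle\bw_0,\widehat{\bTc}\,\bw_0\rangle=(C_1/2)\langle\bw_0,\Hbb\bw_0\rangle$. The two non-degeneracy conditions are therefore \emph{equivalent}, but neither is established. Your justification that $\langle\bw,\Hbb\bw\rangle>0$ (``near $y^\ast$ the singular term dominates'') already presumes $\langle\bw_0,\widehat{\bTc}\,\bw_0\rangle>0$, which is circular; and away from the edge the $O(1)$ difference between $\bTc$ and $-\tfrac12\bKc'$ prevents any sign transfer from $\langle\bw,\bTc\bw\rangle$ to $\langle\bw,\bKc'\bw\rangle$.

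Everything thus hinges on the direct check you defer, and calling it an ``over-determined system of polynomial constraints in $(\gamma,\beta)$'' is not a proof. The paper dispatches this without any envelope machinery: it computes the one-dimensional kernel of the singular $\{3,4,5\}$-block of $\widehat{\bTc}$ explicitly---it is spanned by $(1,-1/\sqrt{\NoiseEdgeUpper},1/\sqrt{\NoiseEdgeUpper})$---so the candidate limit must be a multiple of $\be=(0,0,1,-1/\sqrt{\NoiseEdgeUpper},1/\sqrt{\NoiseEdgeUpper},0)$; then a single-row check finishes it: row $2$ of $\bKc_{\gamma,\beta}(y^\ast)-s^\ast\Hbb$ applied to $\be$ gives $(\bKc\be)_2=0$ while $(\Hbb\be)_2=-1/\sqrt{\NoiseEdgeUpper}\neq 0$, so $\be\notin\ker(\bKc-s^\ast\Hbb)$. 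That explicit kernel-plus-one-row calculation is the missing piece in your proposal; the envelope layer can be removed entirely.
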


    While we do not prove that principal components corresponding to non-detectable spikes are asymptotically de-correlated from their empirical counterparts, Proposition~\ref{prop:vanishing-corrs} leads us to conjecture:
    \begin{conjecture}[Detectability phase transition for singular vectors]
    \label{conj:Vectors}
        Suppose that signal spike $1\le i \le r$ is non-detectable, namely satisfies $\sigma_i\le \BBP$. Then for all $1\le j \le r$,
        \begin{equation}
            \langle \bu_i, \hbu_j\rangle \longrightarrow 0,\qquad  \langle \bv_i, \hbv_j\rangle \longrightarrow 0.
        \end{equation}
    \end{conjecture}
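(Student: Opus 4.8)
\textbf{Towards a proof of Conjecture~\ref{conj:Vectors}.}
The natural route reduces the claim to an \emph{anisotropic local law near the right edge} for the reduced Gram matrix $\hbY^\T\hbY=\bY^\T\bPc\bY$ (and its transpose $\hbY\hbY^\T$). Two observations frame the problem. First, both are bounded-norm, bounded-rank perturbations of their noise-only analogues: since $\range(\bX\bOmega^\T)\subseteq\range(\bU)$ has dimension $\le r$, the $d$-dimensional spaces $\range(\bY\bOmega^\T)$ and $\range(\bZ\bOmega^\T)$ lie in a common $(d+r)$-dimensional subspace, so the orthogonal projection $\bPc_0$ onto $\range(\bZ\bOmega^\T)$ satisfies $\rank(\bPc-\bPc_0)\le 2r$; combined with $\rank(\bY\bY^\T-\bZ\bZ^\T)\le 3r$, this gives $\hbY\hbY^\T=\bPc_0\bZ\bZ^\T\bPc_0+E$ with $\rank E=O(r)$ and $\|E\|=O(1)$. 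Second, by Theorem~\ref{thm:2:Outliers} (equivalently, by Cauchy interlacing against $\bPc_0\bZ\bZ^\T\bPc_0$), for a subcritical spike \emph{all} of $\sigma_1^2(\hbY),\dots,\sigma_r^2(\hbY)$ converge to $\NoiseEdgeUpper$, where the bulk also accumulates --- so, in contrast to Theorem~\ref{thm:3:Angles}, there is no spectral gap to exploit.

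With no gap available, I would bound each overlap directly through the local law. For any $\eta>0$, the spectral decomposition of $\hbY^\T\hbY$ yields
\begin{equation}
\big|\langle\bv_i,\hbv_j\rangle\big|^2 \;\le\; \eta\;\Im\!\left[\bv_i^\T\big(\hbY^\T\hbY-\sigma_j^2(\hbY)-\iu\eta\big)^{-1}\bv_i\right].
\end{equation}
For a subcritical $j\le r$ one has $\sigma_j^2(\hbY)=\NoiseEdgeUpper+o(1)$; choosing a sequence $\eta=\eta_n\to0$ that vanishes more slowly than $\max_{j\le r}|\sigma_j^2(\hbY)-\NoiseEdgeUpper|$ and applying an anisotropic local law to $\bv_i^\T(\hbY^\T\hbY-z)^{-1}\bv_i$ at $z=\sigma_j^2(\hbY)-\iu\eta$, the right-hand side equals $\eta\big(\Im\Stiel(z)+o(1)\big)$, where $\Stiel$ is the Stieltjes transform of the Marchenko--Pastur law of Theorem~\ref{thm:1:Bulk}. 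Since $\Im\Stiel(z)=O(\sqrt{\eta})$ whenever $z$ lies within $O(\eta)$ of the edge $\NoiseEdgeUpper$ (square-root edge behaviour), the bound is $o(1)$, forcing $|\langle\bv_i,\hbv_j\rangle|\to0$ for all $j\le r$; the same argument applied to $\hbY\hbY^\T$ and $\bu_i$ gives $|\langle\bu_i,\hbu_j\rangle|\to0$. Here the incoherence assumption~\eqref{eq:assum:incoherence} is precisely what makes the anisotropic bilinear form $\bv_i^\T(\hbY^\T\hbY-z)^{-1}\bv_i$ collapse to the isotropic scalar $\Stiel(z)$ rather than to an $\bOmega$-dependent quantity.

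Everything thus hinges on an anisotropic local law for $\hbY^\T\hbY$ valid at mesoscopic scales just outside $\NoiseEdgeUpper$, together with square-root regularity of the edge. By the first reduction and the Woodbury identity, this follows from the analogous law for the \emph{noise-only} reduced matrix $\bPc_0\bZ\bZ^\T\bPc_0$, provided the finite-dimensional resonance block stays invertible uniformly near the edge --- which holds precisely because $\sigma_i\le\BBP$ forces $1/\sigma_i\ge\PosEigFunc(\sqrt{\NoiseEdgeUpper})$, so by the monotonicity of $\PosEigFunc$ (Proposition~\ref{prop:PosEigFunc-properties}) the equation $\PosEigFunc(y)=1/\sigma_i$ has no root $y>\sqrt{\NoiseEdgeUpper}$ and no outlier is created. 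Stripping the QR step, $\bPc_0\bZ\bZ^\T\bPc_0$ has the same nonzero spectrum as the rational-in-Wishart ensemble $\bW\bOmega^\T(\bOmega\bW\bOmega^\T)^{-1}\bOmega\bW$, $\bW=\bZ^\T\bZ$; one would build a self-adjoint linearization whose resolvent carries a $6\times6$ block reproducing the matrix $\bKc_{\gamma,\beta}$ and the pencil~\eqref{eq:eigenvalue-problem}, then run a stability analysis of the attached matrix Dyson equation up to scale $n^{-2/3+\eps}$ from the edge, with complete delocalization of the edge eigenvectors, in the spirit of the anisotropic local laws of \cite{bloemendal2016principal}. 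The main obstacle is the statistical coupling between $\bPc_0$ and $\bZ$: because the ``deformation'' $\bOmega^\T(\bOmega\bW\bOmega^\T)^{-1}\bOmega$ is itself built from the noise, the usual Schur-complement and fluctuation-averaging recursions must be closed for the composite matrix $\bW\bOmega^\T(\bOmega\bW\bOmega^\T)^{-1}\bOmega\bW$ rather than for $\bW$ alone, and one must separately certify square-root decay of its limiting density at $\NoiseEdgeUpper$ and the attendant optimal rigidity. Carrying this through --- which would simultaneously upgrade Proposition~\ref{prop:vanishing-corrs} to the subcritical de-correlation claimed here --- is what lies beyond the random matrix techniques used in the present paper.
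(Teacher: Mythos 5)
You have not proved the statement, and neither does the paper: Conjecture~\ref{conj:Vectors} is stated in the paper as an open conjecture, supported only by Proposition~\ref{prop:vanishing-corrs} (the limiting overlap formulas vanish as $\sigma\downarrow\BBP$), and the authors explicitly say the sub-critical de-correlation is not established by their techniques. Your elementary reductions are sound---the rank bounds $\rank(\bPc-\bPc_0)\le 2r$ and $\rank(\bY\bY^\T-\bZ\bZ^\T)\le 3r$, the identification of the noise-only reduced Gram matrix with $\bW\bOmega^\T(\bOmega\bW\bOmega^\T)^{-1}\bOmega\bW$, and the spectral inequality $|\langle\bv_i,\hbv_j\rangle|^2\le\eta\,\Im\,\bv_i^\T(\hbY^\T\hbY-\sigma_j^2(\hbY)-\iu\eta)^{-1}\bv_i$ are all correct---but the argument then routes \emph{everything} through an anisotropic local law with square-root edge behaviour, optimal rigidity and eigenvector delocalization for the composite ensemble $\bW\bOmega^\T(\bOmega\bW\bOmega^\T)^{-1}\bOmega\bW$, none of which is proved here or available off the shelf (the deformation is built from the same noise $\bZ$, so the standard Schur-complement/fluctuation-averaging machinery of \cite{bloemendal2016principal} does not apply verbatim). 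As you yourself concede in the last sentence, carrying this out is precisely the missing content; what you have written is a research programme, not a proof, and it cannot be credited as establishing the conjecture.

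Two further points in the sketch would need repair even granting the local law. First, the claim that sub-criticality makes ``the finite-dimensional resonance block stay invertible uniformly near the edge'' fails at the boundary case $\sigma_i=\BBP$, which the conjecture includes: there $1/\sigma_i=\PosEigFunc(\sqrt{\NoiseEdgeUpper})$, so the pencil $\bKc_{\gamma,\beta}(\sqrt{\NoiseEdgeUpper})-(1/\sigma_i)\bHc$ is exactly singular and the Woodbury stability degenerates; handling criticality requires quantifying how fast the smallest singular value of the block vanishes against the square-root behaviour of $\Im\Stiel$, not uniform invertibility. Second, the step where $\bv_i^\T(\hbY^\T\hbY-z)^{-1}\bv_i$ is replaced by the isotropic scalar $\Stiel(z)+o(1)$ is too quick: $\hbY$ contains the spike, so the deterministic equivalent of this quadratic form is not the bulk Stieltjes transform but a spike-corrected quantity (this is exactly where the resonance block re-enters), and the error must be controlled at the mesoscopic scale $\eta$ you choose---which in turn requires a quantitative rate for $\sigma_j^2(\hbY)-\NoiseEdgeUpper$ (rigidity), whereas the paper's Theorem~\ref{thm:2:Outliers} only gives almost-sure convergence with no rate. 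So the chain of reductions is plausible and consistent with how such statements are proved for the classical spiked model, but the conjecture remains unproved both in your proposal and in the paper.
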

%    Our numerical experiments, reported in Section~\ref{?}, support the validity of Conjecture~\ref{conj:Vectors}. 

\Revision{
\paragraph*{Accompanying code.} \texttt{Python} code for computing all the aforementioned quantities is available: \url{https://github.com/eladromanov/Randomized-SVD-Code/}.

}

\section{Discussion}
\label{sec:Discussion}

% \subsection{The noise sensitivity of the R-SVD}
% \Subsection{\Revision{Behavior for small sketching dimension $\beta\ll 1$}}
% \label{sec:SmallBeta}

\Revision{Our main results, specifically
Theorems~\ref{thm:2:Outliers}-\ref{thm:3:Angles}, quantify in a very concrete sense the loss of signal incurred from undersampling in the R-SVD.
Note that from a practical standpoint, the R-SVD is most appealing when the dimensionality reduction is massive, that is $\beta\ll 1$. 
How fast does the ``effective'' SNR degrade as $\beta$ decreases? Stated differently, how {\it robust} is the R-SVD to noise at extreme undersampling ratios?}
\Revision{Theorems~\ref{thm:2:Outliers} and \ref{thm:3:Angles}, provide a concrete answer to this question, in several different senses.}  

At the bear minimum, if one wishes to get \emph{some} correlation between the signal and the PCs returned by the R-SVD, the SNR has to be, at the very least, $\sigma>\BBP$. Expanding \eqref{eq:BBP} for small $\beta\ll 1$ up to leading order,
\Revision{
\begin{equation}
	\label{eq:BBP-small-beta}
	\BBP = \left(1+ 1/\gamma\right)^{1/8}\beta^{-1/8} + \BigOh(\beta^{1/8})\qquad\textrm{as}\qquad\beta\to 0 \,.
\end{equation}
}
The scaling of \eqref{eq:BBP-small-beta} in $\beta$, namely $\beta^{-1/8}$, is consistent with the commonly held belief among practitioners that the R-SVD is quite robust to measurement noise \cite{halko2011finding}: even for very small $\beta$, $\beta^{-1/8}$ is reasonably moderate (e.g., if $\beta=1/1000$---representing a thousand-fold dimension reduction---then $\beta^{-1/8}=2.37$).

 Note that this type of fine-grained information cannot be directly obtained from the operator norm approximation bound \eqref{eq:Halko-Bound}.
 For example, applied to $\bY=\sigma\bu\bv^\T + (nm)^{-1/4}\bZ$ a noisy rank $1$ signal, \eqref{eq:Halko-Bound} would ensure (via singular vector perturbation bounds, e.g. Davis-Kahan) that the top principal component of $\hbY$ has some correlation with the signal only once $\|\bY-\hbY\|\lesssim \sigma$. When $n\asymp m$ and $\beta=d/m$ is small, the r.h.s. of \eqref{eq:Halko-Bound} scales like $\sim \beta^{-1/2}$; that is, we can deduce that the large PCs of $\hbY$ have \emph{some} correlation with $\bX$ only once $\sigma\gtrsim \beta^{-1/2}$. For small $\beta$, this can be considerably larger than $\beta^{-1/8}$.

In practice, however, one should be careful interpreting the detection threshold \eqref{eq:BBP-small-beta}.
\Revision{
    When the SNR is $\sigma>\BBP$, an outlier separates from the bulk; but how far from the edge would it be? 
    Note that the bulk edge \eqref{eq:edges-def}, is 
        \begin{equation}
        \sqrt{\NoiseEdgeUpper} = 
        \sqrt{\gamma^{1/2}+\gamma^{-1/2}}
        +
        \BigOh(\beta^{1/2}) \qquad\textrm{as}\quad\beta\to 0\,.
    \end{equation}
    % \begin{equation}
    %     \sqrt{\NoiseEdgeUpper} = 
    %     \sqrt{\gamma^{1/2}+\gamma^{-1/2}}
    %     +
    %     \gamma^{1/4}
    %         \beta^{1/2}
    %     +
    %     o(\beta^{1/2})
    % \end{equation}
    % to leading order in $\beta\to 0$. 
    We expand the spike-forward map \eqref{eq:SpikeFunc} to leading order as $\beta\to 0,\sigma\to \infty$, in the regime where $\sigma\gg \BBP\sim (1+1/\gamma)^{1/2}\beta^{-1/8}$ (that is, $\sigma$ is asymptotically larger than the detection threshold):
        \begin{align}
        \SpikeFunc(\sigma) 
        &= 
        \underbrace{    
            \sqrt{
        \frac{\left(\sqrt{\gamma }+\sigma ^2\right) \left(1+\sqrt{\gamma } \sigma ^2\right)}{\sqrt{\gamma } \sigma ^2 \left(1+\gamma
        -\beta  \gamma +\sqrt{\gamma } \sigma ^2\right)}
            } 
        }_{\gamma^{-1/4}+\BigOh(1/\sigma^2) = \gamma^{-1/4}+o(\beta^{1/2})}
        \cdot  
        \sqrt{
            \frac{\left(1+\gamma -\beta  \gamma +2 \beta  \sqrt{\gamma } (1+\gamma -\beta  \gamma ) \sigma ^2+\beta  \gamma  \sigma ^4\right)}{ \left(1+\beta  \sqrt{\gamma } \sigma ^2\right)} 
        }\,.
        \label{eq:OutlierSmallBeta:1}
    \end{align}
    Evidently, the asymptotic behavior of the above undergoes a transition at  scale $\sigma=\beta^{-1/4}$.  

    In the weak (but detectable) SNR case, $\BBP<\sigma\ll \beta^{-1/4}$, \eqref{eq:OutlierSmallBeta:1} expands as 
    \begin{equation}
        \SpikeFunc(\sigma)
        = 
        \sqrt{\gamma^{1/2}+\gamma^{-1/2}} + \frac12\frac{{\gamma^{3/4}}}{\sqrt{1+\gamma}}(\beta\sigma^4) + o(\beta\sigma^4)\,.
    \end{equation}
    The distance to the bulk is thus (to leading order),
    $\SpikeFunc(\sigma)-\sqrt{\NoiseEdgeUpper} = \Theta(\beta\sigma^4)$. Since $\beta^{-1/8}\ll \sigma \ll \beta^{-1/4}$, this is $o(1)$: the outlier is very close to the edge. Accordingly, one suspects that in practical settings (finite, reasonably moderate $n$), detecting the signal in this regime (by thresholding the statistic $\sigma_1(\hbY)-\sqrt{\NoiseEdgeUpper}$) could yield rather disappointing results. A quantification of this statement is beyond the means of our current results.

    In constrast, at the scale $\sigma\propto \beta^{-1/4}$, the outlier is at a constant distance from the bulk. When $\sigma\gg \beta^{-1/4}$, it escapes away from it as $\beta \to 0$: when $\sigma\lesssim \beta^{-1/2}$, it is located at $y\sim \gamma^{1/4}\beta^{1/2}\sigma^2$; when $\sigma\gtrsim \beta^{-1/2}$, $y\sim \sigma$.  One accordingly expects that even in relatively low-dimensional settings, detecting the signal in these regimes would be rather easy.

    % It is evident that $\SpikeFunc(\sigma) \to \infty$ only when 

}

\Revision{So far we have discussed signal detection; let us consider estimation.}
$\mathrm{SNR}_{\gamma,\beta}(0)\equiv \BBP$ is the SNR level required to retain asymptotically \emph{any} correlation between the signal and the observed PCs; \Revision{this is a rather weak notion of signal estimation.}  One could consider a stronger metric, for example $\mathrm{SNR}_{\gamma,\beta}(0.5)$, the SNR level required so that a signal dyad $\bu_i\bv_i^\T$ and an observed dyad $\hbu_i\hbv_i^\T$ have asymptotic correlation at least $\ge 0.5$ (the constant $0.5$ chosen somewhat arbitrarily). That is, we are interested in the solution $\sigma = \mathrm{SNR}_{\gamma,\beta}(0.5)$ of
\begin{equation}
	\UAngle\left( \sigma \right) \VAngle\left( \sigma \right) = 0.5,\qquad \sigma>\BBP \,.
\end{equation}
In Figure~\ref{fig:SmallBeta} we plot the functions $\beta\mapsto \mathrm{SNR}_{\gamma=1,\beta}(0),\mathrm{SNR}_{\gamma=1,\beta}(0.5)$ for small $\beta$ in a log-log scale. 

\Revision{
Interestingly, while $\mathrm{SNR}_{\gamma=1,\beta}(0)\sim \beta^{-1/8}$, and while $\sigma\sim \beta^{-1/4}$ is the scale above which a \emph{noticable} outlier escapes the bulk, we find that $\mathrm{SNR}_{\gamma=1,\beta}(0.5)\sim \beta^{-1/2}$. This suggest that {for practical purposes},  the behavior implied by the ``coarse'' operator norm bound \eqref{eq:Halko-Bound} \emph{is} actually representative of the true noise sensitivity of the R-SVD.
In particular,  at small undersampling ratios $\beta$, we find that there is a fundamental discrepancy between the notions of signal {detection} (either with ``weak'' outliers, at $\BBP\sim \beta^{-1/8}$, or ``strong'' outliers when $\sigma\sim \beta^{-1/4}$) and that of signal {estimation} ($\mathrm{SNR}_{\gamma,\beta}(0.5)\sim \beta^{-1/2}$).
}
\begin{figure}
	\centering
	\includegraphics[width=0.5\textwidth]{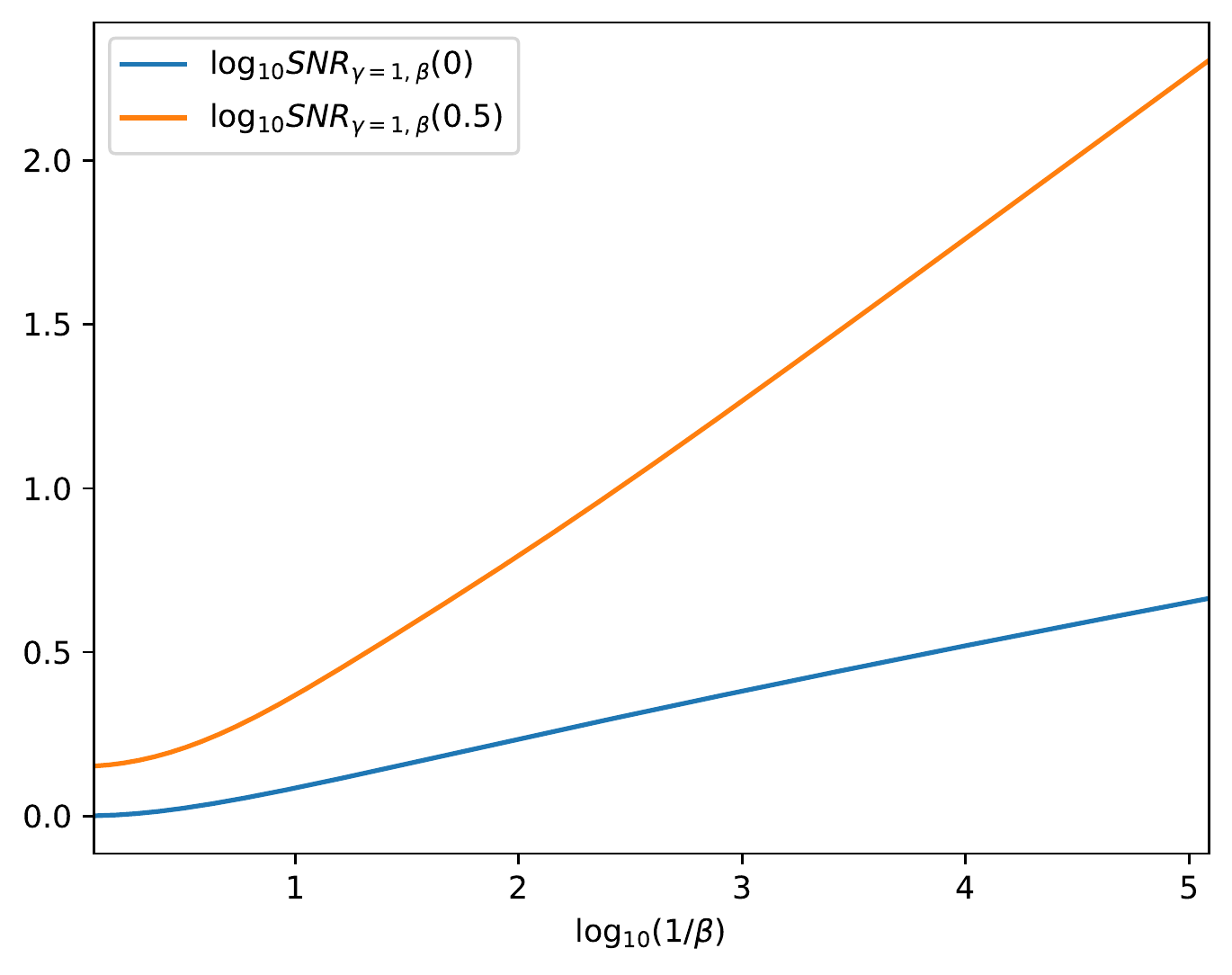}
	\caption{ The SNR level required for 1) signal detection ($\mathrm{SNR}_{\gamma=1,\beta}(0)=\BBP$); 2) for the correlation between the signal PCs and the empirical PCs of $\hbY$ to be at least 0.5 ($\mathrm{SNR}_{\gamma=1,\beta}(0.5)$). Plotted in log-log scale, for $\gamma=1$ and $\beta=d/m$ small. While detection occurs at low SNR, $\mathrm{SNR}_{\gamma=1,\beta}(0)\asymp \beta^{-1/8}$, for signal intensities close to the threshold the overlap between the $\bX$ and $\bY$ is very small ($o(1)$ as $\beta\to0$). To get a constant overlap (in our case $0.5$), one needs to work at considerably higher SNRs, $\mathrm{SNR}_{\gamma=1,\beta}(0)\asymp \beta^{-1/2}$.  }
	\label{fig:SmallBeta}
\end{figure}

\Revision{
    There is yet another interesting side to this story. Above, we discussed estimation of the entire dyad $\bu_i\bv_i^\T$; how does the error separate between $\bu_i,\bv_i$? Note that the R-SVD algorithm does not operate symmetrically on the left and right sides of $\bY$. Denote $\mathrm{SNR}^u_{\gamma=1,\beta}(0.5),\mathrm{SNR}^v_{\gamma=1,\beta}(0.5)$ the SNR level requires to achieve $\langle \bu,\hbu\rangle^2=0.5,\langle\bv,\hbv\rangle^2=0.5$ respectively. Figure~\ref{fig:SNRs-U-V}
    plots these quantities as a function of $\beta$ (for $\gamma=1$). We find that while $\mathrm{SNR}_{\gamma,\beta}^u(0.5)\sim \beta^{-1/2}$, we have $\mathrm{SNR}_{\gamma,\beta}^v(0.5)\sim \beta^{-1/4}$, which is much smaller. That is, the R-SVD provides considerably better estimates of the right singular vectors of $\bY$ than the left ones, the estimate being ``good'' essentially at the moment where the outlier ``noticeably'' separates from the bulk.
}

\begin{figure}
    \centering
    \includegraphics[width=0.5\textwidth]{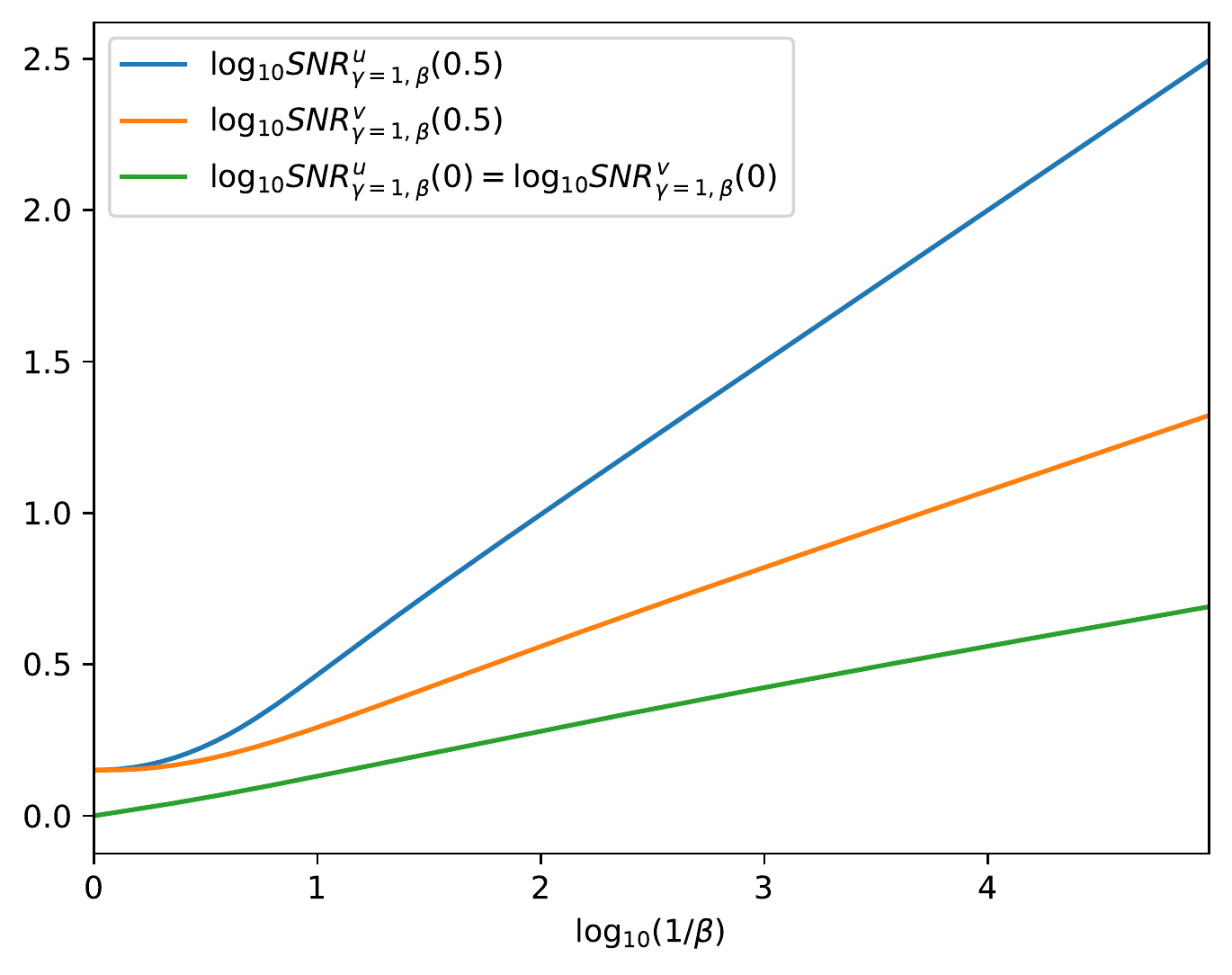}
\caption{
    \Revision{
$\mathrm{SNR}_{\gamma=1,\beta}^u(0.5),\mathrm{SNR}_{\gamma=1,\beta}^v(0.5)$, the SNR level required to attain, respectively, $\langle\bu,\hbu\rangle^2=0.5,\langle \bv,\hbv\rangle^2=0.5$. For $\beta\ll 1$, we find that $\mathrm{SNR}_{\gamma=1,\beta}^u(0.5)\sim \beta^{-1/2}$ while $\mathrm{SNR}_{\gamma=1,\beta}^v(0.5)\sim \beta^{-1/4}$; in particular, for small $\beta$, the R-SVD can estimate the right signal singular vectors at considerably smaller SNR. 
    }
}
\label{fig:SNRs-U-V}
\end{figure}

\Revision{
    The above findings suggest a nontrivial $\beta\to 0,\sigma\to\infty$ scaling limit for the left and right singular vector overlaps, $|\langle \bu,\hbu\rangle|,|\langle \bv,\hbv\rangle|$ respectively. To wit, define
    \begin{align}
        \m{U}_\gamma^\star(\alpha) &= \lim_{\beta\to 0}\UAngle(\beta^{-1/2}\alpha) \,,\\
        \m{V}_\gamma^\star(\alpha) &= \lim_{\beta\to 0} \VAngle(\beta^{-1/4}\alpha) \,.
    \end{align}
    We numerically compute\footnote{
        It would be interesting to analytically compute the corresponding limiting expressions, starting from the formulas in Theorems~\ref{thm:2:Outliers}-\ref{thm:3:Angles}. However, obtaining an analytically tractable small $\beta$ expansion via Theorem~\ref{thm:3:Angles} appears to be challenging. 
        } and plot $\m{U}_\gamma^\star,\m{V}_\gamma^\star$ for selected values of $\gamma$; see Figure~\ref{fig:U-V-Star}.
        \begin{figure}
            \centering
            \includegraphics[width=0.45\textwidth]{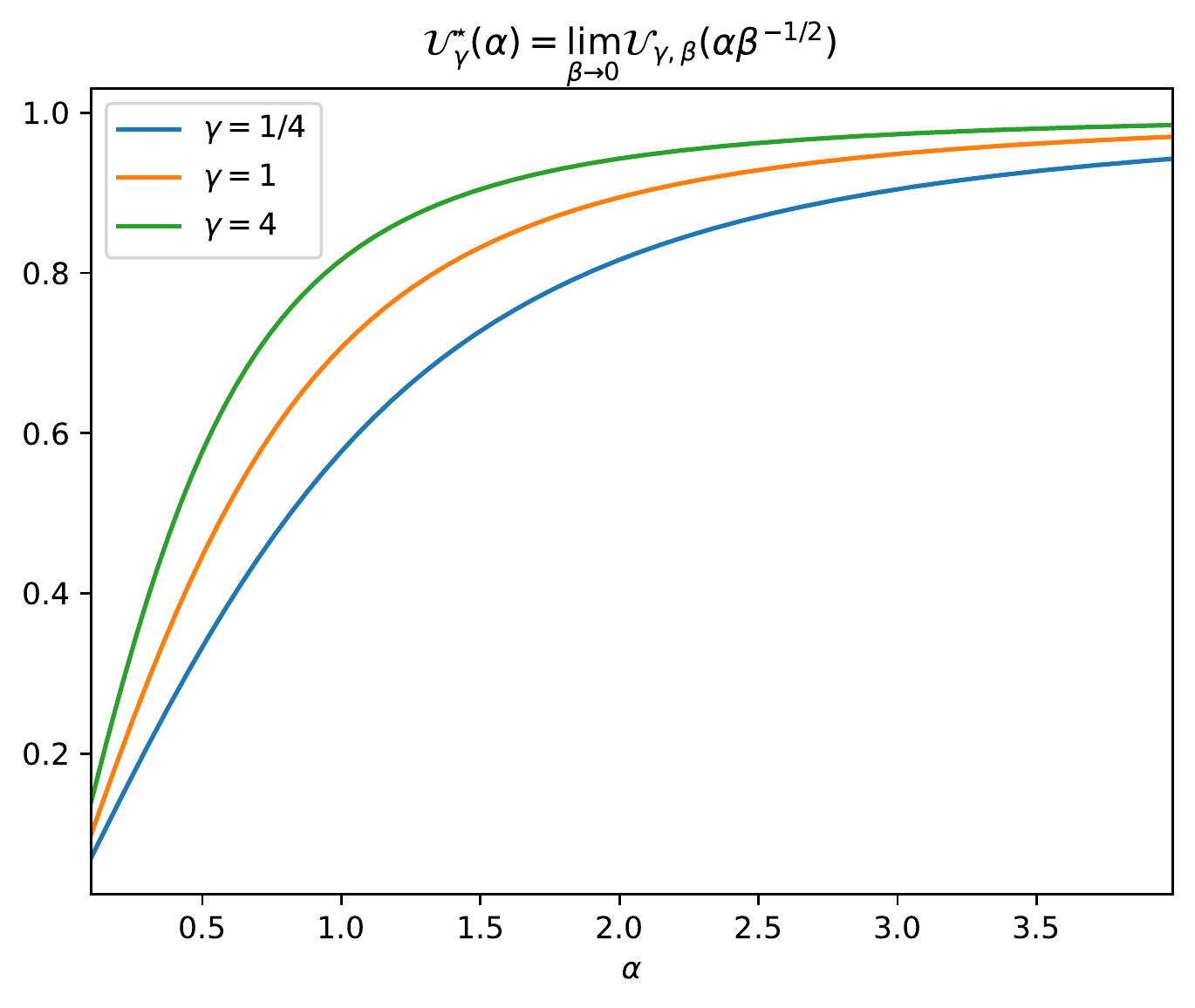}
            \includegraphics[width=0.45\textwidth]{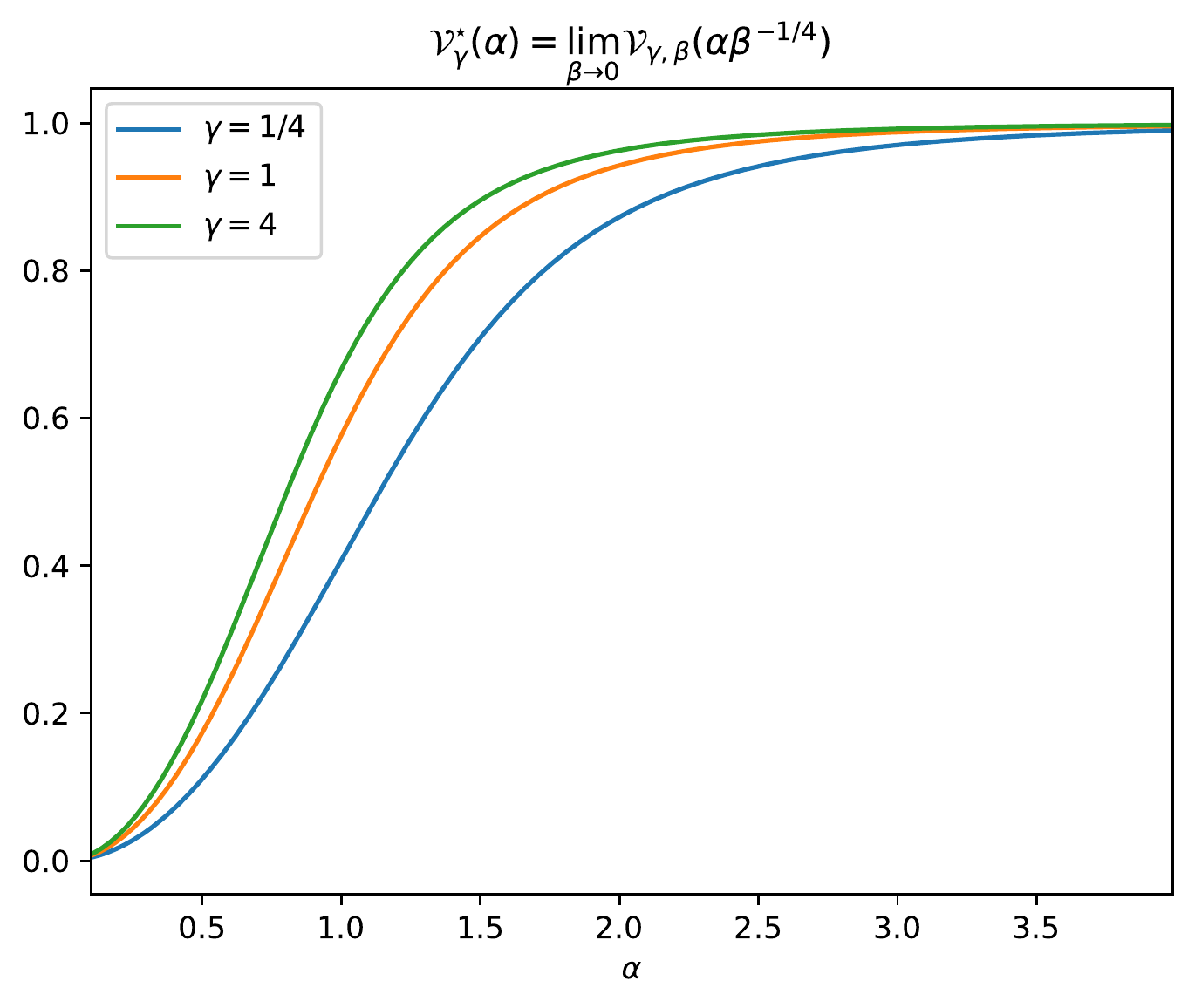}
            \caption{\Revision{Singular vector correlations in the joint limit $\beta\to 0$ and $\sigma\to 0$. To obtain nontrivial limits, we scale $\sigma=\alpha \beta^{-c}$ for $c\in \{1/4,1/2\}$. For the left singular vectors: $\m{U}_\gamma^\star(\alpha) = \lim_{\beta\to 0}\UAngle(\beta^{-1/2}\alpha)$ (left figure); for the right singular vectors: $\m{V}_\gamma^\star(\alpha) = \lim_{\beta\to 0} \VAngle(\beta^{-1/4}\alpha)$ (right figure). In particular, the R-SVD is able to estimate the right singular vectors adequately at a much lower SNR ($\propto \beta^{-1/4}$) than the left ones ($\propto \beta^{-1/2}$).}}
            \label{fig:U-V-Star}
        \end{figure}
    
}

\Revision{
    Mathematically, our discussion of extreme undersampling ratios corresponds to a double limit: first we take the large-dimensional limit ($n,m,d\to\infty$), and only then the limit $\beta\to 0$. We have further found that to get non-degenerate limits as $\beta\to 0$, one needs to increase $\sigma$ with $\beta$ as either $\sigma=\beta^{-1/4}$ or $\sigma=\beta^{-1/2}$. These findings call for a more refined analysis of the R-SVD under {\it disproportionate growth} asymptotics,
    \[
    m_n = \gamma n,\qquad \beta_n=d_n/m_n,\qquad \sigma=\beta_n^{-c},\qquad c\in \left\{\frac{1}{4},\frac{1}{2}   \right\}
    \]
    where $n\to\infty$ and $\beta_n\to 0$ simultaneously (perhaps with constraints on the decay rate of $\beta_n$). For the spiked model (and the full SVD), such analyses have appeared in the literature only recently, for example \cite{feldman2023spiked,donoho2022optimal}.

}

\Revision{
Lastly, we remark that introducing power iterations is known to improve the performance of the R-SVD considerably \cite{halko2011algorithm}. If one uses $q$ power iterations, \cite[Theorem 1.2]{halko2011finding} yields an operator norm bound, akin to \eqref{eq:Halko-Bound}, such that $\|\bY-\hbY\|/\sigma \sim \beta^{-\frac{1}{2(1+2q)}}/\sigma$ as $\beta\to 0$. 
Accordingly, in the language of this section, $\mathrm{SNR}^{(q)}(0.5)\lesssim \beta^{-\frac{1}{2(1+2q)}}$.  
A fine-grained analysis  for the R-SVD with power iterations (e.g., threshold for emergence of outlier) is an interesting direction for future research.
% \footnote{A numerical study by means of Monte-Carlo experiments seems challenging. Recall that $d=\beta m$ has to be reasonably large itself; since we are interested in $\beta\ll 1$, this means the dimensions $n,m$ have to be very large to begin with.}
}

\subsection{Pitfalls for PCA and data analysis}
\label{sec:StoryPCA}

\Revision{The findings described in the previous section raise some concerns regarding the use of the R-SVD as a tool for exploratory data analysis of large-dimensional data sets. }

% As the next section deals with PCA, it is natural to normalize the data matrix $\bY$ slightly differently; to avoid confusion, we use new notation.
% Consider a data set consisting of $m$ points of dimension $p$, denoted $\bm{t}_1,\ldots\bm{t}_m \in \RR^p$. Define the data matrix,
% \begin{equation}
%     \bm{T} = m^{-1/2}\MatL \bm{t}_1 &\ldots \bm{t}_m\MatR \in \RR^{p\times m},
% \end{equation}
% normalized so that $\bm{S}=\bm{T}\bm{T}^\T$ is the sample covariance. Denote by $\alpha=p/m$ the ratio between the dimension and the number of samples.

\Revision{
Principal component analysis (PCA) is an important tool for exploratory data analysis \cite{jolliffe2002principal}. 
Given a data matrix $\bY$, whose e.g. rows correspond to samples from some data set, the practitioner applies the SVD and 
retains the largest PCs of $\bY$.
When there are few ``emerging'' singular values---that are clearly larger than the remaining ones---it is implied that the corresponding principal components 
represent a low-dimension latent structure in the data.
In constrast, small singular values---which are often clumped in a ``bulk'' structure---are thought of as noise.
Principal component selection is a key methodological question: how large should a singular be so that its eigenvectors be considered ``informative'', as opposed to ``noise''?
In high-dimensional settings---where it is tacitly expected that most of the observed singular values correspond to ambient noise---it is essentially standard practice to select large PCs  which are well-separated (outliers) from the bulk of smaller singular values.
Among practitioners, perhaps the most well-known formalization of this practice is Cattell's ``scree test'' (also known as the ``elbow method''), proposed in 1966 \cite{cattell1966scree}; since then, more ``quantitative'' procedures were proposed as well, see for example \cite{gavish2014optimal,dobriban2019deterministic,ke2021estimation,
donoho2020screenot} among others. 

We envision the following type of scenario. Owing to the sheer magnitude of their dataset---so that taking the full SVD of the data matrix is infeasible---the practitioner naively runs an R-SVD procedure, with the aim of conducting an aforementioned kind of exploratory analysis. 
(For example, they could use an off-the-shelf implementation from a standardized software package, such as \texttt{scikit-learn} \cite{pedregosa2011scikit}.)
Could doing so have any adverse consequences on their findings? 
}

\Revision{A natural ``toy model'' to study such question is that of a low-rank factor model \cite{tipping1999probabilistic}. For simplicity, consider a rank-$1$ factor model, in which one observes $n$ data points
$\bm{y}_i = \eta u_i \bv+\bm{z}_i \in \RR^m$, where $\bm{v}\in \RR^m,\|\bm{v}\|=1$ is an unknown direction (the factor), $u_i\sim \m{N}(0,1)$ is a factor loading, $\eta^2$ is its variance, and $\bm{z}_i\sim \m{N}(0,\bI)$ is isotropic ambient noise. 
Let $\bY$ be the matrix whose rows are $\bm{y}_1,\ldots,\bm{y}_n$ divided by $(nm)^{-1/4}$ (similarly $\bm{Z}$), and denote
\[
\gamma=m/n,\quad \bu=n^{-1/2}(u_1,\ldots,u_n),\quad \sigma=\gamma^{-1/4}\eta,\quad \bY=\sigma\bu\bv^\T + \bZ\,.    
\]
Clearly, $\bY$ is an instance of a spiked random matrix, as described in Section~\ref{sec:SpikeModelBackground}.}

% Denote $\bm{\xi}=m^{-1/2}(\xi_1,\ldots,\xi_m)$, so that $\bm{T} = \eta \bm{u}\bm{\xi}^\T + \bm{W}$. Clearly, $\|\bm{\xi}\|\simeq 1$ with overwhelming probability, and therefore $\bm{Y}=(m/p)^{1/4}\bm{T}=\alpha^{-1/4}\bm{T}$ 
% follows the model of Section~\ref{sec:Setup} with $n=p$, $\gamma=1/\alpha$ and $\sigma=\alpha^{-1/4}\eta$. 
% Appealing to the results cited in Section~\ref{sec:SpikeModelBackground}, the detection threshold becomes $\eta^*=\alpha^{1/4}$, the covariance noise bulk edge is $\lambda_2(\bS)\simeq y_0(\alpha) \equiv (1+\sqrt{\alpha})^2$, and for a super-critical spike, $\eta>\eta^*$,
% \begin{equation}\label{eq:StoryPCA-1}
%     \lambda_1(\bS)\simeq   \frac{\left(\eta^2 + 1\right)  \left(\eta^2 + \alpha\right)}{\eta^2},
%     \qquad
%     \langle \bm{u},\hat{\bm{u}}_1\rangle^2 \simeq \frac{\eta^4-\alpha}{\eta^4+\alpha\eta^2}.
% \end{equation}

\Revision{
How large should an outlier be so that the corresponding observable leading singular vectors $\hbu\hbv^\T$ are well-aligned with their signal counterparts $\bu\bv^\T$?
Consider, for example, the ratio between the leading and subleading singular values (the latter is the noise bulk edge), $\alpha=\hat{\sigma}_1/\hat{\sigma}_2$. Using the formulas cited in Section~\ref{sec:SpikeModelBackground}, once can show that in the large dimensional limit (e.g. \cite{gavish2014optimal})
\begin{align*}
    \hat{\sigma}^2_1 
    &= \sigma^2 + \frac{1}{\sigma^2} + \rho,\qquad \hat{\sigma}_2^2 = 2 + \rho,\qquad 
    \rho = \gamma^{1/2}+\gamma^{-1/2},\\
    \sigma^2 &= \frac{\hat{\sigma}_1^2-\rho + \sqrt{(\hat{\sigma}_1^2-\rho)^2-4}}{2}, \qquad \Delta:= \hat{\sigma}_1^2-\hat{\sigma}_2^2 \\
    \langle \bu,\hbu\rangle\langle \bv,\hbv\rangle &= \frac{\sigma^4-1}{\sigma^3 \hat{\sigma}_1} = \frac{1}{\sigma\hat{\sigma}_1}\sqrt{(\hat{\sigma}_1^2-\rho)^2-4} = \frac{1}{\sigma\hat{\sigma}_1}\sqrt{\Delta(\hat{\sigma}_1^2-\rho+2)}\,.
\end{align*}
Using, for example, $\sigma\le \hat{\sigma}_1$ and $\hat{\sigma}_1^2-\rho+2\ge \hat{\sigma}_1^2-\rho-2=\Delta$, yields $\langle \bu,\hbu\rangle\langle \bv,\hbv\rangle \ge \Delta/\hat{\sigma}_1^2=1-1/\alpha^2$. 
From this simple (and rather loose) calculation we can deduce: if one observes an outlier, such that the ratio $\alpha$ between its location and the bulk edge is reasonably large (e.g. $\alpha \ge \sqrt{2}$), then the corresponding observed singular vectors are necessarily reasonably correlated with the ground truth (e.g. $\langle\bu,\hbu\rangle\langle \bv,\hbv\rangle\ge 0.5$). (We emphasize that this rule is not advocated as a reciple for PC selection---it is highly suboptimal as such. See, for example, \cite{gavish2014optimal} for a treatment of optimal singular value thresholding when the noise is i.i.d. and \cite{donoho2020screenot} for correlated noise.)
}

\Revision{
The above intuition is key to the practice of PC selection: if an outlier is large relative to the bulk edge, it should be retained. 
The key message 
of this section is that this intuitive understanding of the spectrum of $\bY$ is entirely \emph{incorrect} as far as the spectrum of the reduced matrix is concerned and when $\beta\ll 1$. 
In the SNR regime $\beta^{-1/4} \le \sigma \ll \beta^{-1/2}$, one may observe very strong outliers in the spectrum of $\bY$ ($\hat{\sigma}_1\sim \beta^{1/2}\sigma^2$---up to magnitude $\lesssim \beta^{-1/2}$!); their corresponding singular vectors, however, would only be very weakly correlated with the signal ($\langle \bu,\hbu\rangle\langle \bv,\hbv\rangle=o(1)$ as $\beta\to 0$).
}

\Revision{
Lastly, we remark that the \Revision{introduction} of power iterations largely mitigates the phenomenon described above; though the complete quantification is beyond the scope of this paper. Heuristically, we can consider the bound of \cite{halko2011finding}: $\|\bY-\hbY\|/\sigma\sim \beta^{-\frac{1}{2(1+2q)}}$, when $q$ power iterations are used. Hence, $\sigma \sim \beta^{-1/2(1+2q)}$ suffices for constant estimation error, with an outlier appearing within distance at most $\hat{\sigma}_1 \le (1+\beta^{-1/2(1+2q)})\sigma$. While $\hat{\sigma}_1/\sigma$ indeed blows up as $\beta\to 0$, it does so very slowly: for example, if $\beta=10^{-3}$, and $q=5$, then $\beta^{-1/2(1+2q)}=10^{3/22}\approx 1.37$. 
}

\subsection{Sketched PCA}
\label{sec:SketchedPCA}

\Revision{
In a recent paper \cite{yang2021reduce}, the authors studied the asymptotics of {sketched PCA} (S-PCA) under the spiked model. 
In S-PCA one is given a data matrix $\bY \in \RR^{n\times m}$ (whose rows, typically, are assumed to be i.i.d. vector samples), and is interested in the {right} singular vectors of $\bY$. Under a signal-plus-noise model $\bY=\bX+\bZ$, the end goal is to estimate the right singular vectors of $\bX$, which span the latent low-dimensional subspace in which the noiseless data points reside. For a random sketching matrix $\bOmega'\in \RR^{d\times n}$, one forms $\hbY=\bOmega'\bY$, and then uses its leading right singular vectors ${\hbv},\ldots,{\hbv}_r$ as proxies for the true principal directions $\bv_1,\ldots,\bv_r$. That is, 
rather than producing a projection $\bPc=\bQ\bQ^\T$ using randomly-sketched data (recall: $\bQ\bR=\tilde{\bY}=\bY\bOmega$ is obtained by e.g. a QR decomposition), sketched PCA simply projects $\bY$ from the left in a data-independent manner.
% S-PCA omits the second step of the R-SVD, where one projects $\bY$ from the left onto the column space of $\tilde{\bY}$.

It is well-known that the R-SVD generally produces better estimates for the data PCs than the simpler S-PCA \cite{halko2011finding}. 
%    Under the spiked model, our results combined with \cite{yang2021reduce} provide a precise quantification of this fact. 
The authors of \cite{yang2021reduce} showed that in the spiked model, under the setup considered in the present paper, the spectrum of the sketched matrix $\tilde{\bY}$ exhibits a similar phenomenology to that of Theorems~\ref{thm:1:Bulk}-\ref{thm:3:Angles}, namely, 1) the singular values are arranged in a bulk and outliers structure, the outliers being in a 1-to-1 correspondence with the signal PCs; and 2) the angles between signal and empirical PCs concentrate around a deterministic quantity. They further derive formulas for the limiting outlier location and singular vector correlations under several choices of the sketching matrix $\bOmega'$.

Consider the simplest case, where $\bOmega'\in \RR^{d\times n}$ is a uniformly random (Haar) projection, and assume the setup of Section~\ref{sec:Setup}, where for simplicity $r=1$.\footnote{Note that we use a different normalization for the noise variance from \cite{yang2021reduce}, and therefore the formulas given below are slightly different from \cite[Theorem III.1]{yang2021reduce}.}
Let $\tilde{\bu}=\bOmega'\bu/\|\bOmega'\bu\|$, $\tilde{\bZ}=\bOmega'\bZ$, and note that 1) $\|\bOmega'\bu\|\simeq \sqrt{d/n}\simeq (\beta\gamma)^{1/2}$; and 2) $\tilde{\bZ}\in \RR^{d\times m}$ has i.i.d. entries $\tilde{\bZ}_{i,j}\sim \m{N}(0,1/\sqrt{nm})$. Then
\begin{equation}
    \hbY=\bOmega'(\sigma\bu\bv^\T +\bZ) \simeq \sigma (\beta\gamma)^{1/2}\tilde{\bu}\bv^\T + \tilde{\bZ}\,.
\end{equation}
The distribution of the bulk, that is, the LSD of $\tilde{\bZ}\tilde{\bZ}^\T$, is a Marchenko-Patur law with shape $d/m=\beta$ and scale $\frac{1/(nm)^{1/4}}{1/(m)^{1/2}}=\gamma^{1/4}$. In particular, the bulk edge is $\|\tilde{\bZ}\|^2\simeq \gamma^{1/2}(1+\sqrt{\beta})^2$. 

By the results cited in Section~\ref{sec:SpikeModelBackground}, an outlier separates from the bulk when $\sigma(\beta\gamma)^{1/2}>\gamma^{1/4}\beta^{1/4}$, equivalently $\sigma> (\beta\gamma)^{-1/4}$. That is, $\sigma^*=(\beta\gamma)^{-1/4}$ is the detection threshold for an outlier. This should be compared with the corresponding detection threshold for the R-SVD, which for small $\beta$ scales like $\BBP=(1+1/\gamma)^{1/8}\beta^{-1/8}+o(\beta^{-1/8})$, and is considerably smaller. 

Above the detection threshold ($\sigma>\sigma^*$), the limiting outlier position is
\begin{equation}
    \sigma_1(\hbY) \simeq \sqrt{
        \sqrt{\gamma}(1+\sqrt{\beta})^2 + \frac{1}{\sigma^2}+\beta\gamma\sigma^2 - 2\sqrt{\beta\gamma}    
    } \,,\qquad \sigma>(\gamma\beta)^{-1/4}\,.
\end{equation}
In particular, for $\beta\ll 1$ small, SNRs of scale $\sigma^*=(\gamma\beta)^{-1/4}<\sigma \ll \beta^{-1/2}$ generate outliers that are very close to the bulk (in the sense that $\sigma_1(\hbY)-\sigma_2(\hbY)=o(1)$ as $\beta\to 0$), and may be hard to detect in finite-$n$ settings. In contrast, for the R-SVD, $\sigma\gtrsim \beta^{-1/4}$ already generates outliers that are a constant distance away from the bulk.

Furthermore, the limiting singular vectors correlations are 
\begin{align}
	\label{eq:DobribanFormula}
    |\langle \bv_i, \tilde{\bv}_i\rangle | 
    &\longrightarrow \begin{cases}
		\sqrt{
            \frac{\gamma\beta \sigma_i^4 - 1}{\gamma\beta\sigma_i^4 + \gamma^{1/2}\sigma_i^2}
            } \quad&\textrm{if}\quad \sigma_i > (\beta\gamma)^{-1/4},\\
		0 \quad&\textrm{if}\quad \sigma_i \le (\beta\gamma)^{-1/4}
	\end{cases} \,.
\end{align}
Similar to the discussion of Section~\ref{sec:Discussion}, let $\sigma = \mathrm{SNR}^{\mathrm{S-PCA}}_{\gamma,\beta}(0.5)$ be such that $|\langle \bv_i, \tilde{\bv}_i\rangle |^2=0.5$ asymptotically. We can compute exactly:
\begin{equation}
	\mathrm{SNR}^{\mathrm{S-PCA}}_{\gamma,\beta}(0.5) = \sqrt{
        \frac{\frac12\gamma^{1/2} + \sqrt{ \frac{1}{4}\gamma + 2\beta\gamma }}{\beta\gamma}
        } = \gamma^{-1/4}\beta^{-1/2} + \BigOh(1) \qquad\textrm{as}\quad \beta\to0 \,.
\end{equation}
In Section~\ref{sec:Discussion}, we have observed that for the R-SVD, $\mathrm{SNR}^{v}_{\gamma,\beta}(0.5)\sim \BigOh(\beta^{-1/4})$ as $\beta\to 0$. In particular, for small $\beta$, the R-SVD yields ``reasonable'' estimates for the signal right singular vectors at substantially lower SNR $\sigma$. 
}

    \section{Optimal Shrinkage of ``Fast'' Singular Values}
    \label{sec:Shrinker}

	Consider the low-rank matrix denoising problem. Let $\bX\in \RR^{n\times m}$ be an unknown, rank $r$ matrix. One observes noisy measurements $\bY=\bX+\bZ$, and wishes to estimate $\bX$. We consider this problem under the asymptotics of the spiked model, as described in Section~\ref{sec:SetupMain}, where $\bX$ has the form \eqref{eq:X-def}, $\bZ$ has i.i.d. Gaussian entries $\m{N}(0,1/\sqrt{nm})$ and $n,m\to\infty$ with the rank $r$ fixed.
	
	A key question is how one should incorporate the known low-rank structure of the signal $\bX$ into the denoising process.
	One 
	popular and practical approach for doing so is \emph{singular value shrinkage} \cite{perry2009cross,shabalin2013reconstruction,donoho2014minimax,donoho2018optimal,gavish2017optimal,nadakuditi2014optshrink}. The idea is simple: $\bX$ is estimated by taking the singular value decomposition (SVD) of $\bY$, ``killing off'' the principal components (PCs) corresponding to small singular values (which represent noise), and re-weighting (in particular deflating) the large ``signal-bearing'' PCs to correct for the effects of noise.
	There exists by now a large and	fruitful literature devoted to singular value shrinkage (and variations) in the spiked model, under various different settings, cf. \cite{shabalin2013reconstruction,nadakuditi2014optshrink,gavish2014optimal,gavish2017optimal,donoho2018optimal,hong2018asymptotic,hong2018optimally,leeb2021optimal,leeb2021matrix,leeb2022operator,donoho2020screenot,su2022optimal, gavish2022mahalanoibis,gavish2022matrix}. 
	
	In this section we derive an optimal shrinkage rule for the randomized SVD. As mentioned, while R-SVD offers computational advantages over the full SVD (these being larger as $d$, the sketching dimension, decreases), the obtained ``fast'' singular vectors are worse approximations to the unobserved signal PCs; in other words, performing dimensionality reduction introduces additional noise into our estimates. Optimal singular value shrinkers designed for the full SVD \cite{shabalin2013reconstruction,nadakuditi2014optshrink,gavish2017optimal} are blind to this new source of noise, and accordingly are sub-optimal when used with the R-SVD. The formulas for the new optimal shrinkers will be obtained in a straightforward manner as a corollary from our theoretical results, Theorems~\ref{thm:2:Outliers}-\ref{thm:3:Angles}.
	
	Let $r_0$ be the number of {detectable} signal spikes, namely, such that $\sigma>\BBP$. Recall, by Theorem~\ref{thm:2:Outliers}, that a detectable signal spike corresponds to an observable outlier in the spectrum of $\hat{\bY}$. In that case, the true spike intensity can be consistently estimated from $\bY$ by inverting the spike-forward map:
	\begin{align}\label{eq:EstSpike}
		\sigma_i = \lim_{n\to\infty} \SpikeFunc^{-1}(\sigma_i(\hbY)),\qquad 1\le i \le r_0 \,.
	\end{align}
	Moreover, $r_0$ itself can be consistently estimated. Let $\delta>0$ be a tuning parameter; when $\delta$ is small enough, specifically $\delta < \SpikeFunc(\sigma_{r_0})-\NoiseEdgeUpper$, the estimator
	\begin{align}\label{eq:EstimateRank}
		\hat{r}(\delta)=\sum_{i=1}^{m} \mathds{1}(\hsigma_i\ge \NoiseEdgeUpper+\delta)  
	\end{align}
	satisfies $\hat{r}(\delta)\to r_0$ a.s. For simplicity, let us assume below that $r_0$ is known. Consider the family of all estimators of the form 
	\begin{equation}
		\hbX_{\bw} = \sum_{i=1}^{r_0} w_i \hbu_i \hbv_i^\T ,
	\end{equation}
	where $\bw=(w_1,\ldots,w_{r_0})$ are weights, possibly dependent on $\bY$. We would like to choose $\bw=\bw(\bY)$ so to minimize the Frobenius loss (MSE)
	\begin{equation}\label{eq:FrobeniusLoss}
		\m{L}(\bw) = \|\bX-\hbX_{\bw}\|_F^2 \,.
	\end{equation}
	Note that per \eqref{eq:FrobeniusLoss}, $\m{L}(\bw)$ is a random quantity (we do not take an expectation). The shrinker we devise is optimal in an \emph{asymptotic} sense: formally, we consider a sequence of denoising problems at increasing dimensions, $n,m,d\to\infty$.  We construct weights $\bw^*=\bw^*(\bY)$ which are asymptotically optimal in that 
	\begin{align}
		\lim_{n\to\infty}\m{L}(\bw^*(\bY)) = \lim_{n\to\infty} \min_{\bw\in \RR^{r_0}}\m{L}(\bw)
	\end{align}
	holds w.p. $1$. 
	
	We start with a simple observation: point-wise minimizers $\bw=\bw(\bY,\bX)$ of $\m{L}(\bw)$ are always bounded:
	\begin{lemma}\label{lem:bounded}
		Almost surely,
		\begin{align*}
			\min_{\bw\in\RR^{r_0}}\m{L}(\bw) = \min_{\bw\in\RR^{r_0}\;:\;\|\bw\|_{\infty}\le {2}\|\bX\|_F}\m{L}(\bw) \,.
		\end{align*}
	(Recall that $\|\bX\|_F^2=\sum_{i=1}^r \sigma_i^2$ is constant.) 
	\end{lemma}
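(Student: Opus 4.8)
The plan is to recognize that, once the low-rank structure is plugged in, $\m{L}(\bw)$ is nothing but a strictly convex separable quadratic in $\bw$, so its (unique) unconstrained minimizer can be written down in closed form and bounded by a deterministic quantity. The constant $2$ in the statement is merely a safety margin: I expect the minimizer to land inside the $\ell_\infty$-ball of radius $\|\bX\|_F$ on every realization for which the relevant singular vectors are defined.

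Concretely, the first step is to expand the loss. Since $\hbu_1,\dots,\hbu_{r_0}$ and $\hbv_1,\dots,\hbv_{r_0}$ are orthonormal families (being singular vectors of $\hbY$), one has $\|\hbX_\bw\|_F^2=\sum_{i=1}^{r_0}w_i^2$ and $\langle \bX,\hbX_\bw\rangle_F=\sum_{i=1}^{r_0}w_i\,(\hbu_i^\T\bX\hbv_i)$, so that
\[
\m{L}(\bw)=\|\bX\|_F^2-2\sum_{i=1}^{r_0}w_i c_i+\sum_{i=1}^{r_0}w_i^2,\qquad c_i:=\hbu_i^\T\bX\hbv_i .
\]
This is separable in the coordinates of $\bw$, and each term $w_i^2-2w_ic_i$ is uniquely minimized at $w_i=c_i$; hence the global minimizer is the single point $\bw^\star=(c_1,\dots,c_{r_0})$. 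The second step is to bound it: by Cauchy--Schwarz together with $\|\hbu_i\|=\|\hbv_i\|=1$, we get $|c_i|=|\hbu_i^\T\bX\hbv_i|\le \|\bX\hbv_i\|\le \|\bX\|=\sigma_1\le\big(\sum_{j=1}^r\sigma_j^2\big)^{1/2}=\|\bX\|_F$. Therefore $\|\bw^\star\|_\infty\le\|\bX\|_F\le 2\|\bX\|_F$, i.e. the unconstrained minimizer already lies in the feasible set $\{\|\bw\|_\infty\le 2\|\bX\|_F\}$, which is exactly the asserted identity.

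I do not expect a genuine obstacle here; this is a preparatory lemma whose sole role is to confine the later optimization to a fixed compact set, so that one may interchange $\min_{\bw}$ with the $n\to\infty$ limit via uniform convergence of $\m{L}$ on compacts. The only place randomness plays a role is in making sense of the objects: since $\bZ$ has a density, almost surely $\hbY$ has rank $d\ge r_0$ and its leading $r_0$ singular values are simple, so the dyads $\hbu_i\hbv_i^\T$ (which is all that $\hbX_\bw$, hence $\m{L}$ and each $c_i$, depends on — these are invariant under the joint sign flip $(\hbu_i,\hbv_i)\mapsto(-\hbu_i,-\hbv_i)$) are well defined, and the computation above applies verbatim. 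The ``hardest'' stylistic choice is just that the bound references $\|\bX\|$; one could instead phrase it via the observable edge $\NoiseEdgeUpper$, but the crude estimate $\|\bX\|\le\|\bX\|_F$ is all that is needed for the statement as given.
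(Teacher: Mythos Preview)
Your proof is correct and actually takes a slightly different route from the paper. The paper never computes the minimizer explicitly at this stage: it argues that any minimizer $\bw$ satisfies $\m{L}(\bw)\le \m{L}(\0)=\|\bX\|_F^2$, and then applies the reverse triangle inequality $\sqrt{\m{L}(\bw)}=\|\bX-\hbX_\bw\|_F\ge \|\hbX_\bw\|_F-\|\bX\|_F\ge |w_i|-\|\bX\|_F$, yielding $|w_i|\le 2\|\bX\|_F$. Your argument instead exploits the separable quadratic structure to write down the minimizer $w_i^\star=\hbu_i^\T\bX\hbv_i$ in closed form and bound it directly by $\|\bX\|\le\|\bX\|_F$. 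Your route is more explicit and even gives the sharper constant $1$ in place of $2$; the paper's route is a touch more robust in that it would go through verbatim for any parameterized family $\hbX_\bw$ with $\|\hbX_\bw\|_F\ge\|\bw\|_\infty$, without needing to identify the minimizer. Either way the lemma is elementary and your proof is fine.
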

\begin{proof}
	If $\bw$ minimizes $\m{L}(\bw)$ then in particular $\m{L}(\bw)\le \m{L}(\0)=\|\bX\|_F^2$. Furthermore, for any $1\le i \le r_0$, $\sqrt{\m{L}(\bw)}=\|\bX-\hbX_{\bw}\|_F \ge \|\hbX_{\bw}\|_F-\|\bX\|_F\ge |w_i| - \|\bX\|_F$. It follows that $|w_i|\le 2\|\bX\|_F$.
\end{proof}

Let $\bw=\bw(\bX,\bY)$ be any bounded weights. Expanding, 
\begin{align}
	\m{L}(\bw)=\|\bX-\hbX_{\bw}\|_F^2 
	= \sum_{i=1}^r \sigma_i^2 + \sum_{i=1}^{r_0}w_i -2 \sum_{i=1}^r\sum_{j=1}^{r_0} \sigma_i w_j \langle \bu_i,\hbu_j\rangle\langle \bv_i,\hbv_j\rangle\,, \nonumber 
\end{align}
so that by Theorems~\ref{thm:2:Outliers}-\ref{thm:3:Angles}, almost surely,
\begin{align}
	\lim_{n\to\infty}\m{L}(\bw)= \sum_{i=1}^{r_0}\left(\sigma_i^2 + w_i^2 -2\sigma_i w_i \m{U}_\gamma(\sigma_i)\m{V}_\gamma(\sigma_i)\right) + \sum_{i=r_0+1}^r \sigma_i^2 \,. \label{eq:AsympLossW}
\end{align}
Observe that \eqref{eq:AsympLossW} can be minimized explicitly in $\bw$. Specifically, the minimizer is
\begin{align}\label{eq:OptWeightsOracle}
	w_i^* = \sigma_i \m{U}_\gamma(\sigma_i)\m{V}_\gamma(\sigma_i),\qquad 1\le i \le r_0\,.
\end{align}	
Of course, the population spikes $\sigma_i$ are unknown; but per \eqref{eq:EstSpike}, they can be estimated consistently. Accordingly, define the {\it optimal shrinkage function} $\Shrinker : \RR_+\to \RR_+$:
\begin{align}\label{eq:Upsilon}
	\Shrinker(y)= \begin{cases}
		\SpikeFunc^{-1}(y)\UAngle(\SpikeFunc^{-1}(y))\VAngle(\SpikeFunc^{-1}(y))\quad&\textrm{if}\quad y>\sqrt{\NoiseEdgeUpper}, \\
		0\quad&\textrm{if}\quad y\le \sqrt{\NoiseEdgeUpper}
	\end{cases}\,.
\end{align}
Note that $\Shrinker(\cdot)$ is continuous at $y=\sqrt{\NoiseEdgeUpper}$, by Proposition~\ref{prop:vanishing-corrs}. Clearly, the weights
\begin{align}\label{eq:EstimatedWeights}
	\hat{w}_i^* = \Shrinker(\sigma_i(\hbY)),\qquad 1\le i \le r_0 \,,
\end{align}
which can be computed directly from $\hbY$, satisfy
$\hat{\bw}^*\to \bw^*$ w.p. $1$. 

The following is an immediate corollary of Lemma~\ref{lem:bounded} and \eqref{eq:AsympLossW}:
\begin{corollary}[Asymptotic optimality]
	\label{cor:optimality}
	Let $\hat{\bw}^*=\hat{\bw}^*(\hbY)\in\RR^{r_0}$ be the weights \eqref{eq:EstimatedWeights}. Almost surely,
	\begin{equation}\label{eq:cor:optimality}
		\lim_{n\to \infty} \m{L}(\hat{\bw}^*) = \lim_{n\to \infty} \min_{\bw\in\RR^{r_0}}\m{L}(\bw) \,.
	\end{equation}
\end{corollary}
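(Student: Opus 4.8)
\textbf{Proof proposal for Corollary~\ref{cor:optimality}.}
The plan is to combine Lemma~\ref{lem:bounded} with the pointwise limit \eqref{eq:AsympLossW}; the only genuinely nontrivial step is the interchange of the large-$n$ limit with the minimization over $\bw$. For each fixed $n$, the map $\bw\mapsto\m{L}(\bw)$ is a quadratic polynomial in $\bw\in\RR^{r_0}$ whose degree-$2$ coefficient ($\sum_{i=1}^{r_0}w_i^2$) and constant term ($\|\bX\|_F^2$) are deterministic and independent of $n$; only the linear part, with coefficients $-2\sum_{i=1}^r\sigma_i\langle\bu_i,\hbu_j\rangle\langle\bv_i,\hbv_j\rangle$ for $1\le j\le r_0$, is random. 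By Theorems~\ref{thm:2:Outliers}--\ref{thm:3:Angles} these coefficients converge almost surely to their limits, so $\m{L}(\bw)-\m{L}_\infty(\bw)$ --- where $\m{L}_\infty(\bw)$ denotes the right-hand side of \eqref{eq:AsympLossW} --- is, for each $n$, an affine function of $\bw$ whose coefficients tend to $0$ almost surely. Consequently the convergence in \eqref{eq:AsympLossW} is, almost surely, \emph{uniform} on the compact ball $K=\{\bw\in\RR^{r_0}:\|\bw\|_\infty\le 2\|\bX\|_F\}$.

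With uniform convergence in hand, I would first compute $\lim_{n\to\infty}\min_{\bw}\m{L}(\bw)$. By Lemma~\ref{lem:bounded} the unconstrained minimum over $\RR^{r_0}$ equals the minimum over $K$, almost surely, and uniform convergence of $\m{L}$ to $\m{L}_\infty$ on the compact set $K$ then gives $\lim_{n}\min_{\bw\in K}\m{L}(\bw)=\min_{\bw\in K}\m{L}_\infty(\bw)$. The limiting quadratic $\m{L}_\infty$ is minimized over all of $\RR^{r_0}$ at the weights $\bw^*$ of \eqref{eq:OptWeightsOracle}, and these lie strictly inside $K$, since $|w_i^*|\le\sigma_i\le\|\bX\|_F<2\|\bX\|_F$ (using that each limiting overlap lies in $[0,1]$). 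Hence $\min_{\bw\in K}\m{L}_\infty=\min_{\bw\in\RR^{r_0}}\m{L}_\infty=\m{L}_\infty(\bw^*)$, and therefore $\lim_{n}\min_{\bw}\m{L}(\bw)=\m{L}_\infty(\bw^*)$.

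For the matching value I would plug in the estimated weights $\hat{\bw}^*$ of \eqref{eq:EstimatedWeights}, recalling that $\hat{\bw}^*\to\bw^*$ almost surely (as noted in the text: $\sigma_i(\hbY)\to\SpikeFunc(\sigma_i)$ by Theorem~\ref{thm:2:Outliers}, and $\Shrinker$ is continuous, including at the bulk edge $\sqrt{\NoiseEdgeUpper}$ by Proposition~\ref{prop:vanishing-corrs}); in particular $\hat{\bw}^*\in K$ eventually. The triangle inequality then gives
\[
\bigl|\m{L}(\hat{\bw}^*)-\m{L}_\infty(\bw^*)\bigr|\;\le\;\sup_{\bw\in K}\bigl|\m{L}(\bw)-\m{L}_\infty(\bw)\bigr|\;+\;\bigl|\m{L}_\infty(\hat{\bw}^*)-\m{L}_\infty(\bw^*)\bigr|,
\]
where the first term vanishes by the uniform convergence above and the second by continuity of $\m{L}_\infty$ together with $\hat{\bw}^*\to\bw^*$. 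Thus $\lim_{n}\m{L}(\hat{\bw}^*)=\m{L}_\infty(\bw^*)=\lim_{n}\min_{\bw}\m{L}(\bw)$, which is precisely \eqref{eq:cor:optimality}.

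There is no serious obstacle in this argument: the single point requiring care is the swap of $\lim_n$ and $\min_{\bw}$, which can fail for general random objective functions but is rescued here by the rigid structure of $\m{L}$ (fixed quadratic and constant parts, only the linear part fluctuating, hence uniform convergence on compact sets), together with the a priori boundedness of all near-minimizers supplied by Lemma~\ref{lem:bounded} and the fact that the limiting minimizer $\bw^*$ sits strictly inside that ball.
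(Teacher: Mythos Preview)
Your argument is correct and follows the same route the paper indicates: the paper simply declares the result ``an immediate corollary of Lemma~\ref{lem:bounded} and \eqref{eq:AsympLossW}'' without spelling out the swap of $\lim_n$ and $\min_{\bw}$, whereas you supply precisely that missing step via uniform convergence on the compact ball $K$ (exploiting that $\m{L}-\m{L}_\infty$ is affine with vanishing coefficients). Your verification that $\bw^*\in K$ and that $\hat{\bw}^*\to\bw^*$ lands in $K$ eventually are exactly the details needed to make the paper's ``immediate'' rigorous.
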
 

\begin{figure}
	\centering
	\includegraphics[width=0.6\textwidth]{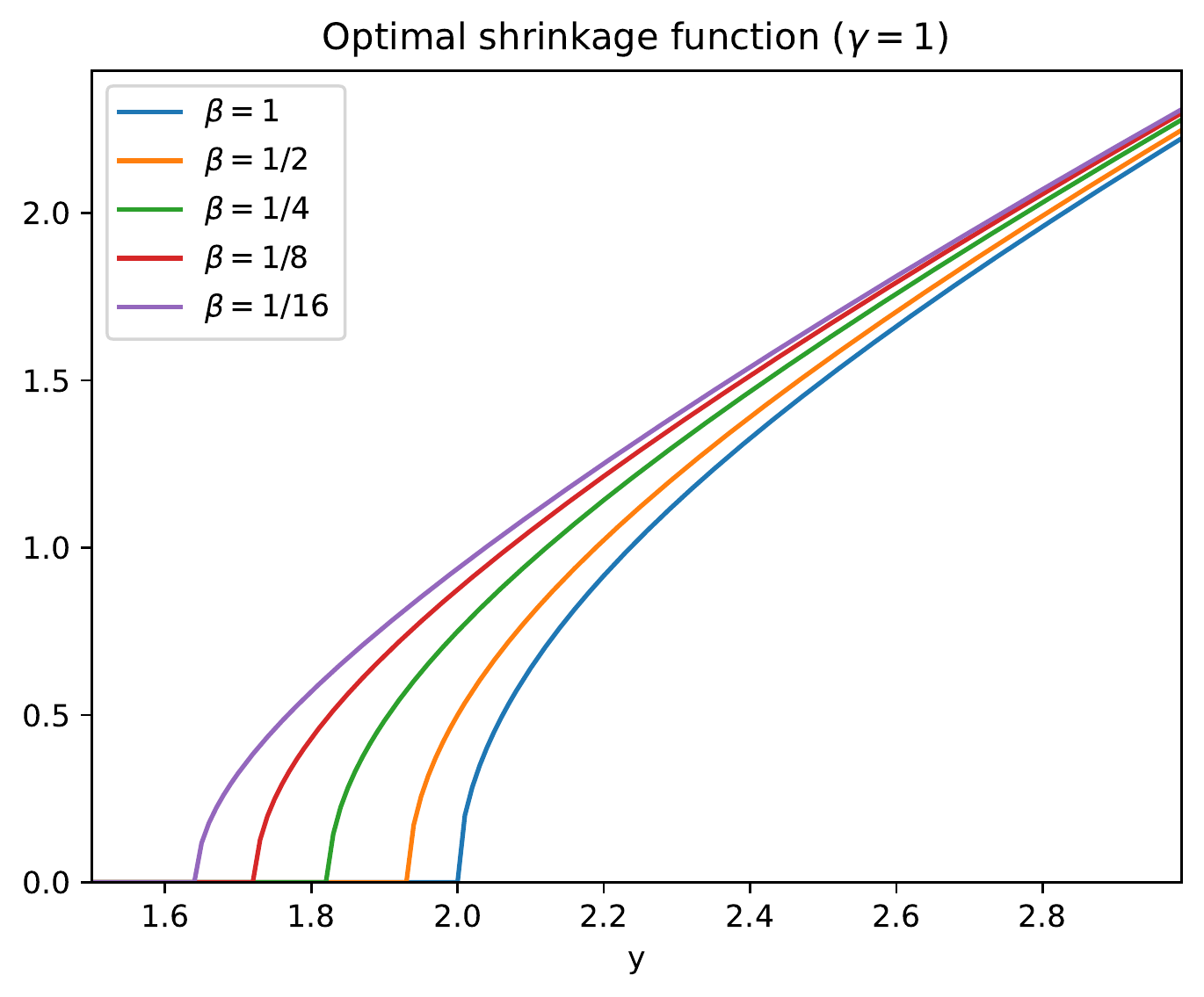}
	\caption{A plot of the optimal shrinkage function, $y\mapsto \Shrinker(y)$, given in \eqref{eq:Upsilon} for $\gamma=1$ and selected values of $\beta\in (0,1]$.}
	\label{fig:OptimalShrinker} 
\end{figure}
Figure~\ref{fig:OptimalShrinker} plots the optimal shrinkage function $\Shrinker(\cdot)$ for $\gamma=1$ and selected values of $\beta\in (0,1]$. 

\paragraph*{}
The optimal shrinker for $\beta=1$ (the full SVD) is well-known \cite{shabalin2013reconstruction,gavish2014optimal,gavish2017optimal}, and given by the explicit formula
\begin{equation}\label{eq:OldShrinker}
	\Upsilon_{\gamma,1}(y) = \begin{cases}
		\frac{1}{y}\sqrt{(y^2-\NoiseEdge^+_{\gamma,1})(y^2-\NoiseEdge^-_{\gamma,1})}\quad&\textrm{if }\quad y>\sqrt{\NoiseEdge^+_{\gamma,1}}\,,\\
		0 \quad&\textrm{if }\quad y\le \sqrt{\NoiseEdge^+_{\gamma,1}}
	\end{cases},
\end{equation}
where $\NoiseEdge^+_{\gamma,1}=\NoiseEdge^+_{\gamma}=(\gamma^{1/2}+\gamma^{-1/2})^2$. The elegant form of the shrinker \eqref{eq:OldShrinker} leads us to guess a closed-form formula for \eqref{eq:Upsilon}. Remarkably, exhaustive numerical verification supports the following claim:
\begin{conjecture}
	For all $\gamma\in (0,\infty)$ and $\beta\in (0,1)$, $\gamma\beta <1$, 
	the shrinker $\Shrinker(\cdot)$ from \eqref{eq:Upsilon} has the following closed-form expression:
	\begin{equation}\label{eq:ConjecturedShrinker}
		\Upsilon_{\gamma,\beta}(y) = \begin{cases}
			\frac{1}{y}\sqrt{(y^2-\NoiseEdge^+_{\gamma,\beta})(y^2-\NoiseEdge^-_{\gamma,\beta})}\quad&\textrm{if }\quad y>\sqrt{\NoiseEdge^+_{\gamma,\beta}}\,,\\
			0 \quad&\textrm{if }\quad y\le \sqrt{\NoiseEdge^+_{\gamma,\beta}}
		\end{cases}.
	\end{equation}
\end{conjecture}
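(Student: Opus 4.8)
The conjecture is equivalent to the single scalar identity: for every $\sigma > \BBP$, writing $y = \SpikeFunc(\sigma)$,
\[
\sigma\,\UAngle(\sigma)\,\VAngle(\sigma) \;=\; \frac{1}{y}\sqrt{(y^2 - \NoiseEdge^+_{\gamma,\beta})(y^2 - \NoiseEdge^-_{\gamma,\beta})} \;=\; \frac{1}{y}\sqrt{-\Delta_{\gamma,\beta}(y^2)}\,.
\]
Indeed, by \eqref{eq:Upsilon} and the fact that $\SpikeFunc^{-1}(y)=\sigma$ for $y>\sqrt{\NoiseEdgeUpper}$, this is exactly the claimed formula \eqref{eq:ConjecturedShrinker} on $y>\sqrt{\NoiseEdgeUpper}$, while the branch $y\le\sqrt{\NoiseEdgeUpper}$ is trivial; both sides vanish as $y\downarrow\sqrt{\NoiseEdgeUpper}$ (the left by Proposition~\ref{prop:vanishing-corrs}), and both reduce to \eqref{eq:OldShrinker} at $\beta=1$, which are consistency checks. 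By Theorem~\ref{thm:3:Angles}, $\sigma\,\UAngle(\sigma)\VAngle(\sigma)=|(\bd^{(i)})_1\,(\bd^{(i)})_4|/\sigma$, where $\bd^{(i)}$ spans $\ker\!\big(\bKc_{\gamma,\beta}(y)-(1/\sigma)\Hbb\big)$ and is normalized by $\langle \bd^{(i)},\bTc_{\gamma,\beta}(y)\bd^{(i)}\rangle=1$. So the task reduces to computing a product of two coordinates of a generalized eigenvector of an explicit $6\times 6$ pencil, and rewriting it in the claimed closed form. A useful reformulation: since $\hbu_i\in\range(\bPc)$ one has the \emph{exact} identity $\langle\bY,\hbu_i\hbv_i^\T\rangle=\langle\hbY,\hbu_i\hbv_i^\T\rangle=\hsigma_i$, whence the optimal weight from \eqref{eq:OptWeightsOracle} satisfies $w^\star_i=\langle\bX,\hbu_i\hbv_i^\T\rangle=\hsigma_i-\langle\bZ,\hbu_i\hbv_i^\T\rangle$, and the conjecture becomes: the noise leakage converges, $\langle\bZ,\hbu_i\hbv_i^\T\rangle\to \hsigma_i-\frac{1}{\hsigma_i}\sqrt{(\hsigma_i^2-\NoiseEdge^+_{\gamma,\beta})(\hsigma_i^2-\NoiseEdge^-_{\gamma,\beta})}$, a quantity depending on $\sigma_i$ only through $\hsigma_i=\SpikeFunc(\sigma_i)$ and the bulk edges of Theorem~\ref{thm:1:Bulk}.

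\emph{Route A (direct elimination).} By Proposition~\ref{prop:PosEigFunc-properties}, at $s=\PosEigFunc(y)=1/\sigma$ the pencil $\bKc_{\gamma,\beta}(y)-s\Hbb$ has rank exactly $5$; hence its one-dimensional kernel is spanned by the cofactor vector of any fixed row, $(\bd)_j=c\,(-1)^{i_0+j}M_{i_0,j}$, with $M_{i_0,j}$ the relevant $5\times 5$ minor. Then $(\bd)_1(\bd)_4=-c^2 M_{i_0,1}M_{i_0,4}$ and the normalization determines $c^2=\big(\sum_{j,k}(-1)^{j+k}M_{i_0,j}M_{i_0,k}(\bTc_{\gamma,\beta}(y))_{jk}\big)^{-1}$, so $\sigma\,\UAngle\VAngle$ becomes an explicit rational function of $s$ and of the entries $\kappa^{(i)}_{\gamma,\beta}(y),\tau^{(i)}_{\gamma,\beta}(y)$ — all themselves explicit in $(y,\gamma,\beta)$, the $\tau$'s being explicit derivatives of the $\kappa$'s. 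Substituting $s=\PosEigFunc(y)$ via the closed form \eqref{eq:PosEigFunc-inv} (equivalently, parametrizing by $\sigma$ and using \eqref{eq:SpikeFunc}) leaves a rational identity in $(y,\gamma,\beta)$ to be cleared and verified with a computer algebra system. To keep this tractable one should exploit the block-sparsity of $\bKc_{\gamma,\beta}$, $\Hbb$, $\bTc_{\gamma,\beta}$ — the coupling pattern $\{1,2,6\}$ versus $\{3,4,5\}$, with $\Hbb$ pairing $1\!\leftrightarrow\!4$, $2\!\leftrightarrow\!5$, $3\!\leftrightarrow\!6$: first solve the dense $3\times 3$ block on indices $\{3,4,5\}$ for $(\bd)_3,(\bd)_4,(\bd)_5$ in terms of $(\bd)_1,(\bd)_2,(\bd)_6$, then the Schur complement on $\{1,2,6\}$ yields a $3\times 3$ homogeneous system whose determinantal vanishing recovers the characteristic equation and whose solution gives the ratio $(\bd)_1/(\bd)_4$, with the normalization fixing the scale last.

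\emph{Route B (conceptual, via noise leakage).} Prove the limit of $\langle\bZ,\hbu_i\hbv_i^\T\rangle$ by re-entering the resolvent/deterministic-equivalent machinery behind Theorems~\ref{thm:1:Bulk}--\ref{thm:3:Angles}. The cross term $\langle\bZ,\hbu_i\hbv_i^\T\rangle$ has a representation in terms of bilinear forms of the resolvents of $\hbY\hbY^\T$, $\hbY^\T\hbY$ and the cross-resolvent built from $\bY^\T\bPc\bZ$, evaluated at $\lambda=\hsigma_i^2$; the point to establish is that this combination is governed only by the bulk Stieltjes transform $m_{\mathrm{MP}}$ of Theorem~\ref{thm:1:Bulk}, so that it collapses to $\hsigma_i-\frac{1}{\hsigma_i}\sqrt{-\Delta_{\gamma,\beta}(\hsigma_i^2)}$ regardless of the detailed $\kappa/\tau$ structure. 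A complementary handle is Gaussian integration by parts (Stein's identity): $\langle\bZ,\hbu_i\hbv_i^\T\rangle$ should concentrate on $(nm)^{-1/2}\,\mathrm{div}_{\bZ}(\hbu_i\hbv_i^\T)$, a SURE-type degrees-of-freedom quantity, which one then evaluates asymptotically.

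I expect Route A to be mechanical but brutal: the authors report being unable to simplify $\UAngle,\VAngle$ individually even with a CAS, so success likely hinges on first performing the Schur reduction above and on a judicious use of the substitution \eqref{eq:SpikeFunc} to present the final identity in a form a CAS can close. Route B is conceptually cleaner and would explain \emph{why} the R-SVD shrinker has precisely the same algebraic shape as the full-SVD one with only the bulk edges changing; its main obstacle is that it requires re-opening the random-matrix proof and isolating a cross-resolvent quantity not already tracked there, together with controlling the non-smooth dependence of $\bPc$ (hence of $\hbu_i,\hbv_i$) on $\bZ$ near coincidences of singular values. I would attempt Route B first for the structural insight, and fall back on Route A (restricted, if needed, to the diagonal case $\gamma=1$ and then a homotopy/continuity argument in $\gamma$) to secure the identity.
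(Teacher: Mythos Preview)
The paper does \emph{not} prove this statement: it is explicitly labeled a conjecture, and the only support offered is ``exhaustive numerical verification.'' There is therefore no proof in the paper to compare against.

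Your proposal is not a proof either, but a well-organized sketch of two plausible attack routes. The reformulation via noise leakage is genuinely useful and correct: since $\hbu_i\in\range(\bPc)$ one has $\hbu_i^\T\bY=\hbu_i^\T\bPc\bY=\hbu_i^\T\hbY$, hence $\langle\bY,\hbu_i\hbv_i^\T\rangle=\hsigma_i$ exactly, and the conjecture reduces to showing $\langle\bZ,\hbu_i\hbv_i^\T\rangle\to y-\frac{1}{y}\sqrt{-\Delta_{\gamma,\beta}(y^2)}$. This is a clean target and, if true, would indeed explain why the shrinker inherits the full-SVD algebraic shape with only the bulk edges changed. Route A is sound in principle---the cofactor construction of the kernel vector and the $\bTc$-normalization are correct, and the block structure you identify (coordinates $\{1,2,6\}$ coupled to $\{3,4,5\}$ only through $\Hbb$) is the right way to organize the Schur reduction.

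The gap is that neither route is carried out. For Route A, the paper's authors already report that CAS simplification of $\UAngle,\VAngle$ individually failed; you are betting that the \emph{product} $|(\bd)_1(\bd)_4|$ simplifies where the factors do not, which is plausible but unverified. For Route B, the SURE/Stein approach faces a real obstacle you flag but do not resolve: $\bPc$ depends on $\bZ$ through the QR of $\bY\bOmega^\T$, so $\mathrm{div}_{\bZ}(\hbu_i\hbv_i^\T)$ involves differentiating through the range-finder step, not just through the SVD of a fixed matrix. The resolvent version of Route B would require tracking a cross-quantity $\hbu_i^\T\bZ\hbv_i$ that the paper's machinery (Lemmas~\ref{lem:Mhat-Lim} and~\ref{lem:T-lim}) does not directly supply; you would need to augment the framework of \cite{benaych2012singular} accordingly. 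Until one of these is executed, the conjecture remains open.
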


\paragraph*{}
Note that the shrinker tuned for the full SVD, \eqref{eq:OldShrinker}, estimates the noise bulk edge at $\NoiseEdge^+_{\gamma,1}$, a higher value than the ``effective'' noise bulk edge $\NoiseEdgeUpper$ corresponding to a dimension-reduced randomized SVD with undersampling ratio $\beta$. Accordingly, it completely discards spikes which are weak but nevertheless detectable: $\NoiseEdge^+_{\gamma,1} < \SpikeFunc(\sigma_i) < \NoiseEdgeUpper$, and thus is particularly unsuitable for denoising in low SNR. In Figure~\ref{fig:MSE} we plot the asymptotic relative MSE suffered when estimating a rank-$1$ signal; that is, the curve $e(\sigma)=\lim_{n\to\infty} \frac{1}{\sigma^2}\|\sigma \bu\bv^\T - \Shrinker(\hsigma_1)\hbu_1\hbv_1^\T\|_F^2$. We also plot a similar curve, with the optimal shrinker $\Shrinker(\cdot)$ replaced by the optimal shrinkage rule for the full SVD \eqref{eq:OldShrinker}. It is clear that the improvement gained by the new shrinker is particularly noticeable when $\sigma$ is small---and accordingly, the best attainable error is quite large to begin with.
\begin{figure}
	\centering
	\includegraphics[width=0.6\textwidth]{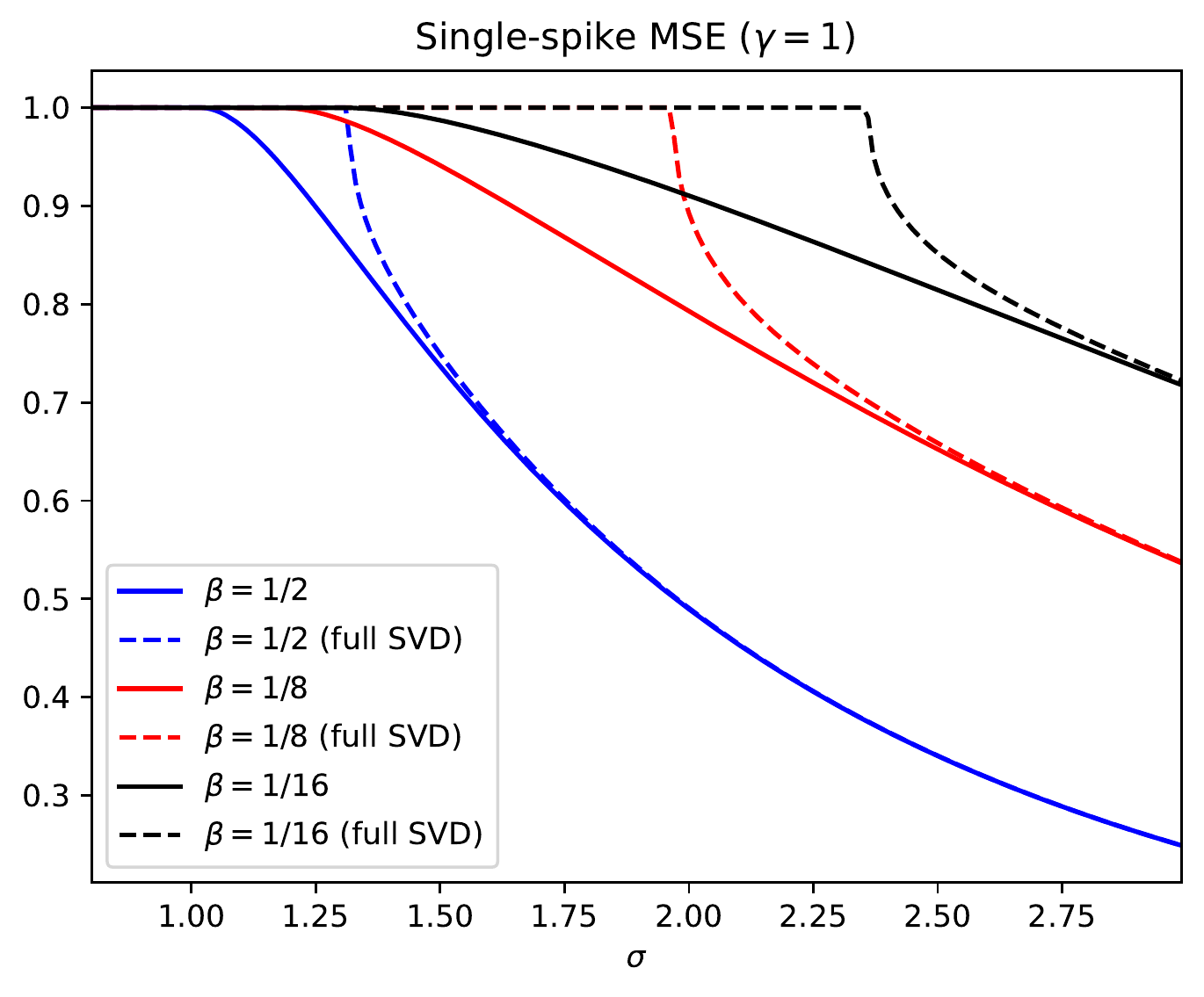}
	\caption{The asymptotic relative MSE for denoising one signal spike. Dashed curve: The performance of an optimal shrinker designed for the full SVD, \eqref{eq:OldShrinker}. The gains offered by the new shrinker are particularly noticeable when $\sigma$ is small.}
	\label{fig:MSE} 
\end{figure}

\paragraph*{Adapting to unknown noise variance.} We assumed throughout the derivation that the variance of the entries $Z_{i,j}$ is known, specifically $\Var(Z_{i,j})=1/\sqrt{nm}$. 
It is easy to see that when the variance is $\Var(Z_{i,j})=\rho^2/\sqrt{nm}$ for $\rho>0$, a spike is detectable when $\sigma_i>\BBP/\rho$. Accordingly, the previous derivation can be repeated with $r_0=\sum_{i=1}^r\mathds{1}(\sigma_i>\BBP/\rho)$, so that the optimal shrinkage rule becomes $\hat{w}_i^*=\rho\Upsilon(\sigma_i(\bY)/\rho)$. 

When $\rho$ is unknown, we propose to estimate it by the method of \cite{gavish2014optimal}. Specifically, let $\textrm{med}_{\gamma,\beta}$ be the median of the Marchenko-Pastur distribution with shape and scale parameters given in \eqref{eq:thm:QZ-LSD}. In light of Theorem~\ref{thm:1:Bulk},
\begin{align}\label{eq:NoiseVarianceEst}
	\hat{\rho}^2 = \frac{\mathrm{Median}(\sigma_1^2(\hbY),\ldots,\sigma_d^2(\hbY))}{\textrm{med}_{\gamma,\beta}}
\end{align}
is a consistent estimator for $\rho$ in the sense that $\hat{\rho}\to\rho$ a.s.
	
\paragraph*{Robustness to rank overestimation} In the preceding discussion, we proposed to estimate $r_0$, the number of outlying singular values, using \eqref{eq:EstimateRank}. We remark that the performance of our shrinker is robust with respect to overestimation of the rank. More precisely, let $k$ be a \emph{constant} (as $n,m,d\to\infty$) upper bound $k\ge r_0$. If we set \eqref{eq:EstimatedWeights} for all $1\le i \le k$ then the resulting estimator $\hbX_{\bw} = \sum_{i=1}^{k} \hat{w}_i^* \hbu_i \hbv_i^\T$ satisfies \eqref{eq:cor:optimality} as well. This follows since $\Shrinker(y)$ is continuous and vanishing for $y\le \sqrt{\NoiseEdgeUpper}$ and $\sigma_{i}(\hbY)\to\sqrt{\NoiseEdgeUpper}$ for all $r_0< i\le k$.

\paragraph*{Summary of the proposed shrinker.} For the practitioner's convenience, we repeat here the full details of the proposed shrinkage rule:
\begin{itemize}
	\item {\it Input: } Reduced data matrix $\hbY$; denote the SVD $\hbY\overset{SVD}{=}\sum_{i=1}^{d}\sigma_i(\hbY)\hbu_i \hbv_i^\T$.\\
	{\it Either:} 1) A small parameter $\delta>0$; or 2) $k$, an upper bound on the rank $k\ge r_0$.
	\item {\it Output: } Weights $\hat{w}_1^*,\ldots,\hat{w}^*_{\hat{r}}$, so that $\hbX=\sum_{i=1}^{\hat{r}}\hat{w}_i \hbu\hbv^\T$ is the denoiser.
\end{itemize}
Steps:
\begin{enumerate}
	\item Estimate the noise variance, $\hat{\rho}$, by \eqref{eq:NoiseVarianceEst}.
	\item (Optional:) Estimate the number of strong spikes ($r_0$): $\hat{r}=\hat{r}(\delta)=\sum_{i=1}^d \mathds{1}(\sigma_i(\hbY)/\hat{\rho}>\NoiseEdgeUpper+\delta)$. \\
	Conversely: set $\hat{r}=k$ (possibly over-estimating the rank).
	\item For every $1\le i \le \hat{r}$, set 
	\begin{align*}
		\hat{w}_i^* = \hat{\rho}\Shrinker(\sigma_i(\hbY)/\hat{\rho}),
	\end{align*}
	where $\Shrinker(\cdot)$ appears in \eqref{eq:ConjecturedShrinker}.
\end{enumerate}
	
%	\begin{remark}
%		We emphasize that formulas above are tailor-made for the randomized SVD used with {\bf random projections.} Since multiplying by a random projection matrix is computationally costly, many other, more structured, sketching transformations have been proposed in the literature, see for example \cite{halko2011finding,woodruff2014sketching}. Some of these proposals are, in fact, \emph{not} projection matrices, and therefore our limiting formulas do not apply for them; see also the paper \cite{yang2021reduce}, which explores the effect of the sketching matrix in sketched PCA under the spiked model.
%	\end{remark}

    \section{Numerical Experiments}
    \label{sec:Experiments}

    \subsection{Finite-$n$ scaling and universality}

	Our main results, Theorems~\ref{thm:2:Outliers}-\ref{thm:3:Angles},
	establish the asymptotic convergences of the leading singular values and singular vector angles towards deterministic expressions. The question of convergence rate (when $n,m,d$ are finite) is a natural one. Figure~\ref{fig:FiniteN} reports on a numerical experiment exploring this point. The setting is this: we consider a rank-$1$ spiked matrix,
	\begin{align*}
		\bY = \sigma \bu\bv^\T + \bZ\,,
	\end{align*}
	for $\sigma$ fixed above and below the detectability threshold, specifically $\sigma\in \{0.7\BBP,1.5\BBP\}$.
	 For increasing values of $n,m,d$ (with ratios $\gamma,\beta$ fixed), we report the absolute deviation of the largest singular value $\sigma_1(\hbY)$ and singular vector angles $\langle \bu,\hbu\rangle\langle \bv,\hbv\rangle$ from their  limiting values. Each point on the graph represents the average of $100$ Monte-Carlo trials. The results, plotted in a log-log scale, suggest that the expected absolute deviation scales roughly like $n^{-1/2}$ as $n$ increases.
	\begin{figure}
	\centering
	\begin{subfigure}[b]{0.45\textwidth}
		\centering
		\includegraphics[width=\textwidth]{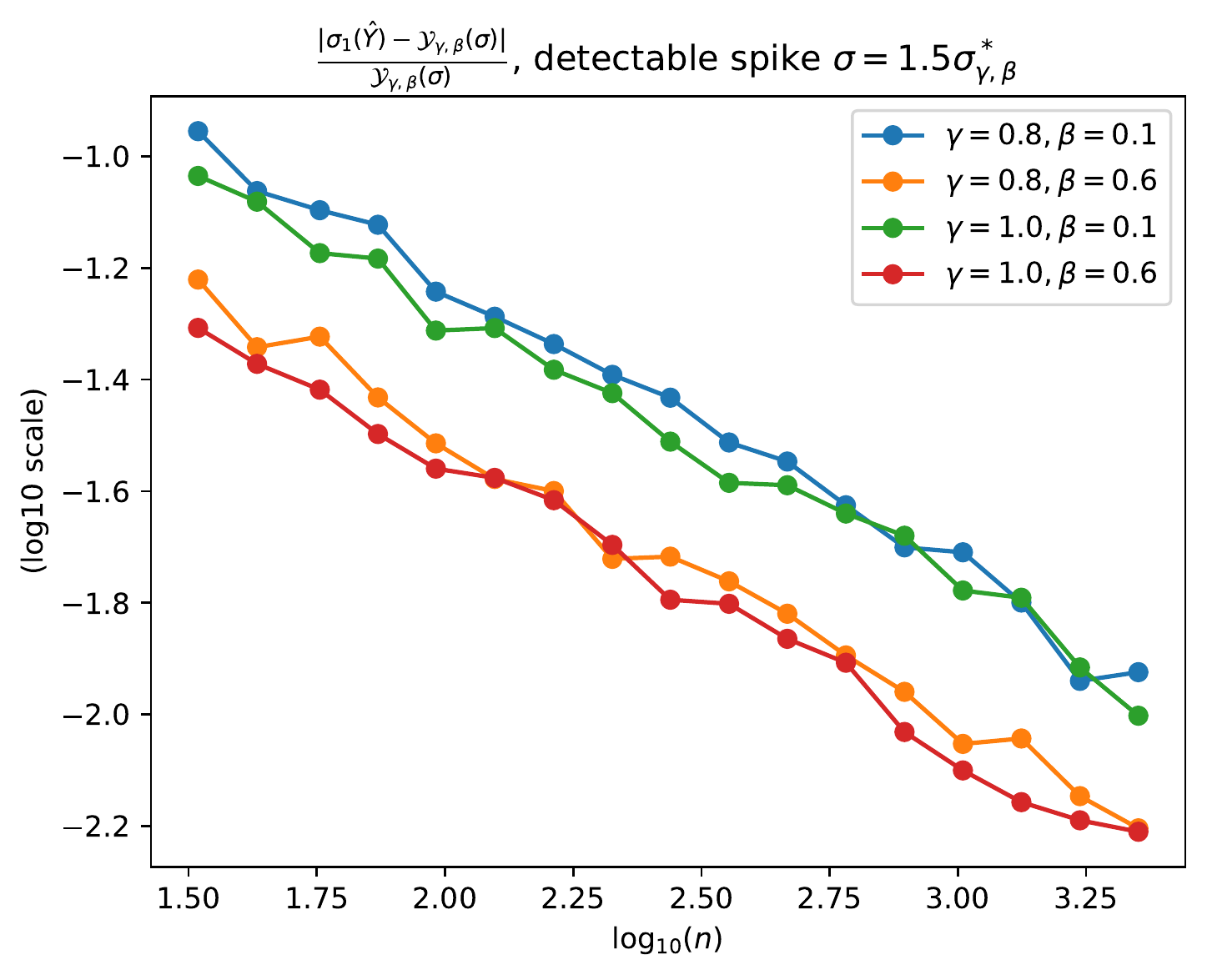}
	\end{subfigure}
	\hfill
	\begin{subfigure}[b]{0.45\textwidth}
		\centering
		\includegraphics[width=\textwidth]{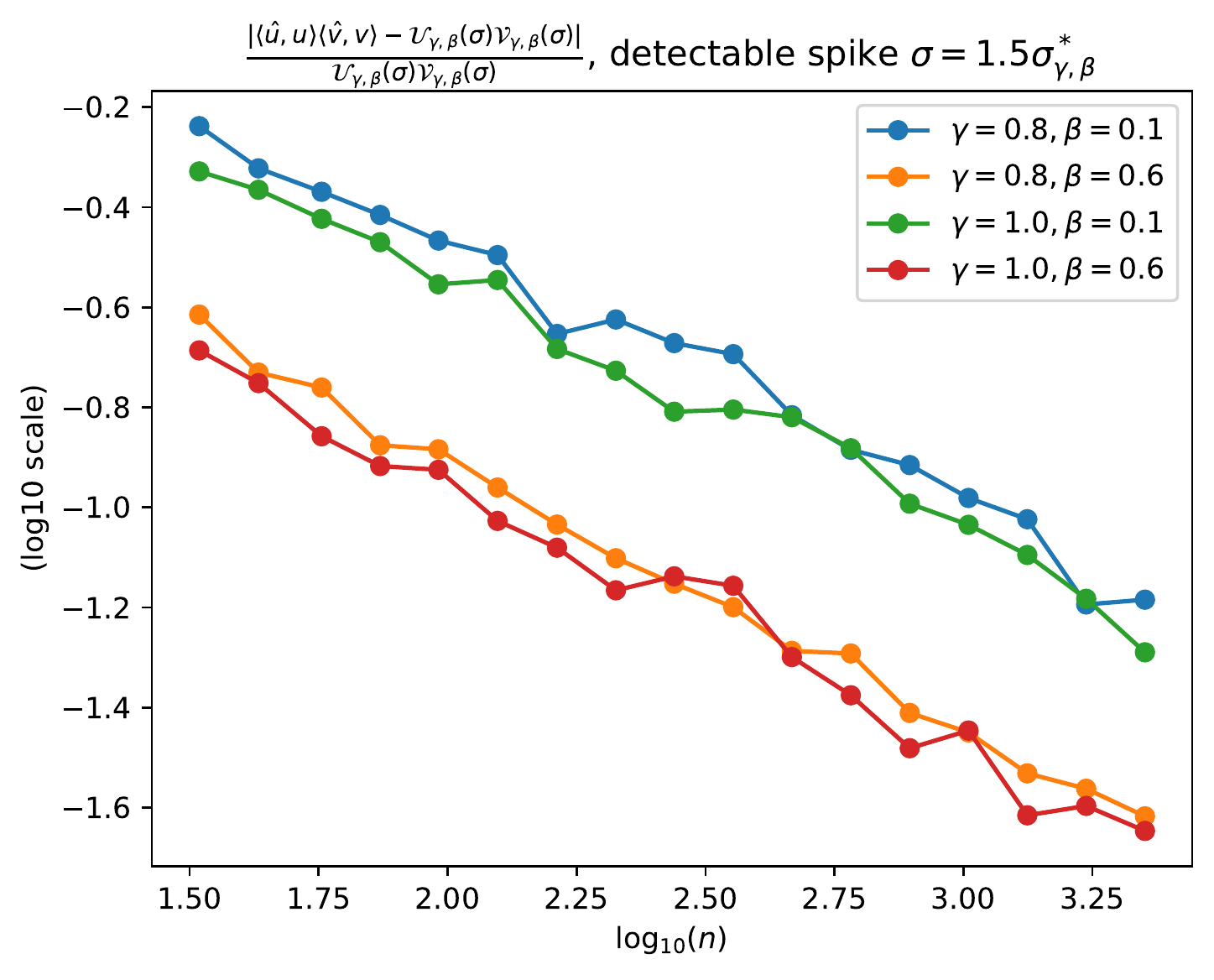}
	\end{subfigure}\\
	\vfill
	\begin{subfigure}[b]{0.45\textwidth}
		\centering
		\includegraphics[width=\textwidth]{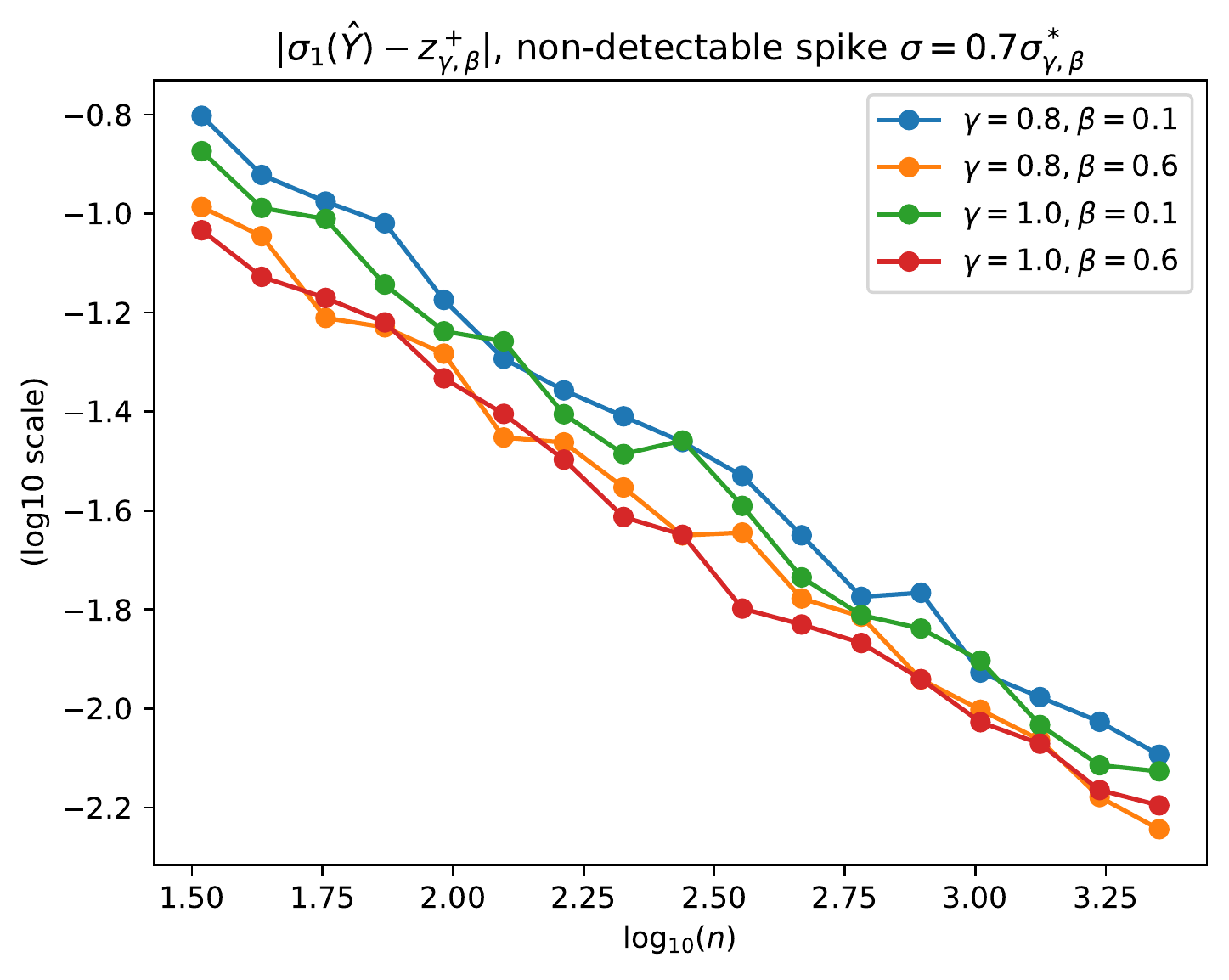}
	\end{subfigure}
	\hfill
	\begin{subfigure}[b]{0.45\textwidth}
		\centering
		\includegraphics[width=\textwidth]{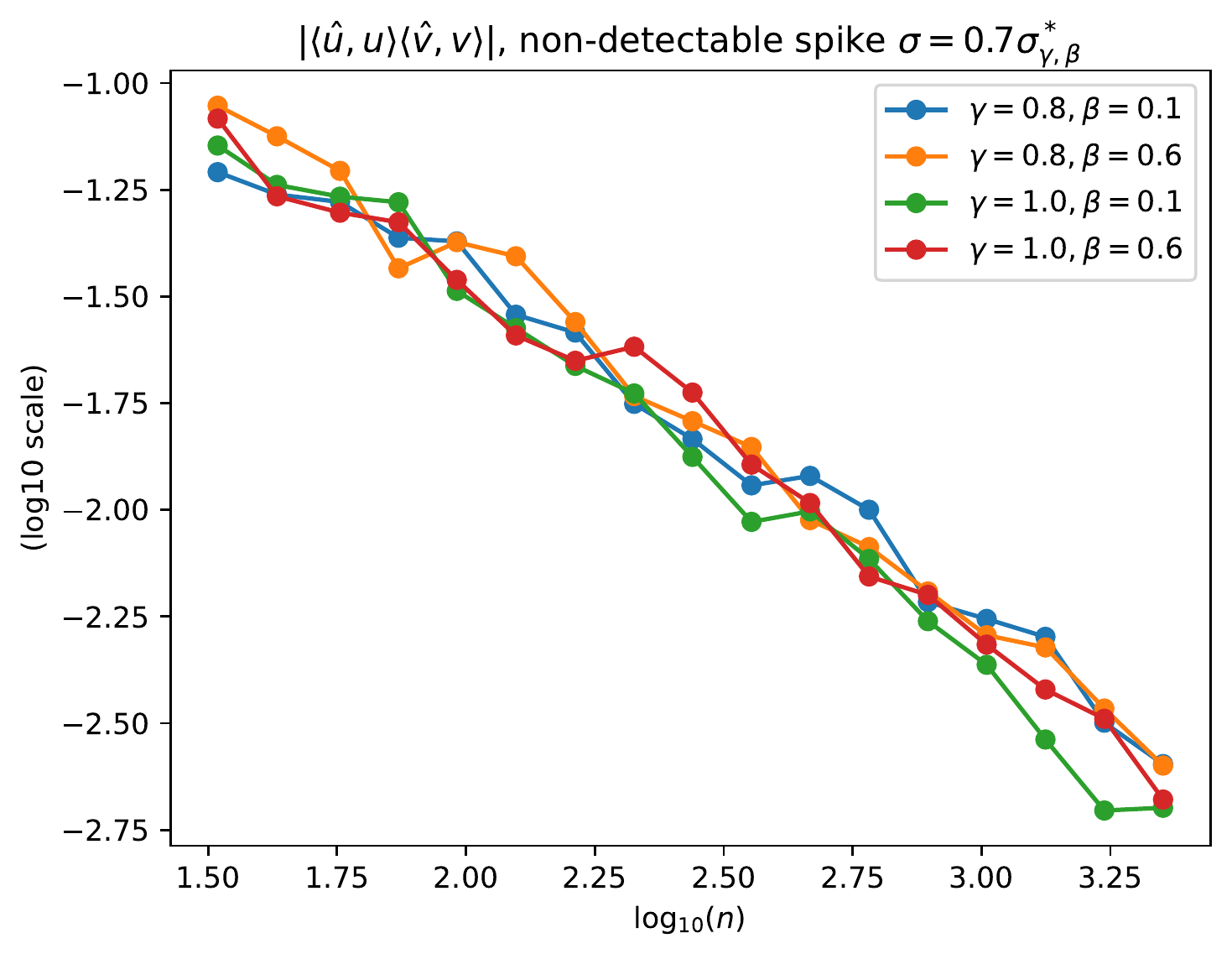}
	\end{subfigure}
	\caption{The finite-$n$ absolute deviation of the largest singular value and the singular vector angles, for a rank-$1$ spiked data matrix and under several parameters combinations. The error appears to scale, roughly, like $n^{-1/2}$. Top: detectable spiked (above the threshold); bottom: non-detectable spike (below the threshold).}
	\label{fig:FiniteN}
	\end{figure}
	
	Another important matter is the universality of our results with respect to the noise distribution.  Our theorems are stated and proved for Gaussian noise matrices $\bZ$ (in fact, our proofs explicitly use the orthogonal invariance of the Gaussian distribution). One would expect, similar to many other results in random matrix theory \cite{bai2010spectral}, that our asymptotic results should in fact be universal over a large class of (sufficiently light-tailed) i.i.d. matrices having the same first and second moments. In Figure~\ref{fig:Universality} we plot the finite-$n$ absolute deviation of the largest singular value and the singular vector angles (under a similar setting as in the experiments of Figure~\ref{fig:FiniteN}) for three different i.i.d. noise distributions: Gaussian, Rademacher and Student's t with $5$ degrees of freedom. For either choice, we observe convergence to the limiting expressions at roughly the same rate $n^{-1/2}$. The error, however, is remarkedly higher for the heavier-tailed Student's t distribution; this is, of course, not surprising.
	\begin{figure}
	\centering
	\begin{subfigure}[b]{0.45\textwidth}
		\centering
		\includegraphics[width=\textwidth]{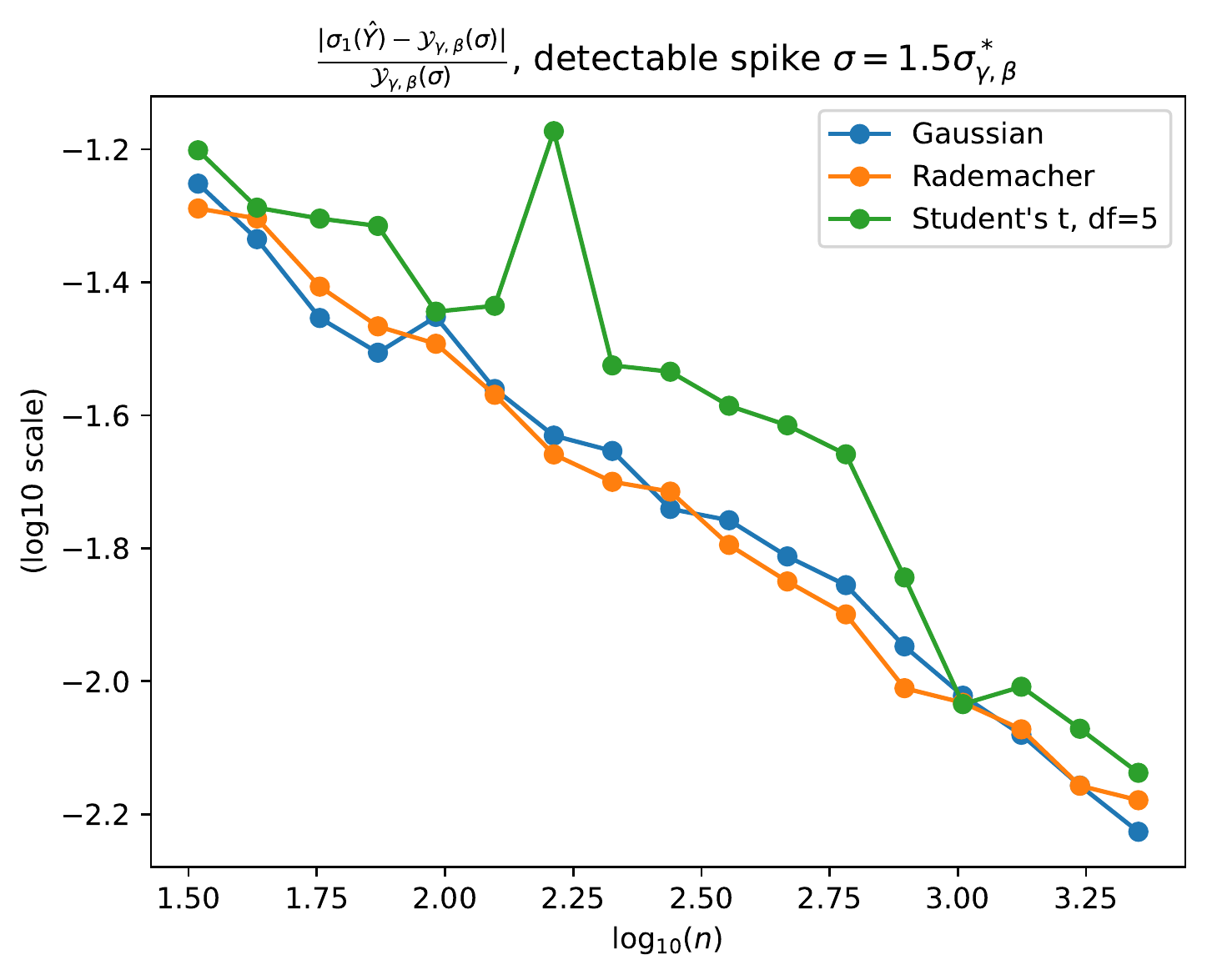}
	\end{subfigure}
	\hfill
	\begin{subfigure}[b]{0.45\textwidth}
		\centering
		\includegraphics[width=\textwidth]{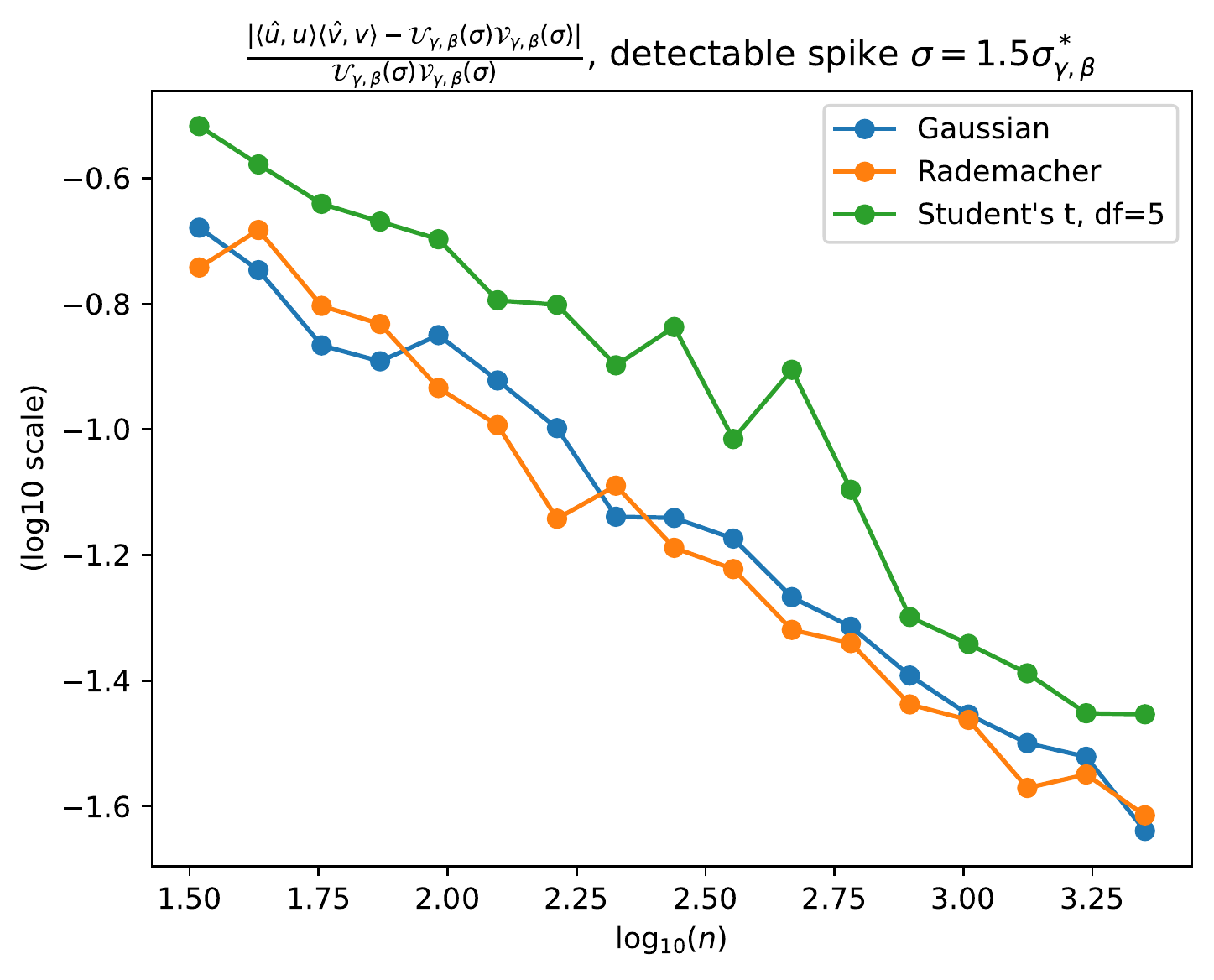}
	\end{subfigure}
	\caption{The finite-$n$ absolute deviation of the largest singular value and the singular vector angles, for three different choices of noise distributions. The parameters used are $\gamma=1,\beta=1/2$. In all cases, the expected error appears to scale, roughly, like $n^{-1/2}$. For the heavier-tailed Student's t distribution, the expected error appears to be larger than for the light-tailed Gaussian and Rademacher distributions, and likewise the variance across different Monte-Carlo trials (each point on the plot corresponds to $100$ trials).}
	\label{fig:Universality}
	\end{figure}
	
	\subsection{Singular value shrinkage}

	In Section~\ref{sec:Shrinker} we derived an optimal singular value shrinker for the randomized SVD. Its optimality is with respect to the Frobenius loss (MSE), in an \emph{asymptotic} sense. The next experiment demonstrates the validity of our theory, and examines the finite-n scaling of the error relative to an exactly optimal (oracle) shrinkage rule.
	
	For $\gamma=1,\beta=1/10$ and increasing values of $n$, we generate a $3$-spike signal 
	\begin{align*}
		\bX = \sum_{i=1}^3 \sigma_i \bu_i\bv_i^\T,\qquad\textrm{where}\qquad (\sigma_1,\sigma_2,\sigma_3)=(1.4\BBP,1.1\BBP,0.5\BBP),
	\end{align*}
	where the directions $\bu_i,\bv_i$ are uniformly random unit vectors. Setting a known upper bound on the rank $k=6$, 
	we apply the asymptotically optimal shrinkage scheme described 
	in Section~\ref{sec:Shrinker}. Denote the resulting estimation error by 
	\begin{align*}
		\eps_{\mathrm{Shrink}} = \left\| \bX - \sum_{i=1}^k \Shrinker(\sigma_i(\hbY))\hbu_i\hbv_i^\T \right\|_F^2 \,.
	\end{align*}
	We compare it to the error of a clairvoyant singular value shrinker, that retains the top $k=6$ PCs and can re-weight them optimally:
	\begin{align*}
		\eps_{\mathrm{Oracle}} = \min_{w_1,\ldots,w_k}\left\| \bX - \sum_{i=1}^k w_i \hbu_i\hbv_i^\T \right\|_F^2\,.
	\end{align*}
The quantity $\bar{\eps}\equiv (\eps_{\mathrm{Shrink}}-\eps_{\mathrm{Oracle}})/\eps_{\mathrm{Oracle}}$ is the relative {excess error} suffered by our shrinkage rule, compared to an oracle-optimal singular value shrinkage scheme that retains at most $k=6$ PCs. Note that if we had taken $k=2$ (the number of detectable spikes), Corollary~\ref{cor:optimality} would imply that $\bar{\eps}\to 0$ w.p. $1$ as $n,m,d\to\infty$. We expect (but did not prove) that subleading empirical PCs should only be weakly correlated with the signal spikes, and consequently that $\Expt [\bar{\eps}] \to 0$. We aim to check this claim.  

\begin{figure}
	\centering
	\includegraphics[width=0.6\textwidth]{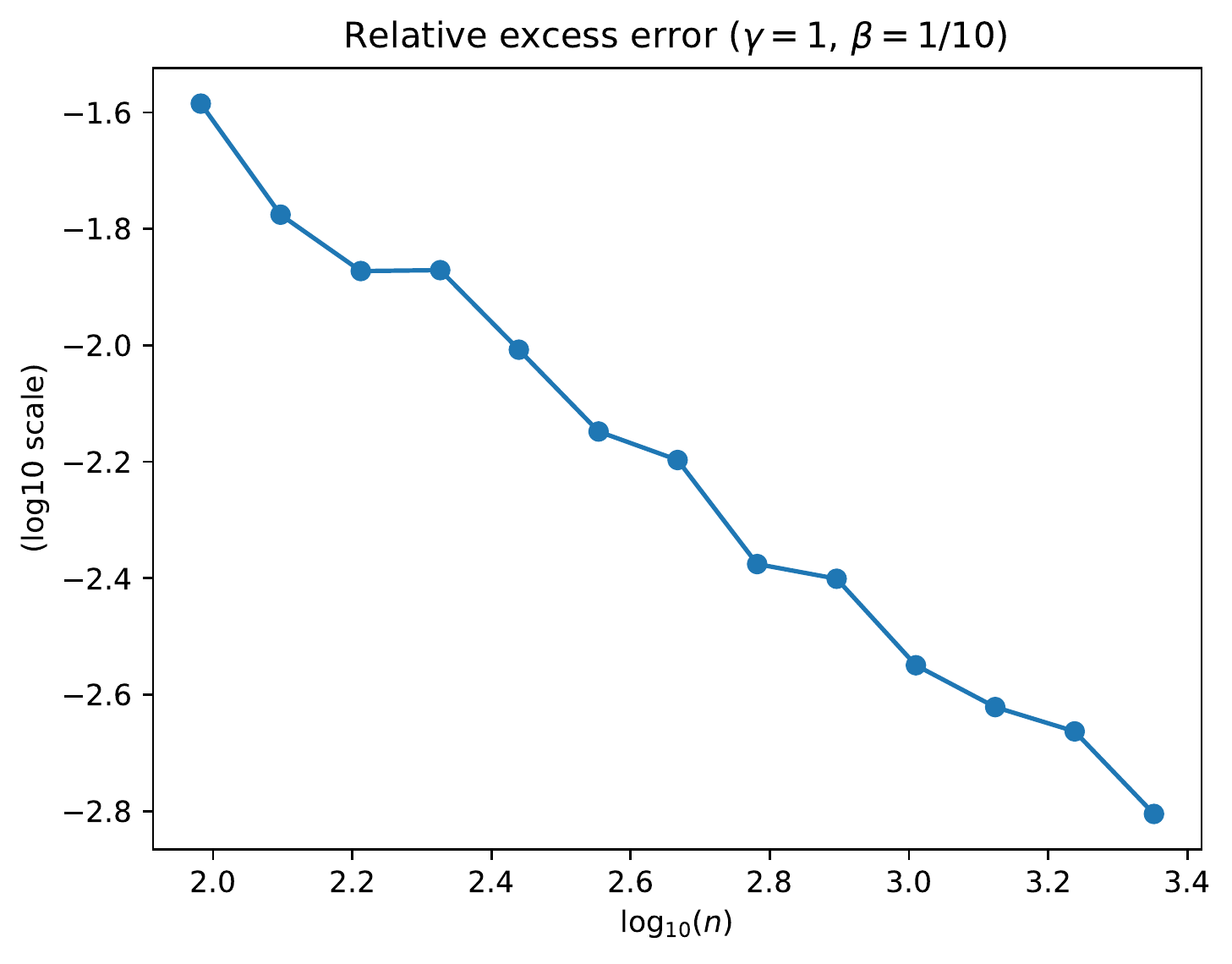}
	\caption{The relative excess error of the proposed asymptotically optimal shrinker (described in Section~\ref{sec:Shrinker}) relative to an oracle-optimal shrinker that retains at most $k\ge r$ ($r=3,k=6$) PCs. It is clear that as $n$ increases, the excess error decreases. 
%		Each point on the plot corresponds to the average excess error averaged over a Monte-Carlo trials. 
		(Plotted in log-log scale.)}
	\label{fig:FiniteNShrinker}
\end{figure}
Figure~\ref{fig:FiniteNShrinker} plots the finite-$n$ scaling of the excess error, plotted in a log-log scale; each point on the graph represents the average of $100$ Monte-Carlo trials. It is evident that the excess error indeed decays with $n$.
% at a rate roughly $\Expt [\bar{\eps}]\sim n^{-1}$. 
Moreover, already at relatively moderate dimensions ($n,m\sim 10^2$ so $d\sim 10$) the excess error for this setup is quite small ($\bar{\eps}\sim 0.03$).

    \section{Proofs}
    \label{sec:Proofs}

    \paragraph*{Notation.} For two sequences of numbers, $a=a_n,b=b_n$, we denote $a\simeq b$ if $a-b\to 0$ almost surely as $n\to\infty$. For vectors $\bm{a},\bm{b}$, possibly of diverging dimensions (e.g., $\bm{a},\bm{b}\in\RR^n$) we use $\bm{a}\simeq \bm{b}$ to mean that $\|\bm{a}-\bm{b}\|\to 0$, where $\|\cdot\|=\|\cdot\|_2$ is the Euclidean norm. Similarly, for matrices $\bm{A},\bm{B}$ the notation $\bm{A}\simeq \bm{B}$ means that $\|\bm{A}-\bm{B}\|\to 0$ where $\|\cdot\|=\|\cdot\|_{2}$ is the operator ($\ell_2$-to-$\ell_2$) norm, equivalently the largest singular value. 

    \paragraph*{}
    Throughout the analysis, we assume w.l.o.g. that the projection matrix $\bOmega \in \RR^{d\times m}$ 
    is a projection onto the first $d$ coordinates:
    % projects an $m$-dimensional vector onto its first $d$ coordinates: 
    \[
        % \bOmega = \MatL \bI_{d\times d}\\ \bm{0}_{m\times (m-d)} \MatR \,.
        % % ,\qquad \textrm{that is, on any $\bm{x}\in \RR^d$,}\quad \bOmega \MatL x_1,\cdots,x_m\MatR^\T  = \MatL x_1,\ldots,x_d\MatR^\T  .
        \bOmega = \MatL \bI_{d\times d} &\bm{0}_{d\times (m-d)}\MatR .
    \]
    We can indeed do so since the noise matrix $\bZ$ is assumed i.i.d. Gaussian, hence its distribution is orthogonally invariant. Note that under this setup, condition \eqref{eq:assum:incoherence} can be interpreted as purely an incoherence condition on the population spikes.  We decompose:
    \begin{equation}
	\bV = \MatL
		\bV_1 \in \RR^{d\times r} \\
		\bV_2 \in \RR^{(m-d)\times r}
	\MatR, \quad 
	\bZ = \MatL
		\bZ_1 \in \RR^{n\times d} & \bZ_2 \in \RR^{n\times (m-d)}
	\MatR,
    \end{equation}
    so that the sketched data matrix is
    \begin{equation}
        \bYtilde=\bY\bOmega^\T=\bU\bLambda\bV_1^\T+\bZ_1 .
    \end{equation}

    \paragraph*{}
    A key step in our analysis boils down to (approximately) decomposing $\hbY=\bPc\bY$ into the sum of a  low-rank ``signal'' plus ``noise'' matrix.
    In doing so, our aim is to mimic the form of the original data matrix $\bY=\bU\bLambda\bV^\T + \bZ$, which is a rank-$r$ additive perturbation of the noise matrix $\bZ$. 
    Note that performing such a decomposition in our setup is not immediate. This is because the projection $\bPc$ is constructed from $\bY$ in its entirety, both noise and signal included; in particular, the projected noise matrix $\bPc\bZ$ itself contains ``signal''. 

    Similar to how the noise matrix $\bZ$ is the ``benchmark'' for the matrix $\bY$, we will compare the spectrum of $\hbY$ to that of a similarly reduced data matrix, that contains only noise and no signal. To wit, let $\bQc:\RR^n\to\RR^n$ be the projection onto the column space of $\bZ\bOmega^\T=\bZ_1 \in \RR^{n\times d}$. The idea, then, is to express $\hbY$ as a perturbation of the reduced noise matrix $\hbZ=\bQc\bZ$, which is the observed matrix in the total absence of a signal ($\sigma_1=\ldots=\sigma_r=0$). 

    The first step of the computation consists of relating the signal-bearing column space projection $\bPc$ to the pure-noise projection $\bQc$. Crucially, one has closed-form expression for the projections:
    \begin{equation}
        \bPc = \bYtilde(\bYtilde^\T\bYtilde)^{-1}\bYtilde^\T,\qquad 
        \bQc = \bZ_1(\bZ_1^\T\bZ_1)^{-1}\bZ_1^\T .
    \end{equation}
    Note that since, by assumption, $d<n$ and $\bZ_1\in\RR^{n\times d}$ has a continuous distribution, the $d$-by-$d$ matrices $\bYtilde^\T\bYtilde,\bZ_1^\T\bZ_1$ are indeed invertible w.p. $1$. Denote
	\begin{equation}\label{eq:Vbar-1-def}
		\bVbar_1 = \bZ_1(\bZ_1^\T\bZ_1)^{-1}\bV_1 \in \RR^{n\times r},
	\end{equation}
    and the functions $f,g:\RR_+\to\RR_+$,
    \begin{equation}\label{eq:f-g}
		f(\sigma)=\frac{\sigma}{1+(\beta \sqrt{\gamma}) \sigma^2},\qquad g(\sigma) = \sigma f(\sigma) \,.
	\end{equation}
    % given in \eqref{eq:f-g}.

    Furthermore, let $\bQc^\perp$ be the projection onto the orthogonal complement of $\range(\bQc)$.
    \begin{lemma}\label{lem:Projection}
        The projection $\bPc$ is approximated by a low-rank perturbation of $\bQc$:
    	\begin{equation}\label{eq:Projector}
    		\begin{split}
    			\bPc 
    			&\simeq \bQc 
    			-(1-\beta\gamma) \bVbar_1g(\bLambda)\bVbar_1^\T + \frac{\beta\sqrt{\gamma}}{1-\beta\gamma}(\bQc^\perp\bU) g(\bLambda)(\bQc^\perp\bU)^\T  \\
    			&+ \bVbar_1 f(\bLambda)(\bQc^\perp\bU)^\T
    			+  (\bQc^\perp\bU)f(\bLambda)\bVbar_1^\T \,.
    		\end{split}
    	\end{equation}
    \end{lemma}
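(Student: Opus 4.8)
The plan is to exploit that $\bYtilde=\bZ_1+\bU\bLambda\bV_1^\T$ is a rank-$r$ perturbation of $\bZ_1$, so that the signal-bearing projection $\bPc=\bYtilde(\bYtilde^\T\bYtilde)^{-1}\bYtilde^\T$ can be tied to the pure-noise projection $\bQc=\bZ_1(\bZ_1^\T\bZ_1)^{-1}\bZ_1^\T$ via the Woodbury identity. Write $\bS=\bZ_1^\T\bZ_1$; expanding $\bYtilde^\T\bYtilde$ gives $\bYtilde^\T\bYtilde=\bS+\bm{F}\bm{G}\bm{F}^\T$, where $\bm{F}=[\,\bZ_1^\T\bU\mid\bV_1\,]\in\RR^{d\times 2r}$ and $\bm{G}$ is the symmetric $2r\times 2r$ matrix with $r\times r$ blocks $\bm{G}_{11}=\0$, $\bm{G}_{12}=\bm{G}_{21}=\bLambda$, $\bm{G}_{22}=\bLambda^2$; this $\bm{G}$ is invertible, with $\bm{G}^{-1}$ having blocks $(\bm{G}^{-1})_{11}=-\bI_r$, $(\bm{G}^{-1})_{12}=(\bm{G}^{-1})_{21}=\bLambda^{-1}$, $(\bm{G}^{-1})_{22}=\0$. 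The Woodbury identity then yields
\[
  \bPc\;=\;\bYtilde\bS^{-1}\bYtilde^\T\;-\;\bigl(\bYtilde\bS^{-1}\bm{F}\bigr)\bigl(\bm{G}^{-1}+\bm{F}^\T\bS^{-1}\bm{F}\bigr)^{-1}\bigl(\bYtilde\bS^{-1}\bm{F}\bigr)^\T,
\]
and the only inverses appearing are $\bS^{-1}$ — bounded in operator norm by standard hard-edge Marchenko--Pastur estimates, since the spectrum of $\bZ_1^\T\bZ_1$ concentrates on $[\gamma^{-1/2}(1-\sqrt{\gamma\beta})^2,\gamma^{-1/2}(1+\sqrt{\gamma\beta})^2]$, bounded away from $0$ because $\gamma\beta<1$ — and the inverse of the fixed-size matrix $\bm{G}^{-1}+\bm{F}^\T\bS^{-1}\bm{F}$.

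Next I would pin down the $O(r)$-dimensional random quantities entering this formula, using the orthogonal invariance of $\bZ_1$: write $\bZ_1=\bm{U}_{\bZ}\bm{D}_{\bZ}\bm{V}_{\bZ}^\T$ with $\bm{U}_{\bZ}$ Haar on the $n\times d$ Stiefel manifold, independent of $(\bm{D}_{\bZ},\bm{V}_{\bZ})$. This gives: (i) $\bU^\T\bQc\bU\simeq\gamma\beta\,\bI_r$, the overlap of a fixed $r$-plane with a random $d$-plane; (ii) $\bVbar_1^\T\bVbar_1=\bV_1^\T\bS^{-1}\bV_1\simeq\bigl(\tfrac1d\tr\bS^{-1}\bigr)\bV_1^\T\bV_1\simeq\tfrac{\sqrt{\gamma}}{1-\beta\gamma}\,\beta\,\bI_r$, combining the deterministic equivalent $\tfrac1d\tr\bS^{-1}\to\sqrt{\gamma}\,(1-\beta\gamma)^{-1}$ (namely $\int x^{-1}\,d\,\mathrm{MP}_{\gamma\beta}(x)=(1-\gamma\beta)^{-1}$, after the $\gamma^{-1/2}$ noise rescaling) with the incoherence assumption $\bV_1^\T\bV_1\simeq\beta\bI_r$; and (iii) $\bU^\T\bVbar_1\simeq\0$, since $\bVbar_1=\bm{U}_{\bZ}\bm{D}_{\bZ}^{-1}\bm{V}_{\bZ}^\T\bV_1$ has columns of bounded norm on which the ``fresh'' Haar frame $\bm{U}_{\bZ}$ acts essentially like an independent Gaussian of per-entry variance $O(1/n)$. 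Substituting these into $\bm{F}^\T\bS^{-1}\bm{F}$, $\bYtilde\bS^{-1}\bm{F}$ and $\bYtilde\bS^{-1}\bYtilde^\T$, the matrix $\bm{G}^{-1}+\bm{F}^\T\bS^{-1}\bm{F}$ becomes block-diagonal in the spike basis, so its inverse decouples into $r$ explicit $2\times 2$ inversions, each with determinant $-(\beta\sqrt{\gamma}\sigma_i^2+1)/\sigma_i^2$ — bounded away from $0$ and $\infty$ uniformly in $i$.

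The remaining step is algebraic. Each $2\times 2$ inverse is naturally expressed through the functions $f,g$ of \eqref{eq:f-g} (its off-diagonal entry is exactly $f(\sigma_i)$, its diagonal entries $-\tfrac{\beta\sqrt{\gamma}}{1-\beta\gamma}g(\sigma_i)$ and $(1-\beta\gamma)g(\sigma_i)$); collecting all contributions in the Woodbury expansion gives $\bPc$ as an explicit combination of $\bQc$, $\bVbar_1$, $\bQc\bU$ and $\bU$. Two elementary identities, $\bLambda\bigl(\bI-\beta\sqrt{\gamma}\,g(\bLambda)\bigr)=f(\bLambda)$ and $\bLambda f(\bLambda)=g(\bLambda)$, let me merge the $\bLambda$- and $g(\bLambda)$-weighted pieces; substituting $\bU=\bQc\bU+\bQc^\perp\bU$ then completes the $\bQc\bU$-terms into the squares $\tfrac{\beta\sqrt{\gamma}}{1-\beta\gamma}(\bQc^\perp\bU)g(\bLambda)(\bQc^\perp\bU)^\T$ and $(\bQc^\perp\bU)f(\bLambda)\bVbar_1^\T+\bVbar_1 f(\bLambda)(\bQc^\perp\bU)^\T$, and leaves the term $-(1-\beta\gamma)\bVbar_1 g(\bLambda)\bVbar_1^\T$ — exactly \eqref{eq:Projector}.

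I expect the main obstacle to be not any single computation but the uniformity needed to upgrade the blockwise ``$\simeq$'' estimates to an operator-norm statement about the full expression: one must verify that every matrix being inverted stays well-conditioned with probability $\to 1$ (this is where the uniform $2\times 2$ determinant bound and the hard-edge control of $\bZ_1^\T\bZ_1$ enter) and that the non-inverted factors $\bYtilde\bS^{-1}\bm{F}$, $\bVbar_1$, $\bQc\bU$ are $O(1)$ in operator norm, so that small perturbations of the inputs produce small perturbations of the output. The random-matrix inputs (i)--(iii) themselves are routine, following from concentration for Haar-distributed matrices together with standard deterministic-equivalent calculus for Wishart resolvents.
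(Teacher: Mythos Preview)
Your proposal is correct and follows essentially the same route as the paper: both apply the Sherman--Morrison--Woodbury identity to $(\bYtilde^\T\bYtilde)^{-1}=(\bS+\bm{F}\bm{G}\bm{F}^\T)^{-1}$ with the same low-rank factorization (up to the ordering of $\bZ_1^\T\bU$ and $\bV_1$ in $\bm{F}$), reduce the fixed-size block $(\bm{G}^{-1}+\bm{F}^\T\bS^{-1}\bm{F})^{-1}$ via the same three concentration inputs $\bU^\T\bQc\bU\simeq\beta\gamma\bI$, $\bV_1^\T\bS^{-1}\bV_1\simeq\tfrac{\beta\sqrt{\gamma}}{1-\beta\gamma}\bI$, $\bU^\T\bVbar_1\simeq\0$, and then decompose $\bU=\bQc\bU+\bQc^\perp\bU$ to collect terms into \eqref{eq:Projector}. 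The only cosmetic difference is that the paper first inverts and then multiplies by $\bYtilde$ on each side term by term, whereas you write the full projection in Woodbury form at once; the algebra and the random-matrix ingredients are identical.
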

    The proof of Lemma~\ref{lem:Projection} is a straightforward (if tedious) calculation, and uses the Sherman-Morrison-Woodbury formula, some elementary concentration results for quadratic forms and explicit expressions for the low-order trace moments of the Wishart distribution. The details are deferred to Appendix, Section~\ref{sec:proof-lem:Projection}.

    Using Lemma~\ref{lem:Projection}, we approximate the reduced matrix $\hbY$ by a signal-plus-noise matrix. To this end, define 
    \begin{equation}\label{eq:A-B-def}
		\bA = \MatL  
		\bU &\bQc^\perp \bU &\bVbar_1
		\MatR \in \RR^{n\times 3r},\qquad 
		\bB = \MatL
			\bV & \bW_1 &\bW_2
		\MatR \in \RR^{m\times 3r}
	\end{equation}
	where
	\begin{equation}\label{eq:W1-W2-def}
		\bW_1 = \MatL \0\\\bV_2-\bZ_2^\T \bVbar_1 \MatR,\qquad \bW_2 = \MatL \0 \\ \bZ_2^\T \bQc^\perp \bU \MatR ,\qquad \bW_1,\bW_2\in \RR^{m\times r}.
	\end{equation}
	Also denote 
	\begin{equation}\label{eq:bSigma-def}
		\bSigma = \MatL
		\bLambda &\0 &\0 \\
		\0 &-f(\bLambda) &\frac{\beta\sqrt{\gamma}}{1-\beta\gamma}g(\bLambda) \\
		\0 &(1-\beta\gamma)g(\bLambda) &f(\bLambda)
		\MatR.
	\end{equation}

    \begin{lemma}\label{lem:PY:2}
        We have 
        \begin{equation}
		\label{eq:PY:2}
		\hbY  \simeq \bA \bSigma \bB^\T + \bQc\bZ \,.
	\end{equation}
    That is, $\hbY=\bPc\bY$ is, approximately, a rank-$3r$ perturbation of the signal-less reduced matrix $\bQc\bZ$. 
    \end{lemma}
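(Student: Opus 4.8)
The plan is to substitute the expansion of $\bPc$ supplied by Lemma~\ref{lem:Projection} into $\hbY=\bPc\bY=\bPc\bX+\bPc\bZ$ and to match the resulting matrix, block by block, against $\bA\bSigma\bB^\T+\bQc\bZ$. Write Lemma~\ref{lem:Projection} as $\bPc\simeq\bQc+\bm{E}$, where $\bm{E}$ collects the four rank-$\le r$ corrections on the right-hand side of \eqref{eq:Projector}. Since $\|\bY\|=O(1)$ almost surely (the largest singular value of $\bZ$ is bounded by Bai--Yin and the spikes are fixed), the operator-norm error in \eqref{eq:Projector} persists under right multiplication by $\bY$, so that $\hbY\simeq(\bQc+\bm{E})\bY=\bQc\bZ+\bigl(\bQc\bX+\bm{E}\bX+\bm{E}\bZ\bigr)$, and it suffices to establish
\[
\bQc\bX+\bm{E}\bX+\bm{E}\bZ\;\simeq\;\bA\bSigma\bB^\T .
\]

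First I would record the identities that drive the computation. Two are exact: $\bQc\bZ_1=\bZ_1$ (hence $(\bQc^\perp\bU)^\T\bZ_1=\0$), and $\bVbar_1^\T\bZ_1=\bV_1^\T(\bZ_1^\T\bZ_1)^{-1}\bZ_1^\T\bZ_1=\bV_1^\T$ by \eqref{eq:Vbar-1-def}. Two are asymptotic: $\bU^\T\bQc\bU\simeq\gamma\beta\,\bI_r$ (equivalently $(\bQc^\perp\bU)^\T\bU\simeq(1-\beta\gamma)\bI_r$) and $\bVbar_1^\T\bU\simeq\0$. Using these, together with the uniform boundedness $\|\bVbar_1\|,\|\bQc^\perp\bU\|,\|\bZ_2\|=O(1)$ (here $\gamma\beta<1$ keeps $\bZ_1$ well-conditioned, so $\|(\bZ_1^\T\bZ_1)^{-1}\|=O(1)$), the four terms of $\bm{E}$ collapse to
\[
\bm{E}\bU\;\simeq\;\beta\sqrt{\gamma}\,(\bQc^\perp\bU)\,g(\bLambda)+(1-\beta\gamma)\,\bVbar_1\, f(\bLambda),
\]
whence $\bm{E}\bX=\bm{E}\bU\,\bLambda\bV^\T\simeq\bigl[\beta\sqrt{\gamma}(\bQc^\perp\bU)g(\bLambda)+(1-\beta\gamma)\bVbar_1 f(\bLambda)\bigr]\bLambda\bV^\T$, while $\bm{E}\bZ$ is evaluated \emph{exactly}: on the first $d$ columns $\bm{E}\bZ_1=-(1-\beta\gamma)\bVbar_1 g(\bLambda)\bV_1^\T+(\bQc^\perp\bU)f(\bLambda)\bV_1^\T$, and on the last $m-d$ columns $\bm{E}\bZ_2$ is the same four-term expression as $\bm{E}$ with $\bVbar_1^\T$ replaced by $\bVbar_1^\T\bZ_2$ and $(\bQc^\perp\bU)^\T$ by $(\bQc^\perp\bU)^\T\bZ_2$. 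Also $\bQc\bX=\bU\bLambda\bV^\T-(\bQc^\perp\bU)\bLambda\bV^\T$.

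It then remains to expand $\bA\bSigma\bB^\T$ from \eqref{eq:A-B-def}--\eqref{eq:bSigma-def} and verify the identity, which is cleanest done separately on the first $d$ and the last $m-d$ columns. Because $\bW_1,\bW_2$ vanish on their first $d$ rows, the first $d$ columns of $\bA\bSigma\bB^\T$ are just $\bU\bLambda\bV_1^\T$, and the first $d$ columns of $\bQc\bX+\bm{E}\bX+\bm{E}\bZ$ collapse to the same thing once the coefficients multiplying $\bQc^\perp\bU$ and $\bVbar_1$ are seen to vanish, which follows from the elementary identities $g(\sigma)=\sigma f(\sigma)$ and $\sigma-\beta\sqrt{\gamma}\,\sigma g(\sigma)=f(\sigma)$ (both immediate from \eqref{eq:f-g}). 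On the last $m-d$ columns the same two identities reduce the $(\bQc^\perp\bU)\bV_2^\T$- and $\bVbar_1\bV_2^\T$-coefficients to $-f(\bLambda)$ and $(1-\beta\gamma)g(\bLambda)$, matching $\bA\bSigma\bB^\T$ exactly, while the residual terms carrying $\bVbar_1^\T\bZ_2$ and $(\bQc^\perp\bU)^\T\bZ_2$ --- which are produced precisely by $\bW_1^\T=[\,\0\mid\bV_2^\T-\bVbar_1^\T\bZ_2\,]$ and $\bW_2^\T=[\,\0\mid(\bQc^\perp\bU)^\T\bZ_2\,]$ --- coincide term for term with $\bm{E}\bZ_2$. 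This is exactly the purpose of the definitions \eqref{eq:W1-W2-def} and \eqref{eq:bSigma-def}.

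The main obstacle is the asymptotic orthogonality $\bVbar_1^\T\bU\simeq\0$; a crude operator-norm bound only yields $O(1)$. To do better I would condition: decompose $\bZ_1=\bU\,\bm{\Xi}^\T+\tilde{\bZ}$ with $\bm{\Xi}:=\bZ_1^\T\bU$ and $\tilde{\bZ}:=(\bI_n-\bU\bU^\T)\bZ_1$, which are independent, so $\bZ_1^\T\bZ_1=\bm{\Xi}\bm{\Xi}^\T+\tilde{\bZ}^\T\tilde{\bZ}$; the Sherman--Morrison--Woodbury identity then gives
\[
\bVbar_1^\T\bU=\bV_1^\T(\tilde{\bZ}^\T\tilde{\bZ})^{-1}\bm{\Xi}\,\bigl(\bI_r+\bm{\Xi}^\T(\tilde{\bZ}^\T\tilde{\bZ})^{-1}\bm{\Xi}\bigr)^{-1}.
\]
Since $\bm{\Xi}$ is a Gaussian matrix with entries of variance $(nm)^{-1/2}$ that is independent of the bounded matrix $\bV_1^\T(\tilde{\bZ}^\T\tilde{\bZ})^{-1}$, conditioning on $\tilde{\bZ}$ and bounding the rightmost factor by $1$ in operator norm yields $\Expt\|\bVbar_1^\T\bU\|_F^2=O\bigl((nm)^{-1/2}\bigr)\to0$, which is upgraded to an almost-sure statement by Gaussian concentration of the Lipschitz map $\bm{\Xi}\mapsto\|\bV_1^\T(\tilde{\bZ}^\T\tilde{\bZ})^{-1}\bm{\Xi}\|_F$. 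The remaining asymptotic input, $\bU^\T\bQc\bU\simeq\gamma\beta\bI_r$, is classical, since $\range(\bZ_1)$ is a Haar-random $d$-dimensional subspace independent of $\bU$ and the quadratic forms $\bu_i^\T\bQc\bu_j$ concentrate around $\gamma\beta\,\Indic{i=j}$; everything else is bookkeeping with \eqref{eq:f-g}.
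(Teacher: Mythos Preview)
Your proposal is correct and follows essentially the same route as the paper's proof: substitute the expansion of $\bPc$ from Lemma~\ref{lem:Projection} into $\bPc\bY=\bPc\bX+\bPc\bZ$, use the identities $(\bQc^\perp\bU)^\T\bU\simeq(1-\beta\gamma)\bI_r$, $\bVbar_1^\T\bU\simeq\0$, $\bQc^\perp\bZ_1=\0$, $\bVbar_1^\T\bZ_1=\bV_1^\T$, and collapse via the algebraic relations $g(\sigma)=\sigma f(\sigma)$ and $\sigma-\beta\sqrt{\gamma}\sigma g(\sigma)=f(\sigma)$. Your column-splitting (first $d$ vs.\ last $m-d$) is a cosmetic reorganization of the paper's term-by-term bookkeeping in \eqref{eq:PY:1}; the substance is identical. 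The conditioning/SMW argument you give for $\bVbar_1^\T\bU\simeq\0$ is more explicit than what appears here---in the paper this fact is inherited from the proof of Lemma~\ref{lem:Projection}, where it is dispatched by orthogonal invariance and Hanson--Wright---but either justification works.
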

    The proof of Lemma~\ref{lem:PY:2} appears in Appendix, Section~\ref{sec:proof-lem:PY:2}.

    We have approximated $\hbY$, in operator norm, by a low-rank plus noise matrix $\bA\bSigma\bB+\bQc\bZ$. Note that by standard perturbations results for singular values and vectors (for example, Davis-Kahan, see e.g. \cite{yu2015useful}), the singular values and \emph{outlier} singular vectors of $\hbY$ are consistently approximated (vanishing error as $n,m,d\to\infty$) by the r.h.s. of \eqref{eq:PY:2}.

    The analysis of the matrix $\bA\bSigma\bB+\bQc\bZ$ consists of two parts. First, we analyze the spectrum of the reduced pure-noise matrix $\bQc\bZ$; the LSD of this matrix defines the limiting shape of the bulk singular values of $\hbY$. Second, we analyze the outlier singular values and vectors. The computation of the limiting formulas relies, in part, on results derived in the first part. 

    \subsection{The pure-noise spectrum}

    The following are the main results of this section:

    \begin{theorem}
    [The limiting singular value distribution]
    \label{thm:QZ-LSD}
        The empirical distribution of the singular values squared of $\bQc\bZ$, namely $\sigma_1^2(\bQc\bZ),\ldots,\sigma_{d}^2(\bQc\bZ)$, converges almost surely to a Marcheko-Pastur law, with the parameters as given in Theorem~\ref{thm:1:Bulk}.
        % \begin{equation}\label{eq:thm:QZ-LSD}
        %     \phi = \frac{\gamma\beta}{1+\gamma-\gamma\beta},\qquad \eta^2 = \gamma^{-1/2}+(1-\beta)\gamma^{1/2}.
        % \end{equation} 
    \end{theorem}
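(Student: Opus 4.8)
The plan is to exhibit, on the same probability space, a \emph{pathwise} identity between the $d$ squared singular values of $\bQc\bZ$ and the eigenvalues of $\bm{H}\bm{H}^\T$ for a genuine i.i.d.\ Gaussian matrix $\bm{H}$ of size $d\times(n+m-d)$, and then read the limiting law off the Marchenko--Pastur theorem together with an elementary matching of parameters. The point is that projecting the i.i.d.\ Gaussian $\bZ$ onto the (random) column space $\range(\bZ_1)$ produces, up to an innocuous change of aspect ratio, another i.i.d.\ Gaussian matrix, so no free-probability machinery is actually needed.

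Concretely, I would first reduce to a $d\times d$ problem. Since $\bQc$ is the rank-$d$ orthogonal projection onto $\range(\bZ_1)$ and $\bZ$ is Gaussian, $\bQc\bZ$ has rank $d$ a.s., so $\{\sigma_i^2(\bQc\bZ)\}_{i=1}^d$ are exactly the nonzero eigenvalues of $\bQc\bZ\bZ^\T\bQc$, which equal the eigenvalues of $\bm{F}^\T\bZ\bZ^\T\bm{F}$ for any orthonormal basis $\bm{F}\in\RR^{n\times d}$ of $\range(\bQc)$. Taking $\bm{F}=\bZ_1(\bZ_1^\T\bZ_1)^{-1/2}$ (invertible w.p.\ $1$), splitting $\bZ=[\bZ_1\ \bZ_2]$, and using $\bm{F}^\T\bZ_1=(\bZ_1^\T\bZ_1)^{1/2}$, one obtains the pathwise identity
\begin{equation*}
    \bm{F}^\T\bZ\bZ^\T\bm{F} \;=\; \bZ_1^\T\bZ_1 \;+\; \bm{G}\bm{G}^\T,\qquad \bm{G}:=(\bZ_1^\T\bZ_1)^{-1/2}\bZ_1^\T\bZ_2\in\RR^{d\times(m-d)}.
\end{equation*}
The crucial observation is a Gaussian self-similarity: $(\bZ_1^\T\bZ_1)^{-1/2}\bZ_1^\T$ has orthonormal rows, and $\bZ_2$ is i.i.d.\ $\m{N}(0,1/\sqrt{nm})$ independent of $\bZ_1$; hence, conditionally on $\bZ_1$, the entries of $\bm{G}$ are i.i.d.\ $\m{N}(0,1/\sqrt{nm})$, and since this conditional law is free of $\bZ_1$, $\bm{G}$ is in fact independent of $\bZ_1$ with i.i.d.\ $\m{N}(0,1/\sqrt{nm})$ entries. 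Therefore $\bZ_1^\T\bZ_1+\bm{G}\bm{G}^\T=\bm{H}\bm{H}^\T$ with $\bm{H}:=[\bZ_1^\T\ \bm{G}]\in\RR^{d\times(n+m-d)}$ having i.i.d.\ $\m{N}(0,1/\sqrt{nm})$ entries: on the nose, the squared singular values of $\bQc\bZ$ are the spectrum of a plain Gaussian sample-covariance matrix.

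It then remains to invoke the Marchenko--Pastur theorem for $\bm{H}\bm{H}^\T$: its empirical spectral distribution converges a.s.\ to a Marchenko--Pastur law with shape parameter $d/(n+m-d)$ and scale parameter $(n+m-d)/\sqrt{nm}$. Substituting $m=\gamma n$ and $d=\beta m$ gives $d/(n+m-d)\to\gamma\beta/(1+\gamma-\gamma\beta)=\phi$ and $(n+m-d)/\sqrt{nm}\to\gamma^{-1/2}+(1-\beta)\gamma^{1/2}=\eta^2$, exactly the parameters of Theorem~\ref{thm:1:Bulk}; moreover the standing assumption $\gamma\beta<1$ together with $\beta\le1$ is readily seen to force $1+\gamma-2\gamma\beta>0$, i.e.\ $\phi<1$, so the limit carries no atom at the origin. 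Since the reduction above is a pathwise identity on the same probability space, a.s.\ convergence for $\bQc\bZ$ is inherited from a.s.\ convergence for $\bm{H}\bm{H}^\T$, and continuity of the Marchenko--Pastur law in its parameters absorbs the fact that $\phi,\eta^2$ are attained only in the limit.

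The one place requiring a little care is the customary subtlety of \emph{almost sure} statements: the strong Marchenko--Pastur theorem is cleanest for a single nested double array of i.i.d.\ entries, whereas $\bm{H}=\bm{H}^{(n)}$ is rebuilt from $\bZ^{(n)},\bm{G}^{(n)}$ at each scale. This is handled either by coupling the rescaled Gaussian entries of all the $\bm{H}^{(n)}$ into one double array, or---more robustly---by observing that Gaussian Lipschitz concentration makes the empirical distribution of $\bm{H}^{(n)}(\bm{H}^{(n)})^\T$, tested against bounded Lipschitz functions, concentrate at an exponential rate, so Borel--Cantelli yields the a.s.\ statement regardless of the joint law across $n$. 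Beyond this, the argument is just the algebraic identity above plus bookkeeping of the $1/\sqrt{nm}$ normalization; I do not anticipate any genuine analytic obstacle, the entire content being the observation that $\bQc\bZ$ is spectrally a Gaussian matrix of aspect ratio $\phi$.
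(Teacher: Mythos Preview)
Your proposal is correct and takes essentially the same route as the paper: both reduce the squared singular values of $\bQc\bZ$ to the eigenvalues of $\bm{H}\bm{H}^\T$ for a $d\times(n+m-d)$ i.i.d.\ Gaussian matrix and then invoke the Marchenko--Pastur law, with the only cosmetic difference being the choice of orthonormal basis for $\range(\bZ_1)$ (the paper diagonalizes via the eigenbasis $\bBc_1$ of $\bZ_1\bZ_1^\T$ and passes through an equality-in-distribution step, whereas you use $\bm{F}=\bZ_1(\bZ_1^\T\bZ_1)^{-1/2}$ and obtain the identity pathwise).
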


    \begin{theorem}
        [The largest singular value] 
        \label{thm:QZ:largest}
        $\sigma_1(\bQc\bZ)$ converges almost surely to the upper edge of the limiting spectral distribution (LSD):
        \begin{equation}
            \sigma_1^2(\bQc\bZ)\overset{a.s.}{\longrightarrow} \NoiseEdge_{\gamma,\beta}^{+},
            % \equiv 
            % \gamma^{-1/2}+\gamma^{1/2} + 2\sqrt{\beta(1+\gamma-\gamma\beta)}.
        \end{equation}
        with $\NoiseEdge_{\gamma,\beta}^{+}$ given in \eqref{eq:edges-def}.
    \end{theorem}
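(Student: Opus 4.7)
My approach is to reduce the claim to the classical Bai-Yin theorem on the largest eigenvalue of a Wishart matrix. The starting point is to observe that $\bQc \bZ$ has rank at most $d$, and if $\bm{Q}\in \RR^{n\times d}$ is any matrix of orthonormal columns spanning $\range(\bZ_1)$ (so $\bQc = \bm{Q}\bm{Q}^\T$), then the $d$ non-zero squared singular values of $\bQc\bZ$ are exactly the eigenvalues of the $d\times d$ matrix
\[
\bm{M} \;:=\; (\bm{Q}^\T \bZ)(\bm{Q}^\T\bZ)^\T \,.
\]
So it suffices to show $\lambda_1(\bm{M})\to \NoiseEdge_{\gamma,\beta}^+$ almost surely.

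The key step is identifying the \emph{exact} distribution of $\bm{M}$ as a Wishart. I split $\bm{Q}^\T\bZ = [\,\bm{Q}^\T\bZ_1 \mid \bm{Q}^\T\bZ_2\,] = [\,\bm{R}\mid \tilde{\bm{B}}\,]$, where $\bm{R}\in\RR^{d\times d}$ is the upper-triangular factor of the QR decomposition $\bZ_1 = \bm{Q}\bm{R}$, and $\tilde{\bm{B}} = \bm{Q}^\T\bZ_2 \in \RR^{d\times(m-d)}$. Since $\bZ_2$ is i.i.d.\ $\m{N}(0,1/\sqrt{nm})$ independent of $\bZ_1$, and since orthonormal rotations preserve this law, the conditional distribution of $\tilde{\bm{B}}$ given $\bZ_1$ is i.i.d.\ $\m{N}(0,1/\sqrt{nm})$; because this conditional law does not depend on $\bZ_1$, the matrix $\tilde{\bm{B}}$ is in fact \emph{unconditionally} i.i.d.\ Gaussian and independent of $\bZ_1$, hence independent of $\bm{R}$. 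Now
\[
\bm{M} \;=\; \bm{R}\bm{R}^\T + \tilde{\bm{B}}\tilde{\bm{B}}^\T,
\]
where $\bm{R}\bm{R}^\T = \bm{Q}^\T\bZ_1\bZ_1^\T\bm{Q}$ is (since $\bZ_1\bZ_1^\T$ is Wishart and $\bm{Q}$ is an orthonormal basis of its range) Wishart $W_d(n, (1/\sqrt{nm})\bI)$, while $\tilde{\bm{B}}\tilde{\bm{B}}^\T\sim W_d(m-d,(1/\sqrt{nm})\bI)$ and is independent of $\bm{R}\bm{R}^\T$. By the additivity of independent Wisharts with a common scale matrix (a distributional fact verifiable via the Laplace transform),
\[
\bm{M} \;\sim\; W_d\!\bigl(n+m-d,\; (1/\sqrt{nm})\bI\bigr) \,.
\]

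Having identified $\bm{M}$ as a Wishart, I invoke the classical Bai-Yin / Geman theorem: the top eigenvalue of a $d\times d$ Wishart with $k$ degrees of freedom and scale $\sigma^2\bI$, when $d,k\to\infty$ with $d/k\to c\in[0,1)$, converges almost surely to $\sigma^2(\sqrt{k}+\sqrt{d})^2$. Applying this with $k=n+m-d$ and $\sigma^2 = 1/\sqrt{nm}$,
\[
\lambda_1(\bm{M}) \;\longrightarrow\; \frac{(\sqrt{n+m-d}+\sqrt{d}\,)^2}{\sqrt{nm}}
\;=\; \sqrt{n/m}+\sqrt{m/n}+2\sqrt{\tfrac{d(n+m-d)}{nm}} \,,
\]
and as $n,m,d\to\infty$ the right-hand side tends to $\gamma^{-1/2}+\gamma^{1/2}+2\sqrt{\beta(1+\gamma-\beta\gamma)} = \NoiseEdge_{\gamma,\beta}^+$, as desired.

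There is essentially no hard obstacle: everything reduces to a clean distributional identity, and once $\bm{M}$ is recognized as Wishart, the classical edge universality result from the literature closes the argument. The only place that might require care is the verification of independence of $\bm{R}$ and $\tilde{\bm{B}}$, which hinges on the rotational invariance of the Gaussian distribution; I would state this as a short lemma. If the authors have already identified $\bm{M}$ as Wishart in the course of proving Theorem~\ref{thm:QZ-LSD}, then Theorem~\ref{thm:QZ:largest} follows as an immediate corollary of Bai-Yin.
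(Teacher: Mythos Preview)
Your overall strategy---reduce to a $d\times d$ matrix and identify its eigenvalues with those of a Wishart with $n+m-d$ degrees of freedom, then apply the classical Bai--Yin result---is exactly the paper's approach. However, there is a genuine gap in your execution.

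The claim that $\bm{R}\bm{R}^\T = \bm{Q}^\T\bZ_1\bZ_1^\T\bm{Q}$ is Wishart $W_d(n,(1/\sqrt{nm})\bI)$ is false. The reasoning ``$\bZ_1\bZ_1^\T$ is Wishart and $\bm{Q}$ is orthonormal'' would work if $\bm{Q}$ were \emph{independent} of $\bZ_1$, but here $\bm{Q}$ is the QR factor of $\bZ_1$ itself. Concretely, by the Bartlett decomposition, the upper-triangular $\bm{R}$ has $R_{ii}^2\sim\chi^2_{n-i+1}$ (appropriately scaled); hence $(\bm{R}\bm{R}^\T)_{dd}=R_{dd}^2\sim\chi^2_{n-d+1}$, not $\chi^2_n$ as a true $W_d(n,\cdot)$ would require. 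Consequently your additivity-of-Wisharts step does not apply, and $\bm{M}$ is \emph{not} Wishart as a matrix.

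The fix is short and is precisely what the paper does: argue only at the level of eigenvalues. Since $\tilde{\bm{B}}$ is i.i.d.\ Gaussian and independent of $\bm{R}$, its distribution is left-orthogonally invariant; hence the eigenvalues of $\bm{R}\bm{R}^\T+\tilde{\bm{B}}\tilde{\bm{B}}^\T$ have the same law as those of $\bm{R}^\T\bm{R}+\tilde{\bm{B}}\tilde{\bm{B}}^\T=\bZ_1^\T\bZ_1+\tilde{\bm{B}}\tilde{\bm{B}}^\T$. Now $[\bZ_1^\T\mid\tilde{\bm{B}}]\in\RR^{d\times(n+m-d)}$ genuinely has i.i.d.\ Gaussian entries, so $\bZ_1^\T\bZ_1+\tilde{\bm{B}}\tilde{\bm{B}}^\T$ is Wishart $W_d(n+m-d,(1/\sqrt{nm})\bI)$, and Bai--Yin gives the conclusion. (The paper uses the eigenbasis $\bBc_1$ of $\bZ_1\bZ_1^\T$ rather than the QR basis $\bm{Q}$, but this is immaterial.)
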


    Towards proving Theorem~\ref{thm:QZ-LSD}-\ref{thm:QZ:largest}, we introduce some notation. Let $\bBc$ be an $n$-by-$n$ orthogonal matrix whose columns are eigenvectors of $\bZ_1\bZ_1^\T$; specifically, let $\bBc_1\in\RR^{n\times d}$ correspond to the non-zero eigenvalues and $\bBc_2\in \RR^{n\times (n-d)}$ correspond to the zero eigenspace (the columns chosen arbitrarily to complete an orthonormal basis of $\RR^n$). Accordingly, denote $\bZ_1\bZ_1^\T = \bBc\diag(\bmu,\0)\bBc^\T$  where $\bmu=(\mu_1,\ldots,\mu_d)\in \RR_+^d$ are the non-zero eigenvalues.  Note that the projection onto the column space of $\bZ_1$ can be written as $\bQc=\bBc_1\bBc_1^\T$. Consequently, the matrix $\bQc\bZ(\bQc\bZ)^\T =\bZ_1\bZ_1^\T + \bQc\bZ_2\bZ_2^\T\bQc$ can be written, upon a change of basis, as 
    \begin{equation}
    \label{eq:QZ-basis-change}
        \bBc^\T \bQc\bZ(\bQc\bZ)^\T \bBc = 
        \MatL
            \diag(\bmu) + \bX\bX^\T & \0 \\
            \0 & \0
        \MatR, \qquad\textrm{where}\quad \bX=\bBc_1^\T \bZ_2 .
    \end{equation}
    Clearly, the non-zero singular values squared $\sigma_1^2(\bQc\bZ),\ldots,\sigma_d^2(\bQc\bZ)$ are exactly the eigenvalues of the top-left $d$-by-$d$ block of \eqref{eq:QZ-basis-change}, namely $\diag(\bmu) + \bX\bX^\T$. Crucially, observe that since $\bBc_1$ is independent of $\bZ_2$, $\bX\in \RR^{d\times (m-d)}$ is an i.i.d. Gaussian matrix with entries $\bX_{i,j}\overset{i.i.d.}{\sim}\m{N}(0,1/\sqrt{nm})$ and independent of $\bZ_1$.

    \begin{proof}
        (Of Theorem~\ref{thm:QZ-LSD}.)
        We wish to find the limiting eigenvalue distribution of $\diag(\bmu)+\bX\bX^\T$. Recall that $\bmu\in \RR^d$ is a vector containing the $d$ non-zero eigenvalues of the $n$-by-$n$ matrix $\bZ_1\bZ_1^\T$. These eigenvalues are exactly the $d$ eigenvalues of the $d$-by-$d$ matrix $\bZ_1^\T\bZ_1$. Since $\bX$ has an orthogonally invariant distribution (being an i.i.d. Gaussian matrix) and is independent of $\bZ_1$, the eigenvalues of $\diag(\bmu)+\bX\bX^\T$ have the same distribution as those of $\bZ_1^\T \bZ_1 + \bX\bX^\T$. Now, denote $\bm{T}=\MatL \bZ_1^\T &\bX\MatR \in \RR^{d\times (n+m-d)}$ which has i.i.d. Gaussian entries $\bm{T}_{i,j}\sim \m{N}(0,1/\sqrt{nm})$. Clearly, $\bZ_1^\T \bZ_1 + \bX_1\bX_1^\T = \bm{T}\bm{T}^\T$. The matrix $\frac{\sqrt{nm}}{n+m-d}\bm{T}\bm{T}^\T$ is a sample covariance matrix corresponding to $n+m-d$ i.i.d. measurements $\bm{t}_i\sim \m{N}(0,\bI)$ in $\RR^d$. Accordingly, its eigenvalue distributions converges to the Marchenko-Pastur law with scale $1$ and shape $\phi\simeq \frac{d}{n+m-d}=\frac{\gamma\beta}{1+\gamma-\gamma\beta}$; see for example \cite{bai2010spectral}. Hence, the limiting eigenvalue distribution of $\bm{T}\bm{T}^\T$ is a Marchenko-Pastur law with the same shape, and scale $\eta^2\simeq \frac{n+m-d}{\sqrt{nm}}\simeq \frac{1+\gamma-\gamma\beta}{\sqrt{\gamma}}$.
        % $\gamma^{-1/2}+(1-\beta)\gamma^{1/2}$.
        % $\frac{1+\gamma-\gamma\beta}{\sqrt{\gamma}}$.

    \end{proof}

    \begin{proof}
        (Of Theorem~\ref{thm:QZ:largest}.)
        As noted above, in the proof of Theorem~\ref{thm:QZ-LSD}, $\sigma_1^2(\bQc\bZ)$ has the same distribution as the largest eigenvalue of the matrix $\bm{T}\bm{T}^\T$. By well-known results on the largest eigenvalue of a Gaussian sample covariance matrix, $\lambda_1(\bm{T}\bm{T}^\T)$ converges a.s. to the upper edge of the corresponding Marchenko-Pastur law; see \cite{bai2010spectral}. In our case, this upper edge is $\eta^2(1+\sqrt{\phi})^2$ with $\eta^2,\phi$ given in \eqref{eq:thm:QZ-LSD}.
    \end{proof}

    For $z>\lambda_{\phi,\eta^2}^+$, denote by $\MPStiel_{\phi,\eta^2}(z)$ the \emph{Stieltjes transform} of a Marchenko-Pastur law with shape and scale parameters $\phi,\eta^2$,
    \begin{equation}
        \MPStiel_{\phi,\eta^2}(z) = \int \frac{1}{\lambda-z}d\MPDensity_{\phi,\eta^2}(\lambda),
    \end{equation}
    which has the following closed-form formula
    (see for example \cite[Lemma 3.11]{bai2010spectral}):
    \begin{equation}\label{eq:MP-Stiel}
        \MPStiel_{\phi,\eta^2}(z) = \frac{\eta^2(1-\phi) -z + \sqrt{-(\lambda_{\phi,\eta^2}^+-z)(z-\lambda_{\phi,\eta^2}^-)}}{2\eta^2\phi z},\qquad \lambda_{\phi,\eta^2}^\pm = \eta^2(1\pm \sqrt{\phi})^2 \,.
    \end{equation}
    Denote by $\rho_{\gamma,\beta}(\cdot)$ the Stieltjes transform \eqref{eq:MP-Stiel} of the law in Theorem~\ref{thm:QZ-LSD}, namely,
    \begin{equation}\label{eq:rho}
        % \rho_{\gamma,\beta}(z) = \frac{\gamma^{-1/2}+\gamma^{1/2}(1-2\beta) - z + \sqrt{\Delta_{\gamma,\beta}(z)}}{2\beta\sqrt{\gamma} z},
                \rho_{\gamma,\beta}(z) = \frac{1+\gamma(1-2\beta) - \sqrt{\gamma}z + \sqrt{-\gamma \Delta_{\gamma,\beta}(z)}}{2\beta\gamma z},
    \end{equation}
    where $\Delta_{\gamma,\beta}(z)$ is defined in \eqref{eq:thm:1:density}.
    Consider the resolvent of the top-left $d$-by-$d$ block of \eqref{eq:QZ-basis-change}:
    \begin{equation}
        \label{eq:R(z)}
        \bR(z) = \left( \diag(\bmu) + \bX\bX^\T -z\bI \right)^{-1},\qquad\textrm{where}\quad z\notin \mathrm{spec}(\diag(\bmu) + \bX\bX^\T) .
    \end{equation}
    Note that by Theorems~\ref{thm:QZ-LSD}-\ref{thm:QZ:largest}, $d^{-1}\tr\bR(z)\to \rho_{\gamma,\beta}(z)$ a.s. for every $z\in (\NoiseEdge_{\gamma,\beta},\infty)$. 
    The following convergence result for the diagonal elements of the resolvent \eqref{eq:R(z)} will be useful in the sequel:
    \begin{lemma}\label{lem:Rho-i}
        Denote 
        \begin{equation}\label{eq:lem:Rho-i:1}
            \hat{\rho}^{(i)}_{\gamma,\beta}(z) = \left( \mu_i - z + \frac{\sqrt{\gamma}(1-\beta)}{1+\beta\sqrt{\gamma}{\rho_{\gamma,\beta}}(z)} \right)^{-1}.
        \end{equation}
        For all $z \in (\NoiseEdge_{\gamma,\beta}^+,\infty)$, a.s. as $n,m,d\to\infty$,
        \begin{equation}
            \max_{1\le i \le d} \left|  \bR(z)_{i,i}- \hat{\rho}^{(i)}_{\gamma,\beta}(z) \right| \longrightarrow 0.
        \end{equation}
        Furthermore, $d^{-1}\tr(\bR(z)) \longrightarrow \rho_{\gamma,\beta}(z)$, and the following relation holds:
        \begin{equation}\label{eq:lem:Rho-i:2}
            \rho_{\gamma,\beta}(z) = \MPStiel_{\gamma\beta,\gamma^{-1/2}}\left( z - \frac{\sqrt{\gamma}(1-\beta)}{1+\beta\sqrt{\gamma}{\rho_{\gamma,\beta}}(z)} \right),
        \end{equation}
        where $\MPStiel_{\gamma\beta,\gamma^{-1/2}}(\cdot)$ is the Stieltjes transform of the Marcheko-Pastur law with shape $\gamma\beta$ and scale $\gamma^{-1/2}$: 
        \begin{equation}\label{eq:lem:Rho-i:3}
            \MPStiel_{\gamma\beta,\gamma^{-1/2}}(\lambda) = \frac{1-\beta\gamma - \sqrt{\gamma}\lambda + 
            \sqrt{ \left(\sqrt{\gamma}\lambda -1 -\beta\gamma \right)^2 - 4\beta\gamma }
            }{2\gamma\beta \lambda}, \qquad \lambda>\frac{(1+\sqrt{\beta\gamma})^2}{\sqrt{\gamma}} .
        \end{equation}
    \end{lemma}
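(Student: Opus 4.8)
\emph{Proof plan.} Write $\bM := \diag(\bmu)+\bX\bX^\T-z\bI$, so that $\bR(z)=\bM^{-1}$; recall from the discussion around \eqref{eq:QZ-basis-change} that $\bX\in\RR^{d\times(m-d)}$ has i.i.d.\ $\m{N}(0,1/\sqrt{nm})$ entries and is independent of $\bmu$, so $\bM$ is an additive deformation of a white Wishart matrix by an independent diagonal matrix. The plan is to analyze $\bM^{-1}$ by the classical leave-one-out / Stieltjes-transform method (in the spirit of Bai--Silverstein). Two of the asserted facts come essentially for free from Theorems~\ref{thm:QZ-LSD}--\ref{thm:QZ:largest}: since the empirical spectral distribution of $\bM+z\bI$ converges to the law $\MPDensity_{\phi,\eta^2}$ of \eqref{eq:thm:QZ-LSD} and no eigenvalue exceeds $\NoiseEdgeUpper<z$, we get $d^{-1}\tr\bR(z)\to\MPStiel_{\phi,\eta^2}(z)=\rho_{\gamma,\beta}(z)$ by the closed form \eqref{eq:MP-Stiel}; and \eqref{eq:lem:Rho-i:3} is just \eqref{eq:MP-Stiel} specialized to $(\gamma\beta,\gamma^{-1/2})$, the parameters of the LSD of $\diag(\bmu)$ (equivalently, of the Wishart matrix $\bZ_1^\T\bZ_1$). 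The real content is thus the uniform approximation of the $\bR(z)_{ii}$ and the fixed-point identity \eqref{eq:lem:Rho-i:2}.

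For a fixed $i$, a Schur-complement expansion of $\bM^{-1}$ with respect to the $i$-th coordinate gives
\[
    \bR(z)_{ii}=\Big[\mu_i-z+\bm{x}_i^\T\big(\bI_{m-d}-(\bX^{(i)})^\T(\bM^{(i)})^{-1}\bX^{(i)}\big)\bm{x}_i\Big]^{-1},
\]
where $\bm{x}_i\in\RR^{m-d}$ is the $i$-th row of $\bX$ and $\bX^{(i)},\bmu^{(i)},\bM^{(i)}$ denote the leave-one-out counterparts; crucially $\bm{x}_i$ is Gaussian and independent of $(\bX^{(i)},\bmu^{(i)})$, since the rows of $\bX=\bBc_1^\T\bZ_2$ are i.i.d.\ (orthonormality of the columns of $\bBc_1$). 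I would use Theorem~\ref{thm:QZ:largest} (no eigenvalue of $\bM+z\bI$ above $\NoiseEdgeUpper$), eigenvalue interlacing for the principal submatrices $\bM^{(i)}$, and a uniform bound on $\|\bX\|$ and on $\max_i\mu_i$ (edge control for $\bZ_1^\T\bZ_1$, e.g.\ Bai--Yin) to show that, with probability $\to1$ and uniformly in $i$, one has $\|\bM^{-1}\|,\|(\bM^{(i)})^{-1}\|\le(z-\NoiseEdgeUpper)^{-1}+o(1)$, the matrix sandwiched between $\bm{x}_i$ and $\bm{x}_i^\T$ has bounded operator norm, and $\bR(z)_{ii}\le-1/z$ (so the Schur denominator is bounded away from $0$). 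The Hanson--Wright inequality, a union bound over $i\in[d]$ and Borel--Cantelli then give, almost surely,
\[
    \max_i\Big|\bm{x}_i^\T\big(\bI-(\bX^{(i)})^\T(\bM^{(i)})^{-1}\bX^{(i)}\big)\bm{x}_i-e_n(z)\Big|\longrightarrow0,\qquad e_n(z):=\tfrac{1}{\sqrt{nm}}\tr\big(\bI_{m-d}-\bX^\T\bM^{-1}\bX\big),
\]
where a standard finite-rank perturbation bound absorbs the $O(1/d)$ difference between the leave-one-out trace and $e_n(z)$. Hence $\max_i\big|\bR(z)_{ii}-(\mu_i-z+e_n(z))^{-1}\big|\to0$ almost surely, and $e_n(z)$ is bounded with probability $\to1$.

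It remains to identify the limit of $e_n(z)$. From the exact identity $\bX\bX^\T=\bM+z\bI-\diag(\bmu)$ one gets $\tr(\bX^\T\bM^{-1}\bX)=\tr(\bM^{-1}\bX\bX^\T)=d+z\tr\bM^{-1}-\sum_i\mu_i\bR(z)_{ii}$, hence
\[
    e_n(z)=\tfrac{1}{\sqrt{nm}}\Big[(m-2d)-z\tr\bM^{-1}+\textstyle\sum_i\mu_i\bR(z)_{ii}\Big].
\]
Now substitute the limits $\tfrac{m-2d}{\sqrt{nm}}\to\sqrt{\gamma}(1-2\beta)$, $\tfrac{d}{\sqrt{nm}}\to\beta\sqrt{\gamma}$, $\tfrac1d\tr\bM^{-1}\to\rho_{\gamma,\beta}(z)$, and (using the previous paragraph together with the weak convergence $\tfrac1d\sum_i\delta_{\mu_i}\to\MPDensity_{\gamma\beta,\gamma^{-1/2}}$, boundedness of the $\mu_i$, and the a priori estimates above) $\tfrac1d\sum_i\mu_i\bR(z)_{ii}=\tfrac1d\sum_i\tfrac{\mu_i}{\mu_i-z+e_n(z)}+o(1)\to 1+(z-e)\MPStiel_{\gamma\beta,\gamma^{-1/2}}(z-e)$ for any accumulation point $e$ of $e_n(z)$; likewise $\tfrac1d\tr\bM^{-1}=\tfrac1d\sum_i\bR(z)_{ii}\to\MPStiel_{\gamma\beta,\gamma^{-1/2}}(z-e)$, so $\MPStiel_{\gamma\beta,\gamma^{-1/2}}(z-e)=\rho_{\gamma,\beta}(z)$ --- this is exactly \eqref{eq:lem:Rho-i:2}, and it forces $z-e$ to lie above the edge of $\MPDensity_{\gamma\beta,\gamma^{-1/2}}$. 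Feeding $\MPStiel_{\gamma\beta,\gamma^{-1/2}}(z-e)=\rho_{\gamma,\beta}(z)$ back into the formula for $e_n(z)$ collapses it to the \emph{linear} equation $e=\sqrt{\gamma}(1-\beta)-\beta\sqrt{\gamma}\,\rho_{\gamma,\beta}(z)\,e$; since $1+\beta\sqrt{\gamma}\,\rho_{\gamma,\beta}(z)\neq0$ on $(\NoiseEdgeUpper,\infty)$ (immediate for large $z$, where $\rho_{\gamma,\beta}\to0^-$, and in general from \eqref{eq:lem:Rho-i:2} or from the explicit \eqref{eq:rho}), this has the unique solution $e(z)=\sqrt{\gamma}(1-\beta)/(1+\beta\sqrt{\gamma}\,\rho_{\gamma,\beta}(z))$. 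Thus $e_n(z)\to e(z)$, $\hat{\rho}^{(i)}_{\gamma,\beta}(z)=(\mu_i-z+e(z))^{-1}$ as in \eqref{eq:lem:Rho-i:1}, and $\max_i\big|\bR(z)_{ii}-\hat{\rho}^{(i)}_{\gamma,\beta}(z)\big|\to0$.

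The step I expect to be the main obstacle is the bookkeeping of the leave-one-out estimates in the second paragraph: upgrading the operator-norm and ``denominator bounded below'' bounds so that they hold \emph{simultaneously} for all $i\in[d]$ with almost-sure rates (this is where Theorem~\ref{thm:QZ:largest}, interlacing, and the Bai--Yin-type edge bounds on $\bZ_1^\T\bZ_1$ enter), and disentangling the apparent circularity between $e_n(z)$ and the $\bR(z)_{ii}$'s. I sidestep the latter by first establishing only that $e_n(z)$ is \emph{bounded} (from the resolvent-norm bounds alone), then closing the loop through the exact algebraic identity above, so that the fixed-point equation one actually has to solve is linear in $e$ --- no contraction or function-class uniqueness argument is required, and consistency with the already-known explicit formula for $\rho_{\gamma,\beta}$ serves as a cross-check on \eqref{eq:lem:Rho-i:2}.
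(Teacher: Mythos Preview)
Your proposal is correct and follows essentially the same classical leave-one-out/Schur-complement argument as the paper. The only organizational difference is in how the self-consistent equation is closed: the paper introduces the auxiliary $\hat{\zeta}(z)=d^{-1}\tr(\bR(z)\diag(\bmu))$ and solves for it via the averaged identity $d^{-1}\sum_i \bR(z)_{i,i}(\hat{\rho}_i(z))^{-1}\simeq 1$, whereas you work directly with $e_n(z)$ and leverage the already-established $d^{-1}\tr\bR(z)\to\rho_{\gamma,\beta}(z)$ (from Theorems~\ref{thm:QZ-LSD}--\ref{thm:QZ:largest}) together with the Marchenko--Pastur convergence of the $\mu_i$ to reduce to a linear equation in $e$; both routes yield the same formula for the limiting diagonal correction and the fixed-point relation~\eqref{eq:lem:Rho-i:2}.
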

    The proof of Lemma~\ref{lem:Rho-i} follows by a computation technique which is classical in random matrix theory. For completeness, it appears in Appendix, Section~\ref{sec:proof-lem:Rho-i}. 

    \subsection{The outlying singular values}
    \label{sec:proof-singular-values}

    Theorems~\ref{thm:QZ-LSD}-\ref{thm:QZ:largest} characterize the behavior of the singular values of $\hbY$ in the absence of a signal. In the presence of a signal, they allow us to describe the behavior of the bulk singular values of $\hbY$. In this section, we study the behavior of the large singular values of $\hbY$, which are potentially outliers.

    Our argument follows a framework introduced by \cite{benaych2012singular}. Recall the representation \eqref{eq:PY:2} of $\hbY$ as a signal plus noise matrix. Define the $6r$-by-$6r$ matrix:
    	\begin{equation}\label{eq:M-hat}
		\widehat{\bM}(y) = \MatL 
			y\cdot \bA^\T (y^2\bI_{n\times n}-\bQc\bZ\bZ^\T\bQc)^{-1}\bA  & \bA^\T(y^2\bI_{n\times n}-\bQc\bZ\bZ^\T\bQc)^{-1}\bQc\bZ \bB \\
			\bB^\T \bZ^\T \bQc (y^2\bI_{n\times n}-\bQc\bZ\bZ^\T\bQc)^{-1}\bA  & y\cdot \bB^\T(y^2\bI_{p\times p}-\bZ^\T \bQc\bQc\bZ)^{-1}\bB  
		\MatR
		-
		\MatL \0 & (\bSigma^{-1})^\T  \\
		\bSigma^{-1} &\0 \MatR ,
	\end{equation}
 where $\bA,\bB,\bSigma$ are given in \eqref{eq:A-B-def}-\eqref{eq:bSigma-def}. Moreover, it is straightforward to verify that 
	\begin{equation}\label{eq:Sigma-Inv}
		\bSigma^{-1} = \MatL
		\bLambda^{-1} & \0 & \0 \\
		\0 &-\bLambda^{-1}	& \frac{\beta\sqrt{\gamma}}{1-\gamma\beta}\bI \\
		\0 &(1-\gamma\beta)\bI 	&\bLambda^{-1}
		\MatR 
      = \MatL
		\0 & \0 & \0 \\
		\0 &\0	& \frac{\beta\sqrt{\gamma}}{1-\gamma\beta}\bI \\
		\0 &(1-\gamma\beta)\bI 	&\0
		\MatR 
            + \bI^{-}_3 \otimes \bLambda^{-1},
	\end{equation}
 where $\bI^{-}_3$ is defined in \eqref{eq:H}. Above, $\otimes$ denotes the Kronecker (tensor) product:
\[
\MatL a &b \\ c &d \MatR \otimes \bA = \MatL a\bA & b\bA \\ c\bA & d\bA \MatR.
\]
 
 By \cite[Lemma 4.1]{benaych2012singular}, the singular values of $\bPc\bY$ which are not singular values of $\bQc$ are precisely the solutions of $\det(\widehat{\bM}(y))=0$; our goal, then, is to find the roots of this equation which are located outside the support of the bulk distribution, given in Theorem~\ref{thm:QZ-LSD}.

The next lemma is the main technical result of this section: it calculates a point-wise limit for the matrix $\widehat{\bM}(y)$. 

\begin{lemma}\label{lem:Mhat-Lim}
    Fix any $y > \sqrt{\NoiseEdge_{\gamma,\beta}^+}$. Then a.s.,
    \begin{equation}\label{eq:Mhat-Lim}
        \widehat{\bM}(y) \simeq \bMc_{\gamma,\beta}(y) \equiv \bKc_{\gamma,\beta}(y) \otimes \bI_{r\times r} - 
        \Hbb \otimes \bLambda^{-1} \,,
        % \MatL \0 &\0 &\0 &\bLambda^{-1} &\0 &\0 \\
        % \0 &\0 &\0 &\0 &-\bLambda^{-1} &\0 \\
        % \0 &\0 &\0 &\0 &\0 &\bLambda^{-1} \\ 
        % \bLambda^{-1} &\0 &\0 &\0 &\0 &\0\\
        % \0 &-\bLambda^{-1} &\0 &\0 &\0 &\0\\
        % \0 &\0 &\bLambda^{-1}&\0 &\0 &\0\MatR .
        % \MatL \0 &\bLambda_3 \\
        % \bLambda_3 &\0 \MatR\qquad\textrm{where}\quad \bLambda_3
        % = 
        % \MatL
        % \bLambda^{-1} &\0 &\0 \\
        % \0 & -\bLambda^{-1} &\0 \\
        % \0 &\0 & \bLambda^{-1} \\
        % \MatR.
    \end{equation} 
    where the matrix $\bKc_{\gamma,\beta}(y)$ is defined in \eqref{eq:K-func} and $\bHc$ in \eqref{eq:H}.
\end{lemma}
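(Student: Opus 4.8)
The plan is to evaluate the entrywise (operator-norm) limit of the $6r\times 6r$ matrix $\widehat{\bM}(y)$ by reducing every block in \eqref{eq:M-hat} to a functional of the single resolvent $\bR(y^2)=(\diag(\bmu)+\bX\bX^\T-y^2\bI_d)^{-1}$ controlled by Lemma~\ref{lem:Rho-i}. By \eqref{eq:QZ-basis-change}, in the eigenbasis $\bBc=\MatL \bBc_1 &\bBc_2 \MatR$ of $\bZ_1\bZ_1^\T$ the pure-noise matrix $\bQc\bZ\bZ^\T\bQc$ acts as $\diag(\bmu)+\bX\bX^\T$ on $\range(\bBc_1)$ and vanishes on $\range(\bBc_2)$, so that
\[
(y^2\bI_n-\bQc\bZ\bZ^\T\bQc)^{-1} = -\bBc_1\bR(y^2)\bBc_1^\T + y^{-2}\bBc_2\bBc_2^\T .
\]
Using the push-through identity $y(y^2\bI_m-\bZ^\T\bQc\bZ)^{-1}=\tfrac1y\bI_m+\tfrac1y\bZ^\T\bQc(y^2\bI_n-\bQc\bZ\bZ^\T\bQc)^{-1}\bQc\bZ$, the bottom-right block of $\widehat{\bM}(y)$ is expressed through the same $n\times n$ resolvent plus the deterministic Gram matrix $\tfrac1y\bB^\T\bB$. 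Hence, modulo the explicit subtracted term in \eqref{eq:M-hat}, all of $\widehat{\bM}(y)$ becomes a sum of bilinear forms $\bm a^\T\bR(y^2)\bm b$ and inner products $\bm a^\T\bm b$, where $\bm a,\bm b$ run over the columns of $\bA,\bB$ from \eqref{eq:A-B-def} and their images under $\bQc\bZ$.

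Next comes an exact algebraic reduction in the $\bBc$-frame. Writing $\bZ_1=\bBc_1\diag(\bmu^{1/2})\bm{\Psi}^\T$ one has $\bVbar_1=\bBc_1\diag(\bmu^{-1/2})\bm{\Psi}^\T\bV_1\in\range(\bBc_1)$; moreover $\bQc^\perp\bU\in\range(\bBc_2)$; each $\bu_i$ splits across the two ranges; $\bQc\bZ=\bZ_1+\bBc_1\bX$; the nonzero block of $\bW_1$ equals $\bV_2-\bX^\T\diag(\bmu^{-1/2})\bm{\Psi}^\T\bV_1$, a function of $\bX=\bBc_1^\T\bZ_2$ alone; and the nonzero block of $\bW_2$ equals $(\bBc_2^\T\bZ_2)^\T\bBc_2^\T\bU$, which depends only on $\bBc_2^\T\bZ_2$ and is therefore, conditionally on $\bZ_1$, independent of $\bR(y^2)$. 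Whenever this substitution produces a ``bare'' factor $\bX\bX^\T$ (as happens in the $\bVbar_1$-, $\bU$- and $\bW_1$-related cross blocks), it is removed with the resolvent identity $\bR(y^2)\bX\bX^\T=\bI_d+\bR(y^2)(y^2\bI_d-\diag(\bmu))$, leaving only $\bR(y^2)$ weighted by integer powers of $\diag(\bmu)$, together with $\bm{\Psi}$-conjugated versions thereof.

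By the orthogonal invariance of $\bZ$, conditionally on $\bZ_1$ the matrix $\bBc$ is Haar, $\bm{\Psi}$ is Haar, and $\bX$ is i.i.d.\ Gaussian independent of $\bmu$ and $\bm{\Psi}$. Combining this with the incoherence assumption \eqref{eq:assum:incoherence}, standard quadratic-form concentration (Hanson--Wright / Gaussian Poincar\'e), and a rank-$r$ (hence asymptotically negligible) leave-out modification of $\bR(y^2)$ whenever $\bB$ and $\bR(y^2)$ share the factor $\bX$, every bilinear form concentrates on $\Indic{i=j}$ times a deterministic scalar of the form $\tfrac1d\tr(\bR(y^2)\diag(\bmu)^k)$ with $k\in\{-1,0,1\}$, and $\tfrac1d\tr\diag(\bmu)^k$. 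These collapse to explicit rational functions of $\rho_{\gamma,\beta}(y^2)$ via Lemma~\ref{lem:Rho-i} and the self-consistent equation \eqref{eq:lem:Rho-i:2} (partial-fraction identities for $\mu^k/(\mu-c)$ reduce the bookkeeping to $\tfrac1d\tr\bR$ and $\tfrac1d\sum_i\mu_i^{-1}$, the latter being the edge value of a Marchenko--Pastur Stieltjes transform). Finally one assembles: by \eqref{eq:Sigma-Inv} the subtracted matrix in \eqref{eq:M-hat} equals $\Hbb\otimes\bLambda^{-1}+\bm C$, where $\bm C$ is a constant, $\bLambda$-free matrix supported on the $(\bVbar_1,\bW_1)$ and $(\bQc^\perp\bU,\bW_2)$ block positions and their transposes; the $\bLambda$-dependent part is exactly $-\Hbb\otimes\bLambda^{-1}$ and passes to the limit unchanged, while $\bm C$ is absorbed into the limiting entries at those positions. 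Matching the resulting $6\times6$ block-matrix entrywise against \eqref{eq:kappa-1}--\eqref{eq:kappa-9} and \eqref{eq:K-func} (a verification best carried out, and double-checked, with a computer algebra system) yields $\widehat{\bM}(y)\simeq\bKc_{\gamma,\beta}(y)\otimes\bI_r-\Hbb\otimes\bLambda^{-1}$.

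I expect the main obstacle to be the block-row and block-column of $\widehat{\bM}(y)$ indexed by $\bW_1$: since $\bW_1$ is a function of the same Gaussian matrix $\bX$ that generates $\bR(y^2)$, these entries are not amenable to a direct conditioning argument, and decoupling them requires combining the resolvent identity for $\bR(y^2)\bX\bX^\T$ with a leave-one-column-out expansion of $\bR(y^2)$ along the (constantly many) directions carried by $\bV_2$. The $\bW_2$ block-row is by contrast the easy case, being conditionally independent of $\bR(y^2)$; the remaining difficulty is organizational, namely tracking signs and the numerous lower-order terms, which is why the concluding entrywise match is best confirmed symbolically.
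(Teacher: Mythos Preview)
Your proposal is correct and follows the paper's overall strategy: pass to the eigenbasis $\bBc$ of $\bZ_1\bZ_1^\T$, reduce every block to quadratic forms in the resolvent $\bR(y^2)$, apply concentration using orthogonal invariance and the incoherence assumption, and close with Lemma~\ref{lem:Rho-i}. The top-left and off-diagonal blocks are handled essentially identically in both.

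The one genuine difference is how you treat the bottom-right block $\hbM_{2,2}(y)$. The paper does \emph{not} use the push-through identity; instead it works directly with the $m\times m$ resolvent $(y^2\bI-(\bQc\bZ)^\T\bQc\bZ)^{-1}$, extracts its lower $(m-d)\times(m-d)$ block $\downD(y)$ via block matrix inversion, introduces the companion $d\times d$ resolvent $\barD(y)$, and proves a separate self-consistent equation for them (Lemma~\ref{lem:D-i}, the analogue of Lemma~\ref{lem:Rho-i}). Your route---push-through to the $n\times n$ resolvent, then for the $\bW_1$ terms rotate so that $\bV_2$ is coordinate-aligned and perform a rank-$r$ leave-out on the corresponding columns of $\bX$---is a valid alternative that avoids introducing $\downD,\barD$ and Lemma~\ref{lem:D-i} altogether. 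It is more economical in auxiliary machinery, though the leave-out bookkeeping for terms like $\bV_2^\T\bX^\T\bR\bX\bV_2$ (and the mixed $\bVbar_1$--$\bW_1$ terms) is exactly the obstacle you correctly anticipate. In effect, your Woodbury/leave-out computation reconstructs the content of Lemma~\ref{lem:D-i} implicitly.

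One small correction: $\tfrac1y\bB^\T\bB$ is not deterministic, since $\bW_1,\bW_2$ are random; but its limit is deterministic by the same concentration arguments, so this does not affect the argument.
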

The proof of Lemma~\ref{lem:Mhat-Lim} appears in Appendix, Section~\ref{sec:proof-lem:Mhat-Lim}. The main task amounts to calculating limiting formulas for certain quadratic forms that involve the ``effective'' signal vectors $\bA,\bB$ and the pure-noise matrix $\bQc\bZ$. These quadratic forms are ultimately reduced to traces of corresponding compound matrix expressions involving $\bZ_1,\bZ_2$.
Unlike in \cite{benaych2012singular}, however, several of these mixed traces cannot simply be reduced to the Stieltjes transform (trace of the resolvent) of the noise matrix $\bQc\bZ$. This fact complicates things considerably, and it owes to the fact that the 
``signal'' part in the representation \eqref{eq:PY:2} (the matrices $\bA,\bB$) in fact depend on the noise part $\bQc\bZ$. In carrying out the necessary calculations, Lemma~\ref{lem:Rho-i} (among others) plays an important role.

\paragraph*{}
Note that while Lemma~\ref{lem:Mhat-Lim} deals with point-wise convergence, it is straightforward to ``upgrade'' it to uniform convergence on compact subsets (e.g. by Arzela-Ascoli, where both equicontinuity and equiboundedness are easily deduced from Theorem~\ref{thm:QZ:largest}) and complex arguments  $\Re(y)>\sqrt{\NoiseEdge_{\gamma,\beta}^{+}}$. Consequently, by elementary complex analysis, the set of roots $y$ of the random equation $\det(\hbM(y))=0$ converges to those of the deterministic equation $\det(\bMc_{\gamma,\beta}(y))=0$.\footnote{Convergence is in the following sense. Fix any compact interval $\m{I} \subset (\NoiseEdge_{\gamma,\beta}^+,\infty)$, and let $\hat{m}(\m{I}),m(\m{I})$ be, respectively, the number of roots in $\m{I}$ of the random, respectively determinstic, equation. Then $\hat{m}(\m{I})\longrightarrow m(\m{I})$ a.s. See also \cite{benaych2012singular}. }

We next study the roots of the deterministic equation $\det(\bMc_{\gamma,\beta}(y))=0$. 
It is easy to see from \eqref{eq:Mhat-Lim} that upon an appropriate permutation of the coordinates $\m{S}$, the  matrix $\bMc_{\gamma,\beta}(y)$ decomposes into a block diagonal matrix:\footnote{In other words, the operator $\bMc_{\gamma,\beta}(y):\RR^{6}\otimes \RR^{r}\to \RR^{6}\otimes \RR^{r}$ can be decomposed as a direct sum over its restrictions on $\RR^6\otimes \{\mathrm{span}(\bm{e}_i)\}_{1\le i\le r}$. }
\begin{equation}\label{eq:Values:Permuatation}
    \m{S}\bM_{\gamma,\beta}(y)\m{S}^\T = \bigoplus_{\ell=1}^r (\bKc_{\gamma,\beta}(y) - \sigma_\ell^{-1}\Hbb) .
\end{equation}
Accordingly, the determinant factors into a product:
\begin{equation}\label{eq:DetFactor}
    \det(\bMc_{\gamma,\beta}(y)) = \prod_{\ell=1}^r \det(\bKc_{\gamma,\beta}(y) - \sigma_\ell^{-1}\Hbb),
\end{equation}
and so the question of outliers decouples between different signal spikes. 

It remains to study the roots of the single-spike equation $\det(\bKc_{\gamma,\beta}(y) - \sigma^{-1}\Hbb)=0$. As described in Proposition~\ref{prop:PosEigFunc-properties}, when $y>\sqrt{\NoiseEdgeUpper}$ there is a unique number $\PosEigFunc(y)$ such that $\det(\bKc_{\gamma,\beta}(y) - \PosEigFunc(y)\Hbb)=0$. Accordingly, 
\begin{equation}\label{eq:aux-single-det}
    \det(\bKc_{\gamma,\beta}(y) - \sigma^{-1}\Hbb)=0\qquad
    \textrm{if and only if} \qquad
    \PosEigFunc(y)=1/\sigma \,.
\end{equation}
Recall furthermore that $y\mapsto \PosEigFunc(y)$ is decreasing, and maps the ray $(\sqrt{\NoiseEdgeUpper},\infty)$ bijectively to the interval $(1/\BBP,0)$, where $\BBP=1/\PosEigFunc(\sqrt{\NoiseEdgeUpper})$, as defined in \eqref{eq:BBP}. Thus, a solution $y>\sqrt{\NoiseEdgeUpper}$ to \eqref{eq:aux-single-det} exists if and only if $\sigma>\BBP$; if so, it is given by $y=\SpikeFunc(\sigma)$, where $\SpikeFunc(\sigma)=\PosEigFunc^{-1}(1/\sigma)$ is the spike forward function, as defined in \eqref{eq:SpikeFunc}.

The proof of Proposition~\ref{prop:PosEigFunc-properties} is deferred to the Appendix, Section~\ref{sec:proof-prop:PosEigFunc-properties}.

We are ready to conclude the proofs of Theorems~\ref{thm:1:Bulk} and \ref{thm:2:Outliers}:

\begin{proof}
    (Of Theorems~\ref{thm:1:Bulk} and \ref{thm:2:Outliers}.) By the preceding discussion, the set of large outlying singular values (exceeding the upper bulk edge $\sqrt{\NoiseEdge_{\gamma,\beta}^+}$) converges to the solution set of \eqref{eq:DetFactor} with $y\ge \sqrt{\NoiseEdge_{\gamma,\beta}^+}$. There are at most $r$ solutions: for each spike $1\le i\le r$, if $\sigma_i>\BBP$ then there is a solution $y_i=\SpikeFunc(\sigma_i)$, and if $\sigma\le \BBP$ then there is no solution. Accordingly, there are at most $r$ outliers, whose locations are given in \eqref{eq:thm:2:Outliers}.

    As for the remaining singular values, asymptotically they do not exceed the noise upper edge $\sqrt{\NoiseEdge_{\gamma,\beta}^+}$.
    By Weyl's inequality (e.g. \cite[Exercise 1.3.22]{tao2012topics}) applied to \eqref{eq:PY:2}, for all $3r<i<d-3r$, 
    \begin{equation}\label{eq:Weyl}
        \sigma_{i+3r}(\bQc\bZ) \le \sigma_i\left( \bA\bSigma\bB^\T + \bQc\bZ  \right) \le \sigma_{i-3r}(\bQc\bZ),
    \end{equation}
    where we used that $\rank(\bA\bSigma\bB^\T)\le 3r$ hence $\sigma_{3r+1}(\bA\bSigma\bB^\T)=0$. (See also \cite{benaych2012singular}.)  Theorems~\ref{thm:QZ-LSD}-\ref{thm:QZ:largest} imply that $\sigma_{i+3r}(\bQc)\to \sqrt{\NoiseEdge_{\gamma,\beta}^+}$ for every constant $i$; consequently, the leading non-outlier singular singular values of $\hbY$ must converge to the noise bulk edge. This establishes Theorem~\ref{thm:2:Outliers}.

    Finally, Theorem~\ref{thm:1:Bulk} follows from Theorem~\ref{thm:QZ-LSD} and the interlacing inequality \eqref{eq:Weyl}.

\end{proof}

\subsection{The outlying singular vectors}

Next, we aim to calculate the correlations between the observed and population spike directions.  

For brevity, denote $\hsigma_i = \sigma_i(\hbY)$, so that $\hsigma_1\ge \ldots \ge \hsigma_r$ are the $r$ largest observed singular values. Recall that we have derived a limiting expression for the $\hsigma_i$-s (Theorem~\ref{thm:2:Outliers}), which we denote for brevity $\hsigma_i \to y_i$. Furthermore,   
denote by $\hbu_1,\ldots,\hbu_r$ and $\hbv_1,\ldots,\hbv_r$ respectively the corresponding observed left and right singular vectors of $\hbY$.

\paragraph*{}
Fix any $i$ such that the corresponding spike is detectable: $\sigma_i>\BBP$, and so $y_i=\SpikeFunc(\sigma_i)$. By \cite[Lemma 5.1]{benaych2012singular}, the vector
\begin{equation}\label{eq:Vecs:Aux1}
	\hbf_i = \MatL
		\bSigma & \\
		 &\bSigma^\T 
	\MatR 
 \MatL \bB^\T \hbv_i \\  \bA^\T \hbu_i \MatR \in \RR^{6r},
\end{equation}
lies in the kernel of the matrix $\hbM(\hsigma_i)$ from \eqref{eq:M-hat}. It is convenient to permute the coordinates of \eqref{eq:Vecs:Aux1}, grouping together columns of $\bA,\bB$ that correspond to a single population spike, similarly to \eqref{eq:Values:Permuatation}. Specifically, consider the matrix 
% $\underline{\bSigma}(\sigma)$ from \eqref{eq:bSigma-single-def}, 
\begin{align}
\label{eq:bSigma-single-def}
    \underline{\bSigma}(\sigma) =
		\MatL
		\sigma &0 &0 \\
		0 &-f(\sigma) &\frac{\beta\sqrt{\gamma}}{1-\beta\gamma}g(\sigma) \\
		0 &(1-\beta\gamma)g(\sigma) &f(\sigma)
		\MatR,
\end{align}
so that 
$\m{S}\bSigma\m{S}^\T = \underline{\bSigma}(\sigma_1) \oplus \ldots \oplus \underline{\bSigma}(\sigma_r)$ is block diagonal. Define
\begin{equation}\label{eq:d-c-def}
	\hbd^{(\ell)}_i = 
	\MatL
		\underline{\bSigma}(\sigma_\ell) & \\
		 &\underline{\bSigma}(\sigma_\ell)^\T 
	\MatR
	\hbc^{(\ell)}_i,\qquad
	\hbc^{(\ell)}_i = 
	\MatL
		\langle \bm{v}_\ell , \hbv_i \rangle\\
		\langle [\bW_1]_{*,\ell} , \hbv_i \rangle\\
		\langle [\bW_2]_{*,\ell} , \hbv_i \rangle\\
		\langle \bm{u}_\ell , \hbu_i \rangle\\
		\langle \bQc^\perp \bm{u}_\ell , \hbu_i \rangle\\
		\langle [\bVbar_1]_{*,\ell} , \hbu_i\rangle
	\MatR,
\end{equation}
so that permuting the coordinates of \eqref{eq:Vecs:Aux1} results in $\m{S}\hbf_i = \hbd^{(1)}_i\oplus\ldots\oplus \hbd^{(r)}_i$. 

\begin{lemma}\label{lem:Decoupling}
    Fix any $i$ such that $\sigma_i>\BBP$. Then $\hbc^{(\ell)}_i \to \0$ for all $1\le \ell \le r$, $\ell\ne i$. 
    
    In particular, $\langle \bm{v}_\ell,\hbv_i \rangle\to 0$ and $\langle \bm{u}_\ell , \hbu_i \rangle\to 0$. 
\end{lemma}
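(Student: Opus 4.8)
The plan is to exploit the block structure \eqref{eq:DetFactor} of the limiting matrix $\bMc_{\gamma,\beta}(y)$. Fix a detectable spike $i$, so $\hsigma_i\to y_i=\SpikeFunc(\sigma_i)$ and $y_i>\sqrt{\NoiseEdge_{\gamma,\beta}^+}$ strictly. By \cite[Lemma 5.1]{benaych2012singular} the vector $\hbf_i$ in \eqref{eq:Vecs:Aux1} lies in $\ker\hbM(\hsigma_i)$, and after the coordinate permutation $\m{S}$ we have $\m{S}\hbf_i=\hbd_i^{(1)}\oplus\cdots\oplus\hbd_i^{(r)}$. First I would note that $\hbf_i$ (hence each $\hbd_i^{(\ell)}$) can be normalized to be a unit vector without loss of generality, since we only care about the correlations it encodes up to the overall scaling, which is pinned down separately in Theorem~\ref{thm:3:Angles}; in any case its norm stays bounded because $\bA,\bB$ have bounded operator norm and $\hbu_i,\hbv_i$ are unit vectors, and $\bSigma$ is a fixed invertible matrix. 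By Lemma~\ref{lem:Mhat-Lim} together with its ``upgrade'' to uniform convergence on compacts, $\hbM(\hsigma_i)\simeq \bMc_{\gamma,\beta}(y_i)$, and under $\m{S}$ this is the block-diagonal matrix $\bigoplus_{\ell=1}^r(\bKc_{\gamma,\beta}(y_i)-\sigma_\ell^{-1}\Hbb)$.

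The key point is that for $\ell\ne i$, the block $\bKc_{\gamma,\beta}(y_i)-\sigma_\ell^{-1}\Hbb$ is \emph{invertible in the limit}, with operator norm of its inverse bounded. Indeed, by Proposition~\ref{prop:PosEigFunc-properties} the generalized eigenvalue problem $\det(\bKc_{\gamma,\beta}(y_i)-s\Hbb)=0$ has a unique positive root $s=\PosEigFunc(y_i)=1/\sigma_i$; since the $\sigma_\ell$ are distinct, $1/\sigma_\ell\ne 1/\sigma_i=\PosEigFunc(y_i)$, so $\det(\bKc_{\gamma,\beta}(y_i)-\sigma_\ell^{-1}\Hbb)\ne 0$. (One must also check $s=\sigma_\ell^{-1}$ is not a \emph{negative}-index generalized eigenvalue — but $\Hbb$ has signature $(3,3)$ and the argument via \eqref{eq:aux-single-det} already characterizes \emph{all} roots $s\ge 0$; for $\sigma_\ell>0$ we only need $1/\sigma_\ell>0$, and the only such root is $\PosEigFunc(y_i)$, which differs from $1/\sigma_\ell$.) Therefore, restricting the relation $\bMc_{\gamma,\beta}(y_i)\,(\m{S}\hbf_i)\simeq \hbM(\hsigma_i)\hbf_i=\0$ to the $\ell$-th block gives $(\bKc_{\gamma,\beta}(y_i)-\sigma_\ell^{-1}\Hbb)\,\hbd_i^{(\ell)}\simeq \0$, and multiplying by the bounded inverse yields $\hbd_i^{(\ell)}\to\0$. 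Since $\underline{\bSigma}(\sigma_\ell)$ is a fixed invertible $3\times3$ matrix, \eqref{eq:d-c-def} gives $\hbc_i^{(\ell)}=\mathrm{diag}(\underline{\bSigma}(\sigma_\ell),\underline{\bSigma}(\sigma_\ell)^\T)^{-1}\hbd_i^{(\ell)}\to\0$ as well. In particular the first and fourth entries, namely $\langle\bv_\ell,\hbv_i\rangle$ and $\langle\bu_\ell,\hbu_i\rangle$, tend to $0$.

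The one subtlety I would be careful about is making the passage ``$\bMc_{\gamma,\beta}(y_i)(\m{S}\hbf_i)\simeq\0$'' rigorous: it is not literally $\hbM(\hsigma_i)\hbf_i=0$ at the limiting argument $y_i$, but at the random argument $\hsigma_i$. Here one writes $\hbM(\hsigma_i)\hbf_i=0$ exactly, and then $\|\bMc_{\gamma,\beta}(y_i)\hbf_i'\|\le \|\bMc_{\gamma,\beta}(y_i)-\hbM(\hsigma_i)\|\,\|\hbf_i'\|$ where $\hbf_i'=\m{S}\hbf_i$; the operator-norm difference goes to $0$ a.s. by the uniform-on-compacts version of Lemma~\ref{lem:Mhat-Lim} combined with $\hsigma_i\to y_i$ (a bounded, compactly-contained sequence), and $\|\hbf_i'\|$ is bounded. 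This is exactly the argument sketched after Lemma~\ref{lem:Mhat-Lim} for the convergence of the root sets, so no new ideas are needed — the main (mild) obstacle is just bookkeeping the block decomposition and invoking invertibility of the off-diagonal blocks at the correct (limiting) spectral parameter. I expect this to be the technically delicate but conceptually routine heart of the argument.
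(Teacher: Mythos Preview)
Your proposal is correct and follows essentially the same approach as the paper: use the block-diagonal structure \eqref{eq:Values:Permuatation}, invoke Proposition~\ref{prop:PosEigFunc-properties} together with the distinctness of the $\sigma_\ell$ to conclude that the $\ell$-th block $\bKc_{\gamma,\beta}(y_i)-\sigma_\ell^{-1}\Hbb$ is invertible for $\ell\ne i$, and deduce $\hbd_i^{(\ell)}\to\0$ hence $\hbc_i^{(\ell)}\to\0$. Your write-up is in fact more careful than the paper's about the boundedness of $\hbf_i$ and the passage from the random argument $\hsigma_i$ to the limiting $y_i$, but these are exactly the bookkeeping details the paper leaves implicit.
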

\begin{proof}
    By the preceding discussion, $\hbd^{(1)}_i\oplus\ldots\oplus \hbd^{(r)}_i$ lies in the kernel of $\m{S}\hbM(\hsigma_i)\m{S}^\T$. By Lemma~\ref{lem:Mhat-Lim} and \eqref{eq:Values:Permuatation}, this matrix converges to $\bigoplus_{\ell=1}^r (\bKc_{\gamma,\beta}(y_i) - \sigma_\ell^{-1}\Hbb)$ where $y_i=\SpikeFunc(\sigma_i)$. By Proposition~\ref{prop:PosEigFunc-properties}, and the assumption that the signal spikes are distinct, the matrices $(\bKc_{\gamma,\beta}(y_i) - \sigma_\ell^{-1}\Hbb)$ are invertible for all $\ell\ne i$. Therefore $\hbd^{(\ell)}_i\to \0$ for $\ell\ne i$, and so does $\hbc^{(\ell)}_i\to \0$.
\end{proof}

It remains to study $\hbc^{(i)}_i$, which describes the correlation between the $i$-th signal and observed PCs. As mentioned above, the vector $\hbd^{(i)}_i$ lies asymptotically in the kernel of $(\bKc_{\gamma,\beta}(y_i) - \sigma_i^{-1}\Hbb)$. This matrix is not invertible, and moreover by Proposition~\ref{prop:PosEigFunc-properties} its kernel is $1$ dimensional. And so, it remains to find one additional equation satisfied by this vector.
 By \cite[Lemma 5.1]{benaych2012singular}, the following holds:
\begin{align}
	1 
	&= 
	\hsigma_i^2 \cdot \hbv_i^\T (\bA\bSigma\bB^\T)^\T \left(\hsigma_i^2\bI - \bQc\bZ\bZ^\T \bQc  \right)^{-2}   (\bA\bSigma\bB^\T) \hbv_i \nonumber \\
	&+ \hbu_i^\T (\bA\bSigma\bB^\T) (\bQc\bZ)^\T \left(\hsigma_i^2\bI - \bQc\bZ\bZ^\T \bQc  \right)^{-2} (\bQc\bZ)(\bA\bSigma\bB^\T)^\T \hbu_i \nonumber \\
	&+ 2\hsigma_i \cdot \hbv_i^\T (\bA\bSigma\bB^\T)^\T \left(\hsigma_i^2\bI - \bQc\bZ\bZ^\T \bQc  \right)^{-2} (\bQc\bZ)(\bA\bSigma\bB^\T)^\T\hbu_i \label{eq:Vecs:Aux2}\,.
\end{align}
Define the matrix 
\begin{equation}\label{eq:T-hat-Mat}
	\hbT(y) = \MatL
		y^2 \cdot \bA^\T \left(y^2\bI - \bQc\bZ\bZ^\T \bQc  \right)^{-2} \bA  
		& y \cdot \bA^\T  \left(y^2\bI - \bQc\bZ\bZ^\T \bQc  \right)^{-2} (\bQc\bZ) \bB \\
		y \cdot \bB^\T (\bQc\bZ)^\T \left(y^2\bI - \bQc\bZ\bZ^\T \bQc  \right)^{-2} \bA  
		& \bB^T (\bQc\bZ)^\T \left(y^2\bI - \bQc\bZ\bZ^\T \bQc  \right)^{-2} (\bQc\bZ) \bB 
	\MatR\,,
\end{equation}
so that \eqref{eq:Vecs:Aux2} can be rewritten as $\hbf_i^\T \hbT(\hsigma_i)\hbf = 1$, where $\hbf_i$ was defined in \eqref{eq:Vecs:Aux1}.

The following limit holds.
\begin{lemma}
\label{lem:T-lim}
    Fix any $i$ such that $\sigma_i>\BBP$. Then $\m{S}\hbT(\hsigma_i)\m{S}^\T \to \left(\bTc_{\gamma,\beta}(y_i)\right)^{\oplus r}$ where $y_i=\SpikeFunc(\sigma_i)$, $\m{S}$ is the permutation from \eqref{eq:Values:Permuatation} and $\bTc_{\gamma,\beta}(\cdot)$ is defined in \eqref{eq:T-func}.
\end{lemma}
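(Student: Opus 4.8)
The plan is to recognize that $\hbT(y)$ is, up to an explicit differentiation in $y$, the same collection of resolvent quadratic forms that already appears in $\hbM(y)$, whose limit was computed in Lemma~\ref{lem:Mhat-Lim}. Let $\bm{M}_0=\bQc\bZ\bZ^\T\bQc$ and let
\[
\bPhi(y):=\hbM(y)+\MatL \0 & (\bSigma^{-1})^\T \\ \bSigma^{-1} &\0 \MatR
\]
denote the ``resolvent part'' of $\hbM(y)$. Using $\tfrac{d}{dy}(y^2\bI-\bm{M}_0)^{-1}=-2y(y^2\bI-\bm{M}_0)^{-2}$, together with the identity $\bm{C}^\T(y^2\bI-\bm{C}\bm{C}^\T)^{-2}\bm{C}=y^2(y^2\bI-\bm{C}^\T\bm{C})^{-2}-(y^2\bI-\bm{C}^\T\bm{C})^{-1}$ (valid on all of $\RR^m$, checked via the singular value decomposition of $\bm{C}$) applied with $\bm{C}=\bQc\bZ$, one verifies block by block that $\hbT(y)=L[\bPhi](y)$, where $L$ acts on the three natural sub-blocks by $\phi\mapsto -\tfrac12 y(\tfrac1y\phi)'$ on the $\bA^\T(\cdot)\bA$ block, $\phi\mapsto-\tfrac12\phi'$ on the $\bA^\T(\cdot)\bB$ and $\bB^\T(\cdot)\bA$ blocks, and $\phi\mapsto -\tfrac1{2y}(y\phi)'$ on the $\bB^\T(\cdot)\bB$ block. (These are one-line computations, e.g. $-\tfrac12 y\bigl(\tfrac1y\cdot y\bA^\T(y^2\bI-\bm{M}_0)^{-1}\bA\bigr)'=y^2\bA^\T(y^2\bI-\bm{M}_0)^{-2}\bA$, and for the $\bB$-$\bB$ block the SVD identity is used.)

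Next I would upgrade the pointwise a.s.\ convergence $\hbM(y)\simeq\bMc_{\gamma,\beta}(y)$ of Lemma~\ref{lem:Mhat-Lim} to \emph{local uniform} convergence together with convergence of the $y$-derivative; this is precisely the ``upgrade'' already recorded in the paragraph following that lemma. On the almost-sure event $\{\sigma_1(\bQc\bZ)\to\sqrt{\NoiseEdge_{\gamma,\beta}^+}\}$ (Theorem~\ref{thm:QZ:largest}), the maps $y\mapsto\hbM(y)$ extend holomorphically to a fixed complex neighborhood of $(\sqrt{\NoiseEdge_{\gamma,\beta}^+},\infty)$ and are locally uniformly bounded there; since they converge pointwise to the holomorphic limit $\bMc_{\gamma,\beta}(\cdot)$, Vitali's theorem gives local uniform convergence, hence local uniform convergence of all derivatives. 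The same therefore holds for $\bPhi(y)$, which differs from $\hbM(y)$ by a matrix independent of $y$. Since $\hsigma_i\to y_i$ by Theorem~\ref{thm:2:Outliers}, and $L$ is first order, it follows that $\hbT(\hsigma_i)=L[\bPhi](\hsigma_i)\to L[\bPhi_\infty](y_i)$, where $\bPhi_\infty:=\lim \bPhi$.

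It remains to evaluate $L[\bPhi_\infty]$ and recognize it after the permutation $\m{S}$. Splitting $\bSigma^{-1}=\bm{\Psi}+\bI^{-}_3\otimes\bLambda^{-1}$ as in \eqref{eq:Sigma-Inv}, the $\bHc\otimes\bLambda^{-1}$ contributions cancel and
\[
\bPhi_\infty(y)=\bKc_{\gamma,\beta}(y)\otimes\bI_{r\times r}+\MatL \0&\bm{\Psi}^\T\\\bm{\Psi}&\0\MatR ,
\]
which is of Kronecker form $(\text{$6\times6$})\otimes\bI_{r\times r}$; hence $\m{S}\bPhi_\infty(y)\m{S}^\T$ is block diagonal with all $r$ blocks equal, and so is $\m{S}\,L[\bPhi_\infty](y)\,\m{S}^\T$. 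The constant matrix $\MatL\0&\bm{\Psi}^\T\\\bm{\Psi}&\0\MatR$, as well as the constant entries $-(1-\gamma\beta)$ of $\bKc_{\gamma,\beta}$, lie in the cross ($\bA$-$\bB$) blocks, where $L$ acts by $\phi\mapsto-\tfrac12\phi'$ and therefore annihilates them. On the surviving entries, applying $\phi\mapsto-\tfrac12 y(\tfrac1y\phi)'$, $\phi\mapsto-\tfrac12\phi'$, $\phi\mapsto-\tfrac1{2y}(y\phi)'$ to $\kappa_{\gamma,\beta}^{(i)}$ for $i$ in the ranges $1\le i\le3$, $4\le i\le5$, $6\le i\le9$ respectively reproduces, by the very definition of $\tau_{\gamma,\beta}^{(i)}$, the corresponding entries of $\bTc_{\gamma,\beta}$. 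Matching positions gives $L[\bPhi_\infty](y)=\bTc_{\gamma,\beta}(y)\otimes\bI_{r\times r}$, i.e. $\m{S}\,L[\bPhi_\infty](y)\,\m{S}^\T=\bigl(\bTc_{\gamma,\beta}(y)\bigr)^{\oplus r}$, which together with the previous paragraph yields the lemma.

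The only non-mechanical ingredient is the analytic upgrade in the second paragraph --- passing derivatives through the limit of a random equation --- but this is entirely standard and is already flagged as available after Lemma~\ref{lem:Mhat-Lim}; the rest is bookkeeping of which quadratic form sits in which block, plus the three one-line resolvent identities.
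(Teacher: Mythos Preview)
Your proposal is correct and follows essentially the same approach as the paper: both express each block of $\hbT(y)$ as a first-order differential operator in $y$ applied to the corresponding resolvent block of $\hbM(y)$ (namely $-\tfrac12 y(\tfrac1y\cdot)'$, $-\tfrac12(\cdot)'$, $-\tfrac1{2y}(y\cdot)'$ on the $\bA$--$\bA$, cross, and $\bB$--$\bB$ blocks respectively), invoke the locally-uniform analytic convergence already noted after Lemma~\ref{lem:Mhat-Lim} to pass the derivative through the limit, and then read off $\bTc_{\gamma,\beta}$ from the definition of the $\tau^{(i)}$'s. Your packaging via $\bPhi=\hbM+\bigl[\begin{smallmatrix}0&(\bSigma^{-1})^\T\\\bSigma^{-1}&0\end{smallmatrix}\bigr]$ and the single operator $L$ is slightly more streamlined than the paper's block-by-block treatment, but the content is identical.
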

The proof of Lemma~\ref{lem:T-lim} appears in Appendix, Section~\ref{sec:proof-lem:T-lim}.
We are ready to conclude the calculation of the limiting singular vector angles:
\begin{proof}
    (Of Theorem~\ref{thm:3:Angles}.)
    The first part of the theorem follows from Lemma~\ref{lem:Decoupling}. As for the second part, by Lemmas~\ref{lem:Mhat-Lim} and \ref{lem:T-lim}, 
    \[
        (\bKc_{\gamma,\beta}(y_i) - \sigma_i^{-1}\Hbb) \hbd^{(i)}_i \simeq \0, \qquad \langle \hbd^{(i)}_i, \bTc(y_i)\hbd^{(i)}_i\rangle \simeq 1 ,
    \]
    and thus any limit point (as $n,p,d\to\infty$) of $\hbd^{(i)}_i$ satisfies the above with equality. Note that the above system represent the intersection of a line and an ellipsoid, and therefore there are two solutions, which are antipodal points. Since $\hbd^{(i)}_i$ is clearly bounded a.s., every subsequence must converge to one of those two limit points; consequently, since they are antipodal, the absolute values of the entries, $|\hbd^{(i)}_i|$, converge.    
	Recall that by definition,
    \[
        |(\hbd^{(i)}_i)_1| = \sigma_i |\langle \bv_i,\hbv_i\rangle|,\qquad |(\hbd^{(i)}_i)_4| = \sigma_i |\langle \bu_i,\hbu_i\rangle| .
    \]
    
    To conclude, it remains to show that $\langle \bv_i,\hbv_i\rangle \langle \bu_i,\hbu_i\rangle$ is necessarily non-negative. Multiplying $\bY$ by $\hbu_i$ from the left and $\hbv_i$ from the right, and using that $\langle \bv_j,\hbv_i\rangle \langle \bu_j,\hbu_i\rangle\simeq 0$ for all $j\ne i$,
    \begin{align*}
    	\sigma_i(\bY) = \bu_i^\T\bY\hbv_i \simeq \sigma_i \langle \bv_i,\hbv_i\rangle \langle \bu_i,\hbu_i\rangle + \bu_i^\T\bZ\hbv_i \le  \sigma_i \langle \bv_i,\hbv_i\rangle \langle \bu_i,\hbu_i\rangle + \sigma_1(\bZ) \,.
    \end{align*}
	Since $\sigma_i>\BBP$, we have $\sigma_i(\bY)-\sigma_1(\bZ)\ge 0$ asymptotically a.s.; consequently, $\langle \bv_i,\hbv_i\rangle \langle \bu_i,\hbu_i\rangle\ge 0$.
\end{proof}

Finally, it remains prove Proposition~\ref{prop:vanishing-corrs}, namely to show that for ``barely detectable'' spikes, the correlation between the signal and observed singular vectors vanishes. This is done in Appendix, Section~\ref{sec:proof-prop:vanishing-corrs}.

%\section{Conclusion}
%
%\TODO{Not sure if there is actually need?}
%
%\TODO{Main take-home message: there IS a price to pay for using the fast randomized SVD instead of the full SVD. In statistical applications, especially when the SNR is low: should be very careful.
%
%Also: the spiked model is a natural framework for evaluating and comparing between different fast SVD algorithms. Offers very precise benchmarks to look at: the detection threshold, the PC angles.}

\subsection*{Acknowledgements}

I am grateful to Matan Gavish for introducing me to this research direction; his advice and encouragements were pivotal towards the completion of this manuscript. I warmly thank David Donoho for stimulating discussions on this work and for his thoughtful advice; and to Or Ordentlich for valuable suggestions.
    
    \bibliographystyle{alpha}
    \bibliography{refs}

    \appendix

    \section{Proof of Proposition~\ref{prop:PosEigFunc-properties}}
    \label{sec:proof-prop:PosEigFunc-properties}

    First, let us compute the roots $s$ of $\det(\bKc_{\gamma,\beta}(y)-s\bHc)=0$.  Define the following functions:
\begin{align}
    \delta(z) &= \sqrt{\gamma} z - 1 - \gamma,\\
    D(z) &= \delta(z) - \sqrt{ \delta(z)^2 - 4\beta\gamma(1+\gamma-\beta\gamma) }\,,
\end{align}
and the quadratic polynomial,
\begin{align}\label{eq:aux-P(q,A)}
    P(q;A) = 2\sqrt{\gamma}(1+\gamma-\beta\gamma)q^2 
    + (1+\gamma-\beta\gamma) \left( 2\beta\gamma-A \right)q
    - \sqrt{\gamma}A .
\end{align}
where $A$ is a free parameter. 
It is straightforward to verify that when $z>\NoiseEdge^+_{\gamma,\beta}$, the functions $\delta(z),D(z)$ are positive, with $\delta(z)$ increasing and $D(z)$ decreasing.

With the aid of a computer algebra system, one can show:
\begin{align}
    \det(\bKc_{\gamma,\beta}(y)-s\bHc) = - \frac{s^2+\beta\sqrt{\gamma}}{2\sqrt{\gamma}(1+\gamma-\beta\gamma)}P(s^2;D(y^2)).
\end{align}
We immediate identify two imaginary roots $s = \pm \beta^{1/2}\gamma^{1/4} \iu$; the real roots (if there are any) are solutions of $P(s^2;D(y^2))=0$. 

For brevity, denote $a\equiv 2\sqrt{\gamma}(1+\gamma-\beta\gamma)$, $b\equiv (1+\gamma-\beta\gamma) \left( 2\beta\gamma-A \right)$, $c\equiv - \sqrt{\gamma}A$, where we set $q=s^2,A=D(y^2)$. The roots of the quadratic $P(q;A)=0$ are 
\[
q_{\pm} = \frac{-b \pm \sqrt{b^2 - 4ac}}{2a}.    
\]
Since $a>0,c<0$, the root $q_-$ is negative (hence the corresponding roots $s=\pm \iu\sqrt{-q_-}$ are pure imaginary) and $q_+$ is positive, yielding two real roots $s=\pm \sqrt{q_+}$. In particular, $\PosEigFunc(y)=\sqrt{q_+}$. 

Lastly, it remains to show that $\PosEigFunc(y)$ is decreasing in $y$. Since $D(y^2)$ is decreasing in $y$, it suffices to show that $q_+$ is increasing in the variable $A$. For convenience, denote $\theta_1 = 2\beta\gamma$, $\theta_2=\frac{8\gamma}{1+\gamma-\beta\gamma}$ and perform a change of variables $B=A-\theta_1$, so that 
\begin{align*}
    q_+ = \frac{1+\gamma-\beta\gamma}{2a} \left( B + \sqrt{B^2 + \theta_2 B + \theta_1\theta_2} \right).
\end{align*}
Consequently, $q_+$ is increasing in $B$ if and only if $\frac{d}{dB}\sqrt{B^2 + \theta_2 B + \theta_1\theta_2} = \frac{B + \frac12\theta_2}{\sqrt{B^2 + \theta_2 B + \theta_1\theta_2}}> -1$. Since the denominator is non-negative, 
the only way this could be violated is if the numerator is negative and has magnitude $(B + \frac12\theta_2)^2 \ge B^2 + \theta_2 B + \theta_1\theta_2$. 
That is, for $q_+$ to be \emph{non-increasing} at a point  $B$, we need both
\begin{align*}
    B < -\frac12\theta_2 \qquad \textrm{and}\qquad \theta_1 \le \frac14 \theta_2 .
\end{align*}
Recall that $A=D(z)$ is decreasing in $z$, and so $A>D(\NoiseEdgeUpper)=0$ and $B=A-\theta_1>-\theta_1$. 
Thus, if a violating $B$ exists, then $-\theta_1 < -\frac12\theta_2$ equivalently $\theta_1\ge \frac12\theta_2$. But since $\theta_2>0$, this is inconsistent with the requirement $\theta_1\le \frac14\theta_2$. It follows that $q_+$ must be increasing in $B$.

We have established that $y\mapsto \PosEigFunc(y)$ is decreasing, hence invertible. Since $D(z)\to 0$ as $z\to\infty$, it is easy to verify that $\PosEigFunc(y)\to 0$ as $y\to\infty$. As for the behavior at $y=\sqrt{\NoiseEdge_{\gamma,\beta}^+}$, note that 
$D(\NoiseEdge^+_{\gamma,\beta})= 2\sqrt{\beta\gamma(1+\gamma-\beta\gamma)}$. A straightforward calculation gives the claimed expression \eqref{eq:BBP-squared}.

Finally, the formula for the inverse function $\PosEigFunc^{-1}(\cdot)$, \eqref{eq:PosEigFunc-inv}, can be obtained by straightforward calculation, e.g. starting from \eqref{eq:aux-P(q,A)}. We omit the details. 

% this holds if an only if the numerator is positive and larger than the denominator. That is, we need 
% \begin{align*}
%     B > -\frac12\theta_2 \qquad \textrm{and}\qquad \theta_1 < \frac14 \theta_2 .
% \end{align*}
% Recall that $A=D(z)$ is decreasing in $z$, and so $A>\inf_{z>\NoiseEdge_{\gamma,\beta}^+}D(z)=0$ and $B=A-\theta_1>-\theta_1$. Consequently, it suffices to just show $\theta_1<\frac14 \theta_2$. This holds, by definition, if and only if $\beta\gamma < \frac{\gamma}{1+\gamma-\beta\gamma}$, equivalently $\beta(1+\gamma-\beta\gamma)<1$. Recall that, by assumption, $\beta\gamma<1$ and also $\beta\le 1$.   

    \section{Proof of Lemma~\ref{lem:Projection}}
    \label{sec:proof-lem:Projection}

    Explicitly writing $\bYtilde=\bU\bLambda\bV_1^\T + \bZ_1$ and using the Sherman-Morrison-Woodbury formula,
    	\begin{align}
    		(\bYtilde^\T \bYtilde)^{-1} 
    		&= \left( \bZ_1^\T \bZ_1 + 
    		\begin{bmatrix}
    			\bV_1 & \bZ_1^\T \bU 
    		\end{bmatrix}  
    		\begin{bmatrix}
    			\bLambda^2 & \bLambda \\
    			\bLambda & \0
    		\end{bmatrix}
    		\begin{bmatrix}
    			\bV_1^\T  \\
    			\bU^\T \bZ_1  
    		\end{bmatrix}
    		\right)^{-1} \nonumber \\
    		&= (\bZ_1^\T \bZ_1)^{-1}
    		-
    		(\bZ_1^\T \bZ_1)^{-1}
    		\begin{bmatrix}
    			\bV_1 & \bZ_1^\T \bU 
    		\end{bmatrix}  
    		\tilde{\bPhi}
    		\begin{bmatrix}
    			\bV_1^\T  \\
    			\bU^\T \bZ_1  
    		\end{bmatrix}
    		(\bZ_1^\T \bZ_1)^{-1}
      \label{eq:Y-tilde-inv}
    	\end{align}
    	where
    	\begin{align*}
    		\tilde{\bPhi} 
    		&= 
    		\left(
    			\begin{bmatrix}
    				\0 & \bLambda^{-1} \\
    				\bLambda^{-1} & -\bI
    			\end{bmatrix}
    			+	
    			\begin{bmatrix}
    				\bV_1^\T  \\
    				\bU^\T \bZ_1  
    			\end{bmatrix}
    			(\bZ_1^\T \bZ_1)^{-1}
    			\begin{bmatrix}
    				\bV_1 & \bZ_1^\T \bU 
    			\end{bmatrix} 
    			\right)^{-1} \\
            &=
    			\begin{bmatrix}
    				\bV_1^\T (\bZ_1^\T\bZ_1)^{-1} \bV_1  & \bV_1^\T (\bZ_1^\T\bZ_1)^{-1} \bZ_1^\T \bU  + \bLambda^{-1} \\
    				\bU \bZ_1 (\bZ_1^\T\bZ_1)^{-1}\bV_1 \bLambda^{-1} & \bU^\T \bZ_1 (\bZ_1^\T\bZ_1)^{-1} \bZ_1^\T \bU_1  -\bI
    			\end{bmatrix}^{-1} \,.
    	\end{align*}
        The matrix $\bZ_1\in \RR^{n\times d}$ has i.i.d. centered Gaussian entries (with variance $1/\sqrt{nm})$, and so its distribution is invariant to multiplication by an orthogonal matrix on either side. Moreover, $\sqrt{\frac{m}{n}}\bZ_1^\T\bZ_1$ being a standard Wishart matrix in $d$ dimensions and $n$ degrees of freedom (where $d/n\simeq \beta\gamma<1$), it is well know that $\lambda_{d}(\bZ_1^\T \bZ_1) \overset{a.s.}{\longrightarrow} \gamma^{-1/2}(1-\sqrt{\beta\gamma})^2$, see \cite{bai2010spectral}, and in particular $\|(\bZ_1^\T\bZ_1)^{-1}\|$ is bounded. 
        Further recall that the columns of $\bU$ are orthonormal, and the columns of $\bV_1$ are asymptotically orthogonal with norm $\simeq \sqrt{\beta}$ (recall \eqref{eq:assum:incoherence}). By standard concentration inequalities for quadratic forms, e.g. the Hanson-Wright inequality (see for example \cite{vershynin2018high}), 
        \begin{align*}
            \bV_1^\T (\bZ_1^\T\bZ_1)^{-1} \bV_1 &\simeq \frac{\beta}{d}\tr (\bZ_1^\T\bZ_1)^{-1} \otimes \bI_{r\times r} \overset{(\star)}{\simeq} \frac{\beta\sqrt{\gamma}}{1-\beta\gamma} \otimes \bI_{r\times r},\\
            \bU^\T \bZ_1 (\bZ_1^\T\bZ_1)^{-1} \bZ_1^\T \bU_1 &\simeq \frac1n \tr \left(  \bZ_1 (\bZ_1^\T\bZ_1)^{-1} \bZ_1^\T\right) \otimes \bI_{r\times r} = \frac1n\tr\bI_{d\times d} \otimes \bI_{r\times r} \simeq \beta\gamma \otimes \bI_{r\times r} 
        \end{align*}
        and $\bV_1^\T (\bZ_1^\T\bZ_1)^{-1} \bZ_1^\T \bU \simeq \bm{0} $. Above, ($\star$) follows since for a Wishart matrix $\bW$ with dimension $d$ and $n$ d.o.f's, $d^{-1}\tr\bW^{-1}\simeq \frac{1}{1-d/n}$. 
        Thus, $\tilde{\bPhi}\simeq \bPhi$, where
        % Plugging these into the expression for $\bPhi$ from above yields
        \begin{align}
            \bPhi 
            =
			\begin{bmatrix}
				\frac{\beta\sqrt{\gamma}}{1-\beta\gamma}\bI & \bLambda^{-1} \\
				\bLambda^{-1} & -(1-\beta\gamma)\bI
			\end{bmatrix}^{-1}  
		= \MatL (1-\gamma\beta)g(\bLambda) & f(\bLambda) \\ f(\bLambda) & -\frac{\beta\sqrt{\gamma}}{1-\gamma\beta}g(\bLambda) \MatR,
        \label{eq:Phi-def}
        \end{align}
        where $f(\cdot),g(\cdot)$ are defined in \eqref{eq:f-g}. Recall also the notation $\bVbar_1 = \bZ_1(\bZ_1^\T\bZ_1)^{-1}\bV_1$ from \eqref{eq:Vbar-1-def}.
        
        We have computed  an expression for $(\bYtilde^\T \bYtilde)^{-1}$, \eqref{eq:Y-tilde-inv}; our goal is to calculate need $\bYtilde(\bYtilde^\T \bYtilde)^{-1}\bYtilde^\T$. Again using $\bYtilde=\bU\bLambda\bV_1^\T+\bZ_1$, we simplify the resulting expression term-by-term. First, 
    	\begin{equation}\label{eq:proj:aux-1}
    		\bZ_1(\bYtilde^\T \bYtilde)^{-1}\bZ_1^\T \simeq
    		\bQc 
    		-
    		\begin{bmatrix}
    			\bVbar_1 & \bQc \bU 
    		\end{bmatrix}  
    		\bPhi
    		\begin{bmatrix}
    			\bVbar_1^\T  \\
    			(\bQc \bU)^\T  
    		\end{bmatrix} .
    	\end{equation}
    	Next,
    	\begin{align}
    		\bZ_1(\bYtilde^\T\bYtilde)^{-1}\bV_1\bLambda\bU^\T 
    		&\simeq
    		\bVbar_1\bLambda\bU^\T 
    		-
    		\begin{bmatrix}
    			\bVbar_1 & \bQc \bU 
    		\end{bmatrix}  
    		\bPhi 
    		\begin{bmatrix}
    			\bV_1^\T (\bZ_1\bZ_1)^{-1} \bV_1 \\
    			(\bQc \bU)^\T \bV_1 
    		\end{bmatrix}
    		\bLambda \bU^\T \nonumber \\
    		&\simeq 
    		\bVbar_1\bLambda\bU^\T 
    		-
    		\begin{bmatrix}
    			\bVbar_1 & \bQc \bU 
    		\end{bmatrix}  
    		\bPhi 
    		\begin{bmatrix}
    			\frac{\beta\sqrt{\gamma}}{1-\beta\gamma}\bI\\
    			\0 
    		\end{bmatrix}
    		\bLambda \bU^\T \nonumber \\
    		&= \bVbar_1 \bLambda \left[ \bI - \frac{\beta\sqrt{\gamma}}{1-\beta\gamma}\bPhi_{11} \right] \bU^\T - \frac{\beta\sqrt{\gamma}}{1-\beta\gamma}\bQc\bU\bPhi_{21}\bLambda\bU^\T ,\label{eq:proj:aux-2}
    	\end{align}
    	where $\bPhi_{1,1}$ and $\bPhi_{2,1}$ are, respectively, the top and bottom left blocks of $\bPhi$ in \eqref{eq:Phi-def}. Similarly,
    	\begin{align}
    		(\bU\bLambda\bV_1^\T)(\bYtilde^\T\bYtilde)^{-1}\bZ_1^\T 
    		\simeq
    		\bU \bLambda \left[ \bI - \frac{\beta\sqrt{\gamma}}{1-\beta\gamma}\bPhi_{11} \right] \bVbar_1^\T - \frac{\beta\sqrt{\gamma}}{1-\beta\gamma}\bU\bPhi_{21}\bLambda(\bQc\bU)^\T .
      \label{eq:proj:aux-3}
    	\end{align}
    	Lastly,
    	\begin{align}
    		(\bU\bLambda\bV_1^\T)(\bYtilde^\T\bYtilde)^{-1}(\bU\bLambda\bV_1^\T)^\T 
    		\simeq \frac{\beta\sqrt{\gamma}}{1-\beta\gamma}\bU\bLambda^2\bU^\T - \left(\frac{\beta\sqrt{\gamma}}{1-\beta\gamma}\right)^2\bU\bLambda\bPhi_{11}\bLambda\bU^\T  \,.
      \label{eq:proj:aux-4}
    	\end{align}
    	Finally, we decompose $\bU=\bQc\bU + \bQc^\perp \bU$, and collect all the terms in \eqref{eq:proj:aux-1}-\eqref{eq:proj:aux-4} above. 
        Straightforward (if tedious) algebra yields the claimed expression \eqref{eq:Projector}.
        \qed
     % A very tedious (if straightforward) calculation yields 
    	% \begin{equation}\label{eq:Projector}
    	% 	\begin{split}
    	% 		\bPc 
    	% 		&= \bQc \\
    	% 		&-(1-\beta\gamma) \bVbar_1\bLambda^2(\bI + \beta\bLambda^2)^{-1}\bVbar_1^\T \\
    	% 		&+ \bVbar_1 \bLambda(\bI + \beta\bLambda^2)^{-1}(\bQc^\perp\bU)^\T
    	% 		+  (\bQc^\perp\bU)\bLambda(\bI + \beta\bLambda^2)^{-1}\bVbar_1^\T \\
    	% 		&+ (\bQc^\perp\bU) \left[ \frac{\beta}{1-\beta\gamma}\bLambda^2(1+\beta\bLambda^2)^{-1} \right](\bQc^\perp\bU)^\T \,.
    	% 	\end{split}
    	% \end{equation}

    \section{Proof of Lemma~\ref{lem:PY:2}}
    \label{sec:proof-lem:PY:2}

    We would like to derive an approximate formula for $\bPc\bY=\bPc\bU\bLambda\bV^\T + \bPc\bZ$, where $\bPc$ is approximated by \eqref{eq:Projector}.

    We start with $\bPc\bU\bLambda\bV^\T$. 
	Note that $\bU^\T (\bQc^\perp \bU) \simeq \frac{n-d}{n}\bI_{r\times r}\simeq (1-\beta\gamma)\bI_{r\times r}$ and $\bVbar_1^\T \bU\simeq \bm{0}$. Thus,
	\begin{equation}
		\begin{split}
			\bPc \bU\bLambda\bV^\T 
			&\simeq 
			(\bQc\bU)\bLambda\bV^\T + \beta\sqrt{\gamma} (\bQc^\perp\bU) g(\bLambda)\bLambda\bV^\T  + (1-\beta\gamma)\bVbar_1 g(\bLambda)\bV^\T \\
			&= \bU\bLambda\bV^\T - 
			(\bQc^\perp\bU) f(\bLambda)\bV^\T  + (1-\beta\gamma)\bVbar_1 g(\bLambda)\bV^\T \,,
		\end{split}
	\end{equation}
	where the second equality is obtained by adding and substracting $(\bQc\bU)\bLambda\bV^\T$. 
	% \begin{equation}
	% 	\begin{split}
	% 		\bPc(\bU\bLambda\bV^\T) 
	% 		&\simeq (\bQc\bU)\bLambda\bV^\T  
	% 		+ (1-\beta\gamma)\bVbar_1(\Lambda^2(\bI+\beta\bLambda^2)^{-1})\bV^\T \\
	% 		&+ \beta(\bQc^\perp\bU)(\bLambda^3(\bI+\beta\bLambda^2)^{-1})\bV^\T \\
	% 		&= \bU\bLambda\bV^\T  
	% 		+ (1-\beta\gamma)\bVbar_1(\Lambda^2(\bI+\beta\bLambda^2)^{-1})\bV^\T \\
	% 		&- (\bQc^\perp\bU)(\bLambda(\bI+\beta\bLambda^2)^{-1})\bV^\T \,.
	% 	\end{split}
	% \end{equation}
	% Splitting $\bZ=[\bZ_1, \bZ_2]$, $\bV=[\bV_1,\bV_2]$,
	% \begin{equation}
	% 	\begin{split}
	% 		\bPc\bZ 
	% 		&= \bQc \bZ  
	% 		- (1-\beta\gamma)\bVbar_1(\bLambda^2(\bI+\beta\bLambda^2)^{-1})\MatL \bV_1^\T & \bVbar_1^\T \bZ_2\MatR \\
	% 		&+ \bVbar_1 (\bLambda(\bI+\beta\bLambda^2)^{-1})\MatL\0& (\bQc^\perp\bU)^\T \bZ_2\MatR \\
	% 		&+ (\bQc^\perp \bU) (\bLambda(\bI+\beta\bLambda^2)^{-1})\MatL \bV_1^\T & \bVbar_1^\T \bZ_2\MatR \\
	% 		&+ \frac{\beta}{1-\beta\gamma} (\bQc^\perp \bU)(\bLambda^2(\bI+\beta\bLambda^2)^{-1}) \MatL \0 & (\bQc^\perp\bU)^\T \bZ_2\MatR \,.
	% 	\end{split}
	% \end{equation}

	We next compute $\bPc \bZ$. Again, starting from (\ref{eq:Projector}),
	\begin{equation}\label{eq:PZ}
		\begin{split}
			\bPc \bZ
			&\simeq \bQc\bZ 
			-(1-\beta\gamma) \bVbar_1g(\bLambda)(\bZ^\T \bVbar_1)^\T + \frac{\beta\sqrt{\gamma}}{1-\beta\gamma}(\bQc^\perp\bU) g(\bLambda)(\bZ^\T \bQc^\perp\bU)^\T  \\
			&+ \bVbar_1 f(\bLambda)(\bZ^\T \bQc^\perp\bU)^\T
			+  (\bQc^\perp\bU)f(\bLambda)(\bZ^\T \bVbar_1)^\T \,.
		\end{split}
	\end{equation}
	One may verify that 
	\begin{align*}
		\bZ^\T \bQc^\perp\bU = \MatL \0 \\ \bZ_2^\T \bQc^\perp \bU \MatR\,,
		\quad
		\bZ^\T \bVbar_1 = \MatL \bV_1 \\ \bZ_2^\T \bVbar_1 \MatR .
	\end{align*}
	(Recall that $\bQc^\perp$ projects onto the orthogonal complement of the columns space of $\bZ_1$.) Thus,
	\begin{equation}\label{eq:PY:1}
		\begin{split}
			\bPc\bY 
			&\simeq \bU\bLambda\bV^\T + \bQc\bZ  \\
			&+ (1-\beta\gamma) \bVbar_1g(\bLambda)\MatL \0\\\bV_2-\bZ_2^\T \bVbar_1 \MatR^\T 
			- 
			(\bQc^\perp\bU) f(\bLambda) \MatL \0\\\bV_2-\bZ_2^\T \bVbar_1 \MatR^\T \\
			&+ \frac{\beta\sqrt{\gamma}}{1-\beta\gamma}(\bQc^\perp\bU) g(\bLambda)\MatL \0 \\ \bZ_2^\T \bQc^\perp \bU \MatR^\T  +
			\bVbar_1 f(\bLambda)\MatL \0 \\ \bZ_2^\T \bQc^\perp \bU \MatR^\T .
		\end{split}
	\end{equation}
    Let $\bA,\bB,\bSigma$ be as in \eqref{eq:A-B-def}-\eqref{eq:bSigma-def}. Adding up \eqref{eq:PZ} and \eqref{eq:PY:1} readily yields \eqref{eq:PY:2}.
    \qed

    \section{Proof of Lemma~\ref{lem:Rho-i}}
    \label{sec:proof-lem:Rho-i}

    The following calculation is classical within random matrix theory \cite{bai2010spectral}. We provide the full details for the sake of accessibility and completeness.

    Define the empirical Stieltjes transform:
    \begin{equation}
        \hat{\rho}(z) = d^{-1}\tr\bR(z).
    \end{equation}
    Denote also
    \begin{equation}
        \hat{\zeta}(z) = d^{-1}\tr(\bR(z)\diag(\bmu)) .
    \end{equation}
    (To lighten the notation, we always omit the subscripts $\gamma,\beta$.)

    For any $1\le i \le d$, denote by $\bP_i \in \RR^{(d-1)\times d}$ the projection matrix onto the coordinate set $[d]\setminus \{i\}$. The resolvent $\bR(z)$ decomposes as follows into blocks, up to a coordinate permutation,
	\begin{equation}\label{eq:Resolvement:Aux:1}
		\bR(z) = \MatL  
			\mu_i+(\bX\bX^\T)_{i,i} -z & \be_i^\T (\bX\bX^\T)\bP_i^\T \\
			\bP_i (\bX\bX^\T)\be_i & \bP_i( \diag(\bmu) + \bX\bX^\T - z\bI)\bP_i^\T 
		\MatR^{-1} \,.
	\end{equation}
	Applying the block matrix inversion formula in \eqref{eq:Resolvement:Aux:1},
	\begin{align}\label{eq:Resolvement:Aux:2}
		 \bR(z)_{i,i}  = \left(\mu_i+ (\bX\bX^\T)_{i,i} -z - \be_i^\T (\bX\bX^\T)\bP_i^\T \left( \bP_i(  \diag(\bmu) + \bX\bX^\T - z\bI)\bP_i^\T\right)^{-1} \bP_i (\bX\bX^\T)\be_i \right)^{-1}.
	\end{align}
    
    Recall that $\bX\in \RR^{d\times (m-d)}$ has Gaussian i.i.d. entries $\m{N}(0,1/\sqrt{nm})$. Consequently, $(\bX\bX^T)_{i,i}$ is a (scaled) $\chi$-squared random variable with $m-d$ degrees of freedom, and so concentrates around its expectation $\frac{m-d}{\sqrt{nm}}\simeq \sqrt{\gamma}(1-\beta)$. Standard $\chi$-squared tail bounds allow one to further deduce that a.s. $\max_{1\le i \le d}|(\bX\bX^T)_{i,i} - \sqrt{\gamma}(1-\beta)|\to 0$ as $n,m,d\to\infty$. 

    Furthermore, $\RR^{m-d}\ni \bX^\T \be_i \sim \m{N}(0,(nm)^{-1/2}\bI)$ and is independent $\bP_i\bX_i$. Moreover, the operator norm
    \[
        \| \bX^\T \bP_i^\T \left( \bP_i(  \diag(\bmu) + \bX\bX^\T - z\bI)\bP_i^\T\right)^{-1} \bP_i\bX \|    
    \] 
    is asymptotically a.s. bounded by a constant; this is because $z>\NoiseEdgeUpper \equiv \lim_{n,m,d\to\infty}\lambda_1(\diag(\bmu)+\bX\bX^\T)$ (recall Theorem~\ref{thm:QZ:largest}). By standard concentration inequalities for quadratic forms (e.g. Hanson-Wright, see for example \cite{vershynin2018high}), 
    \begin{align}
        &\max_{1\le i \le d}\left| \be_i^\T (\bX\bX^\T)\bP_i^\T \left( \bP_i(  \diag(\bmu) + \bX\bX^\T - z\bI)\bP_i^\T\right)^{-1} \bP_i (\bX\bX^\T)\be_i 
        \right. \nonumber\\
        &\qquad\qquad \left.- \frac{1}{\sqrt{nm}}\tr\left( 
 \bX^\T \bP_i^\T \left( \bP_i(  \diag(\bmu) + \bX\bX^\T - z\bI)\bP_i^\T\right)^{-1} \bP_i\bX\right)\right| \longrightarrow 0.
 \label{eq:Resolvement:Aux:3}
    \end{align}
    The average trace in \eqref{eq:Resolvement:Aux:3} involves the matrix $\bP_i$, whose effect is simply the removal of one row. It should be intuitively clear that any single row should only have a negligible influence on the whole trace. One can show (see e.g. \cite[Lemma 20]{gavish2022matrix}) it may indeed be omitted,
        \begin{align}
        &\max_{1\le i \le d}\left|\frac{1}{\sqrt{nm}}\tr\left( 
 \bX^\T \bP_i^\T \left( \bP_i(  \diag(\bmu) + \bX\bX^\T - z\bI)\bP_i^\T\right)^{-1} \bP_i\bX\right) 
        \right. \nonumber\\
        &\qquad\qquad \left.- \frac{1}{\sqrt{nm}}\tr\left( 
 \bX^\T  \left(   \diag(\bmu) + \bX\bX^\T - z\bI\right)^{-1} \bX\right)\right| \longrightarrow 0.
 \label{eq:Resolvement:Aux:4}
    \end{align}
    The bottom term in \eqref{eq:Resolvement:Aux:4} can be written as 
\begin{equation}\label{eq:Resolvement:Aux:5}
    \frac{1}{\sqrt{nm}}\tr\left(  \bR(z) \bX\bX^\T \right) = \frac{d}{\sqrt{nm}}\left( 1  -\hat{\zeta}(z) + z\hat{\rho}(z) \right) \simeq \beta\sqrt{\gamma} \left( 1  -\hat{\zeta}(z) + z\hat{\rho}(z) \right).
\end{equation}
Combining with \eqref{eq:Resolvement:Aux:2}, we deduce that $\max_{1\le i \le d} \left|\bR(z)_{i,i} 
 -\hat{\rho}_i(z) \right|\to 0$, where
\begin{align}\label{eq:Resolvement:Aux:6}
      \hat{\rho}_i(z) = \left(\mu_i+ \sqrt{\gamma}(1-2\beta) -z + \beta\sqrt{\gamma}\left(  \hat{\zeta}(z) - z\hat{\rho}(z) \right) \right)^{-1}.
\end{align}
Recall that by Theorems~\ref{thm:QZ-LSD}-\ref{thm:QZ:largest}, we know that $\hat{\rho}(z)\to \rho(z)$. It therefore remains to find a limiting formula for $\hat{\zeta}(z)$, which we now do. The convergence statement above implies that $\max_{1\le i \le d}\left| \bR_(z)_{i,i}(\hat{\rho}_i(z))^{-1} - 1 \right| \to 0$. Note that we can freely take the inverse $z>\NoiseEdge_+$ and therefore asymptotically almost surely, all the eigenvalues of $\bR(z)$ are negative and bounded by a constant; consequently, similarly must be all its diagonal elements. Thus,
\begin{align}
    1 
    &\simeq d^{-1}\sum_{i=1}^d \bR(z)_{i,i}(\hat{\rho}_i(z))^{-1}  \nonumber \\
    &= d^{-1}\sum_{i=1}^d \bR(z)_{i,i} \left(\mu_i+ \sqrt{\gamma}(1-2\beta) -z + \beta\sqrt{\gamma}\left(  \hat{\zeta}(z) - z\hat{\rho}(z) \right) \right) \nonumber \\
    &= \hat{\zeta}(z) \left( 1 + \beta\sqrt{\gamma}\hat{\rho}(z) \right) + \left( \sqrt{\gamma}(1-2\beta) -z \right)\hat{\rho}(z) - \beta\sqrt{\gamma}z(\hat{\rho}(z))^2,\label{eq:Resolvement:Aux:7}
\end{align}
where we used $\hat{\rho}(z) = d^{-1}\sum_{i=1}^d\bR(z)_{i,i}$, $\hat{\zeta}(z)=d^{-1}\sum_{i=1}^d \mu_i \bR(z)_{i,i} $, which hold by definition. 
Solving Eq. \eqref{eq:Resolvement:Aux:7} for $\hat{\zeta}(z)$ gives:
\begin{equation}
    \hat{\zeta}(z) \simeq z\hat{\rho}(z) + \frac{1-\sqrt{\gamma}(1-2\beta)\hat{\rho}(z)}{1+\beta\sqrt{\gamma}\hat{\rho}(z)}.
\end{equation}
Plugging this expression into \eqref{eq:Resolvement:Aux:6} and using $\hat{\rho}(z)\simeq \rho(z)$ yields the desired Eq. \eqref{eq:lem:Rho-i:1}. 

Finally, to verify \eqref{eq:lem:Rho-i:2}, write using \eqref{eq:lem:Rho-i:1}
\begin{align*}
    \rho(z) \simeq d^{-1}\sum_{i=1}^d \bR(z)_{i,i} \simeq d^{-1} \sum_{i=1}^d \left( \mu_i - z + \frac{\sqrt{\gamma}(1-\beta)}{1+\beta\sqrt{\gamma}{\rho_{\gamma,\beta}}(z)} \right)^{-1} .
\end{align*}
Recalling that the counting measure of $\mu_1,\ldots,\mu_d$ converges weakly to a Marchenko-Pastur law with shape $d/n\simeq \beta\gamma$ and scale $\gamma^{-1/2}$, we deduce \eqref{eq:lem:Rho-i:2}.
Note that formula \eqref{eq:lem:Rho-i:3} is only  valid for arguments larger than the upper edge. To see that this is indeed the case, recall that $z>\NoiseEdge_{\gamma,\beta}$, so by Theorem~\ref{thm:QZ:largest} all the eigenvalues of the resolvent, and thereby the diagonal elements, are upper bounded by a negative constant. Accordingly, $\mu_i - z + \frac{\sqrt{\gamma}(1-\beta)}{1+\beta\sqrt{\gamma}{\rho_{\gamma,\beta}}(z)}$ is asymptotically a.s. negative for all $i$; this holds, in particular for $i=1$, wherein $\mu_i$ converges to the upper edge of the corresponding Marchenko-Pastur law. 
\qed

\section{Proof of Lemma~\ref{lem:Mhat-Lim}}
\label{sec:proof-lem:Mhat-Lim}

% We will work with the matrix $\widehat{\bM}(y)$ directly, and rearrange its columns according to $\m{S}$ only at the very end.

Since the calculation is somewhat long, we will devote a separate subsection to each one of the blocks of $\hbM(y)$. Denote
\begin{align}
    \hbM_{1,1}(y) &= y\cdot \bA^\T (y^2\bI_{n\times n}-\bQc\bZ\bZ^\T\bQc)^{-1}\bA, \label{eq:hbM-11}\\
    \hbM_{1,2}(y) &= \bA^\T(y^2\bI_{n\times n}-\bQc\bZ\bZ^\T\bQc)^{-1}\bQc\bZ \bB, \label{eq:hbM-12} \\
    \hbM_{2,2}(y) &= y\cdot \bB^\T(y^2\bI_{m\times m}-\bZ^\T \bQc\bQc\bZ)^{-1}\bB, \label{eq:hbM-22}
\end{align}
each block being a $3r$-by-$3r$ matrix, and $\bA,\bB$ are defined in \eqref{eq:A-B-def}. Thus, \eqref{eq:M-hat} reads
\begin{equation}\label{eq:lem:Mhat-Lim-preproof}
        \hbM(y) = \MatL 
			\hbM_{1,1}(y)& \hbM_{1,2}(y) \\ 
			\hbM_{1,2}(y)^\T  &   \hbM_{2,2}(y)
		\MatR
  -
  \MatL \0 & (\bSigma^{-1})^\T  \\
		\bSigma^{-1} &\0 \MatR .
\end{equation}

 \subsection{The top left block}

 Consider the block $\hbM_{1,1}(y)$ in \eqref{eq:hbM-11}. For brevity, denote 
 \begin{equation}
     \barR(z) = \left( \bQc\bZ\bZ^\T \bQc - z\bI \right)^{-1} ,
 \end{equation}
 so that $\hbM_{1,1}(y^2)=-y\cdot \bA^\T \barR(y^2)\bA$. 
 Furthermore, recall the change of basis \eqref{eq:QZ-basis-change} and the notation for the $d$-by-$d$ resolvent \eqref{eq:R(z)}. Note that upon this change of basis,
 \begin{equation}\label{eq:barR-basis}
     \bBc^\T \barR(z) \bBc = \MatL
        \bR(z) &\0 \\
        \0 & -z^{-1}\bI_{(n-d)\times (n-d)}
     \MatR .
 \end{equation}

 We proceed with the computation. First,
 \begin{align}
     -y\bU^\T \barR(y^2)\bU 
     &\overset{(\star)}{\simeq} -y\cdot  n^{-1}\tr(\barR(y^2))\otimes \bI_{r\times r}\nonumber \\
     &= y\cdot \frac{d}{n}\left( -d^{-1}\tr\bR(y^2) +\frac1{y^2}\frac{n-d}{d}\right) \otimes \bI_{r\times r} \nonumber \\
     &\overset{(\star\star)}{\simeq} \left( (1-\beta\gamma)\frac1y - (\beta\gamma) y \rho_{\gamma,\beta}(y^2) \right)\otimes \bI_{r\times r} \equiv \kappa_{\gamma,\beta}^{(1)}(y) \otimes \bI_{r\times r}.\label{eq:M11-1}
 \end{align}
 Above, $(\star)$ follows since $\barR(y^2)$ has an orthogonally-invariant distribution and $\bU^\T\bU=\bI_{r\times r}$, and $(\star\star)$ follows from Lemma~\ref{lem:Rho-i}. Straightforward calculations, starting from the formula \eqref{eq:rho}, yields the formula \eqref{eq:kappa-1} for $\kappa_{\gamma,\beta}^{(1)}$.

 Next, 
 \begin{equation}\label{eq:M11-2}
     -y\bU^\T \barR(y^2) \bQc^\perp \bU \overset{(\star)}{=} -y(\bQc^\perp \bU^\T) \barR(y^2) \bQc^\perp \bU \overset{(\star\star)}{=} (1-\beta\gamma)\frac1y \otimes \bI_{r\times r} \equiv \kappa_{\gamma,\beta}^{(2)}(y) \otimes \bI_{r\times r} ,
 \end{equation}
 where $(\star)$ follows since the subspaces $\range(\bQc),\range(\bQc)^\perp$ are preserved by $\barR(y^2)$, and $(\star\star)$ follows since $\barR(y^2)$ acts like $(1/y)\bI$ on $\range(\bQc)^\perp$, and since this subspace is random (with dimension $n-d$), $(\bQc\bU)^\T ((\bQc\bU)) \simeq \frac{n-d}{n}\bI_{r\times r} $.
 
Next, we have 
\begin{equation}\label{eq:M11-3}
    -y\bU^\T \barR(y^2)\bVbar_1 = -y\bU^\T \barR(y^2)\bZ_1(\bZ_1^\T\bZ_1)^{-1}\bV_1 \simeq \0.
\end{equation}
To see this, observe that the distribution of $\bZ_1$ is orthogonally invariant from the right, namely for every orthogonal $\bO$, $\bZ_1\overset{d}{=}\bZ_1\bO$. Replacing $\bZ_1$ by $\bZ_1\bO$, note that $\barR(y^2)$ does not change, and \eqref{eq:M11-3} becomes $-y\bU^\T \barR(y^2)\bZ_1(\bZ_1^\T\bZ_1)^{-1}\bO\bV_1$. Picking $\bO\sim \mathrm{Haar}(O(d))$, we deduce that this expression must be asymptotically vanishing. Similarly,
\begin{equation}\label{eq:M11-4}
    -y(\bQc^\perp \bU)^\T \barR(y^2)\bVbar_1 \simeq \0.
\end{equation}

The last term is 
\begin{align}
    -y\bVbar_1^\T \barR(y^2)\bVbar_1 
    &= -y \bV_1^\T (\bZ_1^\T\bZ_1)^{-1} \bZ_1^\T \barR(y^2)\bZ_1(\bZ_1^\T\bZ_1)^{-1}\bV_1 \nonumber \\
    &\simeq -y \beta d^{-1}\tr\left( (\bZ_1^\T\bZ_1)^{-1} \bZ_1^\T \barR(y^2)\bZ_1(\bZ_1^\T\bZ_1)^{-1} \right) \otimes \bI_{r\times r},\label{eq:M11-5-a}
\end{align}
where the second line follows from the orthogonal invariance of $\bZ$ and the assumption \eqref{eq:assum:incoherence}. We now need to calculate the trace on the r.h.s. of \eqref{eq:M11-5-a}. To this end, we apply the change of basis from \eqref{eq:QZ-basis-change}, 
\begin{align}
    d^{-1}\tr\left( (\bZ_1^\T\bZ_1)^{-1} \bZ_1^\T \barR(y^2)\bZ_1(\bZ_1^\T\bZ_1)^{-1} \right)
    &= d^{-1}\tr\left( \barR(y^2)\bZ_1(\bZ_1^\T\bZ_1)^{-2} \bZ_1^\T \right) \nonumber \\
    &= d^{-1}\tr\left( \bBc^\T\barR(y^2)\bBc\bBc^\T \bZ_1(\bZ_1^\T\bZ_1)^{-2} \bZ_1^\T \bBc^\T \right) \nonumber \\
    &= d^{-1} \tr \left( 
    \MatL
        \bR(y^2) &\0 \\
        \0 & -y^{-2}\bI_{(n-d)\times (n-d)}
     \MatR
    \MatL
        \diag(1/\bmu) &\0 \\
        \0 &\0
    \MatR
    \right) \nonumber \\
    &= d^{-1}\tr \left( \bR(y^2)\diag(1/\bmu) \right)
    \label{eq:M11-5-b}
\end{align}
    where we denote $1/\bmu=(1/\mu_1,\ldots,1/\mu_d)$. To evaluate \eqref{eq:M11-5-b}, we use Lemma~\ref{lem:Rho-i}:
\begin{align}
    d^{-1}\tr \left( \bR(y^2)\diag(1/\bmu) \right) 
    &= d^{-1}\sum_{i=1}^d \bR(y^2)_{i,i}\frac{1}{\mu_i} \simeq d^{-1}\sum_{i=1}^d \frac{1}{\mu_i(\mu_i-C(y^2))}
\end{align}
where we denote
\begin{equation}
    C(z) = z - \frac{\sqrt{\gamma}(1-\beta)}{1+\beta\sqrt{\gamma}{\rho_{\gamma,\beta}}(z)} .
\end{equation}
One can write $\frac{1}{\mu_i\left( \mu_i - C(y^2) \right)} = \frac{1}{C(z)}\left(\frac{1}{\mu_i-C(y^2)} - \frac{1}{\mu_i}\right)$. Moreover, recall that the empirical distribution of $\mu_1,\ldots,\mu_d$ converges weakly to a Marcheko-Pastur law with shape $\beta\gamma$ and scale $\gamma^{-1/2}$. Thus,
\begin{align}
    d^{-1}\sum_{i=1}^d \frac{1}{\mu_i(\mu_i-C(y^2))} 
    &= \frac{1}{C(y^2)}d^{-1}\sum_{i=1}^d \left(\frac{1}{\mu_i-C(y^2)} - \frac{1}{\mu_i}\right) \nonumber \\
    &\simeq \frac{1}{C(y^2)}\left( \MPStiel_{\beta\gamma,\gamma^{-1/2}}\left( C(y^2) \right) - \frac{\sqrt{\gamma}}{1-\beta\gamma} \right) \nonumber\\
    &= \frac{1}{C(y^2)}\left( \rho_{\beta,\gamma}(y^2) - \frac{\sqrt{\gamma}}{1-\beta\gamma} \right),\label{eq:M11-5-c}
\end{align}
where the last equality is due to \eqref{eq:lem:Rho-i:2}. Combining Eqs. \eqref{eq:M11-5-a}-\eqref{eq:M11-5-c}, we finally get:
\begin{equation}
    -y\bVbar_1^\T \barR(y^2)\bVbar_1 \simeq -\frac{\beta y}{C(y^2)}\left( \rho_{\beta,\gamma}(y^2) - \frac{\sqrt{\gamma}}{1-\beta\gamma}\right) \otimes \bI_{r\times r}\equiv \kappa_{\gamma,\beta}^{(3)}(y) \otimes \bI_{r\times r} .
\end{equation}
One can obtain the formula \eqref{eq:kappa-3} for $\kappa_{\gamma,\beta}^{(3)}(y)$ by straightforward (if tedious) calculation.

To summarize, we have computed the following asymptotic formula for the block $\hbM_{1,1}(y)$:
\begin{equation}\label{eq:M11-final}
    \hbM_{1,1}(y) \simeq
    \MatL 
    \kappa_{\gamma,\beta}^{(1)}(y) &\kappa_{\gamma,\beta}^{(2)}(y) &0 \\
    \kappa_{\gamma,\beta}^{(2)}(y) &\kappa_{\gamma,\beta}^{(2)}(y) &0 \\
    0 &0 &\kappa_{\gamma,\beta}^{(3)}(y)
    \MatR \otimes \bI_{r\times r} .
\end{equation}  

\subsection{The top right block}

We next consider the block $\hbM_{1,2}(y)$ given in \eqref{eq:hbM-12}. 

One can verify that the following terms vanish asymptotically:
\begin{align}
    &-\bU^\T \barR(y^2)\bQc\bZ \bV \simeq \0 \label{eq:M12-1} \\
    &-\bU^\T \barR(y^2)\bQc\bZ \bW_1 = -\bU^\T \barR(y^2)\bQc\bZ_2(\bV_2-\bZ_2^\T \bVbar_1)\simeq \0 \label{eq:M12-2}  \\
    &-\bU^\T \barR(y^2)\bQc\bZ \bW_2 = -\bU^\T \barR(y^2)\bQc\bZ_2(\bZ_2^\T\bQc^\perp\bU) \simeq \0 \label{eq:M12-3}  \\
    &-(\bQc^\perp\bU)^\T \barR(y^2)\bQc\bZ \bV \simeq \0 \label{eq:M12-4}  \\
    &-(\bQc^\perp\bU)^\T \barR(y^2)\bQc\bZ \bW_1 = -(\bQc^\perp\bU)^\T \barR(y^2)\bQc\bZ_2(\bV_2-\bZ_2^\T \bVbar_1) \simeq \0 \label{eq:M12-5}  \\
    &-(\bQc^\perp\bU)^\T \barR(y^2)\bQc\bZ \bW_2 = -(\bQc^\perp\bU)^\T\barR(y^2)\bQc\bZ_2(\bZ_2^\T\bQc^\perp\bU) \simeq \0 \label{eq:M12-6} .
\end{align}
To immediately see this:
\begin{itemize}
    \item \eqref{eq:M12-1} and \eqref{eq:M12-4} vanish due to the right orthogonal invariance of $\bZ$, which is never ``balanced out'', similarly to \eqref{eq:M11-3}.
    \item \eqref{eq:M12-3} and \eqref{eq:M12-6} vanish because $\bQc^\perp\bZ_2$ is independent of $\bQc\bZ$, hence $\bZ_2^\T\bQc^\perp\bU$ is an i.i.d. Gaussian matrix which is independent of all the other matrices.
    \item As for \eqref{eq:M12-2} and \eqref{eq:M12-5}, the least immediate terms are those involving $\bVbar_1$. Using its definition \eqref{eq:Vbar-1-def}, 
    \begin{equation*}
        \bU^\T \barR(y^2)\bQc\bZ_2(\bZ_2^\T \bVbar_1) = \bU^\T \barR(y^2)\bQc\bZ_2\bZ_2^\T\bZ_1(\bZ_1^\T\bZ_1)^{-1}\bV_1 \simeq \0
    \end{equation*}
    again due to the right orthogonal invariance of $\bZ_1$, similarly to \eqref{eq:M11-3}.
\end{itemize}

The remaining terms are those that involve $\bVbar_1$ on the left. 

First,
\begin{align}
    -\bVbar_1^\T \barR(y^2)\bQc\bZ \bV 
    &= -\bV_1^\T (\bZ_1^\T\bZ_1)^{-1} \bZ_1^\T \barR(y^2)\bQc\bZ \bV \nonumber \\
    &= -\bV_1^\T (\bZ_1^\T\bZ_1)^{-1} \bZ_1^\T \barR(y^2)\bQc\bZ_1 \bV_1 + \underbrace{\bV_1^\T (\bZ_1^\T\bZ_1)^{-1} \bZ_1^\T \barR(y^2)\bQc\bZ_2 \bV_2}_{\simeq \0\;\textrm{(Orth. Inv.)}} \nonumber  \\
    &\simeq -\beta d^{-1}\tr\left( (\bZ_1^\T\bZ_1)^{-1} \bZ_1^\T \barR(y^2)\bQc\bZ_1 \right) \otimes \bI_{r\times r} \nonumber \\
    &\overset{(\star)}{=} -\beta d^{-1}\tr \bR(y^2) \otimes \bI_{r\times r} \nonumber \\
    &\overset{(\star\star)}{\simeq} -\beta\rho_{\gamma,\beta}(y^2) \otimes \bI_{r\times r} \equiv \kappa_{\gamma,\beta}^{(4)}(y) \otimes \bI_{r\times r} .
    \label{eq:M12-7}
\end{align}
Above, $(\star)$ follows by the change of basis \eqref{eq:QZ-basis-change} and \eqref{eq:barR-basis}; $(\star\star)$ follows from Lemma~\ref{lem:Rho-i}.
One can easily verify the claimed formula \eqref{eq:kappa-4} for $\kappa_{\gamma,\beta}^{(4)}(y)$, starting from \eqref{eq:rho}.

Next, we consider
\begin{align}
    -\bVbar_1^\T \barR(y^2)\bQc\bZ \bW_1
    &= -\bVbar_1^\T \barR(y^2)\bQc\bZ_2 (\bV_2-\bZ_2^\T \bVbar_1) \nonumber \\
    &\overset{(\star)}{\simeq} \bV_1^\T \bZ_1^\T (\bZ_1^\T \bZ_1)^{-1}\barR(y^2)\bQc\bZ_2\bZ_2^\T \bZ_1(\bZ_1^\T \bZ_1)^{-1}\bV_1 \nonumber \\
    % &\overset{(\star\star)}{\simeq}
    &\simeq \beta d^{-1}\tr\left( \bZ_1^\T (\bZ_1^\T \bZ_1)^{-1}\barR(y^2)\bQc\bZ_2\bZ_2^\T \bZ_1(\bZ_1^\T \bZ_1)^{-1} \right) \otimes \bI_{r\times r} \label{eq:M12-8}
\end{align}
where $(\star)$ follows by dropping the asymptotically vanishing term involving $\bV_2$ (same reasoning as \eqref{eq:M12-1}-\eqref{eq:M12-6} from before).
Doing a change of basis, similar to \eqref{eq:M11-5-b},
\begin{align}
    \eqref{eq:M12-8} 
    &= \beta d^{-1}\tr\left( \bR(y^2)\bX\bX^\T \diag(1/\bmu) \right) \nonumber \\
    &= \beta d^{-1}\tr\left(  \diag(1/\bmu) \right) \label{eq:M12-9-1}\\
    &\quad- \beta d^{-1}\tr\left( \bR(y^2) \right) \label{eq:M12-9-2} \\
    &\quad+ \beta y^2 d^{-1}\tr\left( \bR(y^2) \diag(1/\bmu) \right) 
    \label{eq:M12-9-3}
\end{align}
where \eqref{eq:M12-9-1}-\eqref{eq:M12-9-3} follows by replacing $\bX\bX^\T$ with $(\bX\bX^\T+\diag(\bmu)-y^2\bI) - \diag(\bmu) + y^2\bI$. Observe that \eqref{eq:M12-9-3} appeared previously in \eqref{eq:M11-5-b} and evaluates to \eqref{eq:M11-5-c}. Thus, combining \eqref{eq:M12-8}-\eqref{eq:M12-9-3}, 
\begin{align}
    -\bVbar_1^\T \barR(y^2)\bQc\bZ \bW_1 
    &\simeq \left( 
 \frac{\beta\sqrt{\gamma}}{1-\beta\gamma} - \beta\rho_{\gamma,\beta}(y^2) + \frac{\beta y^2}{C(y^2)}\left( \rho_{\beta,\gamma}(y^2) - \frac{\sqrt{\gamma}}{1-\beta\gamma} \right)\right) \otimes \bI_{r\times r}\nonumber \\
 &= \bar{\kappa}^{(5)}_{\gamma,\beta}(y) \otimes \bI_{r\times r}
\end{align}
A straightforward calculations gives the following formula: 
\begin{equation}\label{eq:kappa-bar-5}
    \bar{\kappa}^{(5)}_{\gamma,\beta}(y) = \frac
    {(1-\beta) \left( -\sqrt{\gamma}y^2 + (1+\gamma-2\beta\gamma) + \sqrt{-\gamma\Delta_{\gamma,\beta}(y^2)} \right)}
    {2(1-\beta\gamma)y^2}
\end{equation}
We denote
\begin{equation}
    {\kappa}^{(5)}_{\gamma,\beta}(y) = \bar{\kappa}^{(5)}_{\gamma,\beta}(y) - \frac{\beta\sqrt{\gamma}}{1-\gamma\beta}
\end{equation}
which has the explicit formula \eqref{eq:kappa-5}.

Lastly, one can verify (similarly to \eqref{eq:M12-3} and \eqref{eq:M12-6}) that 
\begin{equation}
    -\bVbar_1^\T\bR(y^2)\bQc\bZ\bW_2 \simeq \0.
\end{equation}

To summarize, we have the following asymptotic formula for $\hbM_{1,2}(y^2)$:
\begin{equation}\label{eq:M12-final}
    \hbM_{1,2}(y) \simeq
    \MatL 
    0 &0 &0 \\
    0 &0 &0 \\
    \kappa_{\gamma,\beta}^{(4)}(y) &\kappa_{\gamma,\beta}^{(5)}(y) + \frac{\beta\sqrt{\gamma}}{1-\gamma\beta}  &0
    \MatR \otimes \bI_{r\times r}.
\end{equation} 

\subsection{The bottom right block}

It remains to compute the block $\hbM_{2,2}(y)$ given in \eqref{eq:hbM-22}. 

The first term to consider is 
\begin{align}
    y\bV^\T \left( y^2\bI - (\bQc\bZ)^\T (\bQc\bZ) \right)^{-1}\bV
    &\simeq \frac{y}{m} \tr \left( y^2\bI - (\bQc\bZ)^\T (\bQc\bZ) \right)^{-1} \otimes \bI_{r\times r} \nonumber \\
    &\overset{(\star)}{=} \frac{y}{m} \left[ \tr \left( y^2\bI -  (\bQc\bZ)(\bQc\bZ)^\T \right)^{-1} - (n-m)\frac{1}{y^2} \right] \nonumber \\
    &\overset{(\star\star)}{=} \frac{y}{m} \left[ -\tr \bR(y^2) + (n-d)\frac{1}{y^2} - (n-m)\frac{1}{y^2} \right] \nonumber \\
    &\simeq -y\beta \rho_{\gamma,\beta}(y^2) +  (1-\beta)\frac{1}{y}    
    \equiv \kappa_{\gamma,\beta}^{(6)}(y) .
    \label{eq:M22-1}
\end{align}
Above, $(\star)$ follows from the elementary fact that for every matrix $\bA$, the matrices $\bA\bA^\T$ and $\bA^\T \bA$ have the same non-zero eigenvalues; $(\star\star)$ follows from \eqref{eq:barR-basis}. The function $\kappa^{(6)}_{\gamma,\beta}(y)$ evaluates to  \eqref{eq:kappa-6}.

Next, one can verify that cross terms which involve $\bW_2$ are asymptotically vanishing (similarly to \eqref{eq:M12-3}, \eqref{eq:M12-6}):
\begin{align}
    y\bV^\T \left( y^2\bI - (\bQc\bZ)^\T (\bQc\bZ) \right)^{-1}\bW_2 &\simeq \0, \label{eq:M22-2}\\
    y\bW_1^\T \left( y^2\bI - (\bQc\bZ)^\T (\bQc\bZ) \right)^{-1}\bW_2 &\simeq \0. \label{eq:M22-3}
\end{align}

The remaining terms are somewhat more involved to evaluate. Note that $\bW_1,\bW_2$ are only supported on the their last $m-d$ coordinates (see \eqref{eq:W1-W2-def}). Accordingly, we will need a formula for the bottom $(m-d)$-by-$(m-d)$ block of the matrix $( y^2\bI - (\bQc\bZ)^\T (\bQc\bZ) )^{-1}$. This block shall henceforth be denoted $\downD(y)$. Writing in block form,
\begin{equation}\label{eq:D-block-form}
    \left( y^2\bI - (\bQc\bZ)^\T (\bQc\bZ) \right)^{-1} = \MatL 
 y^2\bI - \bZ_1^\T \bZ_1 & -\bZ_1^\T (\bQc \bZ_2) \\
	-(\bQc\bZ_2)^\T \bZ_1 & y^2\bI - (\bQc\bZ_2)^\T \bQc \bZ_2
    \MatR^{-1} .
\end{equation}
(Recall that $\bQc\bZ_1$, since $\bQc$ is the projection onto the column space of $\bZ_1$.) By the block matrix inversion formula,
\begin{align}
    \downD(y) 
    &= \left( y^2\bI - (\bQc\bZ_2)^\T(\bQc \bZ_2)  -(\bQc\bZ_2)^\T \bZ_1(y^2\bI-\bZ_1^\T\bZ_1)^{-1}\bZ_1^\T (\bQc\bZ_2) \right)^{-1} \nonumber \\
    &= \left( y^2\bI - (\bQc\bZ_2)^\T \left[  \bI  +  \bZ_1(y^2\bI-\bZ_1^\T\bZ_1)^{-1}\bZ_1^\T \right] (\bQc\bZ_2) \right)^{-1}  . \label{eq:M22-4}
\end{align}
Next, we use the change of basis \eqref{eq:QZ-basis-change}, namely write $\bZ_1\bZ_1^\T = \bBc_1\diag(\bmu)\bBc_1^\T$ and $\bX=\bBc_1^\T \bZ_2$ in \eqref{eq:M22-4}. This gives 
\begin{equation}\label{eq:downD}
    \downD(y) = \frac{1}{y^2}\left( \bI - \bX^\T \diag(\bnu^2) \bX  \right)^{-1},\qquad\textrm{where}\quad\bnu = \sqrt{\frac{1}{y^2-\bmu}} .
\end{equation}
We will also need the $d$-by-$d$ matrix
\begin{equation}\label{eq:barD}
    \barD(y) = \frac1{y^2} \left( \bI - \diag(\bnu) \bX \bX^\T \diag(\bnu)  \right)^{-1} .
\end{equation}
The following lemma will play a similar role as Lemma~\ref{lem:Rho-i} did in the preceding calculation.

\begin{lemma}
\label{lem:D-i}
    Define 
    \begin{align}
        \delta(y) 
        &= \frac
     {\sqrt{\gamma}(1-\beta)y^2 - (1+\gamma+2\beta^2\gamma-3\beta-3\beta\gamma) - (1-\beta)\sqrt{-\gamma \Delta_{\gamma,\beta}(y^2)}}
     {2\beta(1+\gamma-\beta\gamma)y^2} \\
     \underline{\delta}(y) 
     &= \frac
     {\sqrt{\gamma}y^2 + \left(1+\gamma -2\beta\gamma\right) - \sqrt{-\gamma\Delta_{\gamma,\beta}(y^2)}}
     {2(1+\gamma-\beta\gamma)y^2}
    \end{align}
    and for every $1\le i \le d$,
    \begin{equation}\label{eq:lem:D-i-3}
        \hat{\delta}_i(y) = \frac{1}{y^2} \left( 1 - \frac{ B(y) }{\mu_i - \left[ y^2 - B(y) \right] } \right),\qquad \textrm{where}\quad B(y) = \sqrt{\gamma}\left( 1-2\beta + \beta y^2{{\delta}}(y) \right).
    \end{equation}
    Then
    \begin{equation}
    \begin{split}
        d^{-1}\tr\barD(y) \longrightarrow& \delta(y),\quad(m-d)^{-1}\tr\downD(y)\longrightarrow\underline{\delta}(y),\quad \max_{1\le i \le d} |\barD(y)_{i,i}-\hat{\delta}_i(y)| \longrightarrow 0 .     
    \end{split}
    \end{equation}
    Furthermore, $\delta(y)$ satisfies 
    \begin{equation}\label{eq:lem:D-i-5}
        y^2 {{\delta}}(y) = 1 - B(y) \MPStiel_{\gamma\beta,\gamma^{-1/2}}\left( y^2 - B(y) \right).
    \end{equation}
\end{lemma}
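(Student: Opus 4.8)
\textbf{Proof plan for Lemma~\ref{lem:D-i}.} The argument parallels that of Lemma~\ref{lem:Rho-i}: apply the leave-one-out (Schur complement) method to the diagonal of the resolvent-type matrix $\barD(y)$, extract a self-consistent equation for its normalized trace, and then solve the resulting algebraic equation. Throughout, write $\bM=\diag(\bnu)\bX\bX^\T\diag(\bnu)\in\RR^{d\times d}$, so that $\barD(y)=\tfrac1{y^2}(\bI-\bM)^{-1}$ and, by \eqref{eq:downD}, $\downD(y)=\tfrac1{y^2}(\bI-\bX^\T\diag(\bnu^2)\bX)^{-1}$; setting $\bN=\diag(\bnu)\bX$, the two inverted matrices are $\bN\bN^\T$ and $\bN^\T\bN$, which share their nonzero eigenvalues with complementary multiplicities of $0$. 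Recall (proof of Lemma~\ref{lem:Rho-i}) that the empirical distribution of $\bmu$ converges to $\MPDensity_{\gamma\beta,\gamma^{-1/2}}$, with upper edge $\gamma^{-1/2}(1+\sqrt{\beta\gamma})^2$. Since $\NoiseEdgeUpper-\gamma^{-1/2}(1+\sqrt{\beta\gamma})^2=\gamma^{1/2}(1-\beta)+2\sqrt{\beta}\big(\sqrt{1+\gamma-\beta\gamma}-1\big)>0$, for $y^2>\NoiseEdgeUpper$ we have $\max_i\mu_i<y^2$ asymptotically a.s., so $\bnu$ is well-defined and uniformly bounded.

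The next (and most delicate) ingredient is an a priori bound showing that $1$ stays bounded away from $\mathrm{spec}(\bM)$. By Theorem~\ref{thm:QZ:largest}, for $y^2>\NoiseEdgeUpper$ we have $y^2>\lambda_1(\bQc\bZ\bZ^\T\bQc)$ asymptotically a.s., hence $y^2\bI-(\bQc\bZ)^\T(\bQc\bZ)\succ0$ with uniformly bounded inverse. Tracing through the block inversion \eqref{eq:M22-4}--\eqref{eq:downD}, $\downD(y)$ is a principal block of $(y^2\bI-(\bQc\bZ)^\T(\bQc\bZ))^{-1}$, so $\downD(y)\succ0$ with bounded norm; this forces $\bM\prec(1-c)\bI$ for some $c>0$, hence $\|(\bI-\bM)^{-1}\|$, and by eigenvalue interlacing also the leave-one-out inverses $\|(\bI-\bM^{(-i)})^{-1}\|$, are bounded a.s. Since, conditionally on $\bZ_1$, the matrix $\bX$ has i.i.d.\ $\m{N}(0,1/\sqrt{nm})$ entries while $\bnu$ is deterministic, all concentration steps below are carried out conditionally on $\bZ_1$, and the a.s.\ convergence of $d^{-1}\tr(\diag(\bmu)-z\bI)^{-1}$ to $\MPStiel_{\gamma\beta,\gamma^{-1/2}}(z)$ (from Lemma~\ref{lem:Rho-i}) makes the final statements a.s.

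Now the leave-one-out. Partitioning off the $i$-th index and applying the block inversion formula to $(\bI-\bM)^{-1}$, with $\bm{a}_i^\T$ the $i$-th row of $\bX$ and $(-i)$ denoting deletion of that row,
\begin{equation*}
\barD(y)_{i,i}=\frac1{y^2}\Big(1-\nu_i^2\|\bm{a}_i\|^2-\nu_i^2\,\bm{a}_i^\T(\bX^{(-i)})^\T\diag(\bnu^{(-i)})(\bI-\bM^{(-i)})^{-1}\diag(\bnu^{(-i)})\bX^{(-i)}\bm{a}_i\Big)^{-1}.
\end{equation*}
Uniformly in $i$: $\|\bm{a}_i\|^2\simeq\tfrac{m-d}{\sqrt{nm}}=(1-\beta)\sqrt{\gamma}$ ($\chi^2$ tails plus a union bound), and by the Hanson-Wright inequality the quadratic form $\simeq\tfrac1{\sqrt{nm}}\tr\!\big[\bM^{(-i)}(\bI-\bM^{(-i)})^{-1}\big]\simeq\tfrac{d}{\sqrt{nm}}\big(y^2\hat{\delta}(y)-1\big)\simeq\beta\sqrt{\gamma}\big(y^2\delta(y)-1\big)$, where $\hat{\delta}(y)=d^{-1}\tr\barD(y)$, using that deleting one row alters $\tr(\bI-\bM)^{-1}$ negligibly (\cite[Lemma 20]{gavish2022matrix}) and that $\hat{\delta}(y)\to\delta(y)$ (justified below). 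Hence $\barD(y)_{i,i}\simeq\tfrac1{y^2}(1-\nu_i^2B(y))^{-1}$ uniformly in $i$, with $B(y)=\sqrt{\gamma}(1-2\beta+\beta y^2\delta(y))$; since $\nu_i^2=1/(y^2-\mu_i)$, this is exactly $\hat{\delta}_i(y)$ of \eqref{eq:lem:D-i-3}, giving the third claim. Averaging over $i$, writing $\tfrac1{y^2-\mu_i}$ as the resolvent of $\diag(\bmu)$, and invoking the convergence of the spectral distribution of $\bmu$ together with \eqref{eq:lem:Rho-i:3} yields the fixed-point relation $y^2\delta(y)=1-B(y)\MPStiel_{\gamma\beta,\gamma^{-1/2}}(y^2-B(y))$, which is \eqref{eq:lem:D-i-5}.

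It remains to justify $\hat{\delta}(y)\to\delta(y)$ and extract the explicit formulas. Conditionally on $\bZ_1$, $\bM$ equals $(1-\beta)\sqrt{\gamma}$ times a separable sample covariance matrix $\bSigma_n^{1/2}\hat{\bS}\bSigma_n^{1/2}$ with population covariance $\bSigma_n=\diag(\bnu^2)$ (whose spectral distribution converges, being the push-forward of $\MPDensity_{\gamma\beta,\gamma^{-1/2}}$ under $\mu\mapsto(y^2-\mu)^{-1}$) and aspect ratio $d/(m-d)\to\beta/(1-\beta)$; by the Marchenko-Pastur/Silverstein theory for such ensembles \cite{bai2010spectral}, $d^{-1}\tr(\bM-z\bI)^{-1}$ converges a.s.\ for $z$ off the limiting support, in particular at $z=1$ since $\bM\prec(1-c)\bI$, so $\hat{\delta}(y)\to\delta(y)$ a.s. The limit is the solution of \eqref{eq:lem:D-i-5} on the branch with $\delta(y)>0$ and $\delta(y)\to0$ as $y\to\infty$; substituting \eqref{eq:lem:Rho-i:3} and solving the resulting (quadratic-type) equation with a computer algebra system produces the stated closed form for $\delta(y)$. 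Finally, since $\bN\bN^\T$ and $\bN^\T\bN$ share nonzero eigenvalues with complementary multiplicities of $0$, one has $\tr\downD(y)=\tr\barD(y)+(m-2d)/y^2$, so $(m-d)^{-1}\tr\downD(y)\to\tfrac{\beta}{1-\beta}\delta(y)+\tfrac{1-2\beta}{(1-\beta)y^2}$, which simplifies (again by direct computation) to $\underline{\delta}(y)$. The main obstacle is the bookkeeping for well-posedness --- verifying via the block identities and Theorem~\ref{thm:QZ:largest} that $1$ is bounded away from $\mathrm{spec}(\bM)$ --- together with the CAS-assisted solution of \eqref{eq:lem:D-i-5}; the probabilistic core is routine and parallels Lemma~\ref{lem:Rho-i}.
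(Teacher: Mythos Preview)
Your proof is correct and follows essentially the same route as the paper's: a leave-one-out/Schur complement expansion of $\barD(y)_{i,i}$, Hanson-Wright concentration of the quadratic form to $\beta\sqrt{\gamma}(y^2\hat\delta(y)-1)$, averaging to obtain the fixed-point equation \eqref{eq:lem:D-i-5}, solving the resulting quadratic via the explicit Marchenko--Pastur Stieltjes transform, and relating $\tr\downD(y)$ to $\tr\barD(y)$ through the shared nonzero spectrum of $\bN\bN^\T$ and $\bN^\T\bN$. Your argument is in fact slightly more careful than the paper's in two respects: you explicitly verify that $1$ stays bounded away from $\mathrm{spec}(\bM)$ (by tracing the block inversion back to the positivity of $y^2\bI-(\bQc\bZ)^\T(\bQc\bZ)$), and you invoke the Silverstein theory for separable sample covariance to justify convergence of $\hat\delta(y)$ before solving the self-consistent equation, whereas the paper proceeds more directly by deriving the quadratic for $\hat\delta(y)$ and selecting the branch with the correct $y\to\infty$ asymptotics.
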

The proof of Lemma~\ref{lem:D-i} is deferred to Appendix, Section~\ref{sec:proof-lem:D-i}.

We continue. Let $\bD_{1,2},\bD_{2,2}$ be respectively the top-right and bottom-right blocks of \eqref{eq:D-block-form}. Note: $\bD_{2,2}\equiv \downD(y)$.
% ; this new notation is provisional, and will only be used now.) 
Unpacking the definition of $\bW_1$ \eqref{eq:W1-W2-def},
\begin{align}
    y \bV^\T \left( y^2\bI - (\bQc\bZ)^\T (\bQc\bZ) \right)^{-1} \bW_1 = y\bV_1^\T \bD_{1,2} (\bV_2-\bZ_2^\T \bVbar_1) + y\bV_2^\T \bD_{2,2} (\bV_2-\bZ_2^\T \bVbar_1).
\end{align}
First, we claim that the terms $y\bV_1^\T \bD_{1,2}\bV_2$ and $-y\bV_2^\T \bD_{2,2}\bZ_2^\T \bVbar_1$ are asymptotically vanishing. 
% To verify that $\bV_1^\T \bD_{1,2}\bV_2\simeq \0$, 
To see this,
one can use the right orthogonal invariance of $\bZ=[\bZ_1,\bZ_2]$, replacing it by $\bZ(\bO_1\oplus\bO_2)=[\bZ_1\bO_1,\bZ_2\bO_2]$ where $\bO_1\in O(d),\bO_2\in O(m-d)$ are Haar-distributed; we omit the precise details. 
% As for $\bV_1^\T \bD_{1,2}\bZ_2^\T \bVbar_1$, introduce a rotation $\bO$ which is uniformly random on the column space of $\bZ_1$, namely $\range(\bQc)$, and fixes $\range(\bQc)^\perp$ (and is independent of $\bZ_2$). Replacing $\bZ_2$ by $\bO\bZ_2$, note that the matrix $(\bQc\bZ)^\T (\bQc\bZ)$ does not change (since $\bO$ and $\bQc$ commute). Moreover, $\bVbar_1$ is clearly in $\range(\bQc)$ (see \eqref{eq:Vbar-1-def}) and so $\bO^\T\bVbar_1$ has a uniformly random direction. Thus,
% \begin{align*}
%     \bV_1^\T \bD_{1,2}\bZ_2^\T\bVbar_1 \overset{d}{=} \bV_1^\T \bD_{1,2}\bZ_2^\T(\bO^\T \bVbar_1) \simeq \0.
% \end{align*}
% By a similar argument, we have $\bV_2^\T \bD_{2,2}\bZ_2^\T \bVbar_1\simeq \0$. 
Thus, the remaining terms are 
\begin{align}\label{eq:prekappa7-1}
    y \bV^\T \left( y^2\bI - (\bQc\bZ)^\T (\bQc\bZ) \right)^{-1} \bW_1 \simeq 
    -y\bV_1^\T \bD_{1,2}\bZ_2^\T \bVbar_1 + y\bV_2^\T \bD_{2,2}\bV_2. 
\end{align}
The second term on the r.h.s. of \eqref{eq:prekappa7-1} is easy to calculate: 
\begin{align}
    y\bV_2^\T \bD_{2,2}\bV_2 \nonumber 
    &\simeq (1-\beta)y\frac{1}{m-d}\tr\downD(y) \otimes \bI_{r\times r}
    \nonumber \\
    &\simeq (1-\beta)y\underline{\delta}(y) \otimes \bI_{r\times r} \label{eq:prekappa7-1a},
\end{align}
where we used Lemma~\ref{lem:D-i}. As for the other term, by the block matrix inversion formula, applied to \eqref{eq:D-block-form},
\begin{align}
    \bD_{1,2}= (y^2\bI - \bZ_1^\T\bZ_1)^{-1}\bZ_1^\T(\bQc\bZ_2)\downD(y),
\end{align}
hence
\begin{align}
    -y\bV_1^\T \bD_{1,2} \bZ_2^\T \bVbar_1 
    &\overset{(\star)}{=} -y\bV_1^\T \bD_{1,2}(\bQc\bZ_2)^\T \bVbar_1 \nonumber \\ 
    &= -y\bV_1^\T (y^2\bI - \bZ_1^\T\bZ_1)^{-1}\bZ_1^\T(\bQc\bZ_2)\downD(y)(\bQc\bZ_2)^\T \bZ_1(\bZ_1^\T \bZ_1)^{-1}\bV_1 \nonumber \\
    &\simeq -\beta y \cdot d^{-1}\tr\left( (y^2\bI - \bZ_1^\T\bZ_1)^{-1}\bZ_1^\T(\bQc\bZ_2)\downD(y)(\bQc\bZ_2)^\T \bZ_1(\bZ_1^\T \bZ_1)^{-1} \right) \otimes \bI_{r\times r}\nonumber \\
    &\overset{(\star\star)}{=} -\beta y \cdot d^{-1}\tr\left( \diag(\bnu) \bX\downD(y) \bX^\T \diag(\bnu) \right) \otimes \bI_{r\times r} . \label{eq:prekappa7-2}
\end{align}
Above, $(\star)$ follows since $\bVbar_1$ is in $\range(\bQc)$; $(\star\star)$ uses the change of basis \eqref{eq:QZ-basis-change} and the definition $\bnu=1/\sqrt{y^2-\bmu}$. Using the expression \eqref{eq:downD} for $\downD(y)$, 
\begin{align}
    \eqref{eq:prekappa7-2}
    &= -\beta y \cdot d^{-1}\tr\left( (\diag(\bnu) \bX)\frac{1}{y^2}\left( \bI - (\diag(\bnu)\bX)^\T (\diag(\bnu)\bX)  \right)^{-1} (\diag(\bnu) \bX)^\T \right) \nonumber \\
    &= -\beta y \cdot d^{-1}\tr\left( \frac{1}{y^2}\left( \bI - (\diag(\bnu) \bX)(\diag(\bnu)\bX)^\T   \right)^{-1} (\diag(\bnu)\bX)(\diag(\bnu) \bX)^\T \right) \nonumber \\
    &= \frac{\beta}{y} - \beta y \cdot d^{-1}\tr(\barD(y)) \simeq \frac{\beta}{y} - \beta y \delta(y) \label{eq:prekappa7-3},
\end{align}
where we used Lemma~\ref{lem:D-i}. Finally, combining \eqref{eq:prekappa7-1}, \eqref{eq:prekappa7-1a}, \eqref{eq:prekappa7-2} and \eqref{eq:prekappa7-3}, 
\begin{align}
 y \bV^\T \left( y^2\bI - (\bQc\bZ)^\T (\bQc\bZ) \right)^{-1} \bW_1
 &\simeq \left( (1-\beta)y\underline{\delta}(y) + \frac{\beta}{y} - \beta y \delta(y)\right) \otimes \otimes \bI_{r\times r}\nonumber \\
     &\equiv \kappa_{\gamma,\beta}^{(7)}(y) \otimes \bI_{r\times r} .\label{eq:M22-3}
\end{align}
A straightforward calculation gives the formula \eqref{eq:kappa-7} for $\kappa_{\gamma,\beta}^{(7)}$. 

Next, consider
\begin{align}\label{eq:M22-4a}
    y \bW_1^\T \left( y^2\bI - (\bQc\bZ)^\T (\bQc\bZ) \right)^{-1} \bW_1 
    \simeq y\bV_2^\T \downD(y) \bV_2  + y\bVbar_1^\T \bZ_2 \downD(y)\bZ_2^\T\bVbar_1 ,
\end{align}
where we used $\bV_2^\T \downD(y)\bZ_2^\T \bVbar_1\simeq \0$, which follows by a similar argument as above. The first term was calculated in \eqref{eq:M22-3}. 
Unpacking the definition of $\bVbar_1$,
\begin{align}
    y\bVbar_1^\T \bZ_2 \downD(y)\bZ_2^\T\bVbar_1 
    &= y\bV_1^\T (\bZ_1^\T\bZ_1)^{-1}\bZ_1^\T \bZ_2 \downD(y)\bZ_2^\T \bZ_1 (\bZ_1^\T\bZ_1)^{-1} \bV_1 \nonumber \\
    &\simeq \beta y\cdot d^{-1}\tr \left( (\bZ_1^\T\bZ_1)^{-1}\bZ_1^\T \bZ_2 \downD(y)\bZ_2^\T \bZ_1 (\bZ_1^\T\bZ_1)^{-1} \right) \otimes \bI_{r\times r}. \label{eq:M22-4b}
\end{align}
Using the change of basis \eqref{eq:QZ-basis-change}, and the formula \eqref{eq:downD} for $\downD(y)$, 
\begin{align}
    \eqref{eq:M22-4b} 
    &= \beta y \cdot d^{-1} \tr \left( \bX \frac{1}{y^2}( \bI - \bX^\T \diag(\bnu^2) \bX  g)^{-1}\bX^\T \diag(1/\bmu)\right) \nonumber \\
    &= \beta y \cdot d^{-1} \tr \left( (\diag(\bnu)\bX) \frac{1}{y^2}( \bI - (\diag(\bnu)\bX)^\T (\diag(\bnu)\bX)  g)^{-1}(\diag(\bnu)\bX)^\T \diag(1/(\bnu^2\bmu))\right) \nonumber \\
    &= \beta y \cdot d^{-1} \tr \left(  \frac{1}{y^2}( \bI - (\diag(\bnu)\bX) (\diag(\bnu)\bX)^\T )^{-1}(\diag(\bnu)\bX) (\diag(\bnu)\bX)^\T \diag(1/(\bnu^2\bmu))\right),
    \label{eq:M22-5}
\end{align}
where \eqref{eq:M22-5} follows since for every matrix $\bA$ and analytic $f(\cdot)$, $\bA f(\bA^\T\bA)\bA^\T=f(\bA\bA^\T)\bA\bA^\T$. Rewriting the matrix in the middle as $(\diag(\bnu)\bX) (\diag(\bnu)\bX)^\T=((\diag(\bnu)\bX) (\diag(\bnu)\bX)^\T-\bI)+\bI$ and recalling the definition of $\barD(y)$ and $\bnu^2=\frac{1}{y^2-\bmu}$, 
\begin{align}
    \eqref{eq:M22-5}
    &= \frac{\beta}{y}\cdot d^{-1}\tr\left(\diag\left(\bm{1}-\frac{y^2}{\bmu}\right)\right) - \beta y \cdot d^{-1}\tr \left( \barD(y) \diag\left(\bm{1}-\frac{y^2}{\bmu}\right)\right). \label{eq:M22-6}
\end{align}
Recall that the empirical distribution of $\bmu$ converges to a Marchenko-Pastur law with shape $\beta\gamma$ and scale $\gamma^{-1/2}$. Accordingly, the first term is 
\begin{align}
    \label{eq:M22-7}
    \frac{\beta}{y}\cdot d^{-1}\tr\left(\diag\left(\bm{1}-\frac{y^2}{\bmu}\right)\right) \simeq \frac{\beta}{y} - \frac{\sqrt{\gamma}\beta}{1-\beta\gamma}y .
\end{align}
To treat the second term, we use Lemma~\ref{lem:D-i}:
\begin{align}
    -\beta y \cdot d^{-1}\tr &\left( \barD(y) \diag\left(\bm{1}-\frac{y^2}{\bmu}\right)\right) \nonumber \\
    &\simeq  
    -\beta y \delta(y) +
    \beta y\cdot d^{-1}\sum_{i=1}^d \left( 1 - \frac{ B(y) }{\mu_i - \left[ y^2 - B(y) \right] } \right)\frac{1}{\mu_i} \nonumber \\
    &= -\beta y \delta(y) + \frac{\sqrt{\gamma}\beta}{1-\beta\gamma}y - \beta y \cdot d^{-1} \sum_{i=1}^d \frac{B(y)}{\mu_i\left(\mu_i-\left[y^2-B(y)\right]\right)} \nonumber \\
    &= -\beta y \delta(y) + \frac{\sqrt{\gamma}\beta}{1-\beta\gamma}y - \frac{\beta y B(y)}{y^2-B(y)} d^{-1}\sum_{i=1}^d \left( \frac{1}{\mu_i-\left[y^2-B(y)\right]}  - \frac{1}{\mu_i} \right) \nonumber \\
    &\simeq -\beta y \delta(y) +\frac{\sqrt{\gamma}\beta y}{1-\beta\gamma}\left( 1 + \frac{B(y)}{y^2-B(y)} \right) - \frac{\beta y B(y)}{y^2-B(y)} \MPStiel_{\gamma\beta,\gamma^{-1/2}}\left( y^2-B(y)\right) \nonumber \\
    &\overset{(\star)}{=}  -\beta y \delta(y) + \frac{\sqrt{\gamma}\beta y}{1-\beta\gamma}\left( 1 + \frac{B(y)}{y^2-B(y)} \right)
    - \frac{\beta y B(y)}{y^2-B(y)} \cdot \frac{1-y^2 \delta(y)}{B(y)}.
    \label{eq:M22-8}
\end{align}
Above, $(\star)$ uses \eqref{eq:lem:D-i-5}. Finally, combining \eqref{eq:M22-4a}-\eqref{eq:M22-8},
\begin{align}
    y \bW_1^\T &\left( y^2\bI - (\bQc\bZ)^\T (\bQc\bZ) \right)^{-1} \bW_1 \nonumber \\
    &\simeq \left( (1-\beta)y\underline{d}(y) + \frac{\beta}{y} - \beta y \delta(y) + 
    % \frac{\beta y}{y^2-B(y)}\left[ \frac{\sqrt{\gamma}}{1-\beta\gamma} B(y)-1+y^2 \delta(y) \right]
    \frac{\sqrt{\gamma}\beta y}{1-\beta\gamma}\cdot\frac{B(y)}{y^2-B(y)} -  \frac{\beta y (1-y^2 \delta(y))}{y^2-B(y)} 
    \right) \otimes \bI_{r\times r} 
    \nonumber \\
    &\equiv \kappa^{(8)}_{\gamma,\beta}(y) \otimes \bI_{r\times r} .
\end{align}
The formula (\ref{eq:kappa-8}) for $\kappa^{(8)}_{\gamma,\beta}(y)$ can be verified by direct computation.

Finally, it remains to compute
\begin{align}
    y \bW_2^\T \left( y^2\bI - (\bQc\bZ)^\T (\bQc\bZ) \right)^{-1} \bW_2 = y\bU^\T \bQc^\perp \bZ_2 \downD(y)\bZ_2^\T \bQc^\perp \bU.
\end{align}
Note that $\bZ_2^\T \bQc^\perp \bU$ is independent of $\downD(y)$. Moreover, $(\bQc^\perp\bU)^\T(\bQc^\perp\bU)\simeq (1-\beta\gamma)\bI_{r\times r}$. Introducing, for convenience, a new Gaussian matrix $\bZ_3 \overset{d}{=}\bZ_2$, 
\begin{align}
    y\bU^\T \bQc^\perp \bZ_2 \downD(y)\bZ_2^\T \bQc^\perp \bU
    &\overset{d}{=} y\bU^\T \bQc^\perp \bZ_3 \downD(y)\bZ_3^\T \bQc^\perp \bU \nonumber \\
    &\simeq y(1-\beta\gamma)n^{-1}\tr\left( \bZ_3 \downD(y)\bZ_3^\T \right) \otimes \bI_{r\times r}\nonumber \\
    &\simeq y(1-\beta\gamma)\gamma(1-\beta)\cdot (m-d)^{-1}\tr\left(  \downD(y)\bZ_3^\T \bZ_3\right) \otimes \bI_{r\times r} \nonumber \\
    &\overset{(\star)}{\simeq}  y(1-\beta\gamma)\gamma(1-\beta)\cdot (m-d)^{-1}\tr\left(  \downD(y)\right) \cdot (m-d)^{-1}\tr\left( \bZ_3^\T \bZ_3\right) \otimes \bI_{r\times r}\nonumber \\
    &\simeq y\frac{(1-\beta\gamma)\gamma(1-\beta)}{\sqrt{\gamma}} \underline{\delta}(y) \otimes \bI_{r\times r} \equiv \kappa_{\gamma,\beta}^{(9)}(y) \otimes \bI_{r\times r}.
\end{align}
The last equality follows from Lemma~\ref{lem:D-i}, and $(\star)$ hold because $\bZ_3^\T\bZ_3$ and $\downD(y)$ are asymptotically free random matrices. The formula for $\kappa_{\gamma,\beta}^{(9)}(y)$, \eqref{eq:kappa-9}, can be readily verified.

To summarize, we have the following asymptotic formula for $\hbM_{2,2}(y^2)$:
\begin{equation}\label{eq:M22-final}
    \hbM_{2,2}(y) \simeq
    \MatL 
    \kappa_{\gamma,\beta}^{(6)}(y) &\kappa_{\gamma,\beta}^{(7)}(y) &0 \\
    \kappa_{\gamma,\beta}^{(7)}(y) &\kappa_{\gamma,\beta}^{(8)}(y) &0 \\
   0 &0 &\kappa_{\gamma,\beta}^{(9)}(y)
    \MatR \otimes \bI_{r\times r}.
\end{equation} 

Finally, the proof of Lemma~\ref{lem:Mhat-Lim} is concluded by combining \eqref{eq:lem:Mhat-Lim-preproof}, \eqref{eq:Sigma-Inv}, \eqref{eq:M11-final}, \eqref{eq:M12-final} and \eqref{eq:M22-final}.
\qed

\subsection{Proof of Lemma~\ref{lem:D-i}}
\label{sec:proof-lem:D-i}

The calculation is similar to the proof of Lemma~\ref{lem:Rho-i}, and likewise is classical \cite{bai2010spectral}. 

For any $1\le i \le d$, denote by $\bP_i \in \RR^{(d-1)\times d}$ the projection matrix onto the coordinate set $[d]\setminus \{i\}$. We have the following block decomposition (up to a coordinate permutation):
	\begin{equation}\label{eq:D:Aux:1}
		\barD(y) = \frac{1}{y^2} \MatL  
			1-\nu_i^2 (\bX\bX^\T)_{i,i} & -\be_i^\T (\diag(\bnu)\bX\bX^\T\diag(\bnu))\bP_i^\T \\
			-\bP_i (\diag(\bnu)\bX\bX^\T\diag(\bnu))\be_i & \bP_i( \bI - \diag(\bnu)\bX\bX^\T\diag(\bnu))\bP_i^\T 
		\MatR^{-1} .
	\end{equation}
	By the block matrix inversion formula,
	\begin{align}\label{eq:D:Aux:2}
		 \barD(y)_{i,i}  &= \frac{1}{y^2}\left(1 - \nu_i ^2 (\bX\bX^\T)_{i,i} - \right. \nonumber\\
   &\left.\nu_i^2 \be_i^\T (\bX\bX^\T\diag(\bnu))\bP_i^\T \left( \bP_i( \bI - \diag(\bnu)\bX\bX^\T\diag(\bnu))\bP_i^\T  \right)^{-1} \bP_i (\diag(\bnu)\bX\bX^\T)\be_i \right)^{-1}.
	\end{align}
 Recall that $\bX^\T \be_i$ has i.i.d. Gaussian entries $\m{N}(0,1/\sqrt{nm})$. By the same line of reasoning as in Section~\ref{sec:proof-lem:Rho-i}, Eqs. \eqref{eq:Resolvement:Aux:3}-\eqref{eq:Resolvement:Aux:6}, the following concentration holds \emph{simultaneously} over all $1\le i\le d$: 
 \begin{align}
     \label{eq:D:Aux:3}
     \barD(y)_{i,i}  &\simeq \frac{1}{y^2}\left(1 - \nu_i ^2 \sqrt{\gamma}(1-\beta) - \nu_i^2\frac{1}{\sqrt{nm}}\tr 
   \left(\bX^\T\diag(\bnu) \left(  \bI - \diag(\bnu)\bX\bX^\T\diag(\bnu)  \right)^{-1} \diag(\bnu)\bX\right) \right)^{-1}.
 \end{align}
 Consider the trace in \eqref{eq:D:Aux:3}:
\begin{align}
    &\tr 
   \left(\bX^\T\diag(\bnu) \left(  \bI - \diag(\bnu)\bX\bX^\T\diag(\bnu)  \right)^{-1} \diag(\bnu)\bX\right) 
   \nonumber \\
   &\qquad\qquad\overset{(\star)}{=}  \tr 
   \left( \left(  \bI - \bX^\T\diag(\bnu^2)\bX  \right)^{-1} \bX^\T \diag(\bnu^2)\bX\right) \nonumber \\
   &\qquad\qquad= -(m-d) + \tr 
   \left(  \bI - \bX^\T\diag(\bnu^2)\bX  \right)^{-1} \nonumber \\
   &\qquad\qquad= -(m-d) + \tr 
   \left(  \bI - \diag(\bnu)\bX (\diag(\bnu)\bX)^\T \right)^{-1} + (m-2d) \label{eq:D:Aux:4}
\end{align}
Above, $(\star)$ follows since for every matrix $\bA$ and analytic $f(\cdot)$, $\bA f(\bA^\T\bA)\bA^\T=f(\bA\bA^\T)\bA\bA^\T$ (one can readily verify this using the SVD of $\bA$). Denote
\begin{equation}
\label{eq:D:Aux:5}
    \hat{{\delta}}(y) = \frac1{y^2} d^{-1}\tr 
   \left(  \bI - \bX^\T\diag(\bnu^2)\bX  \right)^{-1} ,
\end{equation}
so that $\hat{{\delta}}(y) = d^{-1}\sum_{i=1}^d \barD(y)_{i,i}$. Combining Eqs. \eqref{eq:D:Aux:3}-\eqref{eq:D:Aux:5} and using $\nu_i^2=\frac{1}{y^2-\mu_i}$ yields, after some algebraic manipulations,
\begin{equation}
    \label{eq:D:Aux:6}
    \barD(y)_{i,i} \simeq  \frac{1}{y^2} \left( 1 - \frac{\sqrt{\gamma}\left( 1-2\beta + \beta y^2\hat{{\delta}}(y) \right)  }{\mu_i - \left[ y^2 - \sqrt{\gamma}\left( 1-2\beta + \beta y^2\hat{{\delta}}(y) \right) \right] } \right) .
\end{equation}
Recall that the empirical distribution of $(\mu_1,\ldots,\mu_d)$ converges to a Marchenko-Pastur law with shape $\beta\gamma$ and scale $\gamma^{-1/2}$. Thus, averaging \eqref{eq:D:Aux:6} over $1\le i \le d$, 
\begin{equation}
    \label{eq:D:Aux:7}
    y^2 \hat{{\delta}}(y) \simeq 1 - \sqrt{\gamma}\left( 1-2\beta + \beta y^2\hat{{\delta}}(y) \right) \MPStiel_{\gamma\beta,\gamma^{-1/2}}\left( y^2 - \sqrt{\gamma}\left( 1-2\beta + \beta y^2\hat{{\delta}}(y) \right) \right).
\end{equation}
The Stieltjes transform of the Marcheko-Pastur law has a well-known inverse (see e.g. \cite{bai2010spectral}):
\begin{equation}\label{eq:MPInv}
		\MPStiel_{\gamma\beta, \gamma^{-1/2}}^{-1}(s) = \frac{-\sqrt{\gamma} + (1-\gamma\beta)s}{s(\sqrt{\gamma} +\gamma\beta s)} \,.
	\end{equation}
 Combining \eqref{eq:D:Aux:7} and \eqref{eq:MPInv} yields a quadratic equation for $\hat{{\delta}}(y)$. It has two solutions:
 \begin{equation}\label{eq:D:Aux:8}
     \hat{{\delta}}(y) \simeq \frac
     {\sqrt{\gamma}(1-\beta)y^2 - (1+\gamma+2\beta^2\gamma-3\beta-3\beta\gamma) \pm (1-\beta)\sqrt{-\gamma \Delta_{\gamma,\beta}(y^2)}}
     {2\beta(1+\gamma-\beta\gamma)y^2} .
 \end{equation}
 We choose the ``correct'' solution according to the known tail behavior for large $y$: $\hat{\delta}(y)\asymp 1/y^2$ as $y\to \infty$. Consequently, the solution we need is ``$-$''; call it $\delta(y)$.

Finally, 
\begin{align}
    \frac{1}{m-d}\tr\downD(y) 
    = \frac{1}{m-d} \left[ d\cdot  d^{-1}\tr\barD(y) + (m-2d) \right] \simeq \frac{\beta}{1-\beta}\delta(y) + \frac{1-2\beta}{1-\beta} \equiv \underline{\delta}(y).
\end{align}
\qed

\section{Proof of Lemma~\ref{lem:T-lim}}
\label{sec:proof-lem:T-lim}

Denote the blocks of the matrix $\hbT(y)$,
\begin{align}
    \hbT_{1,1}(y) &= y^2  \bA^\T \left(y^2\bI - \bQc\bZ\bZ^\T \bQc  \right)^{-2} \bA, \label{eq:T11-def} \\
    \hbT_{1,2}(y) &= y  \bA^\T  \left(y^2\bI - \bQc\bZ\bZ^\T \bQc  \right)^{-2} (\bQc\bZ) \bB \label{eq:T12-def}\\
    \hbT_{2,2}(y) &= \bB^T (\bQc\bZ)^\T \left(y^2\bI - \bQc\bZ\bZ^\T \bQc  \right)^{-2} (\bQc\bZ) \bB . \label{eq:T22-def} 
\end{align}
so that 
\begin{align*}
    \hbT(y) = \MatL \hbT_{1,1}(y) &\hbT_{1,2}(y) \\ \hbT_{1,2}(y)^\T &\hbT_{2,2}(y) \MatR .
\end{align*}
We relate them to those of the matrix $\hbM({y})$, Eqs. \eqref{eq:hbM-11}-\eqref{eq:lem:Mhat-Lim-preproof}, whose limit we computed previously. 

First, the top-left block. 
\Revision{Recall 
\begin{align}
    \hbM_{1,1}(y) &= y\cdot \bA^\T (y^2\bI_{n\times n}-\bQc\bZ\bZ^\T\bQc)^{-1}\bA\nonumber
\end{align}
}
Denote by $(\cdot)'$ the derivative with respect to $y$. \Revision{It is straightforward to  verify that}
\begin{align}
    \hbT_{1,1}(y) = -\frac12 y \left(\frac{\hbM_{1,1}(y)}{y}\right)' .
\end{align}
Taking the derivative in the limiting expression \eqref{eq:M11-final} yields the claimed formula. Note that we can exchange the orders of the limit and the $y$-derivative since $\hbM_{1,1}(y)$ is analytic, and convergence in \eqref{eq:M11-final} is uniform on compact sets (see discussion after Lemma~\ref{lem:Mhat-Lim}).

Next, the top-right block. It is clear that 
\begin{align}
    \hbT_{1,2}(y) = -\frac12 \hbM_{1,2}'(y) .
\end{align}
The claimed formulas follow from \eqref{eq:M12-final}.

Lastly, the bottom right block. We can write 
\begin{align}
    \hbT_{2,2}(y) 
    &= \bB^T  \left(y^2\bI - (\bQc\bZ)^\T\bQc\bZ  \right)^{-2} (\bQc\bZ)^\T(\bQc\bZ) \bB \nonumber \\
    &= -\bB^T  \left(y^2\bI - (\bQc\bZ)^\T\bQc\bZ  \right)^{-1}  \bB + y^2 \bB^T  \left(y^2\bI - (\bQc\bZ)^\T\bQc\bZ  \right)^{-2}  \bB \,.\nonumber 
    % \\
    % &= - \frac{\hbM_{2,2}(y) }{y}-\frac12 y \left(\frac{\hbM_{2,2}(y)}{y}\right)' 
    % = -\frac1{2y} \left( y \hbM_{2,2}(y) \right)'.
\end{align}
\Revision{We have
\begin{align*}
    \hbM_{2,2}(y) &= y\cdot \bB^\T(y^2\bI_{m\times m}-\bZ^\T \bQc\bQc\bZ)^{-1}\bB\,, \\
    \left( \hbM_{2,2}(y)/y\right)' &= -2y \bB^\T(y^2\bI_{m\times m}-\bZ^\T \bQc\bQc\bZ)^{-2}\bB\,.
\end{align*}
Hence,
\begin{align}
    \hbT_{2,2}(y) = -\hbM_{2,2}(y)/y -\frac{y}{2} \left( \hbM_{2,2}(y)/y\right)' = -\frac{1}{2y}\left( y\hbM_{2,2}(y)\right)'\,,
\end{align}
where the last equality is straightforward to verify.
}
Finally, \eqref{eq:M22-final} gives the claimed expressions.
\qed

\section{Proof of Proposition~\ref{prop:vanishing-corrs}}
\label{sec:proof-prop:vanishing-corrs}

For $\sigma>\BBP$, denote $y=\SpikeFunc(\sigma)$, and let $\bd(\sigma)$ be the solution (up to a global sign) of 
\[
\bd \in \ker(\bKc_{\gamma,\beta}(y)-(1/\sigma)\bHc),\quad \langle \bd,\bTc_{\gamma,\beta}(y)\bd\rangle = 1 \,.    
\]
In light of Theorem~\ref{thm:3:Angles}, we need to show that $d_1,d_4\to 0$ as $\sigma\downarrow \BBP$.

First, observe that the vectors $\bd(\sigma) \in \RR^6$ are bounded for $\sigma$ in the vicinity of $\BBP$. This holds because $\bm{d}$ is an a.s. limit point of the vector $\hbd$, defined in \eqref{eq:d-c-def}, which is clearly bounded. 
 Next, the matrix $\bTc_{\gamma,\beta}(y)$ is PSD, being the limit of the PSD matrix $\hbT(y)$ in \eqref{eq:T-hat-Mat}. 
 Thus, $\langle \bm{d}(\sigma),\bTc(y)\bm{d}(\sigma)\rangle=1$ implies that necessarily
 \begin{align*}
     \MatL d_1(\sigma) \\ d_2(\sigma)\MatR^\T \MatL \tau_{\gamma,\beta}^{(1)}(y) & \tau_{\gamma,\beta}^{(2)}(y) \\
     \tau_{\gamma,\beta}^{(2)}(y) &\tau_{\gamma,\beta}^{(2)}(y)\MatR \MatL d_1(\sigma) \\ d_2(\sigma)\MatR \le 1,\qquad
     \MatL d_3(\sigma) \\ d_4(\sigma) \\d_5(\sigma)\MatR^\T \MatL \tau_{\gamma,\beta}^{(3)}(y) & \tau_{\gamma,\beta}^{(4)}(y) & \tau_{\gamma,\beta}^{(5)}(y) \\
     \tau_{\gamma,\beta}^{(4)}(y) &\tau_{\gamma,\beta}^{(6)}(y) & \tau_{\gamma,\beta}^{(7)}(y) \\
     \tau_{\gamma,\beta}^{(5)}(y) & \tau_{\gamma,\beta}^{(7)}(y) & \tau_{\gamma,\beta}^{(8)}(y)\\
     \MatR \MatL d_3(\sigma) \\ d_4(\sigma) \\d_5(\sigma)\MatR \le 1 .
 \end{align*}
 and $\tau_{\gamma,\beta}^{(9)}(y)d_6(\sigma)^2 \le 1$.

    Recall that as $\sigma \downarrow \BBP$, we have $y\downarrow \sqrt{\NoiseEdge_{\gamma,\beta}^+}$ and $\Delta_{\gamma,\beta}(y^2)\to \Delta_{\gamma,\beta}(\NoiseEdge_{\gamma,\beta}^+)=0$.
    Moreover, one can verify that $\tau_{\gamma,\beta}^{(1)}(y),\tau_{\gamma,\beta}^{(9)}(y)\to \infty$ as $\sigma\downarrow \BBP$, while $\tau_{\gamma,\beta}^{(2)}(y)$ is bounded. Consequently, $d_1(\sigma),d_2(\sigma),d_6(\sigma)\to 0$.   

    It remains to treat $d_3(\sigma),d_4(\sigma),d_5(\sigma)$. Let $\bd^*=(0,0,d_3^*,d_4^*,d_5^*,0)$ be a limit point of $\bd(\sigma)$ as $\sigma\downarrow\BBP$.
    For brevity, denote
    \begin{equation}
        \tilde{\bTc}(y) = \MatL \tau_{\gamma,\beta}^{(3)}(y) & \tau_{\gamma,\beta}^{(4)}(y) & \tau_{\gamma,\beta}^{(5)}(y) \\
        \tau_{\gamma,\beta}^{(4)}(y) &\tau_{\gamma,\beta}^{(6)}(y) & \tau_{\gamma,\beta}^{(7)}(y) \\
        \tau_{\gamma,\beta}^{(5)}(y) & \tau_{\gamma,\beta}^{(7)}(y) & \tau_{\gamma,\beta}^{(8)}(y)\\
        \MatR \,,\qquad 
        \eps(y) \equiv \sqrt{\frac{-\gamma \Delta_{\gamma,\beta}(y^2)}{\beta(1+\gamma-\beta\gamma)}} ,
    \end{equation}
    and $\bar{\bd}(\sigma)=(d_3(\sigma),d_4(\sigma),d_5(\sigma))$, $\bar{\bd}^*=(d_3^*,d_4^*,d_5^*)$ so that e.g. $\bd(\sigma)=(\bd_1(\sigma),\bd_2(\sigma))\oplus \bar{\bd}(\sigma) \oplus (\bd_6(\sigma))$. 

     One can calculate:
    \begin{equation}
    \begin{split}
        \lim_{y\downarrow\sqrt{\NoiseEdge_{\gamma,\beta}^+}} 
        \eps(y)
        \bTc_{\gamma,\beta}(y) 
        = 
        \MatL
        \star &\star &\star   \\
        1/\sqrt{\NoiseEdgeUpper} &1 &0 \\
        \frac{\gamma(1-\beta)}{1-\beta\gamma} (-1/\sqrt{\NoiseEdgeUpper}) &0 &\frac{\gamma(1-\beta)}{1-\beta\gamma} 
        \MatR 
        \equiv \tilde{\bTc}_{\infty} 
\end{split}
    \end{equation}
    where the first row, $\bm{t}_1$, is a linear combination of the second and third rows: $\bm{t}_1 = (1/\sqrt{\NoiseEdgeUpper})\bm{t}_2 + (-1/\sqrt{\NoiseEdgeUpper})\bm{t}_3$. 
    Multiplying the inequatlity $0\le \langle \bar{\bd}(\sigma),\tilde{\bTc}(y)\bar{\bd}(\sigma)\rangle \le 1$ by $\varepsilon(y)$, and taking the limit $\sigma\downarrow\BBP$ along a subsequence such that $\bd(\sigma)\to\bd^*$, we get $\langle \bar{\bd}^*,\tilde{\bTc}_{\infty}\bar{\bd}^*\rangle = 0 $. Consequently, since $\tilde{\bTc}_{\infty}$ is PSD, $\bar{\bd}^*$ is in the kernel of this matrix, that is 
    \begin{equation}
        \bm{d}^* = d_3^*\cdot \be  ,\qquad\textrm{where}\qquad \be\equiv (0, 0, 1, -1/\sqrt{\NoiseEdgeUpper},1/\sqrt{\NoiseEdgeUpper}, 0) .
    \end{equation}

    Finally, recall the additional linear constraint $(\bKc_{\gamma,\beta}(\sqrt{\NoiseEdgeUpper})-\BBP \bHc)\bd^* = \0$. We claim that the vector $\be$ is not in the kernel of this linear constraint, and therefore $d_3^*=0$. To see this, consider for example the second entry of $(\bKc_{\gamma,\beta}(\sqrt{\NoiseEdgeUpper})-\BBP \bHc)\be$. It is clear from the definition of $\bKc_{\gamma,\beta}$, \eqref{eq:K-func}, that $(\bKc_{\gamma,\beta} \be)_2=0$; in contrast, $(\bHc\be)_2 = -e_5=-1/\sqrt{\NoiseEdgeUpper} \ne 0$.

    To summarize, we have shown that for every limiting point $\bd^*$ of $\bd(\sigma)$ as $\sigma\downarrow\BBP$, we necessarily have $\bd^*=0$. This concludes the proof of the proposition.
    \qed

\end{document}